\DeclareFontFamily{U}{mathx}{\hyphenchar\font45}
\DeclareFontShape{U}{mathx}{m}{n}{
      <5> <6> <7> <8> <9> <10>
      <10.95> <12> <14.4> <17.28> <20.74> <24.88>
      mathx10
      }{}
\DeclareSymbolFont{mathx}{U}{mathx}{m}{n}
\DeclareMathAccent{\widecheck}{0}{mathx}{"71}
\DeclareMathAccent{\wideparen}{0}{mathx}{"75}
\newtheorem{theorem}{Theorem}[section]
\newtheorem{definition}[theorem]{Definition}
\newtheorem{corollary}[theorem]{Corollary}
\newtheorem{lemma}[theorem]{Lemma}
\newtheorem{proposition}[theorem]{Proposition}
\newtheorem{remark}[theorem]{Remark}
\numberwithin{equation}{section}
\numberwithin{theorem}{section}
\newcommand{\qed}{\hfill$\Box$}
\newenvironment{proof}{\begin{trivlist}\item[]{\em Proof:}\/}{%
\qed\end{trivlist}}
\newenvironment{proofof}[1]{%
\begin{trivlist}\item[]{\em Proof of #1}\ }{\qed\end{trivlist}}
\newcommand{\E}{{\mathbb E}}
\newcommand{\Z}{{\mathbb Z}}
\newcommand{\R}{{\mathbb R}}
\newcommand{\C}{{\mathbb C\hspace{0.05 ex}}}
\newcommand{\N}{{\mathbb N}}
\newcommand{\T}{{\mathbb T}}
\newcommand{\cf}[1]{{\mathbbm 1}_{\{#1\}}}
\newcommand{\abs}[1]{\left| #1 \right|}
\newcommand{\ci}{{\rm i}}
\newcommand{\re}{{\rm Re\,}}
\newcommand{\rme}{{\rm e}}
\newcommand{\rmd}{{\rm d}}
\newcommand{\FT}[1]{\widehat{#1}}
\newcommand{\IFT}[1]{{\widecheck{#1}}}
\DeclareMathOperator*{\dom}{dom}
\newcommand{\mean}[1]{\langle #1\rangle}
\newcommand{\defem}[1]{{\em #1\/}}
\newcommand{\regabs}[1]{\langle #1 \rangle}
\newcommand{\wick}[1]{\mathlarger{:}#1\mathlarger{:}}
\newcommand{\cnls}[1]{{\mathcal{C}_{\text{NLS},#1}^{\lambda}}}
\newcommand{\ccl}[1]{{\mathcal{C}_{\text{CL},#1}^{\lambda}}}
\newcommand{\cbos}{{\mathcal{C}^{\lambda}}}
\newcommand{\cbiq}{{\widetilde{\mathcal{C}}^{\lambda}}}
\newcommand{\sos}[1]{{\textbf{\textcolor{red}{#1}}}}
\newcommand{\ind}{{\mathbbm{ 1}}}
\newcommand{\xl}{{\mathcal{X}_L}}
\newcommand{\pp}{\beta}
\newcommand{\tildepp}{\widetilde{\pp}}
\newcommand{\norm}[1]{\Vert #1\Vert}
\newcommand{\subnorm}[2]{\norm{#1}_{#2}}
\newcommand{\ppnorm}[2]{\subnorm{#1}{\pp,#2}}
\newcommand{\phinorm}[1]{\subnorm{#1}{{\Phi}}}
\newcommand{\maxnorm}[1]{\subnorm{#1}{\pp,\max}}
\newcommand{\sobolevnorm}[2]{\norm{#1}_{#2,1}}
\newcommand{\supsobolevnorm}[2]{\norm{#1}_{#2,\infty}}
\newcommand{\mvp}[1]{{\sobolevnorm{\FT{V}}{#1}^2}}
\newcommand{\mv}{{ \mvp{\frac{2}{3}} }}
\newcommand{\Cphi}{{C_{\phinorm{\cdot}}(\mathcal{X}_L,\C)}}
\newcommand{\Cmax}{{C_{\maxnorm{\cdot}}(\mathcal{X}_L,\C^2)}}
\newcommand{\CT}[1]{{C_{\subnorm{\cdot}{#1}}([0,#1]^2\times\mathcal{X}_L,\C^2)}}
\newcommand{\gin}{{G_{\text{in}}}}
\newcommand{\win}{{W_{\text{in}}}}
\newcommand{\tmax}{ {T_{*}} }
\newcommand{\Jpoint}{ {\Psi} }
\newcommand{\Jppoint}{ {\widetilde{\Psi}} }
\newcommand{\lattice}{ \Lambda }
\newcommand{\onelattice}{ \Lambda_1 }
\newcommand{\duallattice}{ {\lattice^{\!*}} }
\newcommand{\oneduallattice}{ {\onelattice^{\!*}} }
\newcommand{\rs}[1]{{{\tilde{R}_s^{#1}}}}
\newcommand{\jcite}[1]{{[\sos{Missing ref}]}}
\newcounter{jlisti}
\title{
Finite lattice kinetic equations  for 
bosons, fermions, and discrete NLS
}
\author{
Jani Lukkarinen \thanks{\emailjani}\\[0.5em]
    Sakari Pirnes \thanks{\emailsakari}\\[0.5em]
Aleksis Vuoksenmaa \thanks{\emailaleksis}\\[0.5em]
$\,^*$,$\,^\dag$,$\,^\ddag$\UHaddress}
\date{\today}
\newcommand{\email}[1]{E-mail: \tt #1}
\newcommand{\emailjani}{\email{jani.lukkarinen@helsinki.fi}}
\newcommand{\emailsakari}{\email{sakari.pirnes@helsinki.fi}}
\newcommand{\emailaleksis}{\email{aleksis.vuoksenmaa@helsinki.fi}}
\newcommand{\UHaddress}{\em University of Helsinki, Department of Mathematics and Statistics\\
\em P.O. Box 68, FI-00014 Helsingin yliopisto, Finland}
\begin{document}

\maketitle

\begin{abstract}


We introduce and study finite lattice kinetic equations for bosons, fermions, and discrete NLS.
For each model this closed evolution equation provides an approximate description for the evolution of the appropriate covariance function in the system. It is obtained by truncating the cumulant hierarchy and dropping the higher order cumulants in the usual manner.
To have such a reference solution should simplify controlling the
full hierarchy and thus allow estimating the error from the truncation.
The harmonic part is given by nearest
neighbour hopping, with
arbitrary symmetric interaction potential of coupling strength $\lambda>0$.
We consider the well-posedness of the resulting 
evolution equation up to finite kinetic times  on a finite but large enough lattice. 
We obtain decay of the solutions and upper bounds that are independent of $\lambda$ and 
depend on the lattice size only via some Sobolev type norms of the interaction potential and initial data.
We prove that the solutions are not sensitive  to 
how the energy conservation delta function is approximated. 
\end{abstract}

\bigskip

\textbf{Keywords:} Quantum kinetic theory, wave kinetic theory, cumulant hierarchy, kinetic scaling limits, weakly interacting bosons, weakly interacting fermions, weakly anharmonic discrete nonlinear Schr\"{o}dinger equation

\bigskip

\textbf{Subject classification:} 82C10, 82C20, 82C22, 82C40, 35Q20, 35Q40, 35Q55, 81V73, 81V74 

\newpage
\tableofcontents

\section{Introduction}

Recent years have seen steady progress towards mathematical understanding of
wave kinetic theory which is the main tool currently used for studying wave
turbulence.  The main focus has been on the deterministic evolution given by the
nonlinear Schr\"odinger equation on a $d$-dimensional torus of length $L$,
$(L\T)^d$, using sufficiently ``chaotic'' random initial data. The goal is then
to control the accuracy of the kinetic evolution equation for the covariance of the
Fourier transform of the field, for example, by showing that the error goes to
zero in a suitable scaling limit.  From the large literature on the topic, let
us pick the works of \cite{buckmaster_onset_2021, collot_derivation_2025}, where the accuracy of the kinetic approximation is proven to hold for subkinetic times. These methods are based on
suitably tailored iterative perturbation expansions in the coupling $\lambda>0$
of either the moments using Duhamel's formula or the solution using Picard
iteration, in order to control the evolution of the moments of the random field.

A culmination of such analysis of perturbation series is given in the series of
works by Deng and Hani \cite{deng_full_2023, deng_propagation_2024} where the initial data is either
normal distributed or given by choosing uniform random phases for the initial
Fourier modes. One also needs to remove a collection of ill-behaved initial data
which occur with a probability that goes to zero in the scaling limit.  For
instance, in \cite{deng_full_2023} the error is shown to vanish, uniformly up to some
time $\tmax \lambda^{-2}$ for some sufficiently small, but fixed
kinetic time $\tmax>0$, and in a limit where $L=\lambda^{-1}$ and $\lambda\to 0$
(since the notations
between various works on such systems vary, we have here converted their result to
the notations used in this paper, as defined in Sec.~\ref{sec:microscopic}.).  In the
latter work \cite{deng_long_2024}, the authors extend the analysis to longer
kinetic times, up to any $\tmax>0$ for which the kinetic equation has
sufficiently regular solutions, and also for larger tori, with
$L=\lambda^{-p}$, $p>1$.

It has been suggested that cumulants, instead of moments, of the field would be
a better quantity to control the evolution of the error in systems of the above
type.  This was suggested in \cite{lukkarinen_wick_2016} where also a
generic relation between Wick polynomials and time derivatives of cumulants is
pointed out.  There is no rigorous analysis of the resulting evolution
hierarchy of cumulants in that reference, although a natural way of arriving at
the wave kinetic evolution equation is pointed out therein.  In a recent
preprint by Dymov \cite{dymov_kinetic_2025}, a method is proposed to control the error
via a cumulant hierarchy  up to a subkinetic time $t\le \lambda^{-2+\delta}$,
$\delta>0$, in a setup similar to \cite{deng_full_2023}.  The main difference between
these two cases is an addition of a small stochastic perturbation  in \cite{dymov_kinetic_2025}, aiming to drive the system towards one of the canonical
Gibbs measures and thus introducing additional ``chaoticity'' to the system.
More details about this setup and its analysis can be found in the earlier
work \cite{dymov_large-period_2023}.

Here, we focus on a different aspect of such wave kinetic evolution problems
for three models:
the discrete nonlinear Schr\"odinger equation (DNLS) and 
interacting bosons and fermions hopping on a finite lattice. For the
comparison between these lattice models, see e.g.  \cite{lukkarinen_not_2009}. 
Such discrete evolution equations for bosons and fermions arise naturally from 
tight-binding approximation in solid-state physics \cite{solid_state_physics,
band_theory}.  For justification of the quantum Boltzmann equation in the weak
coupling limit we refer to \cite{erdos_quantum_2004,
benedetto_considerations_2004,
benedetto_weak-coupling_2005,benedetto_n-body_2007,lukkarinen_not_2009}.

We assume generic 
initial data whose distribution is both gauge and translation invariant.
Implementing the ``closure assumption'' by assuming the initial
data is such that after the first Duhamel iterate higher order cumulants
contribute only to the error terms, we obtain a closed evolution equation for
the covariance. We call this the ``truncated hierarchy'' (more details are
given below in Sec.~\ref{sec:microscopic} and in the Appendix
\ref{sec:derivtrunchier}).  How long will the resulting {\it finite lattice kinetic equation}
be well-posed \defem{without} taking any scaling limits?

We consider a finite $d$-dimensional
square lattice with side length $L$ and $d\ge3$. 
The free part of the evolution can lead to more
involved dispersion relations than those of the continuum equation, and here we
also extend the analysis to more general interaction potentials.  In addition,
part of the well-posedness goal for the truncated hierarchy is to ensure
that the solutions remain uniformly bounded for all large enough lattice sizes
$L$. One motivation for studying such a problem is to have a solution which
should converge to a solution of the standard wave kinetic
equation when $L\to\infty$, but which could still be used as a reference
solution for the dominant part of the evolution when computing the error with
respect to the original covariance function.  To have such a reference solution
should simplify controlling the full hierarchy and thus the above mentioned
error terms.

Our main result is to show that indeed the evolution equation given by the
truncated hierarchy is well-posed in the above sense, at least for finite
kinetic times $\tmax\lambda^{-2}$.  Moreover, we can control the decay of the 
solution in the lattice variable, indicating decay of spatial correlations of the 
original field. The truncated hierarchy yields an evolution
equation involving a convolution in time, which is a memory term of the type
appearing in renewal equations, but we also show that, up to errors which are
bounded as $\lambda^p$ for a certain $p>0$, it is possible to replace it by a
solution to a simpler evolution equation where the convolution is replaced
by an approximate energy conservation $\delta$-function. 

Perhaps not too surprisingly, the analysis of the approximate kinetic equations
involves some complex estimates and bookkeeping. Fortunately for us, the
discrete nonlinear Schrödiger equation is structurally quite similar to the
systems of weakly interacting bosons and fermions on a lattice, even though
these three are physically different systems. We are thus also able to treat
the equations which one obtains using the closure assumption for the cumulant hierarchies of the weakly interacting bosonic and
fermionic systems, as explained in Sec.~\ref{sec:microscopic}.. This allows us
to consider them in parallel in the following proofs and to obtain comparable
results in all three settings.

\subsection{Underlying microscopic systems and their wave kinetic theory}
\label{sec:microscopic}

The main dynamical variables for the microscopic model are the field variables
$a(x,t)$ where $t\in \R$ denotes time and $x$ a position vector on a
$d$-dimensional square lattice $\lattice\subset \Z^d$, 
$d\ge3$  with periodic boundary conditions and side length $L\in\N$, $L\ge 2$.
We have $\lattice=\onelattice^d$, where $\onelattice\subset\Z$ is a one-dimensional lattice.
We use a parametrization in which the origin lies at the center of the lattice.
For example, if $L$ is odd, then
\begin{align*}
    \onelattice=\left\{ -\frac{L-1}{2},\dots,\frac{L-1}{2} \right\}\,.
\end{align*}
There always holds $|\lattice|=L^d$ and $|x_i|\leq L/2$ for $x\in\lattice$ and $i=1,\dots,d$.  For the first model considered in this work, namely DNLS, 
$a$ is a complex-valued field, i.e., each
$a(x,t)\in \C$.  For weakly interacting bosons (fermions), $a(x,0)$ is the bosonic (fermionic)
annihilation operator for the one-particle state given by the Kronecker delta
at $x$, and $a(x,t)$ is the corresponding operator evolved to time $t$.  We
will also need the adjoints of the field, $a^*(x,t)$: for DNLS, the
correspondence is simply given by complex conjugation, $a^*(x,t)\coloneqq
(a(x,t))^*\in \C$, while in the bosonic (fermionic) case $a^*(x,t)$ is obtained from
the time-evolution of the bosonic (fermionic) creation operator $a^*(x,0)$ which is related to
the operator adjoint of $a(x,0)$.  We will also employ a condensed notation
where a sign-parameter $\sigma\in \{-,+\}\coloneqq\{-1,+1\}$ is added to denote the
adjoint: we set
\[
 a_t(x,+) \coloneqq a(x,t)\,, \qquad a_t(x,-) \coloneqq a^*(x,t)\,.
\]

The evolution equations for all three cases are determined by the Hamiltonian
\begin{align}
\label{eq:hamiltonian}
 H \coloneqq \sum_{x,y\in \Lambda} \alpha(x-y) a^*(x) a(y) +
 \lambda \frac{1}{2} \sum_{x,y\in \Lambda;\,x\ne y} V(x-y) a^*(x) a^*(y)
 a(y) a(x)\,,
\end{align}
where $\alpha:\Lambda \to \R$ is called the \defem{harmonic hopping potential}
while $V:\Lambda \to \R$ is called the \defem{interaction potential} with a
coupling $\lambda>0$.  Periodic boundary conditions are implemented by assuming
that the difference ``$x-y$'' is computed using periodic arithmetic, i.e., it denotes
$(x-y)\bmod \Lambda$.  In this work, we only consider harmonic nearest
neighbour hopping potentials $\alpha$. The interaction potential $V$ may be
any function, but some decay will be needed to obtain $L$-independent bounds. We will always
assume that both potentials are symmetric under lattice reflections, e.g., that
$V(-y)=V(y)$ for all $y\in \Lambda$.

Albeit through completely different routes, for all three models the Hamiltonian \eqref{eq:hamiltonian}
results in the following evolution equations of the fields: for all $t\in \R$
and for any $x\in \Lambda$, $\sigma \in \{\pm\}$, we then have
\begin{align}\label{eq:atxevol}
 \partial_t a_t(x,\sigma) = -\ci \sigma \sum_{y\in \Lambda} \alpha(x-y) a_t(y,\sigma) -\ci\sigma \lambda  \sum_{y\in \Lambda} V(x-y)
 \begin{cases}
  a_t(y,-) a_t(y,+)a_t(x,+)\,, & \sigma=+ \\
  a_t(x,-) a_t(y,-)a_t(y,+)\,, & \sigma=-
 \end{cases}
 \,.
\end{align}
More details about the derivation and well-posedness of this evolution equation
from $H$ in the DNLS case may be found in \cite{lukkarinen_weakly_2011} and for the bosons in
\cite{lukkarinen_not_2009} or by adapting the results
stated for fermions in \cite{cadamuro_kinetic_2018}.  

Without the interaction term, if $\lambda=0$, the evolution problem
(\ref{eq:atxevol}) can be solved explicitly using discrete Fourier transform.
Namely, we define the \defem{dual lattice} of $\Lambda$ as  
$$
\duallattice \coloneqq \frac{1}{L}\lattice\subset \left(-\frac{1}{2},\frac{1}{2}\right]^d =: \T^d,
$$
and for any function $f$ from the lattice $\lattice$ to a vector space, we set its finite discrete Fourier transform 
\begin{align}
    \label{eq:FT}
     \FT{f}(k) \coloneqq \sum_{y\in \Lambda} \rme^{-\ci 2\pi k\cdot y} f(y)\,, \qquad k\in \Lambda^*\,.
\end{align}
Similarly, the discrete inverse Fourier transform $\IFT{g}$ of a function $g$ from the dual
lattice $\duallattice$ to a vector space is defined by 
\begin{equation}
    \label{eq:IFT}
   \IFT{g}(x) \coloneqq 
     \int_{\duallattice}\!\rmd k\,  g(k)
     \textrm{e}^{\textrm{i} 2\pi k \cdot x}\,, \qquad x\in \lattice\,,
\end{equation}
where $\int_\duallattice\rmd k$ is a shorthand notation for $\frac{1}{|\duallattice|}\sum_{k\in\duallattice}$.
It is straightforward to check that the discrete Fourier transform and 
the discrete inverse Fourier transform 
are pointwise invertible by each other.
Taking now the Fourier transform of the evolution equation \eqref{eq:atxevol} we get
\begin{align}
    \nonumber
    &\partial_t \FT{a}_t(k_0,\sigma)=-\ci\sigma\omega(k_0)\FT{a}_t(k_0,\sigma)
    \\
    &
    \label{eq:FTa_evol}
    -\ci \sigma\lambda
    \int_{(\duallattice)^3}\rmd k'_1 \rmd k'_2 \rmd k'_3
    \delta_L(k_0-k'_1-k'_2-k'_3)
    \FT{V}(k'_1,k'_2,k'_3;\sigma)
    \FT{a}_t(k'_1,-)\FT{a}_t(k'_2,\sigma)\FT{a}_t(k'_3,+),
\end{align}
where $\delta_L(k)=|\lattice|\ind(k{ = }0 \mod\duallattice)$.  Here and in the following, 
we use the shorthand notation $\ind(\cdot)$ for generic characteristic functions:
$\ind(P)=1$ if the condition $P$ is true, and 
$\ind(P)=0$. otherwise.
Here,
\[
 \omega(k) \coloneqq \FT{\alpha}(k)
\]
is the \defem{dispersion relation} of the interaction $\alpha$ and we define
$\FT{V}(k_1,k_2,k_3;+)\coloneqq \FT{V}(k_1+k_2)$ and $\FT{V}(k_1,k_2,k_3;-)\coloneqq
\FT{V}(k_2+k_3)$.  By the symmetry assumption, both $\omega(k)$ and
$\FT{V}(k)\in \R$.  Clearly, if $\lambda=0$, we then have for all $t,k,\sigma$,
\[
 \FT{a}_t(k,\sigma)=\rme^{-\ci \sigma\omega(k) t} \FT{a}_0(k,\sigma)\,.
\]

In the present work, we only consider nearest neighbour harmonic interactions and, for
simplicity, scale the interaction parameter to one: we assume that there is
some $c_0\in \R$ such that
\[
 \alpha(x) \coloneqq c_0 \ind(x=0) - \frac{1}{2}\sum_{i=1}^d
 \left(\ind(x=e_i)+\ind(x=-e_i)\right)\,,
\]
where $e_i \in \R^d$ denotes the Cartesian unit vector in direction $i$.  Then,
$\omega(k)$ is obtained by evaluation at $k\in \Lambda^*$ of the trigonometric
polynomial
\begin{align}
    \label{eq:omega}
    \omega(k)= c_0-\sum_{i=1}^{d} \cos(2\pi k_i), \qquad k\in\T^d\,.
\end{align}
Due to this relation, we can simply use the trigonometric polynomial to obtain
the correct eigenvalues for any lattice size $L$.

Next we will work out the truncated hierarchy for the DNLS case, which has
already been discussed in the earlier work
\cite{lukkarinen_wick_2016} for the onsite interactions, with
$V(x)=\ind(x=0)$.  We also use the condensed notations introduced there for
moments, cumulants, and Wick polynomials.  
We consider random initial data given by a measure that has moments up to all orders and is both gauge and translation invariant. All of these three properties are then preserved by the time evolution. 
More precisely, by translation and gauge invariance we mean that the measure is invariant under the lattice
translations and under rotations of the total phase of the lattice
field, respectively.
By gauge invariance, any moment $\E[a_t^I]$
and cumulant $\kappa[(a_t)_I]$ for a sequence
$I=((k_\ell,\sigma_\ell))_{\ell=1}^n$ of length $n$ which has an imbalance of
adjoints, i.e., for which $\sum_{\ell=1}^n \sigma_\ell\ne 0$, must be zero.  In
particular, all odd moments and cumulants vanish, including the expectation
values $\E[a_t(k,\sigma)]$ which correspond to the cases where $n=1$.
 
Similarly, translation invariance implies that the Fourier transformed fields satisfy
\[
 \E[\FT{a}_t^I] = \delta_L\!\left(\sum_{\ell=1}^n k_\ell\right)
 \frac{1}{|\Lambda|}\E[\FT{a}_t^I]\,,\qquad
 \kappa[(\FT{a}_t)_I] = \delta_L\!\left(\sum_{\ell=1}^n k_\ell\right)
 \frac{1}{|\Lambda|}\kappa[(\FT{a}_t)_I]\,.
\]
Therefore, we can conclude that for the covariance, when $n=2$,
for each $T\ge 0$, $\lambda>0$, and
$k\in \Lambda^*$, we can find $W_T^{\lambda,L}(k)\in \R$ such that
\begin{align}
    \delta_L(k'+k)W_T^{\lambda,L}(k)= \E[\FT{a}_{T\lambda^{-2}}(k',-)\FT{a}_{T\lambda^{-2}}(k,+)]\,,
\end{align}
for all $k'\in \Lambda^*$.  In addition, since the mean of the field is zero,
these are also equal to the corresponding elements of the covariance matrix at
$t=T\lambda^{-2}$, and all other elements of the covariance matrix are zero.

We have collected in Appendix \ref{sec:derivtrunchier}
the argument how the closure assumption and the truncated cumulant hierarchy
result in an evolution equation for $W$. In short:
\begin{align}
    W^{{\lambda,L}}_T(k_0) &= W_0(k_0)+
    \int_0^T \rmd t
    \cnls{+}(W_t^{\lambda,L},{T-t})(k_0) + (\text{higher order terms})\,,
    \label{eq:WrealevoNLS}
    \end{align}
    where DNLS collision operator  is defined by 
    setting $\theta=+$ in
\begin{align}
    \nonumber
    \cnls{\theta}(h,\tau_0)(k_0) &\coloneqq \pi \int_{(\duallattice)^3} 
    \rmd k_1\rmd k_2\rmd k_3 
    \delta_L(k_0+k_1-k_2-k_3)
    \delta_{\lambda,\tau_0}(\omega_0+\omega_1-\omega_2-\omega_3)
    \\
                         &\times
    \left( \FT{V}(k_1-k_2)+\theta\FT{V}(k_1-k_3) \right)^2
    \Big( h_1h_2h_3+h_0h_2h_3-h_0h_1h_3-h_0h_1h_2 \Big).
    \label{eq:cnls}
\end{align}
The case ``${\theta=-}$'' will be needed later to express the collision operator 
for fermions. Here we use the shorthand
notations $\omega_j=\omega(k_j)$ and $h_j=h(k_j)$. Moreover, for each $\tau_0,
\Omega\in\R$ we have defined
\begin{align}
     \delta_{\lambda,\tau_0}(\Omega)
     \coloneqq \frac{1}{2\pi}
     \int^{\lambda^{-2}\tau_0}_{-\lambda^{-2}\tau_0} \rmd s
    e^{-is\Omega}
    .
    \label{eq:energy_delta}
\end{align}
We note that $\delta_{\lambda,\tau_0}(0)=\lambda^{-2}\tau_0/\pi$ and
$\delta_{\lambda,\tau_0}(\Omega)=
\frac{\sin(\lambda^{-2}\tau_0\Omega)}{\Omega}$ for $\Omega\ne0$. This means that
there always holds $\delta_{\lambda,\tau_0}(\Omega)\in\R$.

The corresponding arguments for cumulants of weakly
interacting bosons and fermions are more involved. (In the quantum case, cumulants are often called truncated correlation functions.)  We are not aware of a clean reference in
the lattice case but they will be part of our forthcoming work
\cite{jani_sakari}.  However, in analogy with the continuum case and the
above DNLS computations, we postulate here that, assuming a translation and
gauge invariant initial state $\E_0$,
the bosonic and fermionic truncated hierarchies satisfies the following equation
for the Wigner function which is defined by the corresponding operator formula
\begin{align}
    \delta_L(k'+k) W_T^{\lambda,L}(k)= \E[\FT{a}_{T\lambda^{-2}}(k',-)\FT{a}_{T\lambda^{-2}}(k,+)]\,.
\end{align}
We can consider all three of the above truncated hierarchy equations simultaneously
by introducing new parameters  $q\in\{0,1\}$ and $\theta\in\{-,+\}$, where
$(q,\theta)=(0,+)$ corresponds to the DNLS, and $(q,\theta)=(1,+)$ to the bosonic case,
and $(q,\theta)=(1,-)$ to the fermionic case.  For either choice
of $q$ and $\theta$, we then find
that
\begin{align}
    W^{{\lambda,L}}_T(k_0) &= W_0(k_0)+
    \int_0^T \rmd t
    \cbos(W_t^{\lambda,L},{T-t})(k_0)
    \label{eq:Wrealevo}
                    \ + \ \text{"higher order terms"},
    \end{align}
where the collision operator is defined as
\begin{align}
    \cbos(h,\tau_0)(k_0)
    &\coloneqq \theta\cnls{\theta}(h,\tau_0)(k_0)+q\ccl{\theta}(h,\tau_0)(k_0)
\end{align}
for $h:\duallattice\to\R$,
$\tau_0>0$ and $k_0\in\duallattice$.
The collision operator $\cnls{\theta}$ has already been defined in (\ref{eq:cnls}),
and the collision operator $\ccl{\theta}$, which only appears in the bosonic and fermionic cases, is
defined by
\begin{align}
    \nonumber
    \ccl{\theta}(h,\tau_0)(k_0) &\coloneqq \pi \int_{(\duallattice)^3}
    \rmd k_1\rmd k_2\rmd k_3 
    \delta_L(k_0+k_1-k_2-k_3)
    \delta_{\lambda,\tau_0}(\omega_0+\omega_1-\omega_2-\omega_3)
    \\
                         &\times
    \left( \FT{V}(k_1-k_2)+\theta\FT{V}(k_1-k_3) \right)^2
    \Big( h_2h_3-h_0h_1\Big).
    \label{eq:ccl}
\end{align}

We will be studying the finite-lattice kinetic equation, a truncated version of the equation \eqref{eq:Wrealevo},
given by
\begin{align}
    \label{eq:Wevo}
    W_T(k_0) = W_0(k_0) + \int_0^Tdt \cbos(W_t,\tau(T,t))(k_0).
\end{align}

Here we suppose that $\tau:\R_+^2\to\R$ satisfies for all $T,t\in\R_+$ either 
\begin{align}
    \label{eq:taudiff}
    \tau(T,t)=T-t  
\end{align}
or
\begin{align}
    \label{eq:tauconst}
    \tau(T,t)=T_0  
\end{align}
for some constant $T_0>0$. 
The two possibilities \eqref{eq:taudiff} and \eqref{eq:tauconst} for $\tau$ correspond,
respectively, to either having  in the collision operator $\cbos$ the memory term
$\delta_{\lambda,T-t}$ resulting a convolution in time or the simpler
approximate energy conservation $\delta$-function $\delta_{\lambda,T_0}$ which
is constant in time.
Note that in \eqref{eq:Wrealevo} we have $\tau(T,t)=T-t$.  Therefore, the
simpler evolution equation, the equation \eqref{eq:Wevo} with $\tau(T,t)=T_0$,
can be seen as a further approximation after the truncation.  The motivation for
this is that the function $\Omega\mapsto\delta_{\lambda,T_0}(\Omega)$ converges
uniformly in $t$ and $T$ to the $\delta$-function as $\lambda\to0$. Therefore
one can at least formally see that at the limit $L\to \infty$, $\lambda\to0$
the truncated equation \eqref{eq:Wevo} with $\tau(T,t)=T_0$ converges to the
corresponding wave kinetic equation for \((q,\theta)=(0,+)\) and Nordheim-Boltzmann equation for \(q=1\):  
\begin{align}
    \label{eq:Wkinetic}
    W_T(k_0) &=W_0(k_0)+
    \int_0^T \rmd t
    \mathcal{C}(W_t)(k_0),\quad k_0 \in \T^d
                        \\
    \nonumber
    \mathcal{C}(h)(k_0)&= 
    \pi 
    \int_{(\T^d)^3}\rmd k_{1}\rmd k_{2}\rmd k_{3}
    \delta(k_0+k_1-k_2-k_3)
    \delta(\omega_0+\omega_1-\omega_2-\omega_3)
    \\
    \nonumber
   &\times
    \left( \FT{V}(k_1-k_2)+\theta\FT{V}(k_1-k_3) \right)^2
    \\
    \nonumber
   &\times
    \Big(
   \theta(h_1h_2h_3+h_0h_2h_3
   -h_0h_1h_3-h_0h_1h_2)
   +{q \left( h_2h_3-h_0h_1 \right) }
   \Big).
\end{align}
 For the definition
of the measure in the collision operator \eqref{eq:Wkinetic} see
\cite{lukkarinen_global_2015}.
The well-posedness, finite time blow-up, and condensation for the 
    bosonic case $(q,\theta)=(1,+)$ to \eqref{eq:Wkinetic} in \(\R^d\) have been studied in
\cite{lu_isotropic_2004, escobedo_blow_2014, escobedo_finite_2015,
lu_long_2016, li_global_2019}.
For fermions $(q,\theta)=(1,-)$, physical initial data satisfies $0\leq W_0\leq1$. 
Indeed, it is known that then in $\R^d$ the solutions exist globally and
remain between $0$ and $1$ \cite{dolbeault_kinetic_1994, lu_stability_2003}. For several results concerning the well-posedness, finite time blow-up, and condensation of the wave kinetic equation \((q,\theta) = (0,+)\) in the isotropic setting, we refer to \cite{escobedo_blow_2014, germain_optimal_2020}, while the stability properties of the equation are studied in \cite{collot_stability_2024, menegaki__2024}.

\section{Main results}
\label{sec:results}

We have two main results, namely Theorem \ref{thm:Wwell_posedness} and Theorem
\ref{thm:error}. Our first 
main result (Theorem \ref{thm:Wwell_posedness}) proves well-posedness of
the finite lattice kinetic equation \eqref{eq:Wevo} and uniform upper bounds for its
solutions. We do this for both choices \eqref{eq:taudiff} and
\eqref{eq:tauconst} of $\tau$, and up to kinetic time $\tmax$ which 
depends on the lattice size $L$ only via some Sobolev type norms of the
interaction potential and initial data.  Our other main result (Theorem
\ref{thm:error}) gives an error estimate 
between two solutions obtained from the finite lattice kinetic equation \eqref{eq:Wevo} with
the same initial data but two different choices for the function $\tau$. 
We prove that the error between two such solutions is
proportional to 
$\lambda^{p}$, where the exponent $p>{0}$ depends on the lattice dimension $d$ and a parameter $\pp$ specifying the norms used.
Therefore, changing the original $\tau(T,t)=T-t$ into a much simpler constant function
$\tau(T,t)=T_0$ gives only an error that vanishes when $\lambda\to0$. 

It is not obvious that the finite lattice kinetic equation \eqref{eq:Wevo} has
solutions up
to kinetic times independent of $L$.  
 The main challenge for both Theorems is
controlling the oscillatory integrals produced by the $\delta_{\lambda,\tau(T,t)}$ term in the collision operator 
$\cbos$. This is because 
the trivial bound 
$|\delta_{\lambda,\tau_0}|\leq \lambda^{-2}\tau_0/\pi$ is not enough for
obtaining the well--posedness for kinetic times nor the desired decay for the error estimates of
Theorem \ref{thm:error}. Our solution to these problems revolves around 
introducing a {\it collision control map} $F[W]$ which has a closed evolution equation and 
can be used to rewrite  the cubic collision operator $\cbos$ as a bilinear operator 
with $F[W]$ and $W$ its inputs. Given $W\in C([0,\tmax]\times\duallattice,\C)$, the 
collision control map $F[W]$ is a continuous function from $[0,\tmax]^2\times\xl$ to $\C^2$, where
\begin{align*}
    \xl \coloneqq  \R^d \times \duallattice \times \T^d \times \{+,-\} \times \lattice.
\end{align*}
The two component functions $F[W]^{(1)}, F[W]^{(2)} \colon [0,\tmax]^2 \times\mathcal{X}_L \to \C$ of $F[W]$ are defined by
setting 
\begin{align}
    F[W]_{t_1,t_2}^{(n)}(R,   k',u,\sigma,x) 
    \coloneqq 
    \int_{\duallattice} \rmd k_0 
    \rme^{\ci \phi(k_0, R, u)} 
    W_{t_1}(\sigma k_0) ^{n-1}W_{t_2}(k'+k_0)
    \rme^{\ci2\pi k_0\cdot x}
    \label{eq:F}
\end{align}
for $n\in\{1,2\}$, $(t_1,t_2)\in[0,\tmax]^2$ and $(R,   k',u,\sigma,x)\in\xl$.
Here the generic phase function \(\phi \colon \R^d \times \R^d \times \R^d \to \R\) is defined as
\begin{align}
    \label{eq:phi}
\phi(k;R,u) \coloneqq \sum_{i=1}^d R_i \cos(2\pi(k_i + u_i)).
\end{align}
Given a function $\win:\duallattice\to\C$, we will use the shorthand notation
$F[\win]$ to denote the function $F[W]_{0,0}\colon\xl\to\C^2$ that is given by
the formula \eqref{eq:F} with $W_0=\win$ and $t_1=t_2=0$.

The definition \eqref{eq:phi} of the generic phase function $\phi$ is sufficient for studying
the
nearest neighbour \eqref{eq:omega} dispersion relation $\omega$.
This is because all needed linear combinations of functions $\omega$ can be represented as a single 
$\phi$ due to the summation rule of generic phase functions in 
Lemma \ref{lem:cossum}. Allowing for a general class of dispersion relations $\omega$ is 
not completely straightforward. At the very least, one would need to find another collection 
of generic phase functions which would  satisfy a similar summation rule and
some propagator estimates as in Proposition \ref{prop:propagator_multi_d}.

Note that the collision control map $F[W]$ contains all the information of $W$, i.e., 
\begin{align}
    \label{eq:FisFTofW}
    \IFT{W}_T(x)
    = 
    F[W]_{0,T}^{(1)}(0,0,0,+,x) 
    , \quad x\in\lattice, T\in[0,\tmax].
\end{align}
This means that having some control of $F[W]$ gives automatically control of $W$.

Our main results hold for all sufficiently large lattices. Recall that the size 
of the $d$-dimensional square lattice $\lattice$ is characterized by its side length 
$L\ge2$. For a chosen  parameter $\pp\in(0,1-2/d)$, the lower bound for $L$ is given by 
\begin{align}
    \label{eq:Lzero}
    L_\pp(\lambda,\tau_0)\coloneqq 
    \max((4\tau_0\lambda^{-2})^{1/\pp}, \tau_0\lambda^{-2}).
\end{align}
Here, $\lambda>0$ is the coupling parameter and we take $\tau_0>0$ to 
be the maximum of $\tau$ over  
the studied time interval, where the function $\tau$ appears in the 
evolution equation \eqref{eq:Wevo}. Note that $L_\pp(\lambda,\tau_0)\to \infty$
if  $\lambda^{-2}\tau_0\to\infty$, and $L_\pp(\lambda,\tau_0)\to 0$ if 
$\lambda^{-2}\tau_0\to0$. In particular, any $L\ge2$ satisfies  $L\ge L_\pp(\lambda,\tau_0)$ 
if $\lambda\ge
2\sqrt{\tau_0/L^{\pp}}$.

We will next present the precise statements of our main results. The detailed definitions of supremum
norms $\subnorm{\cdot}{\infty}$, $\maxnorm{\cdot}$ and $\ppnorm{\cdot}{T}$ and
Sobolev type norms $\supsobolevnorm{\cdot}{p}$ and $\sobolevnorm{\cdot}{p}$
used in the Theorems will be given in Section \ref{sec:norms} after the main 
results.

\begin{theorem}
    \label{thm:Wwell_posedness}
     \textup{(Well-posedness)}
    Take a lattice dimension $d\ge3$ and a side length $L\in\N$ with $L\ge2$. 
    Take any symmetric interaction potential
    $V:\lattice\to\R$.
    Consider initial data $\win\colon \duallattice \to
    \R$.
    Let $\pp\in(0,1-2/d)$.
    Take a final kinetic time $\tmax>0$ satisfying
    \begin{align}
        \label{eq:tmax_thm1}
        \tmax< \frac{1}{c_{\pp,1}
        \mv (1+\maxnorm{F[\win]})^2}.
    \end{align}
    Here, $c_{\pp,1}=\max(c_{\pp,2},\frac{\sqrt{3}}{6}c_{3})$ with the constants
    $c_{\pp,2}, c_{3}>0$ depending only on the lattice dimension $d$ and $\pp$, specified
    in \eqref{eq:c2n3}.

    Take a coupling parameter $\lambda>0$ and $\tau\colon\R_+^2\to\R$ such that one of the
    following holds
    \begin{enumerate}
        \item 
            \label{item:thm1case1}
         $\tau(T,t)=T-t$ and $L\ge L_\pp(\lambda,\tmax)$, or
        \item
            \label{item:thm1case2}
            $\tau(T,t)=T_0$ for some constant $T_0>0$ and $L\ge L_\pp(\lambda,T_0)$.
    \end{enumerate}
    Take parameters $q\in\{0,1\}$ and $\theta\in\{-,+\}$.
    Then there exists a unique $W\in
    C([0,\tmax]\times\duallattice,\R)$ that satisfies equation
    \eqref{eq:Wevo} with $\tau$ and $W_0=\win$. The solution $W$ depends continuously
    on the initial data $\win$. Moreover, for all $T\in [0, \tmax]$ there holds
    \begin{align}
        \label{eq:Wsupbound}
        \subnorm{W_T}{\infty}
        &\leq
        \subnorm{\win}{\infty}
        \rme^{c_{3}\sobolevnorm{\FT{V}}{\frac{1}{6}}^2\int_0^T\rmd t \maxnorm{F[W]_{t,t}}} ,
    \end{align}
    and
    \begin{align}
        \label{eq:Fnormpropagation}
        \sup_{t_1,t_2\in[0,T]}
        \maxnorm{F[W]_{t_1,t_2}}
        &\leq
        \frac{\maxnorm{F[\win]}}{\sqrt{1-c_{\pp,2}\mv\maxnorm{F[\win]}T}}.
    \end{align}
    In particular, then also 
    \begin{align}
        \supsobolevnorm{W_T}{1/3}
        \label{eq:Wsobolevbound}
        &\leq
        \sup_{t_1,t_2\in[0,T]}
        \maxnorm{F[W]_{t_1,t_2}}
        \leq
        \frac{2}{\sqrt{3}}
        \maxnorm{F[\win]}.
    \end{align}
\end{theorem}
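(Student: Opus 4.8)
\emph{Strategy and Step 1 (bilinear reformulation).}
The plan is to follow the route signposted in the statement: rewrite the cubic collision operator $\cbos(\cdot,\tau_0)$ bilinearly in terms of the collision control map $F[W]$ and $W$, solve the Volterra-type integral equation \eqref{eq:Wevo} by a fixed point, and close all bounds through an autonomous Riccati inequality for $g(T):=\sup_{t_1,t_2\in[0,T]}\maxnorm{F[W]_{t_1,t_2}}$. Inside the kernels \eqref{eq:cnls} and \eqref{eq:ccl} I would expand $\delta_L(k_0+k_1-k_2-k_3)=\sum_{x\in\lattice}\rme^{\ci2\pi(k_0+k_1-k_2-k_3)\cdot x}$, each $\FT V(k_i-k_j)=\sum_{y}V(y)\rme^{-\ci2\pi(k_i-k_j)\cdot y}$, and the energy delta through its defining integral \eqref{eq:energy_delta}. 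By \eqref{eq:omega} and the summation rule for generic phase functions (Lemma \ref{lem:cossum}) each factor $\rme^{\mp\ci s\omega(k_j)}$ becomes $\rme^{\ci\phi(k_j;\mp s\vc1,0)}$ up to an explicit $s$-dependent scalar, after which the $k_1,k_2,k_3$-integrals organize into building blocks of the form \eqref{eq:F}: two momentum slots carrying a $W_t$-factor whose momenta differ by a fixed shift combine (with the leftover exponentials absorbed into the parameters $R,u$ via Lemma \ref{lem:cossum}) into a single block $F[W]^{(n)}_{t,t}$, while a momentum slot carrying no $W_t$-factor produces a pure oscillatory propagator $\int_{\duallattice}\rmd k\,\rme^{\ci\phi(k;R,u)+\ci2\pi k\cdot x}$. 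The outcome is a representation of $\cbos(W_t,\tau_0)(k_0)$ as $\sum_{y_1,y_2,y_3}V(y_1)V(y_2)V(y_3)\int_{-\lambda^{-2}\tau_0}^{\lambda^{-2}\tau_0}\rmd s$ of a finite sum of products of $F[W]$-blocks and propagators.

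\emph{Step 2 (Riccati bound for $F[W]$).}
Feeding \eqref{eq:Wevo} into the definition \eqref{eq:F} (differentiating in time in the constant-$\tau$ case, or using the integral form in the convolution case) and reusing Step 1 gives a self-contained equation for $F[W]$ whose right-hand side is controlled quadratically by $F[W]$, up to the potential and the propagator factors. The decisive input is Proposition \ref{prop:propagator_multi_d}: it bounds the propagators uniformly in the relevant range of $R$ once $L\ge L_\pp(\lambda,\tau_0)$, and, crucially, with enough decay in $s$ that—because $d\ge3$—the integral $\int_{-\lambda^{-2}\tau_0}^{\lambda^{-2}\tau_0}\rmd s$ contributes only a constant depending on $d$ and $\pp$, and not on $\lambda$ or $L$; this is exactly what replaces the useless trivial bound $|\delta_{\lambda,\tau_0}|\le\lambda^{-2}\tau_0/\pi$. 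With the norms of Section \ref{sec:norms} this yields $g(T)\le\maxnorm{F[\win]}+c_{\pp,2}\mv\int_0^T g(t)^2\,\rmd t$, hence $g(T)\le\maxnorm{F[\win]}\big(1-c_{\pp,2}\mv\maxnorm{F[\win]}T\big)^{-1/2}$, which is \eqref{eq:Fnormpropagation}. The hypothesis \eqref{eq:tmax_thm1}, together with $c_{\pp,1}\ge c_{\pp,2}$ and the elementary bound $t/(1+t)^2\le\tfrac14$, keeps $1-c_{\pp,2}\mv\maxnorm{F[\win]}T\ge\tfrac34$ on $[0,\tmax]$; combined with $\supsobolevnorm{W_T}{1/3}\le g(T)$, which follows from \eqref{eq:FisFTofW} and the definition of the norms, this gives \eqref{eq:Wsobolevbound}.

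\emph{Step 3 (well-posedness and the sup bound).}
Local existence and uniqueness of $W\in C([0,T']\times\duallattice,\R)$ on a short interval is a routine Banach fixed point for \eqref{eq:Wevo}, since $\cbos(\cdot,\tau_0)$ is polynomial and hence locally Lipschitz in the relevant sup norm; the $F$-representation shows that a maximal solution can fail to reach $\tmax$ only through blow-up of $\maxnorm{F[W]_{T,T}}$, which Step 2 rules out under the lattice lower bound of case \ref{item:thm1case1} or \ref{item:thm1case2}. Hence $W$ exists and is unique on $[0,\tmax]$, and continuous dependence on $\win$ follows from Grönwall applied to the $F$-representation of the difference of two solutions. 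Finally, estimating $\cbos(W_t,\tau(T,t))$ in $\subnorm{\cdot}{\infty}$ through the same bilinear representation—letting one $W_t$-factor spectate, the remaining two forming an $F[W]^{(2)}_{t,t}$-block, and the accompanying propagators controlled as above—gives $\subnorm{\cbos(W_t,\tau(T,t))}{\infty}\le c_{3}\sobolevnorm{\FT V}{1/6}^2\maxnorm{F[W]_{t,t}}\subnorm{W_t}{\infty}$; inserting this into \eqref{eq:Wevo} and applying Grönwall yields \eqref{eq:Wsupbound}.

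\emph{Main obstacle.}
The real work lies in Steps 1–2: the combinatorial bookkeeping that converts the cubic operator into the bilinear $(F[W],W)$ form, and then verifying that the induced equation for $F[W]$ genuinely closes in the $\maxnorm{\cdot}$ norm. In particular one must check that the propagator estimates of Proposition \ref{prop:propagator_multi_d} are strong enough—uniformly for $|R|\lesssim\lambda^{-2}\tau_0$ and for every $L\ge L_\pp(\lambda,\tau_0)$—to swallow the $s$-integral with no loss in $\lambda$ or $L$, which is precisely where the lattice lower bound \eqref{eq:Lzero} and the restriction $\pp\in(0,1-2/d)$ enter. Once this structural estimate is in place, everything downstream (local well-posedness, the continuation argument, and the two Grönwall estimates) is standard.
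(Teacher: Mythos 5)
The proposal is broadly correct and captures the core structural ideas of the paper: the collision control map $F[W]$, the bilinear rewrite of $\cbos$ as $\cbiq[F[W]_{t,t},W_t]$, the $L$-and-$\lambda$-uniform propagator estimates absorbed by the weight $\Phi^d$ (where $L\ge L_\pp(\lambda,\tau_0)$ ensures $\lambda^{-2}\tau_0\le L^{\pp}/4$, Lemma~\ref{lem:3Phi_2int_bound}), and Gr\"onwall for the sup bound. The place where you diverge from the paper is the well-posedness architecture in your Step~3. You propose to solve the cubic Volterra equation \eqref{eq:Wevo} directly by a local Banach fixed point in $\subnorm{\cdot}{\infty}$ (with Lipschitz constant $\sim\lambda^{-2}\mv\subnorm{W}{\infty}^2$, hence local time $\sim\lambda^2$) and then continue using the a~priori Riccati bound on $F[W]$. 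The paper instead avoids any continuation argument: it first solves the \emph{autonomous} fixed-point problem $G=\mathcal{I}_\tau[G,G]$ globally on $[0,\tmax]$ in the weighted space $\CT{\tmax}$ (Proposition~\ref{prop:Gfixed_point}), then solves a \emph{linear} equation for $W$ given $G$ (Proposition~\ref{prop:WGfixed_point}), and finally proves $G=F[W]$ by Lemma~\ref{lem:Fevo} and the uniqueness result of Proposition~\ref{prop:HGfixed_point}. Your route should work in the constant-$\tau$ case (where \eqref{eq:Wevo} is just an ODE), but in the memory case $\tau(T,t)=T-t$ the restart at time $T'$ requires treating the history term $\int_0^{T'}\cbos(W_t,T-t)\,\rmd t$ as a known but $T$-dependent forcing, so the continuation step is not quite routine; the paper's decoupling ($G$ first, then linear $W$) avoids this delicacy entirely and gives uniqueness more directly. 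One more imprecision worth noting: the Riccati inequality you write, $g(T)\le\maxnorm{F[\win]}+c_{\pp,2}\mv\int_0^T g^2$, does not by itself yield the square-root bound \eqref{eq:Fnormpropagation} (Bihari gives only $g_0/(1-cg_0T)$). The square root arises from the actual structure produced by the Duhamel form of $\mathcal{I}_\tau$, which is $g(T)\le g(0)+\tfrac{c}{4}\int_0^T g^2+\tfrac{c}{4}g(T)\int_0^T g\le g(0)+\tfrac{c}{2}g(T)\int_0^T g$; integrating this and solving the resulting quadratic inequality for $h(s)=\int_0^s g$, as in Proposition~\ref{prop:Ggronwall}, gives exactly $g(T)\le g(0)/\sqrt{1-c g(0) T}$. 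To reach that from your inequality you must first use monotonicity of $g$ to pass $g^2\le g(T)\,g$ and then perform the same integration/quadratic step, which changes the constants; the precise constants are needed because \eqref{eq:tmax_thm1} is chosen to make $1-c_{\pp,2}\mv\maxnorm{F[\win]}T\ge 3/4$ on all of $[0,\tmax]$.
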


\begin{remark}
    Note that the solution $W$ for real-valued initial data $\win$ remains real-valued
    for any of the finite lattice kinetic equations considered here. However, we cannot 
    guarantee propagation of other natural properties of $W$, such as positivity and, for 
    fermions, boundedness by $1$.
    Also, the control of the solution $W$ after the final kinetic time $\tmax$ requires
    future work, for example, to study its global well-posedness or blow-up.
\end{remark}

Note that Theorem \ref{thm:Wwell_posedness} is stated for fixed $\lambda$ and $L$.
The functions $\FT{V}$ and $\win$ depend on $L$ via
their domain $\duallattice$ and $F[W]$ also by its definition. Also, the weight
in the norm $\maxnorm{\cdot}$ depends on $L$.

The upper bounds \eqref{eq:tmax_thm1}--\eqref{eq:Wsobolevbound} in  Theorem
\ref{thm:Wwell_posedness} are
independent of $\lambda$ and depend on $L$ only via the norms $\subnorm{\win}{\infty}$,
$\sobolevnorm{\FT{V}}{2/3}$, $\sobolevnorm{\FT{V}}{1/6}$, $\maxnorm{F[\win]}$ and $\maxnorm{F[W]_{t,t}}$.
Therefore, if one considers a sequence of solutions in $L$, then one
obtains upper bounds that are also independent of $L$ if
these norms are uniformly
bounded in $L$. For example, 
consider sequences  of initial data $\win^{(L)}$ and 
symmetric interaction potential $V^{(L)}$ for which there are constants $C_W,C_V,C_F>0$
such that $\subnorm{\win^{(L)}}{\infty}\leq C_W$,
$\maxnorm{F[\win^{(L)}]}\leq C_F$ and
$\sobolevnorm{\FT{V}^{(L)}}{\frac{2}{3}}^2\leq C_V$ for all $L\ge2$.
Fix final kinetic time $\tmax>0$ satisfying 
\[
    \tmax < \frac{1}{c_{\pp,1} C_V(1+C_F)^2}.
\]
Then there exists a unique
solution
$W^{(L)}$ for each $L$ on the fixed interval $[0,\tmax]$ as long as $L\ge
L_\pp(\lambda,\tau_0)$, where $\tau_0=\tmax$ in the case \ref{item:thm1case1} of Theorem \ref{thm:Wwell_posedness}
 and 
 $\tau_0=T_0$ in the case \ref{item:thm1case2}. Here, $\lambda$  and $T_0$
 may also depend on $L$. Moreover, the constants $C_W,C_V,C_F$ can be used to 
 obtain uniform upper bounds from \eqref{eq:Wsupbound}--\eqref{eq:Wsobolevbound}.

The Sobolev type norm 
$\sobolevnorm{\win}{1/3}$ can be used to bound $\maxnorm{F[\win]}$. By Corollary
\ref{cor:initialdata}
there exists a constant $C>0$, depending only on $\pp$ and the dimension $d$, such that 
\begin{align}
    \label{eq:maxnormboundedbyW}
     \maxnorm{F[\win]} \leq C(\sobolevnorm{\win}{1/3}+\sobolevnorm{\win}{1/3}^2).
\end{align}
Therefore, to find a constant $C_F$ that bounds $\maxnorm{F[\win^{(L)}]}$ uniformly in $L$,
it is sufficient to bound $\sobolevnorm{\win}{1/3}$.
In fact, Corollary \ref{cor:initialdata} is one of the main reasons for the
definition of the norms $\maxnorm{\cdot}$, $\sobolevnorm{\cdot}{1/3}$ and
$\sobolevnorm{\cdot}{\infty}$, including 
the definition of their weight functions. 

Note that \eqref{eq:tmax_thm1} together with \eqref{eq:Fnormpropagation}  proves
that the $\maxnorm{\cdot}$ norm of the collision control map $F[W]$ propagates
uniformly
in the time interval $[0,\tmax]$. 
However, we do not have similar control directly for  $W$. For example, in 
\eqref{eq:Wsupbound}, we cannot bound $\maxnorm{F[W]_{t,t}}$ by $\subnorm{\win}{\infty}$
without a constant factor proportional to $L$. On the other hand, we are not able to bound 
$\sobolevnorm{W_T}{1/3}$ with $\maxnorm{F[\win]}$ as is done in
\eqref{eq:Wsobolevbound} for the norm $\supsobolevnorm{W_T}{1/3}$, which would 
imply propagation of $\sobolevnorm{W_T}{1/3}$.

\begin{definition}
    \label{def:error}
    \textup{(Error term)} Suppose $d\ge3$ is given.  Let $\pp\in(0,1-2/d)$ 
     and define $\tildepp=(1-\pp)/2$.  Let $\lambda>0$ and $T>0$. Suppose both $\tau,\widetilde\tau:\R^2_+\to\R_+$ satisfy 
    one of the conditions \eqref{eq:taudiff} or \eqref{eq:tauconst}. We  define
    $\mathcal{E}_\pp(\lambda,\tau,\widetilde\tau, T)$ as follows
    \begin{enumerate}[(a)]
        \item if $\tau=\widetilde{\tau}$ then
            \begin{align}
                \mathcal{E}_\pp(\lambda,\tau,\widetilde{\tau},T) = 0,
            \end{align}
        \item if both $\tau$ and $\widetilde{\tau}$ are constants, $T_0>0$ and $\widetilde{T}_0>0$, respectively,  
             then
            \begin{align}
                \mathcal{E}_\pp(\lambda,\tau,\widetilde{\tau},T) &= 
         \eta_{d,\pp} T\left| T_0^{-(d\tildepp-1)}-\widetilde{T}_0^{-(d\tildepp-1)} \right|  
        \lambda^{2d\tildepp-2},
            \end{align}
            where $\eta_{d,\pp}=(d\tildepp-1)^{-1}(2+\frac{4}{\pi\pp})^d2^{-d\tildepp}$,
        \item if $d\tildepp\ne 2$ and only one of  $\tau$ and $\widetilde{\tau}$ is a constant and the 
            constant is $T_0>0$, then
            \begin{align}
                \mathcal{E}_\pp(\lambda,\tau,\widetilde{\tau},T) &=
                \alpha_d T 
         \min\left(1,\frac{\lambda}{\sqrt{\min(T,T_0)}}\right)^{p_d},
            \end{align}
            where $p_d=2d\tildepp-2$ for $d\tildepp<2$ and $p_d=2$ for $d\tildepp\ge2$, and \\
            ${\alpha_d=1+\eta_{d,\pp}(1+\max(|d\tildepp-2|^{-1},1))}$,
        \item if $d\tildepp = 2$ and only one of  $\tau$ and $\widetilde{\tau}$ is a constant and the 
            constant is $T_0>0$, then
            \begin{align}
                \mathcal{E}_\pp(\lambda,\tau,\widetilde{\tau},T) &=
                \left(
                1+
                \eta_{d,\pp}
                \left(
                1
                +
                \ln\left(\frac{\min(T_0,T)}{\lambda^2}+\rme\right)
                \right)
                \right)
                 T 
         \min\left(1,\frac{\lambda}{\sqrt{\min(T,T_0)}}\right)^{2}.
            \end{align}
    \end{enumerate}
\end{definition}
Note that $1<d\tildepp < d/2$, and therefore the case $d\tildepp=2$ is never possible for $d=3,4$.
We always have 
\begin{align}
                \mathcal{E}_\pp(\lambda,\tau,\widetilde{\tau},T) \to 0
\end{align}
if $\lambda\to0$ or $T\to 0$. The error function $\mathcal{E}_\pp(\lambda,\tau,\widetilde{\tau},T)$
is also continuous in $\lambda$ and $T$.

\begin{theorem}
    \label{thm:error}
     \textup{(Error estimate)}
    Take a lattice dimension $d\ge3$ and a side length $L\in\N$ with $L\ge2$. 
    Take any symmetric interaction potential
    $V:\lattice\to\R$.
    Consider initial data $\win\colon \duallattice \to
    \R$.
    Let $\pp\in(0,1-2/d)$.
    Take a final kinetic time $\tmax>0$ satisfying
    \[
         \tmax\leq\frac{1}{2c_{\pp,2}
         \sobolevnorm{\FT{V}}{2/3}^2\maxnorm{F[\win]}}.
    \]
    Here, the constant $c_{\pp,2}>0$, depending only on $\pp$ and the dimension $d$, is specified 
    in \eqref{eq:c2n3}.

    Take a coupling parameter $\lambda>0$ and
    $\tau,\widetilde{\tau}\colon\R_+^2\to\R$ such that one of the following
    holds
    \begin{enumerate}
        \item 
            $\tau(T,t)=T-t$ and $\widetilde{\tau}(T,t)=\widetilde{T}_0$ for some
            constant $\widetilde{T}_0>0$ and $L\ge L_\pp(\lambda,\max(\tmax,\widetilde{T}_0))$, or
        \item
            $\tau(T,t)=T_0$ and $\widetilde{\tau}(T,t)=\widetilde{T}_0$ for some
            constants $T_0,\widetilde{T}_0>0$ and $L\ge L_\pp(\lambda,\max(T_0,\widetilde{T}_0))$.
    \end{enumerate}
    Take parameters $q\in\{0,1\}$ and $\theta\in\{-,+\}$.
    Let $W, \widetilde{W} \in
    C([0,\tmax]\times\duallattice,\R)$, with  $W_0=\widetilde{W}_0=\win$. Suppose $W$
    is a solution to equation \eqref{eq:Wevo} with $\tau$ and $\widetilde W$ is a
    solution to equation
     \eqref{eq:Wevo} with $\widetilde \tau$. Then for all  $T\in[0,\tmax]$ there holds 
     \begin{align}
         \subnorm{W_{T}-\widetilde{W}_{T}}{\infty}
         &\leq 
         \Big(\sup_{T' \in[0,T]}
         \subnorm{\widetilde{W}_{T'}}{\infty}\Big)
         \mathcal{A}_T\Big(\sobolevnorm{\FT{V}}{1/6},\sobolevnorm{\FT{V}}{2/3},\ppnorm{F[\widetilde{W}]}{T}\Big)
         \mathcal{E}_\pp(\lambda,\tau,\widetilde{\tau},T),
     \end{align}
     and
     \begin{align}
         \ppnorm{F[W]-F[\widetilde{W}]}{T} 
         &\leq
         c_{\pp,4}\sobolevnorm{\FT{V}}{2/3}^2 \ppnorm{F[\widetilde{W}]}{T}^2
         \mathcal{E}_\pp(\lambda,\tau,\widetilde{\tau},T).
     \end{align}
     Here $\mathcal{A}_T:\R_+^3\to\R_+$,
     \begin{align}
         \mathcal{A}_T(\rho_{{1}},\rho_{{2}},m) = c_{5} 
         \rho_{{1}}^2
          m
          (1
          +
          c_{\pp,4}
          c_{6}
          \rho_{{2}}^2
          m
          T
          )
          \rme^{8c_{3}
          \rho_{{1}}^2
          m
          T
          }
     \end{align}
     and the constants $c_{3},c_{\pp,4},c_{5},c_{6}>0$, depending only on $\pp$ and the
     dimension $d$, are specified in \eqref{eq:c2n3}--\eqref{eq:c8}.
     In particular, then also
     \begin{align}
         \supsobolevnorm{W_{T}-\widetilde{W}_{T}}{1/3}
         \leq 
         c_{\pp,4} \sobolevnorm{\FT{V}}{2/3}^2 \ppnorm{F[\widetilde{W}]}{T}^2
         \mathcal{E}_\pp(\lambda,\tau,\widetilde{\tau},T).
     \end{align}
\end{theorem}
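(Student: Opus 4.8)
The plan is to subtract the two integral equations \eqref{eq:Wevo} satisfied by $W$ and $\widetilde W$, to propagate the discrepancy through a coupled pair of integral inequalities — one for $W-\widetilde W$ in $\subnorm{\cdot}{\infty}$ and one for the collision control maps $F[W]-F[\widetilde W]$ in $\ppnorm{\cdot}{T}$ — and to close the pair by Gronwall. Writing $D_t:=W_t-\widetilde W_t$, subtraction gives
\[
 D_T=\int_0^T\rmd t\,\Big(\cbos(W_t,\tau(T,t))(k_0)-\cbos(\widetilde W_t,\widetilde\tau(T,t))(k_0)\Big),
\]
and I split the integrand as $\big(\cbos(W_t,\tau(T,t))-\cbos(\widetilde W_t,\tau(T,t))\big)+\big(\cbos(\widetilde W_t,\tau(T,t))-\cbos(\widetilde W_t,\widetilde\tau(T,t))\big)$. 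Using the bilinear reformulation of $\cbos$ in terms of the collision control map (the same one underlying Theorem~\ref{thm:Wwell_posedness}), the first (\emph{propagation}) term equals $B_{\tau}(F[W]-F[\widetilde W],W_t)+B_{\tau}(F[\widetilde W],D_t)$ for bilinear expressions $B_\tau$, where each factor is either one of the unknown differences or is controlled by the a priori bounds \eqref{eq:Fnormpropagation}--\eqref{eq:Wsobolevbound}; the second (\emph{source}) term depends only on $\widetilde W$ and on the difference $\delta_{\lambda,\tau(T,t)}-\delta_{\lambda,\widetilde\tau(T,t)}$. The same decomposition applied to the closed evolution equation for $F[W]$ produces an analogous integral identity for $F[W]_{t_1,t_2}-F[\widetilde W]_{t_1,t_2}$, with the same propagation/source split and the same $q,\theta$ cases treated uniformly.

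The crux is the source term. Using the representation \eqref{eq:energy_delta} of $\delta_{\lambda,\tau_0}$ as $\frac1{2\pi}\int_{-\lambda^{-2}\tau_0}^{\lambda^{-2}\tau_0}\rmd s\,\rme^{-\ci s\Omega}$, and the fact that for the nearest-neighbour dispersion \eqref{eq:omega} the free propagator phase $\rme^{-\ci s\omega(k)}$ is exactly the generic phase factor $\rme^{-\ci sc_0}\rme^{\ci\phi(k;s\vc{1},0)}$ from \eqref{eq:phi}, I would — after combining the three time phases via the summation rule of Lemma~\ref{lem:cossum} and resolving the momentum $\delta_L$ — rewrite $\cbos(\widetilde W_t,\tau_0)(k_0)$ as an $s$-integral over $[-\lambda^{-2}\tau_0,\lambda^{-2}\tau_0]$ of bilinear expressions in the component functions $F[\widetilde W]^{(n)}_{t,t}$ evaluated at amplitude $R$ of order $|s|\vc{1}$, against a kernel bounded by $\sobolevnorm{\FT{V}}{2/3}^2$ (and by $\sobolevnorm{\FT{V}}{1/6}^2$ for the contribution feeding \eqref{eq:Wsupbound}). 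Hence the source term is controlled by the integral of $|F[\widetilde W]|$ over the \emph{symmetric difference} of the intervals $[-\lambda^{-2}\tau(T,t),\lambda^{-2}\tau(T,t)]$ and $[-\lambda^{-2}\widetilde\tau(T,t),\lambda^{-2}\widetilde\tau(T,t)]$, where the propagator estimates of Proposition~\ref{prop:propagator_multi_d} supply the decay of $F[\widetilde W]$ in the amplitude variable that makes the tail of the $s$-integral converge, with rate governed by the exponent $d\tildepp>1$. Carrying out the remaining $s$- and $t$-integrations case by case — both $\tau,\widetilde\tau$ constant; exactly one constant with $d\tildepp\ne2$; and the borderline $d\tildepp=2$ where the tail integral produces a logarithm — reproduces the four cases of $\mathcal{E}_\pp(\lambda,\tau,\widetilde\tau,T)$ of Definition~\ref{def:error}, together with the explicit constants $\eta_{d,\pp},\alpha_d$. (For remaining times so small that $\lambda^{-2}\min(\tau,\widetilde\tau)\lesssim1$ one uses the trivial bound $|\delta_{\lambda,\tau_0}|\le\lambda^{-2}\tau_0/\pi$ rather than the decay estimate, which is the origin of the truncation $\min(1,\lambda/\sqrt{\min(T,T_0)})$.)

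With the source controlled by $\sobolevnorm{\FT{V}}{2/3}^2\ppnorm{F[\widetilde W]}{T}^2\,\mathcal{E}_\pp(\lambda,\tau,\widetilde\tau,T)$, the loop closes. The integral inequality for $\ppnorm{F[W]-F[\widetilde W]}{T}$ has the form (source)$\,+\,$(const)$\cdot\sobolevnorm{\FT{V}}{2/3}^2\ppnorm{F[\widetilde W]}{T}\,T\cdot\ppnorm{F[W]-F[\widetilde W]}{T}$, and the hypothesis $\tmax\le(2c_{\pp,2}\sobolevnorm{\FT{V}}{2/3}^2\maxnorm{F[\win]})^{-1}$ together with \eqref{eq:Fnormpropagation} forces the self-coupling coefficient to be $\le\frac12$; absorbing it doubles the constant and gives the claimed $F$-difference bound with a new constant $c_{\pp,4}$. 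Substituting this into the inequality for $D_T$, after replacing $W_t=\widetilde W_t+D_t$ and $F[W]=F[\widetilde W]+(F[W]-F[\widetilde W])$ everywhere so that every $D$-free term carries an explicit factor $\widetilde W_t$ and an $\mathcal{E}_\pp$, the remaining terms linear in $D$ are treated by Gronwall over $\int_0^T\rmd t$; this generates the exponential factor $\rme^{8c_{3}\sobolevnorm{\FT{V}}{1/6}^2\ppnorm{F[\widetilde W]}{T}T}$ of the type in \eqref{eq:Wsupbound}, the prefactor $\sup_{T'\in[0,T]}\subnorm{\widetilde W_{T'}}{\infty}$, and the polynomial correction $(1+c_{\pp,4}c_{6}\sobolevnorm{\FT{V}}{2/3}^2\ppnorm{F[\widetilde W]}{T}T)$, i.e.\ exactly $\mathcal{A}_T(\sobolevnorm{\FT{V}}{1/6},\sobolevnorm{\FT{V}}{2/3},\ppnorm{F[\widetilde W]}{T})$. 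Finally, the $\supsobolevnorm{\cdot}{1/3}$ bound is immediate from \eqref{eq:FisFTofW}: $W_T-\widetilde W_T$ is the inverse Fourier transform of $F[W]_{0,T}^{(1)}(0,0,0,+,\cdot)-F[\widetilde W]_{0,T}^{(1)}(0,0,0,+,\cdot)$, so $\supsobolevnorm{W_T-\widetilde W_T}{1/3}\le\maxnorm{F[W]_{0,T}-F[\widetilde W]_{0,T}}\le\ppnorm{F[W]-F[\widetilde W]}{T}$.

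The main obstacle is the source-term estimate of the second paragraph: obtaining the oscillatory-integral bound with the sharp power of $\lambda$ and the correct dependence on the two time parameters. This is precisely where one must not estimate $\delta_{\lambda,\tau_0}$ crudely, but instead pass through the collision-control-map reformulation, invoke the amplitude decay of the propagator estimates with the exponent $d\tildepp$, and split into the trivial-bound regime for short remaining times versus the decay regime otherwise, with the borderline $d\tildepp=2$ handled separately. The bilinear bookkeeping, the Gronwall absorption under the $\tmax$ hypothesis, and the transfer of the bound from $F$ to $W$ are then routine.
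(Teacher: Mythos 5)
Your proposal follows essentially the same route as the paper: express the collision operator through the bilinear form $\cbiq$ with the collision control map, telescope the difference into a propagation part and a source part, control the source over the symmetric difference of the $s$-intervals by the amplitude decay encoded in $\mathcal{E}_\pp$, and close the pair of inequalities for $\subnorm{W-\widetilde W}{\infty}$ and $\ppnorm{F[W]-F[\widetilde W]}{T}$ by absorption under the $\tmax$ hypothesis plus Gr\"onwall; the final Sobolev bound via \eqref{eq:FisFTofW} is also how the paper concludes. One small imprecision: the amplitude decay of $F[\widetilde W]_{t,t}$ at general $t$ is \emph{not} supplied by Proposition~\ref{prop:propagator_multi_d} applied pointwise in $t$ (that would require a uniform Sobolev bound on $W_t$, which the paper explicitly remarks is not available); rather, it comes from the finiteness and propagation of the weighted norm $\ppnorm{F[\widetilde W]}{T}$ --- the weight $\Phi$ is merely \emph{designed} with the propagator estimate in mind so that $\maxnorm{F[\win]}$ can be bounded from the initial data --- but since you do state the bound in terms of $\ppnorm{F[\widetilde W]}{T}$ in the end, this is a slip in attribution rather than a gap in the argument.
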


The strictly positive constants seen in the above Theorems \ref{thm:Wwell_posedness} and 
\ref{thm:error} are 
\begin{alignat}{2}
        \label{eq:c2n3}
        c_{\pp,2} &= 32(2^{2d/3}+3)c_{\pp,8}, \quad c_{3} &&=2^{d/6}(2^{d/6}+3)c_{6},
        \\
        \label{eq:c4n5}
        c_{\pp,4}&= 32(2^{2d/3}+3)c_{\pp,9}^d, \quad c_{5} &&= 2^{d/6}(2^{d/6}+3)8,
\end{alignat}
where
\begin{align}
        \label{eq:c6inf}
         c_{6}&=\inf\{c_{\pp,6}\colon 0<\pp<1-2/d\},
        \\
        \label{eq:c6}
        c_{\pp,6} &= 2+2^{d(1-\pp)/2+1}\left( 1+\frac{2}{\pi\pp} \right)^d \frac{2}{d(1-\pp)-2},
        \\
        \label{eq:c7}
        c_{\pp,7} &=
        \frac{1}{4}
        \left( 
            c_{\pp,9}\sqrt{\pi}
            \frac{\Gamma(d(1-\pp)/4-1/2)}{\Gamma(d(1-\pp)/4)}
        \right)^d,
        \\
        \label{eq:c8}
        c_{\pp,8} &= \max \left( 2d (c_{\pp,9})^d, 2c_{\pp,7} {(1-1/d)^{-d}} \right) ,
        \\
        \label{eq:c9}
       c_{\pp,9} &=  2^{7/6}\sqrt{3}\mean{16^{1/(1-\pp)}/2}^{1/3}.
\end{align}
Here, $\Gamma$ denotes the gamma function.  All these constants
\eqref{eq:c2n3}--\eqref{eq:c8} are strictly positive and depend
only on $\pp$ and the lattice dimension $d$.  We will be using these constants
\eqref{eq:c2n3}--\eqref{eq:c8} throughout the paper.

\subsection{Definitions of norms and function spaces}
\label{sec:norms}

Given two topological spaces $X$ and $Y$, we use $C(X,Y)$ to denote the set of
continuous functions from $X$ to $Y$. For the lattice $\lattice$, dual lattice
$\duallattice$, the set $\{-,+\}$, and other discrete sets we will use the corresponding discrete
topology, i.e., every subset is an open set.

For each $p\in\R$  we define Sobolev type norms $\sobolevnorm{\cdot}{p}$ and
$\supsobolevnorm{\cdot}{p}$ by setting 
\begin{align}
    \label{eq:sobolev}
    \sobolevnorm{f}{p}
    &\coloneqq
    \sum_{y\in\lattice} \left( \prod_{j=1}^d \mean{y_j}^{p} \right) |\IFT{f}(y)|,
    \\
    \label{eq:supsobolevnorm}
    \supsobolevnorm{f}{p}
    &\coloneqq
    \sup_{y\in\lattice} \left( \prod_{j=1}^d \mean{y_j}^{p} \right) |\IFT{f}(y)|,
\end{align}
for all $f\colon \duallattice\to \C$. Here, 
$\regabs{\cdot}:\R\to [1,\infty)$, $\regabs{r}=\sqrt{1+r^2}$ is the regularized
absolute value.
We will use $\subnorm{\cdot}{\infty}$ to denote the standard
sup norm, i.e.  $\subnorm{f}{\infty}=\sup_{x\in\dom(f)}|f(x)|$ for any complex
or real valued function $f$ with $\dom(f)$ being its domain.

Since $\mv$ appears quite often in the text, we introduce the shorthand
notation 
\begin{align}
    \label{eq:mv}
    M_V\coloneqq \mv.
\end{align}

For each $T>0$, $L\ge 2$ and $\pp\in(0,1-d/2)$, the $\ppnorm{\cdot}{T}$ norm, used for the collision
control map $F[W]$, is defined for any function $G:[0,T]^2\times\xl\to\C^2$ by
setting
\begin{align}
    \label{eq:norm_T}
    \ppnorm{G}{T}&\coloneqq\sup_{t_1,t_2\in[0,T]}
    \maxnorm{G_{t_1,t_2}},
\end{align}
where the $\maxnorm{\cdot}$ norm is defined 
for every $G:\xl\to\C^2$ by setting
\begin{align}
    \label{eq:norm_max}
    \maxnorm{G}
    &\coloneqq
    \max(\phinorm{G^{(1)}},\phinorm{G^{(2)}}).
\end{align}
Here, the norm $\phinorm{\cdot}$ is is defined 
for  every $G:\xl\to\C$ by setting
\begin{align}
    \label{eq:norm_phi}
    \phinorm{G} &\coloneqq \sup_{(R,k',u,\sigma,x)\in\xl}
   \Phi^d(R,x)|G(R,k',u,\sigma,x)|.
\end{align}
Here, the weight function $ \Phi^d: \R^d\times\lattice\to(0,\infty)$,
$\Phi^d(R,x)=\prod_{j=1}^d\Phi(R_j,x_j)$ is a \(d\)-fold tensor product of the
one dimensional function $\Phi: \R\times\onelattice\to (0,\infty)$, given by
\begin{align}
    \label{eq:weight_phi}
    \Phi(R_1,x_1) &=
    \regabs{x_1}^{1/3-\gamma(R_1)}
        \max 
        \left( 
            1,
            {\frac{ \varphi(R_1)} {\regabs{x_1}^{1/2}}}
        \right).
\end{align}
Here, assuming $L\ge2$ and $\pp\in(0,1-2/d)$ are given, the functions $\varphi,\varphi_1, \varphi_2  : \R\to[1,\infty)$
are defined by
\begin{align}
    \varphi(r) &=
    \ind(|r|\leq L^{\pp})\varphi_1(r)
    +
    \ind(|r|>L^{\pp})\varphi_2(r),
    \\
    \varphi_1(r) &= \langle r \rangle^{1/2},
    \\
    \varphi_2(r) &= \frac{\langle L^{\pp} \rangle^{1/2}}{\langle
    |r|-L^{\pp}\rangle^{\pp/2}},
\end{align}
and $\gamma:\R\to[0,1/3]$,
\begin{align}
         \gamma(r)
         =
     \begin{cases}
         0
         &, \ |r|\leq a_L  \\
     \frac{1}{3}\frac{r-a_L}{b_L-a_L} &,  a_L < |r| <b_L \\
         1/3
                           &, \ |r|\ge b_L
    \end{cases},
\end{align}
where $a_L = \frac{3}{2}L^{\pp}$ and $b_L=\max(2L^{\pp}, L/8)$. How $\gamma$ behaves
between $a_L$ and $b_L$ is not particularly important, except that it is an
increasing function and has a Lipschitz constant decaying faster than $L^{-\pp}$ for
all large enough $L$ (see Lemma \ref{lem:3Phi_bound}). In fact, the Lipschitz
constant of $\gamma$ is bounded by $\frac{32}{3}L^{-1}$ whenever $2L^{\pp}\leq L/8$, i.e., for
$L\ge 16^{\frac{1}{1-\beta}}$.

The weight function $\Phi$ is continuous. It is also symmetric in both arguments, i.e.,
    $\Phi(-R_1,x_1)=\Phi(R_1,x_1)$ and $\Phi(R_1,-x_1)=\Phi(R_1,x_1)$. 
    For all $R_1\in\R$ and $x_1\in\lattice$, there holds 
    $$1\leq\Phi(R_1,x_1)\leq \mean{L/2}^{1/3}.$$
On the interval $[0,L^{\pp}]$  the function $\varphi$ is strictly increasing and
is bounded below by $1$, while on the interval $(L^{\pp},\infty)$ the function
$\varphi$ is strictly decreasing and $\varphi(r)\to 0$ as $|r|\to\infty$. Moreover, 
$\varphi(r)\leq \regabs{L^{\pp}}^{1/2}$ for all $r\in\R$.

There are {two} main motivations for the definition \eqref{eq:weight_phi} of the weight
$\Phi$. The first is related to the initial data $F[\win]$ of the collision control map.
The main structure of $\Phi$ is mirroring Corollary \ref{cor:initialdata}, so that  
we can obtain the bound
\eqref{eq:maxnormboundedbyW} that allows us to use the initial data $\win$ to
check whether $\maxnorm{F[\win]}$ is bounded in $\lambda$ and $L$. To obtain the previously
discussed
bound \eqref{eq:maxnormboundedbyW}, the functions $\gamma(R_1)$ and $\varphi_2(R_1)$ are needed to get 
boundedness in $x_1$ and $R_1$, respectively, when $|R_1|>L/8$. 
In fact, there exists a constant $C_\pp>0$ such that $\Phi(R_1,x_1)<C_\pp$
for all $L\ge2$, $R_1>L/8$ and $x_1\in\onelattice$. Therefore, the  bound \eqref{eq:maxnormboundedbyW} follows automatically from
Corollary
\eqref{cor:initialdata}.
The specific 
forms of $\gamma$ and $\varphi_2$ are essentially chosen so that Lemma
\ref{lem:3Phi_bound} holds. The other main motivation for the definition 
weight $\Phi$, especially for $\varphi_1$, is that we will use it to integrate 
over the challenging $\int_{\tau_0\lambda^{-2}}^{\tau_0\lambda^{-2}}\rmd s (\dots)$ integral discussed in the beginning of Section \ref{sec:results}.

We define three Banach spaces corresponding to the above defined norms, namely, 
\begin{align}
    \Cphi &\coloneqq \{G\in C(\xl,\C) \ : \ \phinorm{G}<\infty \}
    \\
    \Cmax&\coloneqq \{G\in C(\xl,\C^2) \ : \ \maxnorm{G}<\infty \}
    \\
    \CT{T_0} &\coloneqq \{G\in C([0,T_0]^2\times\xl,\C^2) \ : \ \ppnorm{G}{T_0}<\infty \}.
\end{align}

\subsection{Structure and key ideas for proofs}
\label{sec:keyideas}

The reasons for defining the collision control map $F[W]$ through \eqref{eq:F} are twofold. The first reason is that we can rewrite the collision operator $\cbos$ as a bilinear operator 
$\cbiq(\tau_0)$, namely, 
\begin{align}
    \label{eq:bilinear_ops}
    \cbiq(\tau_0)[F[W]_{t,t},W_t]&= 
    \mathcal{C}_{{q}}^\lambda(W_t,\tau_0).
\end{align}
The other reason is that $F[W]$ itself satisfies a closed evolution equation. More precisely, there is a bilinear operator $\mathcal{I}_\tau$ such that  if $W$ is
a solution to the evolution equation \eqref{eq:Wevo} with $\tau$, then
\begin{align}
    \label{eq:bilinear_ops2}
    F[W]=\mathcal{I}_\tau[F[W],F[W]].
\end{align}
When proving the well-posedness Theorem \ref{thm:Wwell_posedness}, the idea is
to solve $F[W]$ from \eqref{eq:bilinear_ops2} and control it in the weighted sup
norm $\ppnorm{\cdot}{\tmax}$. The weight $\Phi^d$ used in the norm
 is chosen so that, after using \eqref{eq:bilinear_ops}
 together with $|F[W]|\leq \ppnorm{F[W]}{\tmax}/|\Phi^d|$, the remaining
$1/|\Phi^d|$ can be used to eliminate the problems arising from the \(\delta_{\lambda, \tau(T,t)}\)-term which have been discussed in the beginning of Section \ref{sec:results}.

To control the error between the two solutions $ W$
and $\widetilde{W}$ in Theorem \ref{thm:error}, we employ a similar strategy. 
First, control the difference  $\ppnorm{F[W]-
F[\widetilde{W}]}{\tmax}$ of the collision control maps by using properties 
of the bilinear operator $\mathcal{I}_\tau$ that appears in the evolution equation
\eqref{eq:bilinear_ops2}. Then use this together with \eqref{eq:bilinear_ops} to control 
the difference of the solution $ W$ and $\widetilde{W}$.

The main structure for the definition \eqref{eq:weight_phi} of the weight
function $\Phi$ comes from propagator estimates that will be presented in the
following Section \ref{sec:lemmata}. The definition of the bilinear operator
$\cbiq(\tau_0)$ will be given in Section \ref{sec:F} which also includes a proof of 
how \eqref{eq:bilinear_ops} can be used, together with
$F[W]$, to rewrite the collision operator $\cbos$. In Section
\ref{sec:fixpoint_operator}, we will write out the bilinear operator
$\mathcal{I}_\tau$ and show that it gives the evolution equation
\eqref{eq:bilinear_ops2} of $F[W]$. Required estimates related to the
weight function $\Phi$ and the operator $\mathcal{I}_\tau$ will be given in
Section \ref{sec:estimates}.

In Section \ref{sec:fixpoint_eqs}, we use Banach
fixed--point theorem and Gr\"onwall's inequality to prove that there exists a
unique $G$ satisfying the fixed--point equation $G=\mathcal{I}_\tau[G,G]$ and that for a given $G$ there
is a unique $\mathcal{G}$ satisfying $\mathcal{G}=\mathcal{I}_\tau[G,\mathcal{G}]$. 
Therefore, by uniqueness $\mathcal{G}=G$ whenever the initial data of $\mathcal{G}$ and $G$ are equal. 
Banach fixed-point
theorem and Gr\"onwall's inequality are also used to prove that for a given $G$
there is a unique $W$ satisfying an evolution equation similar to
$\eqref{eq:Wevo}$ but with the collision
operator $\cbos(W,\tau_0)$ replaced by $\cbiq(\tau_0)[G,W]$.

After noticing
that $F[W]=\mathcal{I}_\tau[G,F[W]]$, we get by the aforementioned uniqueness
results that $G=F[W]$. Hence, we are able to use \eqref{eq:bilinear_ops} and
conclude that $W$ is a unique solution to the evolution equation
\eqref{eq:Wevo} with the original collision operator $\cbos(W,\tau_0)$, finishing the proof of the well-posedness Theorem \ref{thm:Wwell_posedness}. This
will be done in Section \ref{sec:proofs} together with the proof of the error
Theorem \eqref{thm:error}.

\section{Preliminaries}
\label{sec:lemmata}

In this Section, we will go through preliminary estimates needed for the proof of the main results.

We will be using the Parseval's Theorem in the following form. 
\begin{theorem}
    \label{thm:parseval}
    \textup{(Parseval's Theorem)}
    For every $f,g\colon \duallattice\to\C$ there holds
    \begin{align}
        \label{eq:parseval}
        \int_{\duallattice} \rmd k f(k)g(k)^* = \sum_{x\in\lattice} \IFT{f}(x)\IFT{g}(x)^*.
    \end{align}
\end{theorem}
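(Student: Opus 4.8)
The plan is to unfold both inverse Fourier transforms on the right-hand side of \eqref{eq:parseval} via the definition \eqref{eq:IFT}, interchange the (finite) sums freely, and collapse the result using the orthogonality of the lattice exponentials. Concretely, I would first write
\[
 \sum_{x\in\lattice} \IFT{f}(x)\IFT{g}(x)^*
 = \sum_{x\in\lattice}
 \left(\int_\duallattice \rmd k\, f(k)\,\rme^{\ci 2\pi k\cdot x}\right)
 \left(\int_\duallattice \rmd k'\, g(k')\,\rme^{\ci 2\pi k'\cdot x}\right)^{\!*},
\]
and observe that the conjugate of the second factor is $\int_\duallattice \rmd k'\, g(k')^*\,\rme^{-\ci 2\pi k'\cdot x}$, since $\int_\duallattice \rmd k' = \frac{1}{|\duallattice|}\sum_{k'\in\duallattice}$ is a real positive functional. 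As all the sums are finite, I may bring the $x$-summation inside both integrals to get
\[
 \int_\duallattice \rmd k \int_\duallattice \rmd k'\;
 f(k)\, g(k')^* \sum_{x\in\lattice} \rme^{\ci 2\pi(k-k')\cdot x}.
\]

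The one step that needs a word of justification is the identity $\sum_{x\in\lattice}\rme^{\ci 2\pi(k-k')\cdot x} = \delta_L(k-k') = |\lattice|\,\ind(k=k' \bmod \duallattice)$ for $k,k'\in\duallattice$. This factorizes over coordinates into one-dimensional sums $\sum_{x_1\in\onelattice}\rme^{\ci 2\pi(k_1-k_1')x_1}$; writing $k_1=m_1/L$ and $k_1'=m_1'/L$ with $m_1,m_1'\in\onelattice$, the base $\rme^{\ci 2\pi(m_1-m_1')/L}$ is an $L$-th root of unity which is nontrivial unless $m_1=m_1'$ (since $|m_1-m_1'|<L$), so each one-dimensional sum equals $L$ when $m_1=m_1'$ and $0$ otherwise. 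Taking the product over the $d$ coordinates gives the asserted identity; this is precisely the computation underlying the pointwise invertibility of \eqref{eq:FT} and \eqref{eq:IFT} already noted above.

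Substituting this identity and using once more $\int_\duallattice \rmd k' = \frac{1}{|\duallattice|}\sum_{k'\in\duallattice}$ together with $|\duallattice| = |\lattice| = L^d$, the inner $k'$-integral picks out only the term $k'=k$ with prefactor $|\lattice|/|\duallattice| = 1$, which leaves exactly $\int_\duallattice \rmd k\, f(k)\,g(k)^*$, i.e.\ \eqref{eq:parseval}. I do not expect a genuine obstacle here: the proof is a direct computation, and the only mildly delicate point, the orthogonality identity, is routine. The statement is isolated as a theorem mainly because this particular normalization of Parseval's identity, pairing an average $\int_\duallattice$ against a plain sum over $\lattice$, is invoked repeatedly in the estimates that follow.
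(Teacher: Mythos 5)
Your proof is correct. The paper states this Parseval identity without proof, treating it as a standard fact about the discrete Fourier transform on a finite lattice; your direct computation — unfolding $\IFT{f},\IFT{g}$ via \eqref{eq:IFT}, swapping the finite sums, invoking orthogonality $\sum_{x\in\lattice}\rme^{\ci 2\pi(k-k')\cdot x}=|\lattice|\ind(k=k')$ for $k,k'\in\duallattice$ (valid since each coordinate difference has modulus $<1$), and cancelling $|\lattice|/|\duallattice|^2$ against the normalization of $\int_\duallattice\rmd k$ — is exactly the argument one would supply, and every step is sound.
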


The following Lemma \ref{lem:regabs} considers the regularized absolute value $\mean{\cdot}=\sqrt{1+(\cdot)^2}$.
\begin{lemma}
    \label{lem:regabs}
    Let $n\in\N$. Then 
    \begin{align}
        \label{eq:regabs}
        \frac{\mean{x}}{n\prod_{j=1}^{n-1}\mean{y_j}} 
        &\leq 
        \mean{x+\sum_{j=1}^{n-1}y_j }
        \leq
        n\mean{x}\prod_{j=1}^{n-1}\mean{y_j }
    \end{align}
    for all $x\in\R$ and $y\in\R^{n-1}$. 
\end{lemma}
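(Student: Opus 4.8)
The plan is to prove the two-sided estimate \eqref{eq:regabs} by reducing everything to the single-variable case $n=2$, i.e. to the inequality
\[
    \frac{\mean{x}}{2\mean{y}} \le \mean{x+y} \le 2\mean{x}\mean{y}, \qquad x,y\in\R,
\]
and then iterating. For the case $n=2$ the upper bound follows from $1+(x+y)^2 \le 1 + 2x^2 + 2y^2 \le 2(1+x^2)(1+y^2)$, where the last step uses $2(1+x^2)(1+y^2) = 2 + 2x^2 + 2y^2 + 2x^2y^2 \ge 2 + 2x^2 + 2y^2$; taking square roots gives $\mean{x+y}\le\sqrt{2}\,\mean{x}\mean{y}\le 2\mean{x}\mean{y}$. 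The lower bound for $n=2$ is obtained by applying the upper bound with $x$ replaced by $x+y$ and $y$ replaced by $-y$: $\mean{x} = \mean{(x+y)+(-y)} \le 2\mean{x+y}\mean{-y} = 2\mean{x+y}\mean{y}$, hence $\mean{x+y} \ge \mean{x}/(2\mean{y})$.

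Next I would bootstrap to general $n$ by induction on $n$. The base case $n=1$ is trivial since both sides equal $\mean{x}$ (the empty product is $1$ and the constant $n=1$). For the induction step, write $x + \sum_{j=1}^{n-1} y_j = \bigl(x + \sum_{j=1}^{n-2} y_j\bigr) + y_{n-1}$ and apply the $n=2$ estimate with first argument $x' \coloneqq x + \sum_{j=1}^{n-2}y_j$ and second argument $y_{n-1}$, followed by the induction hypothesis applied to $\mean{x'}$. For the upper bound this gives
\[
    \mean{x+\textstyle\sum_{j=1}^{n-1}y_j} \le 2\mean{x'}\mean{y_{n-1}} \le 2\cdot(n-1)\mean{x}\prod_{j=1}^{n-2}\mean{y_j}\cdot\mean{y_{n-1}} = n\,\mean{x}\prod_{j=1}^{n-1}\mean{y_j},
\]
and symmetrically for the lower bound one gets $\mean{x+\sum y_j} \ge \tfrac{1}{2}\mean{x'}/\mean{y_{n-1}} \ge \tfrac{1}{2(n-1)}\mean{x}\big/\big(\prod_{j=1}^{n-2}\mean{y_j}\mean{y_{n-1}}\big) = \mean{x}\big/\big(n\prod_{j=1}^{n-1}\mean{y_j}\big)$, using $2(n-1)\le n\cdot ?$— wait, here I should be slightly careful: $2(n-1) \le n$ fails, so instead I apply the $n=2$ lower bound first to split off one factor of $1/(2\mean{y_{n-1}})$ and the induction hypothesis (lower bound form) to get a factor $1/((n-1)\prod_{j=1}^{n-2}\mean{y_j})$, and then note $2(n-1) = n + (n-2) \ge n$ for $n\ge2$, so the product of the denominators $2(n-1)\prod_{j=1}^{n-1}\mean{y_j}$ is at least $n\prod_{j=1}^{n-1}\mean{y_j}$, giving the claimed lower bound. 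So the only arithmetic fact needed beyond the $n=2$ case is $2(n-1)\ge n$ for $n\ge 2$.

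I do not expect any real obstacle here; the lemma is elementary. The only point requiring a moment's attention is getting the constant $n$ (rather than $2^{n-1}$) on both sides, which is why the induction must be set up to combine the sharper bound $\sqrt{2}$ from the $n=2$ case with the telescoping of the linear constants; the clean way is to track $\sqrt{2}^{\,n-1}$ through the induction and then observe $\sqrt{2}^{\,n-1}\le n$ for all $n\ge1$ (equivalently $2^{n-1}\le n^2$, which holds for $n\le4$ by inspection and for $n\ge5$ by a trivial induction), so that the stated constant $n$ is in fact valid and not merely $2^{n-1}$. Alternatively, and perhaps more transparently for the writeup, one can simply prove the weaker bound with constant $2^{n-1}$ by the same induction and then replace $2^{n-1}$ by $n$ only in the regimes where this is needed; but since $2^{n-1}\le n^2$ is false for large $n$, the cleanest route really is to carry $\sqrt 2^{\,n-1}$ and invoke $\sqrt2^{\,n-1}\le n$.
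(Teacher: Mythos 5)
Your plan has a fatal arithmetic gap. The iterative strategy---peel off one variable at a time, applying the $n=2$ bound at each step---inevitably produces an \emph{exponential} total constant: if each step costs a multiplicative factor $c>1$, then $n-1$ steps cost $c^{\,n-1}$. You propose tracking the sharp constant $\sqrt 2$ per step and then invoking $\sqrt2^{\,n-1}\le n$; but this is false. Indeed $\sqrt2^{\,n-1}\le n$ is equivalent to $2^{n-1}\le n^2$, which fails already at $n=7$ ($64>49$), and worse and worse thereafter, since an exponential eventually dominates any polynomial. You even observe in your last sentence that ``$2^{n-1}\le n^2$ is false for large $n$'' and then immediately contradict yourself by calling the route via $\sqrt2^{\,n-1}\le n$ ``the cleanest.'' The contradiction is real: these are the same inequality. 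Nor can any fixed per-step constant rescue the telescoping argument, since the true one-step bound $\mean{x+y}\le(2/\sqrt3)\,\mean{x}\mean{y}$ is already sharp, and $(2/\sqrt3)^{n-1}$ still grows exponentially.

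The paper sidesteps this entirely with a single non-iterative application of the triangle inequality. Writing $z=x+\sum_{j}y_j$, a sum of $n$ terms, one has $|z|\le n\sqrt{M^2-1}$ where $M=\max\big(\mean{x},\mean{y_1},\dots,\mean{y_{n-1}}\big)$, and hence
\[
 \mean{z}^2 = 1+z^2 \le 1 + n^2(M^2-1) \le n^2 M^2,
\]
giving $\mean{z}\le n\,M \le n\,\mean{x}\prod_j\mean{y_j}$ (the last step because every factor is $\ge1$). The lower bound then follows from the upper bound applied to $x=z+(-\sum_j y_j)$, which is the same reflection trick you used in your $n=2$ step. To repair your write-up, discard the induction and replace it with this one-shot max estimate; your $n=2$ reflection argument is still the right idea for deriving the lower bound from the upper bound.
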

\begin{proof}
    The inequality on the right follows from 
    \begin{align*}
        \mean{x+\sum_{j=1}^{n-1}y_j }
        \leq
        n\max(\mean{x},\mean{y_1},\dots,\mean{y_{n-1}})
        \leq
        n\mean{x}\prod_{j=1}^{n-1}\mean{y_j }.
    \end{align*}
    Now the right hand side implies 
    \begin{align*}
        \mean{x} &= 
        \mean{x+\sum_{j=1}^{n-1}y_j-\sum_{j=1}^{n-1}y_j }
        \leq
        n\mean{x+\sum_{j=1}^{n-1}y_j}\prod_{j=1}^{n-1}\mean{y_j }.
    \end{align*}
    Hence, the left hand side of \eqref{eq:regabs} holds.
    
\end{proof}
\begin{corollary}
    \label{cor:modregabs}
    Let $n\in\N$. Then 
    \begin{align}
        \label{eq:modregabs}
        \frac{\mean{x}}{n\prod_{j=1}^{n-1}\mean{y_j}} 
        &\leq 
        \mean{(x+\sum_{j=1}^{n-1}y_j)\bmod\onelattice }
        \leq
        n\mean{x}\prod_{j=1}^{n-1}\mean{y_j }
    \end{align}
    for all $x\in\onelattice$ and $y\in\Z^{n-1}$. 
\end{corollary}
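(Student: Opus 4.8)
The plan is to deduce Corollary~\ref{cor:modregabs} directly from Lemma~\ref{lem:regabs}, using only the elementary fact that reduction modulo the one-dimensional lattice $\onelattice$ never increases the regularized absolute value. The first step is therefore to record this auxiliary observation: for every $w\in\Z$, writing $w'\coloneqq w\bmod\onelattice\in\onelattice$ for the chosen representative, one has $|w'|\le|w|$. Indeed, if $|w|<L/2$ then $w\in\onelattice$ and $w'=w$; if $|w|=L/2$ (which can only occur for even $L$) then $|w'|=L/2=|w|$; and if $|w|>L/2$ then $|w'|\le L/2<|w|$. Since $r\mapsto\mean{r}=\sqrt{1+r^2}$ is increasing in $|r|$, this yields $\mean{w\bmod\onelattice}\le\mean{w}$ for all $w\in\Z$.

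Granting this observation, the right-hand inequality of \eqref{eq:modregabs} is immediate: setting $z\coloneqq x+\sum_{j=1}^{n-1}y_j\in\Z$, the observation gives $\mean{z\bmod\onelattice}\le\mean{z}$, and the right-hand inequality of \eqref{eq:regabs} in Lemma~\ref{lem:regabs} bounds $\mean{z}$ by $n\mean{x}\prod_{j=1}^{n-1}\mean{y_j}$; chaining the two estimates gives the claim.

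For the left-hand inequality the idea is to run the same argument in reverse. Write $z'\coloneqq z\bmod\onelattice$, so that $z'\equiv z\pmod{L}$ and hence $z'-\sum_{j=1}^{n-1}y_j\equiv z-\sum_{j=1}^{n-1}y_j=x\pmod{L}$; since $x\in\onelattice$, this means $(z'-\sum_{j=1}^{n-1}y_j)\bmod\onelattice=x$. Applying the auxiliary observation to $w=z'-\sum_{j=1}^{n-1}y_j$ gives $\mean{x}\le\mean{z'-\sum_{j=1}^{n-1}y_j}$, and then the right-hand inequality of \eqref{eq:regabs}, now with $z'$ playing the role of ``$x$'' and $-y_j$ the role of ``$y_j$'' (using $\mean{-y_j}=\mean{y_j}$), gives $\mean{z'-\sum_{j=1}^{n-1}y_j}\le n\,\mean{z'}\prod_{j=1}^{n-1}\mean{y_j}$. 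Combining and rearranging produces $\mean{z'}\ge\mean{x}/(n\prod_{j=1}^{n-1}\mean{y_j})$, which is precisely the left-hand inequality of \eqref{eq:modregabs} since $z'=(x+\sum_{j=1}^{n-1}y_j)\bmod\onelattice$.

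I do not expect any real obstacle: the whole argument is a bookkeeping reduction to Lemma~\ref{lem:regabs}, with the modular arithmetic entering only through the monotone inequality $\mean{w\bmod\onelattice}\le\mean{w}$ and its ``reverse'' use on $x=(z'-\sum_{j}y_j)\bmod\onelattice$. The single point meriting care is the auxiliary observation at the boundary value $|w|=L/2$ for even $L$, where equality (rather than strict inequality) holds; this is still enough, but one should confirm it is consistent with whichever centered convention is fixed for $\onelattice$ when $L$ is even.
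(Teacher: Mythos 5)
Your proof is correct and takes essentially the same approach as the paper's: both inequalities reduce to Lemma~\ref{lem:regabs} via the observation that $|w\bmod\onelattice|\le|w|$, and for the left-hand bound your quantity $z'-\sum_j y_j$ is exactly the paper's $x+Lm$, so the two arguments coincide up to bookkeeping.
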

\begin{proof}
    The right hand side follows from Lemma \ref{lem:regabs} and the fact that, with 
    our parametrization for $\onelattice$, there always holds $|z\bmod\onelattice|\leq|z|$ for all 
    $z\in\Z$. 

    Take the $m\in\Z$ for which $x+\sum_{j=1}^{n-1}y_j+Lm=
    (x+\sum_{j=1}^{n-1}y_j)\bmod\onelattice$. Then
    $|x\bmod\onelattice|=|(x+Lm)\bmod\onelattice|\leq |x+Lm|$, and therefore
    the right hand side of \eqref{eq:regabs} implies 
    \begin{align*}
        \mean{x} &\leq  \mean{x+Lm}\leq
        \mean{x+Lm+\sum_{j=1}^{n-1}y_j-\sum_{j=1}^{n-1}y_j }
        \leq
        n\mean{x+\sum_{j=1}^{n-1}y_j+Lm}\prod_{j=1}^{n-1}\mean{y_j }.
    \end{align*}
    Hence, the left hand side of \eqref{eq:modregabs} holds.
\end{proof}

\begin{definition}
    \label{def:phasefunction}
 We use the following conventions for the argument map of complex numbers: If $z\in \C$ and $z\ne 0$ we set 
 $\arg(z)$ to denote the unique value in $(-\pi,\pi]$ for which $\rme^{\ci \arg(z)}=z/|z|$.  We also set $\arg(0)=0$. The restriction of $z\mapsto \arg(z)$ to $\C\setminus (-\infty,0]$ is a continuous map onto $(-\pi,\pi)$.
 For $z\in \C^n$ we denote $\arg(z)\in (-\pi,\pi]^n$, defined componentwise
 by $\arg(z)_i=\arg(z_i)$.
 \end{definition}

Recall the definition \eqref{eq:phi} of the generic phase function $\phi$. The key 
property why we have chosen that particular definition and why we restrict ourselves to 
the nearest neighbour dispersion relation $\omega$ is to be able to use the following Lemma \ref{lem:cossum}.

\begin{lemma}
    \label{lem:cossum}
Suppose $n\in \N$ and 
$R(\ell),u(\ell)\in \R^d$ for $\ell=1,2,\ldots,n$.  Then there are $\widetilde{R}$, $\widetilde{u}\in \R^d$ such that 
for all $k'\in \R^d$, we have a summation rule
\[
 \sum_{\ell=1}^n \phi(k';R(\ell),u(\ell)) =
 \phi(k';\widetilde{R},\widetilde{u})\,.
\]
In particular, we may choose here $\widetilde{R}$ and $\widetilde{u}$ equal to
\[
 \widetilde{R}(R,u)_i = \left|\widetilde{z}(R,u)_i\right|\,,
 \qquad \widetilde{u}(R,u)_i = \frac{1}{2\pi}\arg(\widetilde{z}(R,u)_i)\,,
\]
where $\widetilde{z}(R,u)\in \C^d$ is defined componentwise by
\[
 \widetilde{z}(R,u)_i = \sum_{\ell=1}^n R(\ell)_i \rme^{\ci2\pi u(\ell)_i} \,, \qquad i=1,2,\ldots,d\,.
\]
\end{lemma}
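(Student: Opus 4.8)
The plan is to exploit that $\phi(k';R,u)=\sum_{i=1}^d R_i\cos(2\pi(k'_i+u_i))$ splits as a sum over coordinates, so it suffices to establish the one-dimensional identity: for each fixed index $i$ and all $k'_i\in\R$,
\[
 \sum_{\ell=1}^n R(\ell)_i\cos\bigl(2\pi(k'_i+u(\ell)_i)\bigr)
 = \widetilde R(R,u)_i\,\cos\bigl(2\pi(k'_i+\widetilde u(R,u)_i)\bigr),
\]
with $\widetilde R$, $\widetilde u$ as in the statement. Summing the resulting identities over $i=1,\dots,d$ then reassembles $\phi(k';\widetilde R,\widetilde u)$ on the right-hand side, which is the claim.

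The first step is to linearise using complex exponentials: since $\cos(2\pi(k'_i+u(\ell)_i))=\re\bigl(\rme^{\ci 2\pi k'_i}\rme^{\ci 2\pi u(\ell)_i}\bigr)$ and $\re$ is $\R$-linear, pulling the $\ell$-independent factor $\rme^{\ci 2\pi k'_i}$ out of the sum gives
\[
 \sum_{\ell=1}^n R(\ell)_i\cos\bigl(2\pi(k'_i+u(\ell)_i)\bigr)
 = \re\!\Bigl(\rme^{\ci 2\pi k'_i}\,\widetilde z(R,u)_i\Bigr),
 \qquad \widetilde z(R,u)_i\coloneqq\sum_{\ell=1}^n R(\ell)_i\,\rme^{\ci 2\pi u(\ell)_i},
\]
which is exactly the complex number appearing in the statement. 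Next I would write $\widetilde z(R,u)_i$ in polar form. If $\widetilde z(R,u)_i\ne 0$, then by Definition \ref{def:phasefunction} we have $\widetilde z(R,u)_i=|\widetilde z(R,u)_i|\,\rme^{\ci\arg(\widetilde z(R,u)_i)}$, so
\[
 \re\!\Bigl(\rme^{\ci 2\pi k'_i}\,\widetilde z(R,u)_i\Bigr)
 = |\widetilde z(R,u)_i|\,\cos\!\bigl(2\pi k'_i+\arg(\widetilde z(R,u)_i)\bigr)
 = |\widetilde z(R,u)_i|\,\cos\!\Bigl(2\pi\bigl(k'_i+\tfrac{1}{2\pi}\arg(\widetilde z(R,u)_i)\bigr)\Bigr),
\]
which is the desired right-hand side with $\widetilde R(R,u)_i=|\widetilde z(R,u)_i|$ and $\widetilde u(R,u)_i=\frac1{2\pi}\arg(\widetilde z(R,u)_i)$. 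In the degenerate case $\widetilde z(R,u)_i=0$ the left-hand side vanishes identically in $k'_i$, while $\widetilde R(R,u)_i=0$ makes the claimed right-hand side vanish as well, so the value assigned to $\widetilde u(R,u)_i$ via the convention $\arg(0)=0$ is immaterial.

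I do not expect a genuine obstacle here: the statement is the elementary fact that a superposition of cosines sharing a common frequency is again a single cosine of that frequency, and the only point requiring a word of care is the degenerate coordinate $\widetilde z(R,u)_i=0$, which is disposed of by the $\arg(0)=0$ convention fixed in Definition \ref{def:phasefunction}.
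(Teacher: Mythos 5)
Your proof is correct and follows essentially the same route as the paper's: both convert the cosines to real parts of complex exponentials, pull out the common factor $\rme^{\ci 2\pi k'_i}$ to form $\widetilde z(R,u)_i$, and then rewrite in polar form. The only (minor) addition on your part is the explicit treatment of the degenerate case $\widetilde z(R,u)_i=0$, which the paper leaves implicit.
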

\begin{proof}
 Using the above definitions, we have
 \[
   \sum_{\ell=1}^n \phi(k';R(\ell),u(\ell))
   =\sum_{\ell=1}^n \sum_{i=1}^d R(\ell)_i\, \re\!\!\left(\rme^{\ci2\pi(k'_i+u(\ell)_i)}\right)= 
    \sum_{i=1}^d \re\!\!\left(\rme^{\ci2\pi k'_i}\widetilde{z}(R,u)_i\right)\,.
 \]
 Since $\widetilde{z}_i = \widetilde{R}_i \rme^{\ci 2\pi \widetilde{u}_i}$ with $\widetilde{R}_i,\widetilde{u}_i\in \R$, the claim follows.
\end{proof}

\begin{proposition}
    \label{prop:propagator_multi_d}
 For $d$-dimensional system, let us define
  \[
  Q(x;R,u,L) \coloneqq  \int_{\duallattice}\! \rmd k \,\rme^{\ci 2 \pi k\cdot x+ \ci \phi(k;R,u)}\,,\qquad x\in \lattice\,,\ R,u\in \R^d\,,\ L\ge 2\,,
 \]
and its infinite volume counterpart
\[
 I(y;R,u)\coloneqq 
 \int_{\T^d}\! \rmd k\, \rme^{\ci 2 \pi k\cdot y+ \ci \phi(k;R,u)}\,, \quad y\in \Z^d\,,\ R,u\in \R^d \,.
\]
There are constants $C,\delta_0>0$, which depend only on $d$, such that for all  $L\ge 2$,
$|R|_\infty\le \frac{L}{8}$, $u\in \R^d$, and 
$x\in\lattice$, we have
\[
 \left|Q(x;R,u,L)-I(x;R,u)\right|\le C 
 \rme^{-\delta_0\frac{1}{2}L} \,.
\]
Denoting $|y|_1 \coloneqq \sum_{i=1}^d |y_i|$, we also have
\[
 |I(y;R,u)|\le \rme^{-2\delta_0(|y|_1-2 |R|_1)_+}\,, \quad y\in \Z^d\,,
\]
and we can find a constant $C$ such that 
\[
 |I(y;R,u)|\le C 
 \prod_{i=1}^d \regabs{y_j}^{\frac{1}{6}} \mean{R_i}^{-\frac{1}{2}} \,, \quad y\in \Z^d\,.
\]
In particular,
\[
 |I(y;R,u)|\le C 
 \prod_{i=1}^d \regabs{y_i}^{-\frac{1}{3}}
 \min
 \left( 
 1, \sqrt{\frac{\regabs{y_i}}{\mean{R_i}}} 
 \right) 
 \,, \quad y\in \Z^d\,.
\]
\end{proposition}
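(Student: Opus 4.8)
The plan is to exploit the fact that the phase decouples the coordinates: since $\phi(k;R,u)=\sum_{i=1}^d R_i\cos(2\pi(k_i+u_i))$, both objects factor,
\[
 I(y;R,u)=\prod_{i=1}^d I_1(y_i;R_i,u_i),\qquad Q(x;R,u,L)=\prod_{i=1}^d Q_1(x_i;R_i,u_i,L),
\]
where $I_1(m;r,v)\coloneqq\int_{-1/2}^{1/2}\rmd k\,\rme^{\ci2\pi km+\ci r\cos(2\pi(k+v))}$ and $Q_1(n;r,v,L)\coloneqq\frac1L\sum_{j\in\onelattice}\rme^{\ci2\pi(j/L)n+\ci r\cos(2\pi(j/L+v))}$. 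A periodic shift $k\mapsto k-v$ in the integral shows $\abs{I_1(m;r,v)}$ is independent of $v$, and expanding $\rme^{\ci r\cos(2\pi k)}$ by the Jacobi--Anger formula identifies it with the Bessel modulus $\abs{J_{\abs m}(r)}$; the three one-dimensional estimates below can then either be quoted from classical uniform bounds on $J_m$ or derived directly as indicated. Since $\abs{I_1}\le1$ and $\abs{Q_1}\le1$, each of the four $d$-dimensional bounds follows from its one-dimensional analogue by telescoping the products, so it suffices to treat $d=1$.

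First I would prove the exponential bound. The integrand of $I_1(m;r,v)$ is entire in $k$ and $1$-periodic in $\re k$, so the contour $[-\tfrac12,\tfrac12]$ may be translated to $[-\tfrac12,\tfrac12]+\ci s$ for any $s\in\R$ without changing the value, the two vertical pieces cancelling by periodicity. Taking $s$ with the sign of $m$ and using $\abs{\rme^{\ci2\pi(k+\ci s)m}}=\rme^{-2\pi\abs s\,\abs m}$ together with $\re\bigl(\ci r\cos(2\pi(k+v)+2\pi\ci s)\bigr)=r\sin(2\pi(k+v))\sinh(2\pi s)\le\abs r\sinh(2\pi\abs s)$, one obtains $\abs{I_1(m;r,v)}\le\rme^{-t\abs m+\abs r\sinh t}$ for every $t>0$. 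Choosing $t$ with $\sinh t=1$ and setting $\delta_0\coloneqq t/4$, an elementary check gives $-t\abs m+\abs r\le-2\delta_0(\abs m-2\abs r)$ when $\abs m\ge2\abs r$ (and the bound is trivial otherwise), so $\abs{I_1(m;r,v)}\le\rme^{-2\delta_0(\abs m-2\abs r)_+}$. Multiplying over coordinates and using $\sum_i(a_i)_+\ge(\sum_i a_i)_+$ gives $\abs{I(y;R,u)}\le\rme^{-2\delta_0(|y|_1-2|R|_1)_+}$.

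Next the comparison estimate. For fixed $r,v$ the function $F(k)\coloneqq\rme^{\ci2\pi kn+\ci r\cos(2\pi(k+v))}$ is smooth and $1$-periodic, so Poisson summation gives $Q_1(n;r,v,L)=\sum_{m\in\Z}\widehat F(mL)$ with $\widehat F(0)=I_1(n;r,v)$ and $\abs{\widehat F(m)}=\abs{I_1(n-m;r,v)}$; hence $\abs{Q_1-I_1}\le\sum_{m\ne0}\abs{I_1(n-mL;r,v)}$. Since $x\in\lattice$ forces $\abs n\le L/2$ and the hypothesis $|R|_\infty\le L/8$ forces $\abs r\le L/8$, for $m\ge1$ we have $\abs{n-mL}\ge mL-\tfrac L2\ge2\abs r+\tfrac L4$, so the exponential bound gives $\abs{I_1(n-mL;r,v)}\le\rme^{-2\delta_0(mL-3L/4)}$; summing the geometric series over $m\ge1$ and $m\le-1$ yields $\abs{Q_1-I_1}\le C\rme^{-\delta_0 L/2}$ with $C$ depending only on $\delta_0$, and telescoping the $d$-fold product (each factor bounded by $1$) gives $\abs{Q-I}\le Cd\,\rme^{-\delta_0 L/2}$.

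Finally the dispersive bounds, which I expect to be the main obstacle. After the shift, $\abs{I_1(m;r,v)}=\abs{\int_{-1/2}^{1/2}\rmd k\,\rme^{\ci\psi(k)}}$ with $\psi(k)=2\pi mk+r\cos(2\pi k)$, so $\psi''(k)=-4\pi^2r\cos(2\pi k)$, $\psi'''(k)=8\pi^3r\sin(2\pi k)$, and a stationary point ($\sin(2\pi k)=m/r$) exists only for $\abs m\le\abs r$, where $\abs{\psi''}=4\pi^2\sqrt{r^2-m^2}$ and $\abs{\psi'''}=8\pi^3\abs m$. The idea is to partition $[-\tfrac12,\tfrac12]$ into $\order{1}$ subintervals, with breakpoints depending on $m/r$, such that on each either the second-derivative van der Corput test applies and gives a contribution $\lesssim(r^2-m^2)^{-1/4}$, or (on an interval of length $\sim\abs r^{-1/3}$ around the possibly degenerate stationary point, where $\abs{\psi'''}\gtrsim\abs r$) the third-derivative test gives $\lesssim\abs r^{-1/3}$; together with $\abs{I_1}\le1$ this produces $\abs{I_1(m;r,v)}\le C\regabs m^{1/6}\regabs r^{-1/2}$, the exponent $\tfrac16=\tfrac12-\tfrac13$ coming from interpolating the turning-point bound $\abs r^{-1/3}$ (dominant when $\abs m\sim\abs r$) against $\abs r^{-1/2}$ (when $\abs m\ll\abs r$), and the product over coordinates gives the third stated bound. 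For the last bound I would argue per coordinate: when $\regabs{y_i}\le\regabs{R_i}$ it is a rewriting of the third bound, since then the minimum equals $\sqrt{\regabs{y_i}/\regabs{R_i}}$ and $\regabs{y_i}^{1/6}\regabs{R_i}^{-1/2}=\regabs{y_i}^{-1/3}\sqrt{\regabs{y_i}/\regabs{R_i}}$; when $\regabs{y_i}>\regabs{R_i}$, so the minimum equals $1$, one needs $\abs{I_1}\lesssim\regabs{y_i}^{-1/3}$, which for $\abs{y_i}\ge2\abs{R_i}$ follows from the first-derivative van der Corput test ($\abs{\psi'}\gtrsim\abs{y_i}$ on each of $\order{1}$ monotonicity intervals) and for $\abs{R_i}<\abs{y_i}<2\abs{R_i}$ from the turning-point bound $\abs{I_1}\lesssim\abs{R_i}^{-1/3}\lesssim\abs{y_i}^{-1/3}$. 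The delicate point is choosing the partition so that on each piece $\abs{\psi''}$ or $\abs{\psi'''}$ is bounded below by the correct quantity uniformly in $(m,r)$ and patching the contributions across $\abs m\approx\abs r$ where $\psi''$ degenerates; this is exactly what the classical uniform Bessel estimates encode once $\abs{I_1(m;r,v)}=\abs{J_{\abs m}(r)}$ is observed.
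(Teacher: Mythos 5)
Your argument is correct and follows essentially the same route as the paper's: factor into one-dimensional propagators, shift the integration contour (analyticity plus $1$-periodicity) for the exponential decay, compare $Q$ with $I$ via Fourier aliasing (your Poisson-summation phrasing is the same fact as the paper's $Q(x)=\sum_m I(x+Lm)$), and identify $|I_1(m;r,\cdot)|=|J_{|m|}(r)|$ so the $\langle m\rangle^{1/6}\langle r\rangle^{-1/2}$ bound reduces to classical stationary-phase / uniform Bessel estimates, which both you and the paper invoke at about the same level of detail. The only cosmetic divergences are the choice of contour height and that, for the final ``in particular'' inequality, you use a first-derivative van der Corput estimate in the regime $|y_i|\ge 2|R_i|$ where the paper instead falls back on the already-proved exponential bound for $|y|>4|R|$ — both yield $\lesssim\langle y_i\rangle^{-1/3}$.
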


The proof of Proposition \ref{prop:propagator_multi_d} will be given in
Appendix \ref{sec:propagator}. The way we will use Proposition \ref{prop:propagator_multi_d} is
via the following Corollary \ref{prop:propagator_multi_d} that implies the bound 
\eqref{eq:maxnormboundedbyW} discussed after Theorem \ref{thm:Wwell_posedness}. 

\begin{corollary}
    \label{cor:initialdata}
 Suppose $L\ge 2$ and $W_1,W_2:\duallattice\to\C$. 
 For $R,u\in \R^d$ and $k_1,k_2\in \duallattice$ and $y\in\lattice$, define
  \[
  B(R,u;k_1,k_2,y) \coloneqq \int_{\duallattice} \!\rmd k\, \rme^{\ci \phi(k;R,u)}
  W_1(k+k_1) W_2(k+k_2)\rme^{\ci2\pi k\cdot y}\,.
  \]
  
  There is a constant $C$, which is independent of $L$, such that for all
 $R,u,k_1,k_2,y$, we have
 \[
     |B(R,u;k_1,k_2,y)| 
     \le
     C 
     \sobolevnorm{W_1}{1/3}
     \sobolevnorm{W_2}{1/3}
     \prod_{j\in J_R} \regabs{y_j}^{-\frac{1}{3}}
   \min
 \left( 
 1, \sqrt{\frac{\regabs{y_j}}{\mean{R_j}}} 
 \right)
 \]
 where $J_R\coloneqq\{j \ \colon |R_j|\leq\frac{L}{8}\}$.
 In particular, 
 \[
  \left|\int_{\duallattice} \!\rmd k\, \rme^{\ci \phi(k;R,u)} W_1(k+k_1)\rme^{\ci2\pi k\cdot y}\right| 
  \le C
     \sobolevnorm{W_1}{1/3}
   \prod_{j\in J_R} \regabs{y_j}^{-\frac{1}{3}}
   \min
 \left( 
 1, \sqrt{\frac{\regabs{y_j}}{\mean{R_j}}} 
 \right)
 \]
\end{corollary}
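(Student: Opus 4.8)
The plan is to expand both $W_1$ and $W_2$ in their inverse Fourier series, reduce $B$ to a weighted lattice sum of the finite-volume propagator $Q$ of Proposition~\ref{prop:propagator_multi_d}, estimate that propagator coordinate by coordinate, and finally transfer the spatial decay from the shifted argument back onto $y$ via Corollary~\ref{cor:modregabs}. Concretely, since the discrete Fourier transform is inverted by the discrete inverse Fourier transform, $W_i(k)=\sum_{z\in\lattice}\IFT{W_i}(z)\,\rme^{-\ci 2\pi k\cdot z}$, and this identity is $\Z^d$-periodic in $k$, hence applies to $W_1(k+k_1)$ and $W_2(k+k_2)$. Substituting it and exchanging the finite sums with $\int_{\duallattice}\rmd k$ would give
\begin{align*}
 B(R,u;k_1,k_2,y)=\sum_{z_1,z_2\in\lattice}\IFT{W_1}(z_1)\,\IFT{W_2}(z_2)\,\rme^{-\ci 2\pi(k_1\cdot z_1+k_2\cdot z_2)}\,Q\big((y-z_1-z_2)\bmod\lattice\,;R,u,L\big),
\end{align*}
using that $\rme^{\ci 2\pi k\cdot z}$ with $k\in\duallattice$ depends on $z$ only modulo $\lattice$. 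Taking absolute values, it then remains to bound $|Q(w;R,u,L)|$ with $w=(y-z_1-z_2)\bmod\lattice$ and to recognise the Sobolev norms.

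Next I would exploit that both $\rme^{\ci 2\pi k\cdot w}$ and $\phi(k;R,u)=\sum_{i=1}^d R_i\cos(2\pi(k_i+u_i))$ factor over coordinates: writing $Q_1(\cdot;\cdot,\cdot,L)$ for the $d=1$ instance of $Q$, one has $Q(w;R,u,L)=\prod_{i=1}^d Q_1(w_i;R_i,u_i,L)$, and likewise for $I$. For $i\notin J_R$ I would simply use $|Q_1(w_i;R_i,u_i,L)|\le1$, since $Q_1$ is a normalized average over $\oneduallattice$ of a function of modulus $1$. For $i\in J_R$ we have $|R_i|\le L/8$, so the $d=1$ case of Proposition~\ref{prop:propagator_multi_d} gives $|Q_1(w_i;R_i,u_i,L)|\le|I_1(w_i;R_i,u_i)|+C\rme^{-\delta_0 L/2}\le C\,\regabs{w_i}^{-1/3}\min(1,\sqrt{\regabs{w_i}/\mean{R_i}})+C\rme^{-\delta_0 L/2}$. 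Here I would absorb the exponential error into the first term: since $w_i\in\onelattice$ and $|R_i|\le L/8$ force $\regabs{w_i}\le L$ and $\mean{R_i}\le L$, one has $\regabs{w_i}^{-1/3}\min(1,\sqrt{\regabs{w_i}/\mean{R_i}})\ge L^{-1/2}$, and $L^{1/2}\rme^{-\delta_0 L/2}$ is bounded over $L\ge2$ by a constant depending only on $d$; hence $|Q_1(w_i;R_i,u_i,L)|\le C'\,\regabs{w_i}^{-1/3}\min(1,\sqrt{\regabs{w_i}/\mean{R_i}})$ for $i\in J_R$, with $C'$ depending only on $d$.

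Finally I would move the decay from $w_i$ to $y_i$. Setting $g(t)\coloneqq t^{-1/3}\min(1,\sqrt{t/\mean{R_i}})$ for $t\ge1$, Corollary~\ref{cor:modregabs} (with $n=3$) gives $\regabs{y_i}\big/(3\regabs{(z_1)_i}\regabs{(z_2)_i})\le\regabs{w_i}\le3\regabs{y_i}\regabs{(z_1)_i}\regabs{(z_2)_i}$, and a short case analysis — splitting on whether $\regabs{w_i}$ and $\regabs{y_i}$ lie above or below $\mean{R_i}$ and using that $g$ increases on $[1,\mean{R_i}]$ and decreases afterwards — yields $g(\regabs{w_i})\le 3^{1/3}\regabs{(z_1)_i}^{1/3}\regabs{(z_2)_i}^{1/3}\,g(\regabs{y_i})$. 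Multiplying over $i\in J_R$, enlarging the surviving products of $\regabs{(z_1)_i}^{1/3}$ and $\regabs{(z_2)_i}^{1/3}$ to run over all $i$, and plugging back into the formula for $B$ gives
\begin{align*}
 |B(R,u;k_1,k_2,y)|\le C^d\Big(\sum_{z_1\in\lattice}\prod_{i=1}^d\regabs{(z_1)_i}^{1/3}\,|\IFT{W_1}(z_1)|\Big)\Big(\sum_{z_2\in\lattice}\prod_{i=1}^d\regabs{(z_2)_i}^{1/3}\,|\IFT{W_2}(z_2)|\Big)\prod_{j\in J_R}\regabs{y_j}^{-1/3}\min\!\Big(1,\sqrt{\tfrac{\regabs{y_j}}{\mean{R_j}}}\Big).
\end{align*}

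By~\eqref{eq:sobolev} the two bracketed sums are exactly $\sobolevnorm{W_1}{1/3}$ and $\sobolevnorm{W_2}{1/3}$, which is the asserted bound with $C$ depending only on $d$; the ``in particular'' statement is then the special case $W_2\equiv1$, for which $\IFT{W_2}=\ind(\,\cdot=0)$ and hence $\sobolevnorm{W_2}{1/3}=1$. I expect this last transfer step to be the main obstacle: reconciling the modular shift $y\mapsto(y-z_1-z_2)\bmod\lattice$ of the decay variable with the non-homogeneous propagator weight $\regabs{\cdot}^{-1/3}\min(1,\sqrt{\regabs{\cdot}/\mean{R}})$ is exactly what forces the elementary monotonicity/case analysis of $g$, and it is precisely this analysis that keeps the $z$-exponents at $1/3$ so that the $\sobolevnorm{\cdot}{1/3}$ norms (rather than norms with a larger exponent) suffice; everything else is bookkeeping with the coordinate-wise tensor structure and with the large-$L$ decay of the finite-volume correction.
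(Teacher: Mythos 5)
Your proof is correct and follows essentially the same route as the paper: expand $W_1,W_2$ in inverse Fourier series to reduce $B$ to a lattice sum against the $d$-fold propagator, apply Proposition~\ref{prop:propagator_multi_d} (coordinate-wise, with the $j\notin J_R$ factors trivially bounded by $1$), and transfer decay from the shifted argument back onto $y$ via the regularized-absolute-value inequalities, with the same case split driven by the two branches of $\min(1,\sqrt{\regabs{\cdot}/\mean{R_j}})$. The only cosmetic difference is that you absorb the $\rme^{-\delta_0 L/2}$ finite-volume correction coordinate-by-coordinate into the one-dimensional propagator bound, whereas the paper keeps it as one global error term and dominates it afterwards; both rest on the same observation that the propagator weight on the fundamental domain is bounded below by a negative power of $L$ which the exponential beats.
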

\begin{proof}
    Let $g_i:\lattice\to\C$ be the inverse Fourier transform of $W_i$, i.e. $g_i=\IFT{W_i}$ for 
    $i=1,2$.

    If $g_2(x)=\cf{x=0}$, then $W_2(k)=1$ and $ \sobolevnorm{W_2}{1/3} =1$. Thus,
    the second bound is indeed a consequence of the first one.

Without loss of generality, we may assume $J_R=\{1,2,\dots,|J_R|\}$. Let $P\colon\R^d\to\R^{|J_R|}$ be 
the projection $Px=(x_1,x_2,\dots,x_{|J_R|})$.
Inserting the Fourier transforms explicitly, we find
\begin{align*}
 &  B(R,u;k_1,k_2,y) = \sum_{x_1,x_2\in \lattice}
   g_1(x_1) g_2(x_2) 
   \int_{\duallattice} \!\rmd k\, \rme^{\ci \phi(k;R,u)-\ci 2\pi (k+k_1)\cdot
x_1-\ci 2\pi (k+k_2)\cdot x_2+\ci2\pi k\cdot y}
\\ 
 & =\sum_{x_1,x_2\in \lattice}
   g_1(x_1) g_2(x_2) \rme^{-\ci 2\pi (k_1\cdot x_1+k_2 \cdot x_2)}
   Q(Py-Px_1-Px_2;PR,Pu,L)
   \\
   &\quad\times \prod_{j=|J_R|+1}^d \int_{\oneduallattice}\rmd k \rme^{\ci
   R_j\cos(2\pi(k+u)_j)-\ci2\pi(k)_i(x_1)_j-\ci2\pi(k)_j(x_2)_j+\ci2\pi (k)_j(y)_j}.
\end{align*}
From the definition of $J_R$ and $\int_{\oneduallattice}\rmd k=1$, together with
Proposition \ref{prop:propagator_multi_d}, we obtain
\begin{align*}
 &
 | B(R,u;k_1,k_2,y)|
 \\ & \quad
 \le 
 \sum_{x_1,x_2\in \lattice}
   |g_1(x_1)| |g_2(x_2)| |I(Py-Px_1-Px_2;PR,Pu)|
   + \sum_{x_1,x_2\in \lattice}
   |g_1(x_1)| |g_2(x_2)|C 
 \rme^{-\delta_0\frac{1}{2}L}
 \\ & \quad
 \le 
 \sum_{x_1,x_2\in \lattice}
  C
   |g_1(x_1)| |g_2(x_2)|  
  \prod_{j\in J_R}
    \regabs{z_j}^{-\frac{1}{3}}
   \min
 \left( 
 1, \sqrt{\frac{\regabs{z_j}}{\mean{R_j}}} 
 \right) 
   + \sobolevnorm{W_1}{1/3}\sobolevnorm{W_2}{1/3}   C 
 \rme^{-\delta_0\frac{1}{2}L}
 \,,
\end{align*}
where $z = y-x_1-x_2$.

Lemma \ref{lem:regabs} implies 
$\regabs{z_i}^{-\frac{1}{3}}\leq 3^{\frac{1}{3}}
\regabs{(x_1)_i}^{\frac{1}{3}}\regabs{(x_2)_i}^{\frac{1}{3}}\regabs{y_i}^{-\frac{1}{3}}$ and 
$\regabs{z_i}^{\frac{1}{6}}\leq 3^\frac{1}{6}
\regabs{(x_1)_i}^{\frac{1}{6}}\regabs{(x_2)_i}^{\frac{1}{6}}\regabs{y_i}^{\frac{1}{6}}$. Using these estimates 
together with $\regabs{\cdot}^{\frac{1}{6}}\leq \regabs{\cdot}^{\frac{1}{3}}$ implies
\begin{align*}
\prod_{j\in J_R}
    \regabs{z_j}^{-\frac{1}{3}}
   \min
 \left( 
 1, \sqrt{\frac{\regabs{z_j}}{\mean{R_j}}} 
 \right) 
 \leq 
 \left( 
\prod_{j\in J_R}
\regabs{(x_1)_j}^{\frac{1}{3}}
\regabs{(x_2)_j}^{\frac{1}{3}}
 \right) 
\prod_{j\in J_R}
    \regabs{y_j}^{-\frac{1}{3}}
   \min
 \left( 
 1, \sqrt{\frac{\regabs{y_j}}{\mean{R_j}}} 
 \right) 
\end{align*}

To handle the second term, we notice that $\prod_{j\in J_R}
\mean{R_j}^{\frac{1}{2}}\rme^{-\delta_0\frac{1}{2}L}\le L^{\frac{d}{2}}
\rme^{-\delta_0\frac{1}{2}L}\le C(\delta_0,d)$ for all $L$. 
We conclude that
the stated bound holds.

\end{proof}

\section{Collision control map}
\label{sec:F}

As discussed in {Section \ref{sec:results}}, the main motivations for the 
collision control map $F[W]$ are that it has a closed evolution equation, and 
we can use it to rewrite the cubic collision operator $\cbos$ into a bilinear one 
$\cbiq$ with $F[W]$ and $W$ as its inputs. In the following subsection \ref{sec:rewriteC},
we will define the bilinear operator $\cbiq$ and prove that it can be used to rewrite 
the collision operator $\cbos$. After that, in Subsection \ref{sec:fixpoint_operator} we will
define a bilinear operator $\mathcal{I}_\tau$ and show that it indeed gives an 
evolution equation for the collision control map $F[W]$.

Recall from the definition \eqref{eq:F} of the collision control map $F[W]$ that 
\begin{align}
    F[W]_{t_1,t_2}^{(n)}(R,   k',u,\sigma,x) 
    \coloneqq 
    \int_{\duallattice} \rmd k_0 
    \rme^{\ci \phi(k_0, R, u)} 
    W_{t_1}(\sigma k_0) ^{n-1}W_{t_2}(k'+k_0)
    \rme^{\ci2\pi k_0\cdot x},
\end{align}
for $W\in C([0,\tmax]\times,\C)$, $n\in\{1,2\}$, $(t_1,t_2)\in[0,\tmax]^2$ and $(R,   k',u,\sigma,x)\in\xl$. Recall 
the definition of the phase function $\phi$ from \eqref{eq:phi}.

Next, we will go through the linear combinations of the phase terms that arise in the analysis of the collision control map. We will also introduce simplifying shorthand notation.
We define two functions $z_0,z_1:\R\to\C$,
\begin{align}
    z_0(k)\coloneqq  
    1-\rme^{-\ci 2\pi k}, \qquad z_1(k) \coloneqq z_0(k+1/2) = 1 {+} \rme^{-\ci 2 \pi k}.
\end{align}
Note that $|z_0(k)| = 2|\sin(\pi k)|$ and $z_0(k+1/2)=z_0(k-1/2)$.
Recall from \eqref{eq:omega} that our dispersion relation $\omega$ corresponds to a nearest neighbour hopping potential.
Therefore, we have by Lemma \ref{lem:cossum} that   
\begin{align}
    &\phi(k_0,R,u)- s \left( \omega(k_0+k') -\omega(k_0+k'-k) \right) 
    = \phi(k_0,\rs{0}(R,u,k',k), \widetilde{u}'_0).
\end{align}
For all $R\in\R^d$, $u\in\T^d$ and $k_0,k,k'\in\lattice$.
We have defined 
\begin{align}
    \label{eq:R_tilde0}
    \rs{0}(R,u,k',k)_\ell
    &\coloneqq \left| R_\ell \rme^{\ci 2\pi (u-k')_\ell} + s \left( 1 -
    \rme^{-\ci 2\pi k_\ell} \right)  \right|
    =\left|  R_\ell \rme^{\ci 2\pi (u-k')_\ell} + s z_0(k_\ell) \right| , 
    \\
    \label{eq:u_tilde_prime0}
    \widetilde{u}'_0(s,R,u,k',k)_\ell
    &\coloneqq \frac{1}{2\pi} \arg\left( R_\ell \rme^{\ci 2\pi u_\ell} + s \left( \rme^{\ci 2\pi (k')_\ell} -
     \rme^{\ci 2\pi(k'-k)_\ell} \right)  \right), 
\end{align}
and used the shorthand notation $ \widetilde{u}'_0=\widetilde{u}'_0(s,R,u,k',k)$.
Note that $\widetilde{u}'_0$ is constant in $k_0$. The explicit
form of $\widetilde{u}'_0$ will not be important for us, the reason being that the weight $\Phi^d$ in the norm $\phinorm{\cdot}$ does 
not depend on $u$ (see \eqref{eq:norm_phi}). Therefore, we will just write $\widetilde{u}'_0$.

We also define the shorthand notations
\begin{align}
    \label{eq:R_tilde1}
    \rs{1}(R,u,k',k)
    &\coloneqq \rs{0}(R,u,k',k-\bar{1}/2)=
    \left( 
    \left|  R_\ell \rme^{\ci 2\pi (u-k')_\ell} + s z_1(k_\ell) \right|
    \right)_{\ell=1}^d,
    \\
    \label{eq:u_tilde_prime1}
    \widetilde{u}'_1
    =
    \widetilde{u}'_1(s,R,u,k',k)
    &\coloneqq
    \widetilde{u}'_0(s,R,u,k',k-\bar{1}/2),
\end{align}
where $\bar{1}=(1,1,\dots,1)$. Note that $\rs{\xi'}(R,u,k',k)=\rs{0}(R,u,k',k+\xi'\bar{1}/2)$
for both $\xi'=0$ and $\xi'=1$. We also use the shorthand notation 
\begin{align}
    \rs{\xi'}(k)_\ell\coloneqq \rs{\xi'}(0,0,0,k)_\ell = |s||z_{\xi'}(k_\ell)|.
    \label{eq:R_shorthand_tilde}
\end{align}

\subsection{Rewriting the collision operator $\cbos$}
\label{sec:rewriteC}

 In the following Lemma \ref{lem:colG}, we prove that the following bilinear operator $\cbiq$, defined below, can be used together with the collision control map 
 $F[W]$ to rewrite the collision operator $\cbos$. The definition of 
$\cbiq$ is chosen so that the Lemma \ref{lem:colG} holds.

For each $\tau_0\ge0$ we define the bilinear operator $$\cbiq(\tau_0): \Cmax\times
C(\duallattice,\R)\to C(\duallattice,\R),$$  
by setting
\begin{align}
    \nonumber
    \cbiq(\tau_0)[G,W](k_0)
    &\coloneqq
    \sum_{i=1}^8 \sigma_i^{(q)} 
    \int_{-\lambda^{-2}\tau_0}^{\lambda^{-2}\tau_0}\rmd s
    \int_{\duallattice} \rmd k
           \sum_{y,y'\in\lattice}^{} 
           v_i(k,y)w_i(y')\rme^{\ci \widetilde{K}_i
           -\ci s (\omega(k_0)-\omega(k_0-k+\xi_i/2))}
           \\
    &\times
    G^{(\widetilde{n}_i)}(\rs{\xi_i}(k),k,\widetilde{u}_{\xi_i},\sigma_i^{(2)},(y+y')\bmod\lattice)
    W(k_0-\ind(i=4,5)k),
    \label{eq:cbosg}
\end{align}
where the new symbols are defined as follows. First,
\begin{alignat}{3}
    \sigma^{(q)}&\coloneqq (-\theta,-\theta,-q,q,\theta,\theta,\theta,+), && \quad \sigma^{(2)}&&\coloneqq (+,+,+,+,+,-,-,-),\\
    \widetilde{n}&\coloneqq (2,2,1,1,2,2,2,2), && \quad \xi&&\coloneqq (0,0,0,0,0,1,1,1),
\end{alignat}
\begin{align*}
    \widetilde{K}_i &\coloneqq 
    \begin{cases}
        2\pi(k-k_0)\cdot y\,, & i\leq6 \\
        2\pi k_0\cdot y\,, & i=7 \\
        2\pi k_0\cdot(y-y')+2\pi k\cdot y'\,, & i=8 
    \end{cases}
    ,\qquad
    v_i(k,y) \coloneqq 
    \begin{cases}
        \mathcal{V}(k,y)\,, & i\leq5 \\
        V*V(y)\,, & i=6,7 \\
        2V(y)\,, & i=8 
    \end{cases},
\end{align*}
 and $w_i(y')\coloneqq \ind(y'=0)$ for $i\leq 7$ and $w_8(y')\coloneqq V(y')$. Finally, using the notation \(V*V(y) = \sum_{x}V(x)V(y-x)\) for the convolution of \(V\) with itself, we set
\begin{align}
    \label{eq:mathcalV}
    \mathcal{V}(k,y)&\coloneqq  \ind(y=0)\FT{V}(k)^2+2\theta\FT{V}(k)V(y)+V*V(y) \, ,
    \\
    \label{eq:tildeu}
    (\widetilde{u}_{\xi_i})_\ell&\coloneqq
    \frac{1}{2\pi} \arg\left(  s \left( e^{\ci 2\pi \xi_i/2} -
     e^{\ci 2\pi k_\ell} \right)  \right).
\end{align}

One should keep in mind that $v$ and $w$ depend on the 
potential $V$. Moreover, we have the following Lemma \ref{lem:ineqVconst}.
 \begin{lemma}
     \label{lem:ineqVconst}
     For every $p\ge0$ and $i=1,2,\dots,8$ there holds 
     \begin{align}
         \label{eq:ineqVconst}
         \sum_{y,y'\in\lattice}^{} 
         \left( 
             \prod_{j=1}^d 
             \regabs{y_j}
             \regabs{y'_j}
         \right)^{p}
         \sup_{k\in\duallattice}|v_i(k,y)||w_i(y')|
         &\leq 
         (2^{pd}+3)\sobolevnorm{\FT{V}}{p}^2
         .
     \end{align}
 \end{lemma}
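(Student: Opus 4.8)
The plan is to reduce everything to the identity $\IFT{\FT{V}}=V$, which gives $\sobolevnorm{\FT{V}}{p}=\sum_{y\in\lattice}w_p(y)\,|V(y)|$ with the shorthand $w_p(y)\coloneqq\prod_{j=1}^d\mean{y_j}^p$. Since $w_p\geq 1$ pointwise and $w_p(0)=1$, in every case $i\leq 7$ the factor $w_i(y')=\ind(y'=0)$ collapses the $y'$-sum to the single term $y'=0$ and contributes a factor $1$, while for $i=8$ the $y'$-sum of $w_p(y')\,|V(y')|$ is exactly $\sobolevnorm{\FT{V}}{p}$. Also note that $\prod_{j=1}^d(\mean{y_j}\mean{y'_j})^p=w_p(y)w_p(y')$ and that $w_i(y')$ does not depend on $k$, so the whole lemma comes down to controlling $\sum_y w_p(y)\sup_k|v_i(k,y)|$.

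The one genuinely nontrivial ingredient is the weighted estimate for the self-convolution,
\[
\sum_{y\in\lattice} w_p(y)\,|V*V(y)| \leq 2^{pd}\,\sobolevnorm{\FT{V}}{p}^2 .
\]
To prove this I would expand $V*V(y)=\sum_x V(x)\,V((y-x)\bmod\lattice)$, apply the triangle inequality, substitute $z=(y-x)\bmod\lattice$ (so that $y=(x+z)\bmod\lattice$ and $z$ runs bijectively over $\lattice$), and use Corollary \ref{cor:modregabs} with $n=2$ componentwise, $\mean{(x_j+z_j)\bmod\onelattice}\leq 2\mean{x_j}\mean{z_j}$, which gives $w_p((x+z)\bmod\lattice)\leq 2^{pd}\,w_p(x)\,w_p(z)$. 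The resulting double sum then factorizes as $2^{pd}\bigl(\sum_x w_p(x)|V(x)|\bigr)\bigl(\sum_z w_p(z)|V(z)|\bigr)$. This is the main obstacle, being the only place where the periodic arithmetic and the weight structure interact, and the constant $2^{pd}$ it produces is precisely what drives the constant in \eqref{eq:ineqVconst}.

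With that in hand the three cases are immediate. For $i=6,7$ one has $v_i=V*V$, so the left-hand side of \eqref{eq:ineqVconst} equals $\sum_y w_p(y)|V*V(y)|\leq 2^{pd}\sobolevnorm{\FT{V}}{p}^2$. For $i=8$, $v_8=2V$ and $w_8=V$, so the left-hand side equals $2\sobolevnorm{\FT{V}}{p}^2$; both are $\leq(2^{pd}+3)\sobolevnorm{\FT{V}}{p}^2$ since $2^{pd}\geq 1$. For $i\leq 5$, $v_i=\mathcal{V}$, and using $|\theta|=1$ together with $\subnorm{\FT{V}}{\infty}\leq\sum_y|V(y)|\leq\sobolevnorm{\FT{V}}{p}$ (again $w_p\geq 1$) I would bound
\[
\sup_{k\in\duallattice}|\mathcal{V}(k,y)| \leq \ind(y=0)\sobolevnorm{\FT{V}}{p}^2 + 2\sobolevnorm{\FT{V}}{p}\,|V(y)| + |V*V(y)|;
\]
summing against $w_p(y)$, using $w_p(0)=1$ and the convolution bound above, yields $(1+2+2^{pd})\sobolevnorm{\FT{V}}{p}^2=(2^{pd}+3)\sobolevnorm{\FT{V}}{p}^2$, which is the sharp case and completes the proof.
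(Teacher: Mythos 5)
Your proof is correct and follows essentially the same route as the paper: bound $i\le 7$ via the three-term expansion of $\mathcal{V}$ with $\subnorm{\FT{V}}{\infty}\le\sobolevnorm{\FT{V}}{p}$, bound $i=8$ by $2\sobolevnorm{\FT{V}}{p}^2$, and use the convolution estimate $\sobolevnorm{\FT{V*V}}{p}\le 2^{pd}\sobolevnorm{\FT{V}}{p}^2$. The only difference is that you supply the proof of that convolution estimate (via Corollary \ref{cor:modregabs} and reindexing the periodic sum), which the paper states without proof.
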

 \begin{proof}
      Since $\sobolevnorm{\FT{V*V}}{p} \leq2^{pd} \sobolevnorm{\FT{V}}{p} ^2$,
      we have that for $i=1,\dots,7$ the left hand side of \eqref{eq:ineqVconst} is bounded from above by 
    \[
        \subnorm{\FT{V}}{\infty}^2
        +
        2\subnorm{\FT{V}}{\infty}
        \sobolevnorm{\FT{V}}{p}
        +
        2^{pd} \sobolevnorm{\FT{V}}{p} ^2.
    \]
    For $i=8$, the left hand side is bounded by $2\sobolevnorm{\FT{V}}{p} ^2$.
    Then $\subnorm{\FT{V}}{\infty}\leq\sobolevnorm{\FT{V}}{p}$ implies the result.
 \end{proof}
 
\begin{lemma}
    \label{lem:colG}
    Let $\tmax>0$. There holds
    \begin{align}
        \cbiq(\tau_0)[F[W]_{t,t},W_t](k_0)
        &=
    \cbos(W_t,\tau_0)(k_0)
    \end{align}
    for all $k_0\in\duallattice$, $\tau_0\ge0$, $t\in[0,\tmax]$, $W\in
    C([0,\tmax]\times\duallattice,\R)$.
\end{lemma}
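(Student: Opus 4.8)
The plan is to verify the identity by a direct computation, starting from the definition \eqref{eq:cbosg} of $\cbiq(\tau_0)[F[W]_{t,t},W_t](k_0)$ and reducing it to $\cbos(W_t,\tau_0)(k_0)=\theta\cnls{\theta}(W_t,\tau_0)(k_0)+q\ccl{\theta}(W_t,\tau_0)(k_0)$. Since $\duallattice$ and $\lattice$ are finite and the $s$-integrand is a bounded continuous function on the compact interval $[-\lambda^{-2}\tau_0,\lambda^{-2}\tau_0]$, all interchanges of summation and integration below are legitimate without comment; the content is purely algebraic. Inserting into \eqref{eq:cbosg} the definition \eqref{eq:F} of $F[W]_{t,t}^{(\widetilde n_i)}$ produces a fresh momentum variable, which I call $k_1$; because $t_1=t_2=t$, all $W$-factors that appear are values of $W_t$, as required. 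I use freely that $\FT V$ and $\omega$ are even, that $\delta_L$ and $\delta_{\lambda,\tau_0}$ are even functions of their arguments, and that $\rme^{\ci2\pi k_1\cdot((y+y')\bmod\lattice)}=\rme^{\ci2\pi k_1\cdot(y+y')}$ for $k_1\in\duallattice$.

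First I would carry out the two lattice sums, leaving the $k$, $k_1$ and $s$ integrals. Collecting $\rme^{\ci\widetilde K_i}$ and $\rme^{\ci2\pi k_1\cdot(y+y')}$ against the weights $v_i(k,y)w_i(y')$ and summing over $y,y'\in\lattice$, one uses the definition \eqref{eq:FT} of the discrete Fourier transform, the convolution identity $\FT{V*V}=\FT{V}^{2}$, and the elementary fact $\FT{\mathcal{V}(k,\cdot)}(q)=(\FT V(k)+\theta\FT V(q))^2$ (immediate from \eqref{eq:mathcalV} and $\theta^2=1$) to turn the $i\le5$ terms into integrals carrying a factor $(\FT V(k)+\theta\FT V(k_0-k-k_1))^2$, and the $i=6,7,8$ terms into analogous integrals built from products of single values of $\FT V$. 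Inserting $1=\int_{\duallattice}\rmd k_3\,\delta_L(k_0+k_1-k_2-k_3)$ and renaming the integration momenta suitably (renaming $k$ in terms of $k_2$, and for the indices with $\sigma_i^{(2)}=-$ also reflecting $k_1\to-k_1$) converts $\int_{\duallattice}\rmd k$ into $\int_{(\duallattice)^2}\rmd k_2\rmd k_3\,\delta_L(k_0+k_1-k_2-k_3)$ and, by evenness of $\FT V$, brings the potential factor into the form $(\FT V(k_1-k_2)+\theta\FT V(k_1-k_3))^2$ appearing in $\cnls{\theta}$ and $\ccl{\theta}$.

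Next I would handle the $s$-dependent phase. The key algebraic fact is
\[
 \rs{\xi_i}(k)_\ell\,\rme^{\ci 2\pi(\widetilde{u}_{\xi_i})_\ell}=s\bigl(\rme^{\ci\pi\xi_i}-\rme^{\ci2\pi k_\ell}\bigr),
\]
which follows from \eqref{eq:R_shorthand_tilde}, \eqref{eq:tildeu}, and $|z_{\xi_i}(k_\ell)|=|\rme^{\ci\pi\xi_i}-\rme^{\ci2\pi k_\ell}|$. Feeding this into the definition \eqref{eq:phi} of $\phi$, and using that for the nearest--neighbour dispersion \eqref{eq:omega} one has $\sum_\ell\cos(2\pi p_\ell)=c_0-\omega(p)$ — and, for $\xi_i=1$, the half--shift identities $z_1(k)=z_0(k+\tfrac12)$ and $\omega(p+\bar{1}/2)=2c_0-\omega(p)$ — one finds that $\phi(k_1;\rs{\xi_i}(k),\widetilde{u}_{\xi_i})$ combines with the explicit factor $\rme^{-\ci s(\omega(k_0)-\omega(k_0-k+\xi_i/2))}$ into $\rme^{-\ci s(\omega(k_0)+\omega(k_1)-\omega(k_2)-\omega(k_3))}$, with $k_2,k_3$ as above. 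The $s$-integral over $[-\lambda^{-2}\tau_0,\lambda^{-2}\tau_0]$ then produces, by \eqref{eq:energy_delta}, the factor $\delta_{\lambda,\tau_0}(\omega_0+\omega_1-\omega_2-\omega_3)$ up to the constant $2\pi$.

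After these reductions each term $i$ is a numerical multiple of $\pi\int_{(\duallattice)^3}\rmd k_1\rmd k_2\rmd k_3\,\delta_L(k_0+k_1-k_2-k_3)\,\delta_{\lambda,\tau_0}(\omega_0+\omega_1-\omega_2-\omega_3)\,(\FT V(k_1-k_2)+\theta\FT V(k_1-k_3))^2$ times a monomial in $W_t$ of degree three (for $\widetilde n_i=2$) or two (for $\widetilde n_i=1$, i.e. $i=3,4$, which vanish unless $q=1$); which monomial occurs, and with which sign, is read off from $(\sigma_i^{(q)},\sigma_i^{(2)},\widetilde n_i,\xi_i)$ together with the factor $W(k_0-\ind(i=4,5)k)$. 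Summing $i=1,\dots,8$ with the signs $\sigma^{(q)}=(-\theta,-\theta,-q,q,\theta,\theta,\theta,+)$, the monomials assemble into $\theta(h_1h_2h_3+h_0h_2h_3-h_0h_1h_3-h_0h_1h_2)+q(h_2h_3-h_0h_1)$, i.e. into $\cbos(W_t,\tau_0)(k_0)$. I expect this final assembly to be the only delicate point: one must apply the momentum relabellings consistently — they are permitted only because of the $\delta_L$ and because the kernel is even in its frequency argument and symmetric under $k_2\leftrightarrow k_3$ — and one must keep track of the several factors of $2$ that appear (from the length $2\lambda^{-2}\tau_0$ of the $s$-interval, from the cross term $2\theta\FT V\FT V$ in the square, and from the symmetrisations), so that, together with the way the eight indices pair up, they combine to leave exactly the prefactor $\pi$ of $\cnls{\theta}$ and $\ccl{\theta}$. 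No analytic estimate enters the proof.
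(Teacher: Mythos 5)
Your proposal runs the paper's own proof backwards. The paper starts from $\cbos(W_t,\tau_0)=\theta\cnls{\theta}+q\ccl{\theta}$, picks each monomial $h_\alpha h_\beta(h_\gamma)$ in turn, integrates out the momentum $\delta_L$, performs the change of variables $k_3\mapsto k_3+k_1$ (and, for the second gain term, two further changes), and applies discrete Parseval to produce each of the eight $i$-terms of $\cbiq$. You instead insert the definition of $F[W]^{(\widetilde n_i)}_{t,t}$ into \eqref{eq:cbosg}, carry out the $y,y'$ sums using $\FT{\mathcal V(k,\cdot)}(q)=(\FT V(k)+\theta\FT V(q))^2$, insert a $\delta_L$, and reverse-engineer the phase. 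These are the same algebraic steps read in opposite directions, and there is no essential novelty either way.

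One thing you do that the paper does not spell out, and which adds clarity, is isolating the identity
\[
\rs{\xi_i}(k)_\ell\,\rme^{\ci 2\pi(\widetilde u_{\xi_i})_\ell}=s\bigl(\rme^{\ci\pi\xi_i}-\rme^{\ci2\pi k_\ell}\bigr),
\]
which (together with $\sum_\ell\cos(2\pi p_\ell)=c_0-\omega(p)$ and, for $\xi_i=1$, $\omega(p+\bar 1/2)=2c_0-\omega(p)$) makes completely transparent how $\phi(k_1;\rs{\xi_i}(k),\widetilde u_{\xi_i})$ combines with the explicit factor $\rme^{-\ci s(\omega(k_0)-\omega(k_0-k+\xi_i/2))}$ to reconstruct $\rme^{-\ci s(\omega_0+\omega_1-\omega_2-\omega_3)}$; I checked this identity and it is correct. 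The paper, by contrast, works this out term by term using its phrasing ``using the identities $\omega(-k)=\omega(k)$ and $\omega(k\pm1/2)=2c_0-\omega(k)$'' without stating the general mechanism.

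What you have \emph{not} done, and what is actually the crux of this lemma, is the final assembly: tracking, for each $i=1,\dots,8$, exactly which monomial in $W_t$ emerges after the relabelling, with which sign $\sigma^{(q)}_i$, and how the eight of them collapse to the four $\cnls{\theta}$ monomials and the two $\ccl{\theta}$ monomials with the correct overall factor (the $\pi$ in front of $\cnls{\theta}$, $\ccl{\theta}$, the $\tfrac12$ hidden in $\pi\delta_{\lambda,\tau_0}=\tfrac12\int\rmd s$, the doubled pair $i=1,2$, and the factor $2$ in $v_8=2V$). You explicitly flag this as ``the only delicate point'' and leave it unchecked. That is honest, but it is also precisely the content of the statement: everything else is routine Fourier algebra. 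For instance, the identification of the $i=3$ term with the $\ccl{\theta}$ loss monomial $-h_0h_1$ is genuinely nontrivial under the relabelling $k_2=k_1+k$, $k_3=k_0-k$ that you propose, since $F^{(1)}$ produces $W_t(k+k_1)$, i.e.\ $h_2$, not $h_1$; one has to argue carefully (exactly as the paper does when it says ``doing the exact same for the loss term'') before that index is in the right place. Until you actually write out the eight-term bookkeeping and show it produces $\theta(h_1h_2h_3+h_0h_2h_3-h_0h_1h_3-h_0h_1h_2)+q(h_2h_3-h_0h_1)$ with the right prefactor, you have a correct plan rather than a proof. Given that the paper's own argument also leaves much of that bookkeeping implicit, your sketch is at a comparable level of completeness, but it should not be presented as a finished proof.
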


\begin{proof}
   This is a straightforward calculation. We start from the decomposition of the operator
   $\cbos(W_t,\tau)=\theta\cnls{\theta}(W_t,\tau)+q\ccl{\theta}(W_t,\tau)$.

   The following computation is for the first term in the $\cnls{\theta}(W_t,\tau)$ operator,
   \begin{align*}
    &2\pi \int_{(\duallattice)^3}\rmd k_{123}\delta_L(k_0+k_1-k_2-k_3)
    \delta_{\lambda,\tau}(\Omega(k_0,k))
    W(k_1)W(k_2)W(k_3)
    \\
    &\qquad\times\left( \FT{V}(k_1-k_2)+\theta\FT{V}(k_1-k_3) \right)^2
    \\
    &=
       \int_{-\lambda^{-2}\tau}^{\lambda^{-2}\tau }\rmd s
       \int_{(\duallattice)^2}\rmd k_{13}
       \rme^{-\ci s(\omega(k_0)+\omega(k_1)-\omega(k_0+k_1-k_3)-\omega(k_3))}
       W(k_1)W(k_0+k_1-k_3)W(k_3)
       \\
        &\qquad\times
    \left( \FT{V}(k_3-k_0)+\theta\FT{V}(k_1-k_3) \right)^2
    \\
        &=
       \int_{-\lambda^{-2}\tau}^{\lambda^{-2}\tau }\rmd s
       \int_{\duallattice}\rmd k_{3}
       \rme^{-\ci s(\omega(k_0)-\omega(k_0-k_3))}
        W(k_0-k_3)
       \\
        &\qquad\times
        \left( 
       \int_{\duallattice}\rmd k_{1}
       \rme^{-\ci s(\omega(k_1)-\omega(k_1+k_3))}
       W(k_1)W(k_1+k_3)
    \left( \FT{V}(k_3+k_1-k_0)+\theta\FT{V}(k_3) \right)^2
        \right) 
    \\
        &=
       \int_{-\lambda^{-2}\tau}^{\lambda^{-2}\tau }\rmd s
       \int_{\duallattice}\rmd k_{3}
       \rme^{-\ci s(\omega(k_0)-\omega(k_0-k_3))}
        W(k_0-k_3)
       \\
        &\qquad\times
        \sum_{y\in\lattice}^{} 
        \left( 
       \int_{\duallattice}\rmd k_{1}
       \rme^{-\ci s(\omega(k_1)-\omega(k_1+k_3))}
       W(k_1)W(k_1+k_3)
       \rme^{\ci2\pi k_1\cdot y}
        \right) 
        \rme^{\ci2\pi (k_3-k_0)\cdot y}\mathcal{V}(k_3,y).
   \end{align*}
   Here we integrated $\delta_L$ over $k_2$, applied the change of variables
   $k_3\mapsto k_3+k_1$, and used Parseval's Theorem \ref{thm:parseval} together with the identity 
   \[
       \left(
       \int_{\duallattice}\rmd k_{1}
    \left( \FT{V}(k_3+k_1-k_0)+\theta\FT{V}(k_3) \right)^2
    \rme^{\ci2\pi k_1\cdot y}
       \right) ^*
       = 
        \rme^{\ci2\pi (k_3-k_0)\cdot y}\mathcal{V}(k_3,y),
   \]
   with $\mathcal{V}$ defined in \eqref{eq:mathcalV}.
   After renaming $k_3$ as $k$, this gives the $i=5$ term in \eqref{eq:cbosg}. 
   We get the $i=4$ term from the $W(k_2)W(k_3)$ gain term in $q\cbos(W_t,\tau)$
   by replacing $W(k_1)$ by $1$ in the above computation.

   Note that the two loss terms in $\cnls{\theta}(W,\tau)$ are equal, this can be seen
   from the definition \eqref{eq:cnls}
   by swapping the symbols $k_2$ and $k_3$ in one of the terms. Consider 
   the term with $W(k_0)W(k_1)W(k_3)$. We integrate the $\delta_L$ over $k_2$ 
    and do a change of variables
   $k_3\mapsto k_3+k_1$ and use Parseval's Theorem similarly as above. 
   These two loss terms produce the $i=1$ and $i=2$ terms in
   \eqref{eq:cbosg}. The $i=3$ term is obtained by doing the
   exact same for the loss term (the term with $W(k_0)W(k_1)$) in $q\ccl{\theta}(W_t,\tau)$.

   The remaining part, i.e., terms $i=6,7,8$ in $\eqref{eq:cbosg}$, are obtained
   from the second gain term involving $W(k_0)W(k_2)W(k_3)$ in $\cnls{\theta}(W_t,\tau)$. This
   is done by integrating $\delta_L$
   over $k_1$. After this, we apply change of variables twice: first $k_3\mapsto
   k_3-k_2$, and then $k_2\mapsto -k_2$. Using the identities $\omega(-k)=\omega(k)$ and
   $\omega(k\pm1/2)=2c_0-\omega(k)$  we obtain the following form for this term
   \begin{align*}
       &\int_{-\lambda^{-2}\tau}^{\lambda^{-2}\tau }\rmd s
       \int_{\duallattice}\rmd k_{3}
       \rme^{-\ci s(\omega(k_0)-\omega(k_0-k_3+1/2))}
        W(k_0)
       \\
        &\qquad\times
        \left( 
       \int_{\duallattice}\rmd k_{2}
       \rme^{-\ci s(\omega(k_2+1/2)-\omega(k_2+k_3))}
       W(-k_2)W(k_2+k_3)
    \left( \FT{V}(k_3+k_2-k_0)+\theta\FT{V}(k_2+k_0) \right)^2
     \right).
   \end{align*}
   Expanding the square inside the integral gives us three terms. For both terms
   $\FT{V}(k_3+k_2-k_0)^2$ and $\FT{V}(k_2+k_0)^2$, it is sufficient to use
   Parseval's Theorem
   once as above. The resulting terms give the terms $i=6,7$  in \eqref{eq:cbosg}. For
   the cross term $2\theta \FT{V}(k_3+k_2-k_0)\FT{V}(k_2+k_0)$ we have to apply Parseval's Theorem
   twice to get the final term $i=8$. 
\end{proof}

\subsection{Time evolution of the collision control map}
\label{sec:fixpoint_operator}

In this subsection, we will define a bilinear operator $\mathcal{I}_\tau$ and
prove that it gives the time evolution of the collision control map $F[W]$.
Namely, if $W$ is a solution to the evolution equation \eqref{eq:Wevo} with
$\tau$, then $F[W]=\mathcal{I}_\tau[F[W],F[W]]$. This will be proved in 
Lemma \ref{lem:Fevo} below.

Given $\tau$ as in \eqref{eq:taudiff} or
\eqref{eq:tauconst}, the bilinear operator
$$\mathcal{I}_\tau:\CT{\tmax}^2\to\CT{\tmax}$$ is defined by setting
\begin{align}
    \nonumber
    \mathcal{I}_\tau[G,\mathcal{G}]_{t_1,t_2}^{(n)}(X) \coloneqq 
    \mathcal{G}_{0,0}^{(n)}(X)
    +
    \frac{1}{2}
    &
    \left( 
        \mathcal{J}_\tau[G,\mathcal{G}]_{0,t_2}^{(n)}(X)
        +
        \mathcal{J}_\tau[G,\mathcal{G}]_{t_1,t_2}^{(n)}(X)
    \right) 
    \\
    \label{eq:I}
                                      +
    \ind(n=2)
    \frac{{e^{-\ci2\pi k'\cdot x}}}{2}
    &
    \left( 
        \mathcal{J}_\tau[G,\mathcal{G}]_{0,t_1}^{(2)}(\widetilde{X})
        +
        \mathcal{J}_\tau[G,\mathcal{G}]_{t_2,t_1}^{(2)}(\widetilde{X})
    \right),
\end{align}
for $(t_1,t_2)\in[0,\tmax]^2$,  $X=(R,k',u,\sigma,x)\in\xl$,
$\widetilde{X}=(R,-\sigma k',\sigma(u-k'),\sigma,\sigma x)$ and $n\in\{1,2\}$.
Here, the other bilinear operator $$\mathcal{J}_\tau\colon \CT{\tmax}^2\to\CT{\tmax}$$ is defined
by setting 
\begin{align}
    \nonumber
    \mathcal{J}_\tau[G,\mathcal{G}]_{t_1,t_2}^{(n)}(X) &\coloneqq \sum_{i=1}^8
    \sigma^{(q)}_i
    \int_0^{t_2} \rmd t
    \int_{-\lambda^{-2}\tau(t_2,t)}^{\lambda^{-2}\tau(t_2,t)}
    \rmd s 
    \int_{\duallattice} \rmd k 
    \sum_{y,y'\in\lattice}
    v_i(k,y)w_i(y')
    \rme^{\ci K_i}
        \\
        \label{eq:J}
         &
         \times 
     G^{(\widetilde{n}_i)}_{t,t}(\Jpoint_s^i(X,k,y,y'))
     \mathcal{G}^{(n)}_{t_1,t}(\Jppoint_s^i(X,k,y,y'))
\end{align}
for $(t_1,t_2)\in[0,\tmax]^2$, $X=(R,k',u,\sigma,x)\in\xl$ and $n\in\{1,2\}$. The functions \(\Psi\) and \(\widetilde{\Psi}\) in the definition are given by
\begin{align}
    \Jpoint_s^i(X,k,y,y')&\coloneqq  (\rs{\xi_i}(k),k,\widetilde{u}_{\xi_i},\sigma^{(2)}_i,(y+y')\bmod\lattice)
    \\
    \Jppoint^i_s(X,k,y,y')&\coloneqq(\rs{\xi_i}(R,
    u,k',k),\widetilde{k}'_i,\widetilde{u}_{\xi_i}',\sigma, (x+\sigma^{(3)}_iy-y')\bmod\lattice).
\end{align}
Here $\widetilde{k}_i'\coloneqq k'-k$ for $i=4,5$ and $\widetilde{k}_i'\coloneqq k'$ otherwise, and $\sigma^{(3)}\coloneqq (-,-,-,-,-,-,+,+)$, and
\begin{align}
    K_i\coloneqq 
    \begin{cases}
        2\pi(k-k')\cdot y\,, & i\leq 6\\
        2\pi k'\cdot y\,, & i=7 \\
        2\pi k'\cdot(y-y')+2\pi k\cdot y'\,, &i=8
    \end{cases}.
\end{align}
For the other symbols in \eqref{eq:J}, we refer the reader to \eqref{eq:R_tilde0}-\eqref{eq:u_tilde_prime1}
and  
\eqref{eq:cbosg}.

As discussed in
Section \ref{sec:keyideas}, 
for a given 
$G\in \CT{\tmax}$ we will be interested 
in the following equation for $W\in C([0,\tmax]\times\duallattice,\C)$
\begin{align}
    \label{eq:WGevo}
    W_T(k_0)=W_0(k_0)
    +
    \int_0^T
    \rmd t
    \cbiq(\tau(T,t))[G_{t,t},W_t](k_0),
\end{align}
where the bilinear operator $\cbiq$ was defined in \eqref{eq:cbosg}. Note that 
by Lemma \ref{lem:colG}, if $G=F[W]$, then the evolution equations
\eqref{eq:WGevo} and \eqref{eq:Wevo} are equivalent.  We will be especially
interested in the equation \eqref{eq:WGevo} when $G$ solves 
the fixed point equation
\begin{align}
    \label{eq:Gevo}
    G &= \mathcal{I}_\tau[G,G].
\end{align}

In the following Lemma \ref{lem:Fevo} we will prove the main motivation of the 
definition given for the bilinear operator $\mathcal{I}_\tau$, namely, if 
$W$ satisfies \eqref{eq:WGevo} with $G$, then $F[W]=\mathcal{I}_\tau[G,F[W]]$.
Therefore, $\mathcal{I}_\tau$ gives the time evolution of $F[W]$.

\begin{lemma}
    \label{lem:JFconnection}
    Let $\tmax>0$ and $(W,G)\in C([0,\tmax]\times\duallattice,\R)\times \CT{\tmax}$.
    Let $\tau$ be as in \eqref{eq:taudiff} or \eqref{eq:tauconst}.
    Then
    \begin{align}
    		\label{eq:J-equa}
        \nonumber
            \mathcal{J}_\tau[G,F[W]]_{t_1,t_2}^{(n)}(R,k',u,\sigma,x)
            =
            \int_{\duallattice}&\rmd k_0 
            \rme^{\ci\phi(k_0,R,u)}
            W_{t_1}(\sigma k_0)^{n-1}
            \rme^{\ci2\pi k_0\cdot x}\nonumber\\
            &\times\left(
            \int_0^{t_2}\rmd t 
            \cbiq(\tau(t_2,t))[G_{t,t},W_t](k_0+k')\right)
    \end{align}
    for all $(R,k',u,\sigma,x)\in\xl$, $(t_1,t_2)\in[0,\tmax]^2$, and $n\in\{0,1\}$. 
\end{lemma}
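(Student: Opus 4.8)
The plan is to prove \eqref{eq:J-equa} term by term in the sum over $i=1,\dots,8$ defining $\mathcal{J}_\tau$ in \eqref{eq:J}. For each $i$ the prefactor $\sigma^{(q)}_i v_i(k,y)w_i(y')$ and the first factor $G^{(\widetilde n_i)}_{t,t}(\Jpoint^i_s(X,k,y,y'))$ are literally the same objects that occur in the $i$-th summand of $\cbiq(\tau_0)$ in \eqref{eq:cbosg}, and none of them involves the variable $k_0$ that is introduced once the second factor $F[W]^{(n)}_{t_1,t}$ is expanded at $\Jppoint^i_s(X,k,y,y')$. So the argument reduces to inserting the definition \eqref{eq:F} of $F[W]$ into that factor and reorganising the resulting exponentials so that the $i$-th summand of $\mathcal{J}_\tau[G,F[W]]$ becomes $\int_{\duallattice}\rmd k_0$ of $\rme^{\ci\phi(k_0,R,u)}W_{t_1}(\sigma k_0)^{n-1}\rme^{\ci2\pi k_0\cdot x}$ times the $i$-th summand of $\cbiq(\tau(t_2,t))[G_{t,t},W_t](k_0+k')$.

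First I would substitute $\Jppoint^i_s(X,k,y,y')=(\rs{\xi_i}(R,u,k',k),\widetilde k'_i,\widetilde u'_{\xi_i},\sigma,(x+\sigma^{(3)}_iy-y')\bmod\lattice)$ into \eqref{eq:F}, obtaining $\int_{\duallattice}\rmd k_0\,\rme^{\ci\phi(k_0,\rs{\xi_i}(R,u,k',k),\widetilde u'_{\xi_i})}W_{t_1}(\sigma k_0)^{n-1}W_t(\widetilde k'_i+k_0)\,\rme^{\ci2\pi k_0\cdot((x+\sigma^{(3)}_iy-y')\bmod\lattice)}$. Three simplifications then apply. Since $k_0\in\duallattice=\tfrac1L\lattice$ and $\sigma^{(3)}_iy-y'\in\Z^d$, the modular reduction drops out of the last exponential, which factors as $\rme^{\ci2\pi k_0\cdot x}\rme^{\ci2\pi k_0\cdot(\sigma^{(3)}_iy-y')}$. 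For the $\phi$-phase I use the summation rule: the identity recorded just above \eqref{eq:R_tilde0} (a case of Lemma \ref{lem:cossum}), together with \eqref{eq:R_tilde1}--\eqref{eq:u_tilde_prime1} to reach $\xi_i=1$ via the shift $k\mapsto k-\bar1/2$, gives for $\xi_i\in\{0,1\}$
\[
 \phi(k_0,\rs{\xi_i}(R,u,k',k),\widetilde u'_{\xi_i})=\phi(k_0,R,u)-s\bigl(\omega(k_0+k')-\omega(k_0+k'-k+\xi_i/2)\bigr),
\]
whose $\omega$-part is exactly the $\omega$-exponent in the $i$-th summand of \eqref{eq:cbosg} after $k_0\mapsto k_0+k'$. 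Finally, using the definitions of $\widetilde k'_i$ (equal to $k'-k$ for $i=4,5$ and $k'$ otherwise), of $\sigma^{(3)}=(-,-,-,-,-,-,+,+)$, and of $w_i$ (with $w_i(y')=\ind(y'=0)$ for $i\le7$), one checks separately in the cases $i\le6$, $i=7$, $i=8$ that $K_i+2\pi k_0\cdot(\sigma^{(3)}_iy-y')$ coincides with $\widetilde K_i$ of \eqref{eq:cbosg} evaluated at $k_0+k'$, and that $W_t(\widetilde k'_i+k_0)=W_t\bigl((k_0+k')-\ind(i=4,5)k\bigr)$.

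With these substitutions done, the $i$-th summand of \eqref{eq:J} has the claimed form, and I would pull $\int_{\duallattice}\rmd k_0$ to the outermost position, interchanging it with the $t$- and $s$-integrations and the finite sums over $y,y'\in\lattice$. For $W\in C([0,\tmax]\times\duallattice,\R)$ and $G\in\CT{\tmax}$ the integrands are continuous in $(t,s,k,k_0)$ on the compact domain $[0,t_2]\times[-\lambda^{-2}\tau_0,\lambda^{-2}\tau_0]\times(\duallattice)^2$ and the lattice sums are finite, so Fubini's theorem applies without further care. Summing over $i$ and recognising the $i$-sum as $\int_0^{t_2}\rmd t\,\cbiq(\tau(t_2,t))[G_{t,t},W_t](k_0+k')$ yields \eqref{eq:J-equa}; the two relevant values of $n$ are treated simultaneously, $W_{t_1}(\sigma k_0)^{n-1}$ being carried along passively, and no case split on $n$ is needed here since $\mathcal{J}_\tau$, unlike $\mathcal{I}_\tau$, contains no $\ind(n=2)$ term.

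I expect the only real obstacle to be the phase bookkeeping in the last of the three simplifications (the $K_i$-versus-$\widetilde K_i$ check): keeping the shift $k_0\mapsto k_0+k'$ consistent across $\widetilde K_i$, $\widetilde k'_i$, and the $\omega$-difference, and working through the three-way case split, with the $y'$-sums trivial except for $i=8$ where $w_8(y')=V(y')$. Everything else is a mechanical substitution using only \eqref{eq:F}, the summation rule, and Fubini.
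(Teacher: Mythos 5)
Your proof is correct and takes essentially the same approach as the paper's, just run in the reverse direction: you start from $\mathcal{J}_\tau[G,F[W]]$ and expand $F[W]$ by its definition \eqref{eq:F} to recover the $k_0$-integral of $\cbiq$, whereas the paper starts from the $k_0$-integral on the right-hand side and regroups the $\cbiq$ summands (into blocks $i\in\{1,\dots,5\}$, $\{6,7\}$, $\{8\}$) to recognise $F[W]$. The ingredients — the summation rule identity above \eqref{eq:R_tilde0}, the definitions of $\widetilde k'_i$, $\sigma^{(3)}$, $K_i$ vs.\ $\widetilde K_i$, dropping the $\bmod\lattice$ in the exponential, and Fubini on compact domains — are identical, and your case checks for $i\le 6$, $i=7$, $i=8$ all verify correctly.
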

\begin{proof}
    This is a straightforward computation starting from the right-hand side.
    For the following computations we use a shorthand notation $$
    \int\rmd t \rmd s \rmd k
    \coloneqq  \int_0^{t_2} \rmd t
        \int_{-\lambda^{-2}\tau(t_2,t)}^{\lambda^{-2}\tau(t_2,t)}\rmd s
        \int_{\duallattice} \rmd k 
        $$
    Using the definition \eqref{eq:cbosg} of $\cbiq$, we can write the RHS of \eqref{eq:J-equa}
        as a sum of terms $A_1,A_2,A_3$, where $A_1$ contains the terms $i=1,\dots 5$
        of the sum in \eqref{eq:cbosg}, $A_2$ the terms $i=6,7$, and $A_3$ the term 
        $i=8$. We compute as follows
    \begin{align}
    \nonumber
    A_1&=
    \int\rmd t \rmd s \rmd k
    \sum_{y\in\lattice}
        \sum_{i=1}^{5} 
        \sigma^{(q)}_i
    \mathcal{V}(k,y)
    G^{(\widetilde{n}_i)}_{t,t}(\rs{0}(k),k,\widetilde{u}_{0},+,y)
        \int_{\duallattice} \rmd k_0
            \rme^{\ci\phi(k_0,R,u)}
            W_{t_1}(\sigma k_0)^{n-1}
            \rme^{\ci2\pi k_0\cdot x}
    \\
    \nonumber
                            &\qquad\times
    e^{-\ci s(\omega(k_0+k')-\omega(k_0+k'-k))}
            W_t(k_0+k'-\ind(i=4,5)k)
            \rme^{\ci2\pi (k-k_0-k')\cdot y}
            \\
            \nonumber
                            &=
    \int\rmd t \rmd s \rmd k
    \sum_{y\in\lattice}
        \sum_{i=1}^{5} 
        \sigma^{(q)}_i
    \mathcal{V}(k,y)
    \rme^{\ci2\pi (k-k')\cdot y}
    G^{(\widetilde{n}_i)}_{t,t}(\rs{0}(k),k,\widetilde{u}_{0},+,y)
    \\
    \nonumber
                            &\qquad\times
        F[W]_{t_1,t}^{(n)}(\rs{0}(R,u,k',k),k'-\ind(i=4,5)k,
                            {\widetilde{u}'_0},\sigma,(x-y)\bmod\lattice).
    \end{align}
    \begin{align*}
        A_2&=\theta
    \int\rmd t \rmd s \rmd k
    \sum_{y\in\lattice}
         V*V(y)
         G^{(2)}_{t,t}(\rs{1}(k),k,\widetilde{u}_{1},-,y)
        \int_{\duallattice} \rmd k_0
            \rme^{\ci\phi(k_0,R,u)}
            W_{t_1}(\sigma k_0)^{n-1}
            \rme^{\ci2\pi k_0\cdot x}
    \\
    \nonumber
                            &\qquad\times
        e^{-\ci s(\omega(k_0+k')-\omega(k_0+k'-k+1/2))}W_t(k_0+k')
                               \left( \rme^{\ci2\pi(k-k_0-k')\cdot
                               y}+\rme^{\ci2\pi (k_0+k')\cdot y)} \right)  
                               \\
                            &=\theta\int\rmd t \rmd s \rmd k
    \sum_{y\in\lattice}
         V*V(y)
         G^{(2)}_{t,t}(\rs{1}(k),k,\widetilde{u}_{1},-,y)
    \\
    \nonumber
                            &\qquad\times
                            \Big( 
                                \rme^{\ci2\pi (k-k')\cdot y}
                                F[W]_{t_1,t}^{(n)}(\rs{1}(R,u,k',k),k',{\widetilde{u}'_1},\sigma,(x-y)\bmod\lattice)
                                \\
                            &\qquad\quad
                            +
                                \rme^{\ci2\pi k'\cdot y}
                                F[W]_{t_1,t}^{(n)}(\rs{1}(R,u,k',k),k',{\widetilde{u}'_1},\sigma,(x+y)\bmod\lattice)
                            \Big). 
    \end{align*}
    \begin{align*}
        A_3&=
    \int\rmd t \rmd s \rmd k
    \sum_{y,y'\in\lattice}
                            2V(y)V(y')
                            G^{(2)}_{t,t}(\rs{1}(k),k,\widetilde{u}_{1},-,y+y')
\int_{\duallattice} \rmd k_0
            \rme^{\ci\phi(k_0,R,u)}
            W_{t_1}(\sigma k_0)^{n-1}
    \\
    \nonumber
                            &\qquad\times
            \rme^{\ci2\pi k_0\cdot x}
        e^{-\ci s(\omega(k_0+k')-\omega(k_0+k'-k+1/2))}W_t(k_0+k')
         \rme^{\ci2\pi (k_0+k')\cdot (y-y')+\ci2\pi k\cdot y'}
    \\
                            &=
         \int\rmd t \rmd s \rmd k
    \sum_{y,y'\in\lattice}
                            2V(y)V(y')
                            \rme^{\ci2\pi k'\cdot (y-y')+\ci2\pi k\cdot y'}
                            G^{(2)}_{t,t}(\rs{1}(k),k,\widetilde{u}_{1},-,(y+y')\bmod\lattice)
    \\
    \nonumber
                            &\qquad\times
                            F[W]_{t_1,t}^{(n)}(\rs{1}(R,u,k',k),k',{\widetilde{u}'_1},\sigma,(x+y-y')\bmod\lattice).
    \end{align*}
    Thus, $A_1+A_2+A_3$ is equal to the LHS of \eqref{eq:J-equa}.
\end{proof}

\begin{lemma}
    \label{lem:Fevo}
    Let $\tmax>0$.
    Let $\tau$ be as in \eqref{eq:taudiff} or \eqref{eq:tauconst}.
    If $W\in C([0,\tmax]\times\duallattice,\R)$ satisfies the evolution equation \eqref{eq:WGevo} with
    $\tau$ and $G\in \CT{\tmax}$, then there holds
    \begin{align}
        \label{eq:Fevo}
    &F[W] =
    \mathcal{I}_\tau[G,F[W]]
\end{align}
In particular, if $W$ satisfies the evolution equation \eqref{eq:Wevo} with $\tau$, then
\eqref{eq:Fevo} holds for $G=F[W]$.
\end{lemma}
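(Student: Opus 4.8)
The plan is to substitute the defining integral \eqref{eq:F} of $F[W]$ and reduce the claim to a scalar identity in the dummy variable $k_0\in\duallattice$, using only Lemma \ref{lem:JFconnection} and the evolution equation \eqref{eq:WGevo} as inputs. The key preliminary step is a closed form for $\mathcal{J}_\tau[G,F[W]]$: combining Lemma \ref{lem:JFconnection} with \eqref{eq:WGevo} — which gives $\int_0^{t_2}\rmd t\,\cbiq(\tau(t_2,t))[G_{t,t},W_t](k_0+k') = W_{t_2}(k_0+k') - W_0(k_0+k')$ — one obtains
\[
\mathcal{J}_\tau[G,F[W]]_{t_1,t_2}^{(n)}(R,k',u,\sigma,x)
= \int_{\duallattice}\!\rmd k_0\,\rme^{\ci\phi(k_0,R,u)}\,W_{t_1}(\sigma k_0)^{n-1}\bigl[W_{t_2}(k_0+k') - W_0(k_0+k')\bigr]\,\rme^{\ci2\pi k_0\cdot x},
\]
i.e.\ $\mathcal{J}_\tau[G,F[W]]_{t_1,t_2}^{(n)}(X) = F[W]_{t_1,t_2}^{(n)}(X) - F[W]_{t_1,0}^{(n)}(X)$, where in the last term $W_0$ sits in the second time slot. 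Since $\duallattice$ is finite and $W,G$ are continuous, Fubini has already been used inside Lemma \ref{lem:JFconnection}, so no new integrability check is needed.

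The second ingredient is a swap symmetry of the second component: for all $s_1,s_2\in[0,\tmax]$ and all $(R,k',u,\sigma,x)\in\xl$,
\[
F[W]_{s_1,s_2}^{(2)}(R,k',u,\sigma,x) = \rme^{-\ci2\pi k'\cdot x}\,F[W]_{s_2,s_1}^{(2)}\bigl(R,-\sigma k',\sigma(u-k'),\sigma,\sigma x\bigr),
\]
the argument on the right being exactly the point $\widetilde X$ appearing in \eqref{eq:I}. This follows from the substitution $k_0\mapsto\sigma k_0 - k'$ in the integral \eqref{eq:F}, a bijection of $\duallattice$ preserving $\int_{\duallattice}$: using only that cosine is even and $2\pi$-periodic in the definition \eqref{eq:phi} one checks $\phi(\sigma k_0-k',R,u)=\phi(k_0,R,\sigma(u-k'))$, that $W_{s_1}(\sigma(\sigma k_0-k'))=W_{s_1}(k_0-\sigma k')$ and $W_{s_2}(k'+\sigma k_0-k')=W_{s_2}(\sigma k_0)$, and that the shift by $-k'$ in the plane wave produces the scalar prefactor $\rme^{-\ci2\pi k'\cdot x}$ together with $\rme^{\ci2\pi k_0\cdot\sigma x}$.

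With these two facts, plug $\mathcal{G}=F[W]$ into the definition \eqref{eq:I} of $\mathcal{I}_\tau$ and expand. For $n=1$ the $\ind(n=2)$ term drops, $F[W]^{(1)}$ is independent of its first time index (the power $n-1$ is $0$), so the two $\mathcal{J}^{(1)}$-terms coincide and equal $F[W]^{(1)}_{0,t_2}(X)-F[W]^{(1)}_{0,0}(X)$; hence $\mathcal{I}_\tau[G,F[W]]^{(1)}_{t_1,t_2}(X) = F[W]^{(1)}_{0,0}(X) + F[W]^{(1)}_{0,t_2}(X)-F[W]^{(1)}_{0,0}(X) = F[W]^{(1)}_{t_1,t_2}(X)$. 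For $n=2$, evaluate the first bracket of \eqref{eq:I} by the closed form of $\mathcal{J}$ at $(0,t_2)$ and $(t_1,t_2)$, and the $\widetilde X$-bracket by the closed form of $\mathcal{J}$ followed by the swap symmetry, which gives $\rme^{-\ci2\pi k'\cdot x}\mathcal{J}_\tau[G,F[W]]^{(2)}_{t',t_1}(\widetilde X) = F[W]^{(2)}_{t_1,t'}(X) - F[W]^{(2)}_{0,t'}(X)$ for $t'\in\{0,t_2\}$. Adding the five contributions, every occurrence of $F[W]^{(2)}_{0,0}$, $F[W]^{(2)}_{0,t_2}$ and $F[W]^{(2)}_{t_1,0}$ cancels and the total collapses to $F[W]^{(2)}_{t_1,t_2}(X)$, proving $F[W]=\mathcal{I}_\tau[G,F[W]]$. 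The final "in particular" assertion is then immediate: by Lemma \ref{lem:colG}, equation \eqref{eq:Wevo} for $W$ is literally equation \eqref{eq:WGevo} with $G=F[W]$, so the first part applies with that $G$.

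The main obstacle is purely bookkeeping: ensuring the change of variables in the swap symmetry sends $(R,k',u,\sigma,x)$ to precisely the hard-coded $\widetilde X=(R,-\sigma k',\sigma(u-k'),\sigma,\sigma x)$ of \eqref{eq:I}, with exactly the prefactor $\rme^{-\ci2\pi k'\cdot x}$ and no stray signs inside the arguments of $\phi$ or of the $W$-factors, and then carrying the telescoping cancellation cleanly through the five terms of \eqref{eq:I} in the $n=2$ case. No estimates are needed at this stage.
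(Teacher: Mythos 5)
Your proof is correct and rests on the same two ingredients as the paper's, namely Lemma \ref{lem:JFconnection} and the change of variables $k_0\mapsto\sigma k_0 - k'$ together with the phase identity $\phi(\sigma k_0-k',R,u)=\phi(k_0,R,\sigma(u-k'))$, but you run the argument in the opposite direction. The paper starts from $F[W]_{t_1,t_2}^{(n)}=\tfrac12 F[W]+\tfrac12 F[W]$, applies the evolution equation twice (once to $W_{t_1}(\sigma k_0)^{n-1}$ via the identity $(\zeta_1+\zeta_2)^{n-1}=\zeta_1^{n-1}+\ind(n=2)\zeta_2$ and once to $W_{t_2}(k'+k_0)$), and performs the change of variables only at the end so that the accumulated expression lands on the form prescribed by Lemma \ref{lem:JFconnection} and the definition of $\mathcal{I}_\tau$. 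You instead expand the right-hand side: you derive the closed form $\mathcal{J}_\tau[G,F[W]]_{t_1,t_2}^{(n)}=F[W]_{t_1,t_2}^{(n)}-F[W]_{t_1,0}^{(n)}$ from Lemma \ref{lem:JFconnection} plus $\int_0^{t_2}\cbiq=W_{t_2}-W_0$, isolate the change of variables as a standalone swap symmetry $F[W]_{s_1,s_2}^{(2)}(X)=\rme^{-\ci 2\pi k'\cdot x}F[W]_{s_2,s_1}^{(2)}(\widetilde X)$, and let the five terms of the definition \eqref{eq:I} telescope to $F[W]_{t_1,t_2}^{(n)}$. This buys a more modular presentation (the swap symmetry is a reusable structural fact about $F[W]$, and the $n=1$ and $n=2$ cases separate without invoking the binomial identity), while the paper's forward expansion arguably makes it more transparent why $\mathcal{I}_\tau$ is defined with precisely those four $\mathcal{J}$-terms and the $\widetilde X$ shift. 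Both computations are sound; the cancellation you describe in the $n=2$ case (all occurrences of $F[W]^{(2)}_{0,0}$, $F[W]^{(2)}_{0,t_2}$, $F[W]^{(2)}_{t_1,0}$ dropping out, leaving $F[W]^{(2)}_{t_1,t_2}$ with coefficient $1$) checks out exactly.
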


\begin{proof}
    We will be using the fact that $(\zeta_1+\zeta_2)^{n-1}=\zeta_1^{n-1}+\ind(n=2)\zeta_2$ for all $\zeta_1,\zeta_2\in \C$ and $n\in\{1,2\}$.

    Let $(R,k',u,\sigma)\in\xl$, let $t_1,t_2\in[0,\tmax]$, and $n\in\{1,2\}$.
    Writing $F[W]_{t_1,t_2}^{(n)}=\frac{1}{2}F[W]_{t_1,t_2}^{(n)}+\frac{1}{2}F[W]_{t_1,t_2}^{(n)}$ and then 
    applying the evolution equation \eqref{eq:WGevo} for $W_{t_1}^{n-1}$
    in the first term and for $W_{t_2}$ in the second term, we obtain
\begin{align*}
    \nonumber 
    F[W]_{t_1,t_2}&^{(n)}(R,k'u,\sigma,x) = 
                                        \frac{1}{2}
    \int_{\duallattice}\rmd k_0 
    \rme^{\ci\phi(k_0,R,u)}
    W_{t_1}(\sigma k_0)^{n-1}
        W_0(k'+k_0)
    \rme^{\ci2\pi k_0\cdot x}
    \\
    +\frac{1}{2}&
    \int_{\duallattice}\rmd k_0 
    \rme^{\ci\phi(k_0,R,u)}
    W_{t_1}(\sigma k_0)^{n-1}
    \rme^{\ci2\pi k_0\cdot x}
    \int_0^{t_2}\rmd t 
    \cbiq(\tau(t_2,t))[G_{t,t},W_t](k_0+k')
    \\
    +\frac{1}{2}&
    \int_{\duallattice}\rmd k_0 
    \rme^{\ci\phi(k_0,R,u)}
    W_0(\sigma k_0)^{n-1}
        W_{t_2}(k'+k_0)
    \rme^{\ci2\pi k_0\cdot x}
    \\
                                        +
                                        \ind(n=2)
                                        \frac{1}{2}&
    \int_{\duallattice}\rmd k_0 
    \rme^{\ci\phi(k_0,R,u)}
        W_{t_2}(k'+k_0)
    \rme^{\ci2\pi k_0\cdot x}
    \int_0^{t_1}\rmd t 
    \cbiq(\tau(t_1,t))[G_{t,t},W_t](\sigma k_0).
\end{align*}
Using \eqref{eq:WGevo} again for $W_{t_1}^{n-1}$ in the first term and for $W_{t_2}$ in the third term, we arrive at
\begin{align}
    \nonumber 
    &F[W]_{t_1,t_2}^{(n)}(R,k'u,\sigma,x) = 
    F[W]_{0,0}^{(n)} (R,k'u,\sigma,x)
    \\
    \nonumber 
                                        &+
                                        \frac{1}{2}
    \int_{\duallattice}\rmd k_0 
    \rme^{\ci\phi(k_0,R,u)}
    \left(
    W_0(\sigma k_0)^{n-1}
    +
    W_{t_1}(\sigma k_0)^{n-1}
    \right)
    \rme^{\ci2\pi k_0\cdot x}
    \\
    \nonumber
    &\qquad\times
    \int_0^{t_2}\rmd t 
    \cbiq(\tau(t_2,t))[G_{t,t},W_t](k_0+k')
    \\
    \nonumber
                                        &+
                                        \ind(n=2)
                                        \frac{1}{2}
    \int_{\duallattice}\rmd k_0 
    \rme^{\ci\phi(k_0,R,u)}
    \left( 
        W_0(k'+k_0)
        +
        W_{t_2}(k'+k_0)
    \right) 
    \rme^{\ci2\pi k_0\cdot x}
    \\
    \nonumber
    &\qquad\times
    \int_0^{t_1}\rmd t 
    \cbiq(\tau(t_1,t))[G_{t,t},W_t](\sigma k_0).
\end{align}
After doing a change of variables $k_0\mapsto \sigma k_0-k'$ for the last term
and noticing that\\ $\phi(\sigma k_0-k',R,u)=\phi(k_0,R,\sigma(u-k'))$, we note that \eqref{eq:Fevo} follows from Lemma \ref{lem:JFconnection}.

If $W$ satisfies the evolution equation \eqref{eq:Wevo} with $\tau$, then by
Lemma \eqref{lem:colG} $W$ satisfies \eqref{eq:WGevo} with $G=F[W]$, and hence
\eqref{eq:Fevo} holds with $G=F[W]$.
\end{proof}

\section{Estimates of the weight function $\Phi$}
\label{sec:estimates}

We will now go through all the estimates related to the weight function $\Phi$ defined 
in \eqref{eq:weight_phi} and used in the norm that controls the collision control
map $F[W]$. These estimates will be needed in order to study the fixed point equations 
given by the bilinear operator $\mathcal{I}_\tau$ defined in the previous section, and 
the evolution equations \eqref{eq:WGevo} and \eqref{eq:Wevo}.
Recall that $\Phi$ depends on the chosen parameter $\pp$ and the lattice size $L\ge2$.

\begin{lemma}
    \label{lem:3Phi_bound}
    For
    every $\pp\in(0,1)$, $\xi'\in\{0,1\}$, $r\in\R$, \(k \in \T\), $k',u\in \T$, $x,y,y'\in\onelattice$, $\sigma\in\{-,+\}$ and \(s \in
    [-L^{\pp}/4,L^{\pp}/4]\), we have 
	\begin{align*}
        \frac{\Phi(r,x)}
        {\Phi( R', \widetilde{x})
        \Phi(|sz_{\xi'}(k)|,\widetilde{y})} 
        &\leq 
        c_{\pp,9} \left(\mean{y}\mean{y'}\right)^{2/3}
        \max
        \left(
            \frac{1}{\regabs{ |r|-|s||z_{\xi'}(k)|}^{1/2}}
            ,
            \frac{1}{\regabs{s|z_{\xi'}(k)|}^{(1-\pp)/2}}
        \right),
	\end{align*}
    where $R' = |r\rme^{\ci2\pi(u-k')}+sz_{\xi'}(k)|$, and $\widetilde{x}=(x+\sigma y-y') {\bmod \onelattice}$ and $\widetilde{y}=(y+y'){\bmod\onelattice}$.
\end{lemma}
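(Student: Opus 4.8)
The plan is a case analysis comparing the three weight factors directly, using elementary bounds for $\regabs{\cdot}=\sqrt{1+(\cdot)^2}$ together with Corollary~\ref{cor:modregabs} to pass between $x$ and $\widetilde{x}$, and between $y$ and $\widetilde{y}$. First the reductions: write $\rho:=|s|\,|z_{\xi'}(k)|$. Since $|s|\le L^\pp/4$ and $|z_0(k)|,|z_1(k)|\le2$, we have $\rho\le L^\pp/2<L^\pp$, hence $\gamma(\rho)=0$, $\varphi(\rho)=\regabs{\rho}^{1/2}$ and $\Phi(\rho,\widetilde{y})=\regabs{\widetilde{y}}^{1/3}\max(1,\regabs{\rho}^{1/2}\regabs{\widetilde{y}}^{-1/2})$, from which we keep $\Phi(\rho,\widetilde{y})\ge\regabs{\rho}^{1/2}\regabs{\widetilde{y}}^{-1/6}$ and $\Phi(\rho,\widetilde{y})\ge1$; also $\regabs{s|z_{\xi'}(k)|}=\regabs{\rho}$ by evenness. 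The triangle inequality for $R'$ gives $\big|\,|r|-\rho\,\big|\le R'\le|r|+\rho$, so $\regabs{R'}\ge\regabs{|r|-\rho}$ and $\regabs{|r|}\le\sqrt{2}\regabs{|r|-\rho}\regabs{\rho}$ (via $\regabs{a+b}\le\sqrt{2}\regabs{a}\regabs{b}$). Corollary~\ref{cor:modregabs} gives $\regabs{x}\le3\regabs{\widetilde{x}}\regabs{y}\regabs{y'}$ and $\regabs{\widetilde{y}}\le2\regabs{y}\regabs{y'}$, with matching reverse bounds; since $\tfrac16+\gamma\le\tfrac12$ and $\tfrac13-\gamma\le\tfrac13$, these yield the \emph{weight-shift inequalities} $\Phi(R',\widetilde{x})(\mean{y}\mean{y'})^{1/2}\gtrsim\varphi(R')\regabs{x}^{-1/2}$, $\Phi(R',\widetilde{x})(\mean{y}\mean{y'})^{1/3}\gtrsim\regabs{x}^{1/3-\gamma(R')}$ and $\Phi(\rho,\widetilde{y})(\mean{y}\mean{y'})^{1/6}\gtrsim\regabs{\rho}^{1/2}$; the total exponent $\tfrac12+\tfrac16=\tfrac23$ is exactly the budget in the statement.

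Next one disposes of the mismatch between $\gamma(r)$ and $\gamma(R')$, the only place the threshold $16^{1/(1-\pp)}$ enters. If $|r|\le L^\pp$ then $|R'|\le\tfrac32L^\pp=a_L$, so $\gamma(r)=\gamma(R')=0$ and there is nothing to do. If $2L^\pp\le L/8$, i.e.\ $L\ge16^{1/(1-\pp)}$, then $\gamma$ has Lipschitz constant $\le\tfrac{32}{3}L^{-1}$, so $|\gamma(R')-\gamma(r)|\le\tfrac{16}{3}L^{\pp-1}$ (using $|R'-|r||\le\rho\le L^\pp/2$) and $\regabs{x}^{\,|\gamma(R')-\gamma(r)|}\le\regabs{L/2}^{\frac{16}{3}L^{\pp-1}}$; since $\ln\regabs{L/2}\le\ln L$ for $L\ge2$ and $L\mapsto L^{\pp-1}\ln L$ is decreasing on $[16^{1/(1-\pp)},\infty)$, this is at most a fixed multiple of $\regabs{16^{1/(1-\pp)}/2}^{1/3}$. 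For $L<16^{1/(1-\pp)}$ one uses $|\gamma(R')-\gamma(r)|\le\tfrac13$ and $\regabs{x}\le\regabs{L/2}<\regabs{16^{1/(1-\pp)}/2}$. Either way $\regabs{x}^{\,\gamma(r)-\gamma(R')}$ is at most a fixed multiple of $\regabs{16^{1/(1-\pp)}/2}^{1/3}$, which is the role of that factor inside $c_{\pp,9}$.

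By the preceding it then suffices to show, up to the above $\gamma$-factor, $\Phi(r,x)\lesssim c_{\pp,9}\regabs{\rho}^{\pp/2}\,[\text{the relevant piece of }\Phi(R',\widetilde{x})(\mean{y}\mean{y'})^{1/2}]$, because $\regabs{\rho}^{\pp/2}=\regabs{\rho}^{1/2}\regabs{\rho}^{-(1-\pp)/2}\le\Phi(\rho,\widetilde{y})(\mean{y}\mean{y'})^{1/6}\max(\regabs{|r|-\rho}^{-1/2},\regabs{\rho}^{-(1-\pp)/2})$ — this is exactly what the exponent $(1-\pp)/2$ in the statement is engineered to deliver. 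One splits according to whether $\rho\le|r|$ or $\rho>|r|$, and within each according to whether $\regabs{x}\gtrless\varphi(r)^2$ (active term in $\Phi(r,x)=\regabs{x}^{1/3-\gamma(r)}\max(1,\varphi(r)\regabs{x}^{-1/2})$) and whether $|r|\lessgtr L^\pp$ (active branch $\varphi_1$ or $\varphi_2$). The crux in each case is a short estimate of $\varphi$: if $\rho>|r|$ then $\varphi(r)\le\regabs{|r|}^{1/2}\le\regabs{\rho}^{1/2}$ directly, providing the $\regabs{\rho}^{1/2}$ that cancels against $\Phi(\rho,\widetilde{y})(\mean{y}\mean{y'})^{1/6}$; if $\rho\le|r|$ and $|r|\le L^\pp$ then $\regabs{R'}\gtrsim\regabs{|r|}$, so $\varphi(r)/\varphi(R')=O(1)$; if $\rho\le|r|$ and $|r|>L^\pp$ then $|R'-|r||\le\rho$ gives $\regabs{R'-L^\pp}\le\sqrt{2}\regabs{|r|-L^\pp}\regabs{\rho}$, hence $\varphi(r)/\varphi(R')\le2^{\pp/4}\regabs{\rho}^{\pp/2}$; and in the transitional range $|r|\le L^\pp<R'\le\tfrac32L^\pp$ one uses $R'-L^\pp\le\rho$ to get $\varphi(R')\ge\regabs{L^\pp}^{1/2}\regabs{\rho}^{-\pp/2}\ge\regabs{|r|}^{1/2}\regabs{\rho}^{-\pp/2}=\varphi(r)\regabs{\rho}^{-\pp/2}$, i.e.\ again $\varphi(r)/\varphi(R')\le\regabs{\rho}^{\pp/2}$. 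In every case the worst power of $\regabs{\rho}$ produced is $\regabs{\rho}^{\pp/2}$ — with the first term $\regabs{|r|-\rho}^{-1/2}$ of the maximum covering the near-cancellation regime where $R'$ is much smaller than both $|r|$ and $\rho$ — and plugging in the weight-shift inequalities closes the estimate; the constants $2,3$ come from Corollary~\ref{cor:modregabs} and from $\max(A,B)\ge A^{\alpha}B^{1-\alpha}$.

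The main obstacle is the bookkeeping in this last step: in each of the finitely many cases one must choose the right lower bound for $\Phi(R',\widetilde{x})$ (the $\regabs{\widetilde{x}}^{1/3-\gamma(R')}$ one when $\regabs{x}\ge\varphi(r)^2$, the $\varphi$-one otherwise) and the right term of the maximum, and check that all the powers of $\regabs{R'},\regabs{\rho},\regabs{x},\regabs{\widetilde{y}}$ cancel while the modular and convolution corrections stay within the $(\mean{y}\mean{y'})^{2/3}$ budget; the $\gamma$-Lipschitz estimate together with the small-$L$ case is the other delicate, though localised, point.
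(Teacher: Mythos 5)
Your proposal uses the same ingredients as the paper's proof: a case analysis built on Corollary~\ref{cor:modregabs}, the reverse triangle inequality $R'\ge\big|\,|r|-\rho\,\big|$ with $\rho:=|s|\,|z_{\xi'}(k)|$, the Lipschitz estimate on $\gamma$ (with the small-$L$ dichotomy), and the two possible lower bounds for $\Phi(R',\widetilde{x})$. The $\gamma$-factor discussion and the $\varphi_2$-branch computations for $|r|>L^\pp$ are essentially the paper's Cases~3--5. However, the way you carve up the remaining cases leaves a genuine gap in the near-cancellation regime, which the paper isolates as its Case~1 ($R'\le|r|/2$), and your stated sub-estimates are not correct there.

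Concretely, your ``crux'' estimate in the sub-case $\rho\le|r|\le L^\pp$, namely $\regabs{R'}\gtrsim\regabs{|r|}$ and hence $\varphi(r)/\varphi(R')=O(1)$, is false: with $\rho=|r|$ and aligned phases one has $R'=0$ while $|r|$ can be as large as $L^\pp$. Likewise, the reduction ``it suffices to show $\Phi(r,x)\lesssim c_{\pp,9}\,\regabs{\rho}^{\pp/2}\,\Phi(R',\widetilde{x})(\mean{y}\mean{y'})^{1/2}$'' discards the $\regabs{|r|-\rho}^{-1/2}$ branch of the maximum, which is precisely the branch that saves the near-cancellation case; with $x=0$, $R'=0$, $|r|=\rho$ large, that putative sufficient inequality would read $\regabs{\rho}^{1/2}\lesssim\regabs{\rho}^{\pp/2}$, which fails as $\rho\to L^\pp/2$. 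You do remark at the end that the first branch of the maximum ``covers the near-cancellation regime,'' but none of your explicit case estimates actually produce it. What is needed is an extra case for small $R'$ as in the paper: keep $\varphi(R')\ge\regabs{R'}^{1/2}\ge\regabs{|r|-\rho}^{1/2}$ from the reverse triangle inequality, and observe that $R'\le|r|/2$ forces $\rho\ge|r|-R'\ge|r|/2$, hence $\regabs{|r|}\le 2\regabs{\rho}$, so the $\regabs{\rho}^{1/2}$ supplied by $\Phi(\rho,\widetilde{y})$ absorbs $\varphi(r)\le\regabs{|r|}^{1/2}$ and the $\regabs{|r|-\rho}^{-1/2}$ branch appears. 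Once you add that case (and restrict the ``$\regabs{R'}\gtrsim\regabs{|r|}$'' claim to the complementary regime $R'\ge|r|/2$), the rest of the sketch does line up with the paper's argument.
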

\begin{proof}
    By Corollary \ref{cor:modregabs}, there holds
    \begin{align}
        \label{eq:x2estim}
        \frac{\mean{x}}{3\mean{y}\mean{y'}} 
        \leq
        \mean{\widetilde{x}}
        \leq
        3\mean{x}\mean{y}\mean{y'}.
    \end{align}
    Recall the definition of $\Phi$ from
    \eqref{eq:weight_phi}. Note that $\varphi(|sz_{\xi'}(k)|)=\varphi_1(|sz_{\xi'}(k)|)$,
    since $|sz_{\xi'}(k)|\leq L^{\pp}/2$. We have to consider different cases for $\varphi(r)$ and $\varphi(R')$.

    We first assume that $|r| \leq L^{\pp}$. This means that $\gamma(r)=0$ and
    $\varphi(r)=\varphi_1(r)=\mean{r}^{1/2}$. By triangle inequality $|R'|\leq |r|+|sz_{\xi'}(k)|\leq 3L^{\pp}/2$, and 
    so, $\gamma(R')=0$. 
	\begin{itemize}
        \item[1.]
            $R'
            \leq {|r|}/2$. Then $|sz_{\xi'}(k)|
            \geq {|r|/2}$, and therefore 
            \begin{align*}
                \frac{\Phi(r,x)}{
                \Phi(|sz_{\xi'}(k)|,\widetilde{y})} 
                &\leq
                \frac{
                    \mean{x}^{1/3}\max \left( 1, \frac{\mean{r}^{1/2}}{\mean{x}^{1/2}} \right) 
                }{
                    \mean{\widetilde{y}}^{-1/6}  {\mean{|sz_{\xi'}(k)|}^{1/2}}
                }
                \leq 
                \mean{x}^{1/3}
                \mean{\widetilde{y}}^{1/6} 
                     \max \left( \frac{1}{\mean{|sz_{\xi'}(k)|}^{1/2}}, \frac{\sqrt{2}}{\mean{x}^{1/2}} \right) 
                 .
            \end{align*}
            Using the reverse triangle inequality for $R'$ we get 
            \begin{align*}
                 \frac{1}{\Phi(
                 R', \widetilde{x})}
                  &\leq
            \mean{\widetilde{x}}^{-1/3}\min(1,{\frac{\mean{\widetilde{x}}^{1/2}}{\mean{ |r|-|s||z_{\xi'}(k)|}^{1/2} }}).
            \end{align*}
            Together, these estimates give the desired
            bound, namely, 
            \begin{align*}
                \frac{\Phi(r,x)}
                {\Phi( R', \widetilde{x})
                \Phi(|sz_{\xi'}(k)|,\widetilde{y})} 
                            &\leq 
                            \sqrt{2} \frac{\mean{x}^{1/3}}{\mean{\widetilde{x}}^{1/3}}\mean{\widetilde{y}}^{1/6}
                            \max 
                            \left( 
                                \frac{1}{\mean{|sz_{\xi'}(k)|}^{1/2}}
                                ,
                                \frac{\mean{\widetilde{x}}^{1/2}}{\mean{x}^{1/2}}
                                \frac{1}{\mean{ |r|-|s||z_{\xi'}(k)|}^{1/2} }
                            \right) 
                            \\
                            &\leq 
                            3\sqrt{ \mean{y}\mean{y'}}
                            \max 
                            \left( 
                                \frac{1}{\mean{|sz_{\xi'}(k)|}^{1/2}}
                                ,
                                \frac{1}{\mean{ |r|-|s||z_{\xi'}(k)|}^{1/2} }
                            \right) .
            \end{align*}
            where we used Corollary \ref{cor:modregabs} to estimate $\mean{\widetilde{y}}\leq 2\mean{y}\mean{y'}$, and
            \eqref{eq:x2estim}  to estimate
            \begin{align}
                \label{eq:xxestim}
                \frac{\mean{x}^{1/3}}{\mean{\widetilde{x}}^{1/3}}
                \frac{\mean{\widetilde{x}}^{1/2}}{\mean{x}^{1/2}} 
                \leq
                (3\mean{y}\mean{y'})^{1/6}
                \leq
                (3\mean{y}\mean{y'})^{1/3}
                ,
                \text{  and   }
                \frac{\mean{x}^{1/3}}{\mean{\widetilde{x}}^{1/3}}
                \leq 
                (3\mean{y}\mean{y'})^{1/3}
            \end{align}
            
        \item[2.]
            $R'
            \in (r/2, L^{\pp})$. Then $\Phi(R',\widetilde{x})\ge
            \Phi(r,\widetilde{x})/\sqrt{2} $. Therefore, \eqref{eq:xxestim}
            gives 
            \begin{align*}
                \frac{\Phi(r,x)}{\Phi(R',\widetilde{x})}
                \leq
                \sqrt{2} 
                \frac{\mean{x}^{1/3}}{\mean{\widetilde{x}}^{1/3}}\max(1,\frac{\mean{\widetilde{x}}^{1/2}}{\mean{x}^{1/2}})
                \leq 
                \sqrt{2} 
                \left( 3 \mean{y}\mean{y'} \right) ^{1/3}
            \end{align*}

            Using again $\mean{\widetilde{y}}\leq 2\mean{y}\mean{y'}$ gives the 
            the trivial bound 
            \begin{align}
                \label{eq:trivial_bound_phi}
                 \Phi(|sz_{\xi'}(k)|,\widetilde{y})\ge
                   (2\mean{y}\mean{y'})^{-1/6} \regabs{s|z_{\xi'}(k)|}^{1/2}. 
            \end{align}
            Hence, putting everything together gives 
            \begin{align*}
                \frac{\Phi(r,x)}
                {\Phi( R', \widetilde{x})
                \Phi(|sz_{\xi'}(k)|,\widetilde{y})} 
                            &\leq 
                            3 \sqrt{\mean{y}\mean{y'}}
                                \frac{1}{\mean{|sz_{\xi'}(k)|}^{1/2}}
            \end{align*}
            as needed.

        \item[3.] 
            $R' \geq L^{\pp}$. Recall the definition of $\Phi(R',\widetilde{x})$. 

            Since
            $R'\leq |r| + |s||z_{\xi'}(k)|$ and $\abs{r}  \leq L^{\pp}$, 
            we have 
            \[0\leq{R' - L^{\pp}}
                \leq 
                |r| +|s||z_{\xi'}(k)|-L^{\pp}
                \leq
            |s||z_{\xi'}(k)|\]
		
		Consequently, we have the lower bound
		\begin{align}
            \Phi(R',\widetilde{x})
            \geq
            \mean{\widetilde{x}}^{1/3}
             \max 
             \left( 
                 1,
                 \mean{\widetilde{x}}^{-1/2}
            \frac{\regabs{L^{\pp}}^{1/2}}{\regabs{|s||z_{\xi'}(k)|}^{\pp/2}}
             \right).
		\end{align}
        Using this together with $\Phi(r,x)\leq \Phi(L^{\pp},x)$ and \eqref{eq:xxestim} gives us the following bound 
		\begin{align*}
            \frac{\Phi(r,x)}{\Phi(R',\widetilde{x})}
            \leq 
                \frac{ \mean{x}^{1/3} }
                { \mean{\widetilde{x}}^{1/3} }
                \max 
                \left( 
                    1,
                    \frac{ \mean{\widetilde{x}}^{1/2} }
                { \mean{x}^{1/2} }
            \regabs{|s||z_{\xi'}(k)|}^{\pp/2}
                \right) 
                \leq 
                \left( 3 \mean{y}\mean{y'} \right) ^{1/3}
            \regabs{|s||z_{\xi'}(k)|}^{\pp/2}.
		\end{align*}
         The trivial bound \eqref{eq:trivial_bound_phi} for $\Phi(|sz_{\xi'}(k)|,\widetilde{y})$ implies 
        \begin{align*}
                \frac{\Phi(r,x)}
                {\Phi( R', \widetilde{x})
                \Phi(|sz_{\xi'}(k)|,\widetilde{y})} 
                            &\leq 
                            3 \sqrt{\mean{y}\mean{y'}}
                                \frac{1}{\mean{|sz_{\xi'}(k)|}^{1/2-\pp/2}}.
        \end{align*}
	\end{itemize}

    Finally, suppose that $|r| > L^{\pp}$. 
    \begin{itemize}
        \item[4.]
            $R'> L ^{\pp}$.
            Triangle inequality together with monotonicity and subadditivity of 
            $\regabs{\cdot}^{\pp/2}$ imply
            \begin{align*}
                \frac{\Phi(r,x)}{\Phi( R',\widetilde{x})} 
                &\leq
                \frac{ \mean{x}^{1/3-\gamma(r)} }
                { \mean{\widetilde{x}}^{1/3-\gamma(R')} }
                \max 
                \left( 
                    1,
                    \frac{ \mean{\widetilde{x}}^{1/2} }
                { \mean{x}^{1/2} }
                \frac{\regabs{|r\rme^{\ci2\pi(u-k')}+sz_{\xi'}(k)|-L^{\pp}}^{\pp/2}}
                {\regabs{|r|-L^{\pp}}^{\pp/2}}
                \right) 
                \\
                &\leq
                \frac{ \mean{x}^{1/3-\gamma(r)} }
                { \mean{\widetilde{x}}^{1/3-\gamma(R')} }
                 \max 
                \left( 
                    1,
                    \frac{ \mean{\widetilde{x}}^{1/2} }
                { \mean{x}^{1/2} }
                \frac{\regabs{|r|-L^{\pp}}^{\pp/2}+\regabs{|sz_{\xi'}(k)|}^{\pp/2}}
                {\regabs{|r|-L^{\pp}}^{\pp/2}}
                \right) 
                \\
                &\leq
                \left( 3 \mean{y}\mean{y'} \right) ^{1/2}
                \mean{x}^{\gamma(R')-\gamma(r)}
                                (1+\regabs{|sz_{\xi'}(k)|}^{1/14}),
            \end{align*}
            where in the last step we used \eqref{eq:x2estim} and $0\leq\gamma\leq1/3$ 
            to estimate 
            \begin{align*}
                \frac{ 1 }
                { \mean{\widetilde{x}}^{1/3-\gamma(R')} }
            \leq
            (3\mean{y}\mean{y'}^{1/3})
                \frac{ 1 }
                { \mean{x}^{1/3-\gamma(R')} }
            \text{ and }
                { \mean{\widetilde{x}}^{1/6+\gamma(R')} }
                \leq
                \left( 3 \mean{y}\mean{y'} \right) ^{1/2}
                \mean{x}^{1/6+\gamma(R')}.
            \end{align*}
            Next, we prove that
            \begin{align}
                \label{eq:gamma_diff_bound_thing}
                \mean{x}^{\gamma(R')-\gamma(r)}\leq\mean{16^{1/(1-\pp)}/2}^{1/3}.
            \end{align}
            This is true for $L<16^{1/(1-\pp)}$ 
            since $|x|\leq L/2$ and $0\leq \gamma\leq 1/3$.
            
            Suppose now that $L\ge 16^{1/(1-\pp)}$. Then the Lipschitz constant of $\gamma$ is $\frac{32}{3}L^{-1}$.
            Therefore, 
            $$\gamma(R')-\gamma(r)\leq\gamma(|r|+L^{\pp}/2)-\gamma(|r|)\leq \frac{16}{3}L^{-(1-\pp)},$$ where in addition to the Lipschitz continuity of $\gamma$, we used
            that $\gamma$ is increasing and $R'\leq |r|+L^{1/7}/2$ by triangle
            inequality. Hence,
            $\mean{x}^{\gamma(R')-\gamma(r)}\leq\mean{L/2}^{(16/3)(L^{-(1-\pp)})}$.
            Note that
            $\mean{L/2}^{(16/3)(L^{-(1-\pp)})}$ is
            decreasing for $L\ge 16^{1/(1-\pp)}$. And therefore,
            the bound \eqref{eq:gamma_diff_bound_thing} holds for any $L$.

            Thus, the trivial bound \eqref{eq:trivial_bound_phi} for $\Phi(|sz_{\xi'}(k)|,\widetilde{y})$
            implies
            \begin{align*}
                \frac{\Phi(r,x)}{\Phi(R',\widetilde{x})\Phi(|sz_{\xi'}(k)|,\widetilde{y})}
            \leq 
            2^{7/6}\sqrt{3}\mean{16^{1/(1-\pp)}/2}^{1/3} ( \mean{y}\mean{y'})^{2/3}
                                \frac{1}{\mean{|sz_{\xi'}(k)|}^{1/2-\pp/2}}
            \end{align*}
        \item[5.]
            $R'\leq L ^{\pp}$. Then $\gamma(R')=0$ and we estimate $\gamma(r)\ge0$.
            By the triangle inequality the lower bound 
            \begin{align*}
                R'=|r\rme^{\ci2\pi(u-k')}+sz_{\xi'}(k)|
                &\ge L^{\pp}/2
            \end{align*}
            holds
            since $|z_{\xi'}(k)|\leq2$ and $|s|\leq L^{\pp}/4$. Therefore,
            $\Phi(R',\widetilde{x})\ge \varphi(L^{\pp},\widetilde{x})/\sqrt{2}$. Then, together with
            $\Phi(r,x)\leq\Phi(L^{\pp},x)$ and \eqref{eq:xxestim}, we get
            \begin{align*}
                \frac{\Phi(r,x)}{\Phi( R',\widetilde{x})}
                \leq 
                \sqrt{2}
                 \frac{ \mean{x}^{1/3} }
                 { \mean{\widetilde{x}}^{1/3} }
                 \max 
                \left( 
                    1,
                    \frac{ \mean{\widetilde{x}}^{1/2} }
                { \mean{x}^{1/2} }
                \right) 
                \leq \sqrt{2} 
                \left( 3 \mean{y}\mean{y'} \right) ^{1/3}.
            \end{align*}
            Then the
            trivial bound \eqref{eq:trivial_bound_phi} for $\Phi(|sz_{\xi'}(k)|,\widetilde{y})$ gives once again the
            desired bound.
    \end{itemize}
\end{proof}

\begin{lemma}
    \label{lem:exp_lowerbound}
    For all $k\in[-1/2,1/2]$ there holds 
    \begin{align}
        \label{eq:exp_lowerbound}
        |1-e^{\ci 2\pi k}| \ge 4|k|.
    \end{align}
\end{lemma}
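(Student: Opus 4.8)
The plan is to reduce \eqref{eq:exp_lowerbound} to the classical Jordan inequality for the sine function. First I would record the elementary identity $|1-\rme^{\ci 2\pi k}| = 2|\sin(\pi k)|$, which follows from $1-\rme^{\ci\theta} = -2\ci\,\sin(\theta/2)\,\rme^{\ci\theta/2}$ with $\theta = 2\pi k$ (equivalently, $|1-\rme^{\ci\theta}|^2 = 2-2\cos\theta = 4\sin^2(\theta/2)$ by the half-angle formula); this is the same relation already noted in the text for $z_0$. Hence the desired estimate is equivalent to $|\sin(\pi k)| \ge 2|k|$ for all $k\in[-1/2,1/2]$.

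Since both sides of this inequality are even in $k$ and vanish at $k=0$, it suffices to prove $\sin(\pi k)\ge 2k$ for $k\in(0,1/2]$. Substituting $t = \pi k\in(0,\pi/2]$, this is exactly Jordan's inequality $\sin t\ge \tfrac{2}{\pi}t$. I would prove it by concavity: on $[0,\pi/2]$ the map $t\mapsto\sin t$ has second derivative $-\sin t\le 0$, so it is concave there, and therefore its graph lies above the chord joining $(0,0)$ and $(\pi/2,1)$, which is precisely the line $t\mapsto\tfrac{2}{\pi}t$. This gives $\sin t\ge\tfrac2\pi t$ on $[0,\pi/2]$, i.e.\ $\sin(\pi k)\ge 2k$ on $(0,1/2]$, and combining with the evenness remark yields $|\sin(\pi k)|\ge 2|k|$ on all of $[-1/2,1/2]$. (As an alternative to the concavity argument, one may observe that $g(t) = \sin t/t$ has $g'(t) = (t\cos t - \sin t)/t^2 < 0$ on $(0,\pi/2]$, because $\tan t > t$ there, so $g$ is decreasing and $g(t)\ge g(\pi/2) = 2/\pi$.)

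There is essentially no obstacle in this argument; the only points requiring a little care are the trivial endpoint case $k=0$, where both sides of \eqref{eq:exp_lowerbound} equal $0$, and the passage between the statement for $k$ and for $|k|$, both of which are dispatched by the evenness observation above.
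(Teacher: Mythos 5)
Your proof is correct and follows essentially the same route as the paper: both reduce via $|1-\rme^{\ci 2\pi k}| = 2|\sin(\pi k)|$ to the inequality $\sin(\pi k)\ge 2k$ on $[0,1/2]$, which the paper dismisses as a ``standard calculus argument'' and which you make explicit via the concavity of $\sin$ (Jordan's inequality). No gaps.
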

\begin{proof}
    By trigonometric identities $|1-e^{\ci 2\pi k}|=2|\sin(\pi k)|$. The result follows then from 
     the fact that $\sin(\pi k)-2k\ge0$ for $k\in[0,1/2]$, which can be proved by a standard calculus argument.
\end{proof}
\begin{lemma}
    \label{lem:3Phi_int_bound}
    Fix $d\ge3$ and $\pp\in(0,1-2/d)$.
    Let $R\in\R^d$, and $\xi'\in\{0,1\}$, and $x,y,y'\in\lattice$ and
    $\sigma\in\{-,+\}$. For each fixed $k \in \duallattice$ with
    $k_\ell\ne \xi'/2$ for all $\ell=1,\dots,d$, we have
	\begin{align}
         \int_{-L^{\pp}/4}^{L^{\pp}/4}\rmd s
         \frac{\Phi^d(R,x)}
         {\Phi^d(\rs{\xi'}(R,u,k',k),\widetilde{x})
         \Phi^d(\rs{\xi'}(k),\widetilde{y})} 
        &\leq 
        c_{\pp,7}
        {\prod_{\ell= 1}^d} 
        \frac{\Big(\mean{y_\ell}\mean{y'_\ell}\Big)^{2/3}}{|k_\ell-\xi'/2+n_\ell|^{1/d}},
    \end{align}
    where $\widetilde{x}=(x+\sigma y-y')\bmod \lattice$ and
    $\widetilde{y}=(y+y')\bmod\lattice$ and $n_\ell\in\Z$ is the integer for
    which $k_\ell-\xi'/2+n_\ell\in(-1/2,1/2]$.
\end{lemma}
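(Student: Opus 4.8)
The idea is to reduce the $d$-dimensional quotient to a product of one-dimensional quotients together with a single scalar integral in $s$, since $\Phi^d$, $\rs{\xi'}(R,u,k',k)$, $\rs{\xi'}(k)$, $\widetilde{x}$ and $\widetilde{y}$ all factorize over the coordinates. Recalling from \eqref{eq:R_tilde0}--\eqref{eq:R_shorthand_tilde} that $\rs{\xi'}(R,u,k',k)_\ell=|R_\ell\rme^{\ci 2\pi(u-k')_\ell}+sz_{\xi'}(k_\ell)|$ and $\rs{\xi'}(k)_\ell=|s|\,|z_{\xi'}(k_\ell)|$, and that $\widetilde{x}_\ell=(x_\ell+\sigma y_\ell-y'_\ell)\bmod\onelattice$, $\widetilde{y}_\ell=(y_\ell+y'_\ell)\bmod\onelattice$, the integrand equals $\prod_{\ell=1}^d$ of the left-hand side of Lemma~\ref{lem:3Phi_bound} with data $(r,k,x,y,y')=(R_\ell,k_\ell,x_\ell,y_\ell,y'_\ell)$ (and the angular parameters $u_\ell,k'_\ell$, which enter only through $\rs{\xi'}(R,u,k',k)$). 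Since $|s|\le L^\pp/4$ on the integration domain, Lemma~\ref{lem:3Phi_bound} applies in every coordinate and yields
\[
\frac{\Phi^d(R,x)}{\Phi^d(\rs{\xi'}(R,u,k',k),\widetilde{x})\,\Phi^d(\rs{\xi'}(k),\widetilde{y})}
\le c_{\pp,9}^{\,d}\Big(\prod_{\ell=1}^d(\mean{y_\ell}\mean{y'_\ell})^{2/3}\Big)\prod_{\ell=1}^d\max\!\Big(\regabs{|R_\ell|-|s|\,a_\ell}^{-1/2},\ \regabs{s\,a_\ell}^{-(1-\pp)/2}\Big),
\]
with $a_\ell\coloneqq|z_{\xi'}(k_\ell)|$. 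The whole problem thus reduces to estimating $\int_{-L^\pp/4}^{L^\pp/4}\prod_{\ell=1}^d\max(\cdots)\,\rmd s$.

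For that scalar integral I would apply H\"older's inequality with exponent $d$ on the $d$ factors, reducing to $\prod_{\ell}\big(\int_{-L^\pp/4}^{L^\pp/4}\max(\cdots)^d\rmd s\big)^{1/d}$. Using $\max(a,b)^d\le a^d+b^d$, evenness of the integrand in $s$, and the substitution $t=|s|a_\ell$ (with a shift of variable in the first piece), each single-coordinate integral is at most $a_\ell^{-1}$ times a sum of the one-dimensional integrals $\int_\R\regabs{v}^{-d/2}\rmd v$ and $\int_\R\regabs{t}^{-d(1-\pp)/2}\rmd t$. The first is finite exactly because $d\ge3$; the second is finite exactly because $\pp<1-2/d$, which forces $d(1-\pp)/2>1$. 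Both are evaluated by the Beta integral $\int_\R(1+r^2)^{-\nu}\rmd r=\sqrt{\pi}\,\Gamma(\nu-1/2)/\Gamma(\nu)$ with $\nu=d/4$ and $\nu=d(1-\pp)/4$; since $\nu\mapsto\Gamma(\nu-1/2)/\Gamma(\nu)$ is decreasing, the first is dominated by the second, which is precisely the Gamma-ratio occurring in $c_{\pp,7}$. Taking $1/d$-th powers and multiplying gives $\int_{-L^\pp/4}^{L^\pp/4}\prod_\ell\max(\cdots)\,\rmd s\le C\prod_{\ell=1}^d a_\ell^{-1/d}$ with $C$ built from this Beta integral.

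Finally, I would convert $a_\ell=|z_{\xi'}(k_\ell)|$ into $|k_\ell-\xi'/2+n_\ell|$: since $z_{\xi'}(k)=z_0(k+\xi'/2)$, $z_0$ is $1$-periodic and $\xi'$ is an integer, $a_\ell=|1-\rme^{-\ci2\pi m_\ell}|$ with $m_\ell\coloneqq k_\ell-\xi'/2+n_\ell\in(-1/2,1/2]$, and Lemma~\ref{lem:exp_lowerbound} gives $a_\ell\ge4|m_\ell|$; the hypothesis $k_\ell\ne\xi'/2$ makes $m_\ell\ne0$, so $a_\ell>0$ and the right-hand side is finite. Hence $\prod_\ell a_\ell^{-1/d}\le\tfrac14\prod_\ell|m_\ell|^{-1/d}$, and chaining the three estimates delivers the stated inequality, the resulting constant being $\tfrac14\,c_{\pp,9}^{\,d}$ times the $d$-fold Beta integral; verifying that this does not exceed $c_{\pp,7}$ is the one place where the constants must be tracked carefully (using the monotonicity above and the crude $\max\le$ sum step). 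I expect the only genuine obstacle to be this scalar $s$-integral: the gain term $\regabs{|R_\ell|-|s|a_\ell}^{-1/2}$ both develops a bump near $|s|=|R_\ell|/a_\ell$ and decays only like $|s|^{-1/2}$ in each coordinate, so integrability of the $d$-fold product is borderline and genuinely uses all $d$ coordinates at once — this is exactly why $d\ge3$ (for the bump part) and $\pp<1-2/d$ (for the tail part) are needed, and why the bound degenerates as $k_\ell\to\xi'/2$.
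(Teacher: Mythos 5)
Your proposal is correct and takes essentially the same route as the paper: tensorize over coordinates, apply the pointwise bound of Lemma~\ref{lem:3Phi_bound} per coordinate, use generalized H\"older with exponent $d$ to reduce to one-dimensional integrals, evaluate these via $\int_{\R}\regabs{r-vs}^{-p}\rmd s = v^{-1}\sqrt{\pi}\,\Gamma((p-1)/2)/\Gamma(p/2)$, and finally invoke Lemma~\ref{lem:exp_lowerbound} to replace $|z_{\xi'}(k_\ell)|$ by $4|k_\ell-\xi'/2+n_\ell|$. The only cosmetic difference is that you apply the pointwise bound before H\"older while the paper applies H\"older first, but since the $c_{\pp,9}(\mean{y_\ell}\mean{y'_\ell})^{2/3}$ factors are independent of $s$ this is immaterial.
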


\begin{proof}
    Recall that $\Phi^d(R,x) = \prod_{\ell=1}^d\Phi(R_\ell,x_\ell)$. From generalized H\"older, we get
     \begin{align}
        & \int_{-L^{\pp}/4}^{L^{\pp}/4}\rmd s
        {\prod_{\ell= 1}^d} 
        \frac{\Phi(R_\ell,x_{\ell})}
        {\Phi(\rs{\xi'}(R,u,k',k)_\ell,\widetilde{x}_{\ell})
        \Phi(\rs{\xi'}(k)_\ell,\widetilde{y}_{\ell})} 
        \\
        &\leq 
        {\prod_{\ell= 1}^d} 
        \left( 
        \int_{-L^{\pp}/4}^{L^{\pp}/4}\rmd s
        \left( 
            \frac{\Phi(R_\ell,x_{\ell})}
            {\Phi(\rs{\xi'}(R,u,k',k)_\ell,\widetilde{x}_{\ell})
            \Phi(\rs{\xi'}(k)_\ell,\widetilde{y}_{\ell})} 
        \right)^d
        \right)^{1/d}
         \\
        &\leq 
        4c_{\pp,7}
        {\prod_{\ell= 1}^d} 
        \left(\mean{y_\ell}\mean{y'_\ell}\right)^{2/3}
        \frac{1}{|z_{\xi'}(k_\ell)|^{1/d}},
	\end{align}
    where the last step follows from Lemma \ref{lem:3Phi_bound} and the fact
    that for each $p>1$, $v>0$ and $r\in\R$ there holds $\int_{\R}
    {\regabs{r-vs}^{-p}}\rmd s = \int_{\R} {\regabs{s}^{-p}}\rmd s/v$ and
    $$\int_{\R} {\regabs{s}^{-p}}\rmd
    s=\sqrt{\pi}\frac{\Gamma((p-1)/2)}{\Gamma(p/2)}<\infty,$$ 
    where $\Gamma$ is the gamma
    function.  In our case, $p=d(1-\pp)/2>1$ and $v=|z_{\xi'}(k_\ell)|$ and $r=|R_\ell|$. From Lemma \ref{lem:exp_lowerbound} we get
    $|z_{\xi'}(k_\ell)|\ge4|k_\ell-\xi'/2+n_\ell|$,  where $n_\ell\in\Z$ is the
    integer for which $k_\ell-\xi'/2+n_\ell\in(-1/2,1/2]$. 
\end{proof}

\begin{lemma}
\label{lem:3Phi_2int_bound} 
Fix $d\ge3$ and $\pp\in(0,1-2/d)$.
For any choice of $\xi'\in\{0,1\}$, $R\in \R^d$,$u\in \T^d$, $k'\in\duallattice$, $x,y,y'\in\lattice$, $\sigma\in\{-,+\}$, $L
\geq 2$ and $\tau_0>0$, we have that
\begin{align*}
    \int_{-\lambda^{-2}\tau_0}^{\lambda^{-2}\tau_0} \rmd s 
    \int_{\duallattice} \rmd k
    \frac{\Phi^d(R,x)}
    {\Phi^d(\rs{\xi'}(R,u,k',k),\widetilde{x})
    \Phi^d(\rs{\xi'}(k),\widetilde{y})} 
    \leq c_{\pp,8} 
        \left({{\prod_{\ell= 1}^d} 
        \mean{y_\ell}\mean{y'_\ell}}\right)^{2/3}
\end{align*} 
whenever $L \geq 
L_\pp(\lambda,\tau_0)$. Here $\widetilde{x}=(x+\sigma y-y')\bmod \lattice$ and
    $\widetilde{y}=(y+y')\bmod\lattice$.
\end{lemma}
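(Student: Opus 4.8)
The plan is to deduce Lemma~\ref{lem:3Phi_2int_bound} from Lemma~\ref{lem:3Phi_int_bound}, after first cutting the problem down to the regime covered by that lemma and then separately handling the frequencies where its pointwise bound degenerates.

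\emph{Step 1 (reduce the $s$-range).} I first record that $L\ge L_\pp(\lambda,\tau_0)$ forces $\lambda^{-2}\tau_0\le L^{\pp}/4$: by \eqref{eq:Lzero} either $L\ge(4\tau_0\lambda^{-2})^{1/\pp}$, so $L^{\pp}\ge 4\tau_0\lambda^{-2}$, or else $4\tau_0\lambda^{-2}<1\le L^{\pp}$. Hence the integration range $[-\lambda^{-2}\tau_0,\lambda^{-2}\tau_0]$ is contained in $[-L^{\pp}/4,L^{\pp}/4]$, which is exactly the interval for which Lemma~\ref{lem:3Phi_bound} and Lemma~\ref{lem:3Phi_int_bound} are stated. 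Since the integrand is nonnegative and the $k$-sum is finite, I may enlarge the $s$-range to $[-L^{\pp}/4,L^{\pp}/4]$ and interchange the (finite) sum over $k$ with the $s$-integral by Fubini.

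\emph{Step 2 (non-degenerate frequencies).} Write $\duallattice=\mathcal{D}\sqcup\mathcal{D}^{\mathrm{c}}$ with $\mathcal{D}\coloneqq\{k\in\duallattice : k_\ell\ne\xi'/2\ \text{for all}\ \ell\}$. On $\mathcal{D}$ the inner $s$-integral over $[-L^{\pp}/4,L^{\pp}/4]$ is bounded by Lemma~\ref{lem:3Phi_int_bound}, giving the factorized estimate $c_{\pp,7}\prod_{\ell}(\mean{y_\ell}\mean{y'_\ell})^{2/3}|k_\ell-\xi'/2+n_\ell|^{-1/d}$. Integrating over $k\in\mathcal{D}$ splits into a product over $\ell$ of one-dimensional sums $\frac1L\sum_{k_\ell\in\oneduallattice\setminus\{\xi'/2\}}|k_\ell-\xi'/2+n_\ell|^{-1/d}$. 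I would bound each such sum by comparison with $\int_{-1/2}^{1/2}|t|^{-1/d}\,\rmd t=2^{1/d}(1-1/d)^{-1}$, using that along each coordinate the abscissae $k_\ell-\xi'/2+n_\ell$ are $\tfrac1L$-separated points of $(-1/2,1/2]$ and that the only omitted one (present precisely when $\xi'/2\in\oneduallattice$) equals $0$; a slightly sharper version, keeping $|z_{\xi'}(k_\ell)|$ in place of $|k_\ell-\xi'/2+n_\ell|$ and using $|z_{\xi'}(k_\ell)|\ge 4|k_\ell-\xi'/2+n_\ell|$ from Lemma~\ref{lem:exp_lowerbound}, absorbs the factor $4$ built into Lemma~\ref{lem:3Phi_int_bound}. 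The net conclusion is that the $\mathcal{D}$-part is $\le c_{\pp,7}(1-1/d)^{-d}\prod_{\ell}(\mean{y_\ell}\mean{y'_\ell})^{2/3}$.

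\emph{Step 3 (degenerate frequencies).} Lemma~\ref{lem:3Phi_int_bound} cannot be used on $\mathcal{D}^{\mathrm{c}}$ because of the $|k_\ell-\xi'/2+n_\ell|^{-1/d}$ singularity, but Lemma~\ref{lem:3Phi_bound} still applies coordinate by coordinate with no restriction on $k$: at a degenerate coordinate $z_{\xi'}(k_\ell)=0$, and the maximum on its right-hand side then equals $1$. Bounding every such maximum by $1$ gives the crude estimate that the inner $s$-integral over $[-L^{\pp}/4,L^{\pp}/4]$ is $\le c_{\pp,9}^d\prod_{\ell}(\mean{y_\ell}\mean{y'_\ell})^{2/3}\cdot L^{\pp}/2$. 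A union bound over which coordinate is degenerate gives $\int_{\duallattice}\ind(k\in\mathcal{D}^{\mathrm{c}})\,\rmd k\le d/L$, and since $L^{\pp-1}\le 1$ the $\mathcal{D}^{\mathrm{c}}$-part is $\le \tfrac{d}{2}c_{\pp,9}^d\prod_{\ell}(\mean{y_\ell}\mean{y'_\ell})^{2/3}$. Adding the two parts, the total is $\le\big(c_{\pp,7}(1-1/d)^{-d}+dc_{\pp,9}^d\big)\prod_{\ell}(\mean{y_\ell}\mean{y'_\ell})^{2/3}$, and since $a+b\le 2\max(a,b)$ this is $\le c_{\pp,8}\prod_{\ell}(\mean{y_\ell}\mean{y'_\ell})^{2/3}$ with $c_{\pp,8}$ as in \eqref{eq:c8}.

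\emph{Main obstacle.} The hard part is precisely the degenerate frequencies $k_\ell=\xi'/2$, which are unavoidable — for $\xi'=0$ the point $k_\ell=0$ always belongs to $\oneduallattice$ — so Lemma~\ref{lem:3Phi_int_bound}, whose proof only works when every coordinate avoids $\xi'/2$, is useless there and the contribution must be recovered by hand. What rescues the argument is that such $k$ form a set of $k$-measure at most $d/L$, and the hypothesis $L\ge L_\pp(\lambda,\tau_0)$ — which bounds the $s$-range by $L^{\pp}/2$ — exactly compensates the absence of $s$-decay on that set. The only other delicate point is the constant bookkeeping: one must keep the value $\int_{\R}\mean{w}^{-d(1-\pp)/2}\,\rmd w=\sqrt{\pi}\,\Gamma(d(1-\pp)/4-1/2)/\Gamma(d(1-\pp)/4)$ coming from Lemma~\ref{lem:3Phi_int_bound} and the definitions \eqref{eq:c7}–\eqref{eq:c9} aligned so that the $\mathcal{D}$- and $\mathcal{D}^{\mathrm{c}}$-contributions fit together under $c_{\pp,8}$.
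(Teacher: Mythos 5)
Your proof is correct and follows essentially the same route as the paper's: split $\duallattice$ into the degenerate set where some $k_\ell=\xi'/2$ (measure $\le d/L$, handled by the pointwise Lemma~\ref{lem:3Phi_bound} with the max estimated by $1$ and compensated by the $\lambda^{-2}\tau_0\le L^{\pp}/4$ range) and its complement (handled by Lemma~\ref{lem:3Phi_int_bound} followed by a Riemann-sum comparison with $\int_{-1/2}^{1/2}|t|^{-1/d}\,\rmd t$). The only caveat is constant bookkeeping: your Step~2 bound $c_{\pp,7}(1-1/d)^{-d}$ is in fact $2c_{\pp,7}(1-1/d)^{-d}$ even after the ``absorb the 4'' refinement, and the sum of the two contributions exceeds $\max$ by up to a factor $2$ --- but this slack appears in the paper's own proof as well, so it is not a deviation of approach.
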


\begin{proof}
    We split the set $\duallattice$ into two sets, namely, ${\duallattice\setminus N}$ 
    and ${N} $, where $$N=\{k\in\duallattice \ : \ k_\ell = \xi'/2 \text{ for
    some } \ell=1,\dots,d\}.$$

    We start by considering the region $N$. Take $k\in N$ and such $\ell$ that
    $k_\ell= \xi'/2$. Then $z_{\xi'}(k_\ell)=0$, and so, Lemma \ref{lem:3Phi_bound} gives
    \begin{align}
        \label{eq:3Phi1}
        \frac{\Phi(R_\ell,x_{\ell})}
        {\Phi(\rs{\xi'}(R,u,k',k)_\ell,\widetilde{x}_{\ell})
        \Phi(\rs{\xi'}(k)_\ell,\widetilde{y}_{\ell})} 
        &\le
        c_{\pp,9} \left({\mean{y_\ell}\mean{y'_\ell}}\right)^{2/3}
        .
    \end{align}
    Since $\#N\leq dL^{d-1}$ and $\int_{\duallattice} \rmd k=1$ we obtain 
    \begin{align*}
        \int_{-\tau_0\lambda^{-2}}^{\tau_0\lambda^{-2}} \rmd s 
        \int_{N} \rmd k
        \frac{\Phi^d(R,x)}
        {\Phi^d(\rs{\xi'}(R,u,k',k),\widetilde{x})
        \Phi^d(\rs{\xi'}(k),\widetilde{y})} 
         &\leq
        \int_{-\tau_0\lambda^{-2}}^{\tau_0\lambda^{-2}} \rmd s 
        \int_{N} \rmd k
        {\prod_{\ell= 1}^d} 
        c_{\pp,9} \left({\mean{y_\ell}\mean{y'_\ell}}\right)^{2/3}
    \\
         &\leq
         2
         d
         \tau_0\lambda^{-2}L^{-1}
        {\prod_{\ell= 1}^d} 
        c_{\pp,9} \left({\mean{y_\ell}\mean{y'_\ell}}\right)^{2/3}
    \end{align*}
    By $L\ge L_\pp(\lambda,\tau_0)$ we have $\tau_0\lambda^{-2}L^{-1}\leq 1$.
    In the complement, we use $L\ge(4\tau_0\lambda^{-2})^{1/\pp}$ together with
    Lemma \ref{lem:3Phi_int_bound} to obtain 
    \begin{align*}
        &\int_{-\tau_0\lambda^{-2}}^{\tau_0\lambda^{-2}} \rmd s 
        \int_{\duallattice\setminus N} \rmd k
        \frac{\Phi^d(R,x_1)}{\Phi^d(\rs{\xi'}(R,u,k',k),x_2)
        \Phi^d(\rs{\xi'}(k),x_3)} 
         \\
         &\leq
         c_{\pp,7}
         \left(
         \int_{\oneduallattice\setminus \{0\}} \rmd k_1
         \frac{1}{|k_1|^{1/d}}
        \right)^d
        {\prod_{\ell= 1}^d} 
         \left({\mean{y_\ell}\mean{y'_\ell}}\right)^{2/3}
        \\
         &\leq
         c_{\pp,7}
         \left( 
         2\int_0^{1/2} \rmd x x^{-1/d}
         \right) ^d
        {\prod_{\ell= 1}^d} 
         \left({\mean{y_\ell}\mean{y'_\ell}}\right)^{2/3}
         \\
         &= 
         2
         c_{\pp,7}
         \left( 
             \frac{d}{d-1}
         \right) ^d
        {\prod_{\ell= 1}^d} 
         \left({\mean{y_\ell}\mean{y'_\ell}}\right)^{2/3}.
    \end{align*}
\end{proof}

\begin{lemma}
    \label{lem:singularint}
    Let $\pp\in(0,1)$.
    There holds 
    \begin{align}
        \label{eq:singularint}
        \int_0^{1/2} \rmd k 
        \frac{1}{\abs{r-s \sin(\pi k)}^{(1-\pp)/2}} 
        &\leq (1+\frac{2}{\pi\pp})s^{-(1-\pp)/2}
    \end{align}
    for all $s>0$ and $r\ge0$.
\end{lemma}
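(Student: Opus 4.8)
The plan is to remove the $\sin$ by a change of variables, reducing the inequality to a short estimate on a Beta-type integral, and then to use $\pp\in(0,1)$, i.e.\ $\nu:=(1-\pp)/2\in(0,1/2)$, to separate the two resulting singularities.

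First I would substitute $w=\sin(\pi k)$, a smooth increasing bijection of $(0,1/2)$ onto $(0,1)$ with $\rmd k=\rmd w/(\pi\sqrt{1-w^2})$ (using $\cos(\pi k)=\sqrt{1-w^2}$ on this range). Factoring out $s$ and writing $\rho:=r/s\ge0$, the left-hand side equals $\frac{s^{-\nu}}{\pi}\int_0^1\frac{\rmd w}{|\rho-w|^\nu\sqrt{1-w^2}}$, so it suffices to prove $\int_0^1\frac{\rmd w}{|\rho-w|^\nu\sqrt{1-w^2}}\leq \pi+2/\pp$ uniformly in $\rho\geq0$. For this I would discard the benign half of the square root via $\sqrt{1-w^2}=\sqrt{(1-w)(1+w)}\geq\sqrt{1-w}$ on $[0,1]$, then substitute $v=1-w$; with $v_0:=1-\rho\leq 1$ the integral becomes at most $\int_0^1 |v-v_0|^{-\nu}v^{-1/2}\,\rmd v$.

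It then remains to bound $\int_0^1 |v-v_0|^{-\nu}v^{-1/2}\,\rmd v\leq\pi+2/\pp$ for every $v_0\leq1$. If $v_0\leq0$, then $|v-v_0|\geq v$, so the integral is at most $\int_0^1 v^{-\nu-1/2}\,\rmd v=(1/2-\nu)^{-1}=2/\pp$, since $1/2-\nu=\pp/2>0$. If $v_0\in(0,1]$, I would split at $v=v_0$: on $[0,v_0]$ the rescaling $v=v_0\eta$ gives $v_0^{1/2-\nu}\int_0^1(1-\eta)^{-\nu}\eta^{-1/2}\,\rmd\eta\leq\int_0^1(\eta(1-\eta))^{-1/2}\,\rmd\eta=\pi$, using $v_0\leq1$ and $(1-\eta)^{-\nu}\leq(1-\eta)^{-1/2}$ (valid because $\nu\leq1/2$); on $[v_0,1]$ the bound $v\geq v-v_0$ gives $\int_{v_0}^1(v-v_0)^{-\nu-1/2}\,\rmd v\leq(1/2-\nu)^{-1}=2/\pp$. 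Summing the two pieces yields $\pi+2/\pp$ in every case; dividing through by $\pi$ gives the asserted bound $(1+2/(\pi\pp))s^{-\nu}$.

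The only point needing care is the possible coincidence of the two singularities $w=\rho$ and $w=1$ as $r\to s$ (i.e.\ $\rho\to1$, $v_0\to0$); it is handled because near $v=0$ the integrand $|v-v_0|^{-\nu}v^{-1/2}$ is dominated by $v^{-\nu-1/2}$, which is integrable precisely because $\pp>0$. Otherwise the proof is self-contained, relying only on the elementary inequalities above and the identity $\int_0^1(\eta(1-\eta))^{-1/2}\,\rmd\eta=\pi$; none of the earlier results are needed.
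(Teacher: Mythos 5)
Your proof is correct and follows essentially the same route as the paper's: the substitution $w=\sin(\pi k)$ followed by $v=1-w$, the case split on the sign of $v_0=1-r/s$, and for $v_0\in(0,1]$ a split at $v=v_0$ with the same use of $\int_0^1(\eta(1-\eta))^{-1/2}\,\rmd\eta=\pi$ on one piece and $\int_0^1 v^{-1/2-\nu}\,\rmd v = 2/\pp$ on the other. The only cosmetic difference is that the paper first rescales the whole integral by $\alpha=v_0$ and tracks the prefactor $\alpha^{\pp/2}$, whereas you split first and rescale only the $[0,v_0]$ piece, discarding the same $v_0^{\pp/2}\le1$ factor there.
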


\begin{proof}
    Denote the integral on the LHS of \eqref{eq:singularint} by $I$. Change of
    variables $y=\sin(\pi k)$, gives $\rmd k = \frac{\rmd
    y}{\pi\sqrt{1-y^2}}\leq \frac{\rmd
    y}{\pi\sqrt{1-y}}$. Then 
    \begin{align*}
        I\leq \int_0^1 \frac{\rmd
    y}{\pi\sqrt{1-y}}      
    \frac{1}{\abs{r-s y}^{(1-\pp)/2}}
    =
    \frac{s^{-(1-\pp)/2}}{\pi}
    \int_0^1 \frac{\rmd x}{\sqrt{x}}  \frac{1}{\abs{x-\alpha}^{(1-\pp)/2}},
    \end{align*}
    where we did another change of variables $x=1-y$ and defined $\alpha = 1-r/s$.
    Note that $\alpha\in(-\infty,1]$. 

    Suppose first that $\alpha\leq0$. Then 
    \begin{align*}
        \int_0^1 \frac{\rmd x}{\sqrt{x}}  \frac{1}{\abs{x-\alpha}^{(1-\pp)/2}}
        &\leq  
        \int_0^1 \frac{\rmd x}{x^{1/2+(1-\pp)/2}} = 2/\pp.
    \end{align*}

    Suppose now that $\alpha\in(0,1]$. Then, after change of variables $x=\alpha
    x'$, we have
    \begin{align*}
        \int_0^1 \frac{\rmd x}{\sqrt{x}}  \frac{1}{\abs{x-\alpha}^{(1-\pp)/2}},
        =
        \alpha^{\pp/2}
        \int_0^{1/\alpha} \frac{\rmd x'}{\sqrt{x'}}  \frac{1}{\abs{x'-1}^{(1-\pp)/2}}.
    \end{align*}

The fact that $\frac{\rmd}{\rmd x'} 2\arcsin(\sqrt{x'})=1/\sqrt{x'{(1-x')}} $ implies 
\begin{align*}
     \int_0^{1} \frac{\rmd x'}{\sqrt{x'}}  \frac{1}{\abs{x'-1}^{(1-\pp)/2}}
     \leq 
     \int_0^{1} \frac{\rmd x'}{\sqrt{x'}}  \frac{1}{\abs{x'-1}^{1/2}}
     = 
     \pi.
\end{align*}
If $1/\alpha>1$, then 
\begin{align*}
     \int_1^{1/\alpha} \frac{\rmd x'}{\sqrt{x'}}  \frac{1}{\abs{x'-1}^{(1-\pp)/2}}
     \leq 
     \int_1^{1/\alpha} \frac{\rmd x'}{\sqrt{x'-1}}  \frac{1}{\abs{x'-1}^{(1-\pp)/2}}
     = \frac{2}{\pp} (1/\alpha-1)^{\pp/2}.
\end{align*}
Then, employing $\alpha^{\pp/2} ( \pi +(2/\pp)(1/\alpha-1)^{\pp/2} )\leq \pi+2/\pp$ for
$\alpha\in(0,1]$ finishes the proof.
\end{proof}

\begin{lemma}
    \label{lem:int_over_weight}
    Let $d\ge 3$ and $\pp\in(0,1-2/d)$.
    Let $R\in \R_+^d$, $\xi'\in\{0,1\}$, $T>0$. Suppose 
    both $\tau,\widetilde{\tau}\colon\R_+^2\to\R$ satisfy one of the conditions \eqref{eq:taudiff} or \eqref{eq:tauconst}. If $L\ge L_\pp(\lambda, \max(\tau,\widetilde{\tau})(T,0))$, then
    \begin{align}
        \label{eq:int_over_weight1}
                \int_0^{T}
                \rmd t
                \int_{\lambda^{-2}{\tau}(T,t)}^{\lambda^{-2}\widetilde\tau(T,t)}
                |\rmd s|
                \int_{\duallattice}\rmd k
                \prod_{\ell=1}^d
                \frac{1}{\regabs{ R_\ell-|s||z_{\xi'}(k_\ell)|}^{(1-\pp)/2}}
            \leq \mathcal{E}_\pp(\lambda,\tau,\widetilde\tau, T),
    \end{align}
    where the error term $\mathcal{E}_\pp$ was defined in Definition \ref{def:error}.
    Moreover, 
    \begin{align}
        \label{eq:int_over_weight2}
                \int_{-\infty}^{\infty}
                \rmd s
                \int_{\duallattice}\rmd k
                \prod_{\ell=1}^d
                \frac{1}{\regabs{ R_\ell-|s||z_{\xi'}(k_\ell)|}^{(1-\pp)/2}}
                \leq c_{\pp,6}.
    \end{align}
\end{lemma}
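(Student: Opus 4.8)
The plan is to reduce everything to a one–dimensional estimate by factorizing the $k$–integral over the $d$ coordinates. Since $z_{\xi'}(k_\ell)$ and $R_\ell$ depend only on the $\ell$-th coordinate, for each fixed $s$ one has $\int_{\duallattice}\rmd k\,\prod_{\ell=1}^d\regabs{R_\ell-|s||z_{\xi'}(k_\ell)|}^{-(1-\pp)/2}=\prod_{\ell=1}^d h_\ell(s)$, with $h_\ell(s):=\int_{\oneduallattice}\rmd k\,\regabs{R_\ell-|s||z_{\xi'}(k)|}^{-(1-\pp)/2}$. The crucial one–dimensional bound is $h_\ell(s)\le\min\!\bigl(1,\,c_\pp\,|s|^{-(1-\pp)/2}\bigr)$ for a constant $c_\pp$ depending only on $\pp$: the bound by $1$ is immediate from $\regabs{\cdot}\ge1$, while the decay follows from $\regabs{\cdot}\ge|\cdot|$, the identity $|z_{\xi'}(k)|=2|\sin\pi(k+\xi'/2)|$, and Lemma \ref{lem:singularint} (applied after folding onto $[0,1/2]$ by periodicity and evenness), which bounds the corresponding continuum integral by $(1+\tfrac{2}{\pi\pp})(2|s|)^{-(1-\pp)/2}$; to pass from the discrete average $\tfrac1L\sum$ to this integral one compares cell by cell and uses Lemma \ref{lem:exp_lowerbound} to bound the one or two lattice points closest to the resonant value $2|s||\sin\pi(k+\xi'/2)|=R_\ell$, where $\regabs{\cdot}^{-(1-\pp)/2}$ is largest. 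Multiplying $d$ such factors yields the decay $|s|^{-d(1-\pp)/2}$, and the standing hypothesis $\pp<1-2/d$ is exactly what makes $d(1-\pp)/2>1$, hence integrable at $\pm\infty$. One caveat: on the part of $\duallattice$ (of $\rmd k$–measure $O(1/L)$) where some $z_{\xi'}(k_\ell)$ vanishes the integrand does not decay in $s$; in \eqref{eq:int_over_weight1} this is controlled by the bounded $s$–range together with $L\ge L_\pp(\lambda,\cdot)\ge\tau_0\lambda^{-2}$, stratifying according to the set of vanishing coordinates and using the decay in the remaining ones, while for \eqref{eq:int_over_weight2} one works on the complementary set.

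Granting the one–dimensional estimate, \eqref{eq:int_over_weight2} follows by splitting the $s$–integral at $|s|=1$: on $|s|\le1$ we bound $\prod_\ell h_\ell\le1$, contributing $2$, and on $|s|>1$ we use $\prod_\ell h_\ell(s)\le c_\pp^{d}|s|^{-d(1-\pp)/2}$, contributing $\tfrac{4c_\pp^{d}}{d(1-\pp)-2}$; with $c_\pp$ chosen as above these add up to $c_{\pp,6}$.

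For \eqref{eq:int_over_weight1} one goes through the cases of Definition \ref{def:error}. If $\tau=\widetilde\tau$ the inner $s$–integral is empty and the bound is $0$. If $\tau,\widetilde\tau$ are both constants, the $s$–interval $[\lambda^{-2}\min(T_0,\widetilde T_0),\lambda^{-2}\max(T_0,\widetilde T_0)]$ stays away from $0$, so the decay bound applies throughout; integrating $c_\pp^{d}\,s^{-d(1-\pp)/2}$ over this interval and then the trivial $\int_0^T\rmd t$ gives, recalling $d\tildepp=d(1-\pp)/2$, the term $\eta_{d,\pp}\,T\,|T_0^{-(d\tildepp-1)}-\widetilde T_0^{-(d\tildepp-1)}|\,\lambda^{2d\tildepp-2}$. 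If exactly one of $\tau,\widetilde\tau$ equals $T-t$, the $s$–interval now extends down to $\lambda^{-2}(T-t)$, which shrinks to $0$ as $t\uparrow T$; splitting the $t$–integral at $t=T-\min(T,T_0)$, and on the piece where the $s$–interval is short splitting $s$ further into a region estimated by $1$ and one estimated by the decay bound, the $t$–integration of $(\lambda^{-2}(T-t))^{1-d\tildepp}$ produces $T^{2-d\tildepp}\lambda^{2d\tildepp-2}=T\min(1,\lambda/\sqrt{\min(T,T_0)})^{p_d}$ when $d\tildepp<2$ (so $p_d=2d\tildepp-2$), the power $p_d=2$ when $d\tildepp>2$, and—at the borderline $d\tildepp=2$, where the integrand is $(T-t)^{-1}$—the logarithm $\ln(\min(T_0,T)/\lambda^2+\rme)$ of case (d); the $\min(1,\cdot)$ caps come from using the trivial bound whenever $\lambda$ is not small compared to $\sqrt{\min(T,T_0)}$, and the constants $\alpha_d,\eta_{d,\pp}$ absorb the subleading terms (including the contribution of the vanishing–$z_{\xi'}$ set).

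The main obstacle is the two mixed cases: organizing the nested $t$– and $s$–integrations so as to isolate cleanly the borderline $(T-t)^{1-d\tildepp}$ behaviour near $t=T$—integrable exactly when $d\tildepp<2$, and producing the logarithm at $d\tildepp=2$—and pinning down the exact constants and exponents $p_d$ of Definition \ref{def:error}, while keeping the discrete–to–continuum comparison in the one–dimensional estimate uniform in $L$ despite the resonant peak.
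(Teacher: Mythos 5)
Your overall architecture matches the paper's: factorize the $k$-integral over coordinates, use Lemma~\ref{lem:singularint} to get the $|s|^{-(1-\pp)/2}$ decay per coordinate, split the $s$-integral at $|s|=1$ to obtain \eqref{eq:int_over_weight2}, and walk through the cases of Definition~\ref{def:error} for \eqref{eq:int_over_weight1} by integrating $s^{-d\tildepp}$ over the appropriate $s$-interval and then the $(T-t)^{1-d\tildepp}$ tail over $t$. Your case analysis (empty interval, two constants, the $d\tildepp<2$ / $=2$ / $>2$ trichotomy and the $\min(1,\lambda/\sqrt{\min(T,T_0)})$ cap) agrees with what the paper does.

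Where you diverge is in the discrete-to-continuum step, and this is where the gap is. The paper's proof of \emph{this} lemma does none of the ``cell-by-cell comparison'', does not invoke Lemma~\ref{lem:exp_lowerbound}, and does not stratify over the set of coordinates where $z_{\xi'}$ vanishes; instead it simply performs the change of variables $k_\ell\mapsto k_\ell+\xi'/2$ and asserts, in a single line (equation \eqref{eq:someint}), that the discrete $k$-average is bounded by the Lebesgue integral $\prod_\ell 2\int_0^{1/2}\rmd k_\ell\,\regabs{R_\ell-2|s||\sin(\pi k_\ell)|}^{-(1-\pp)/2}$, after which Lemma~\ref{lem:singularint} applies directly. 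The vanishing-set split and Lemma~\ref{lem:exp_lowerbound} that you import are tools the paper reserves for Lemmas~\ref{lem:3Phi_int_bound} and \ref{lem:3Phi_2int_bound}, not this one. More importantly, your stated one-dimensional estimate $h_\ell(s)\le\min(1,c_\pp|s|^{-(1-\pp)/2})$ with $c_\pp$ depending only on $\pp$ cannot hold pointwise in $s$: whenever $\xi'/2\in\oneduallattice$ (always for $\xi'=0$), that single lattice point contributes $\regabs{R_\ell}^{-(1-\pp)/2}/L$ to $h_\ell(s)$ for every $s$, which exceeds $c_\pp|s|^{-(1-\pp)/2}$ once $|s|\gtrsim L^{2/(1-\pp)}$. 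You flag this in your ``caveat'', but then your actual 1D bound ignores it, and for \eqref{eq:int_over_weight2} your proposed remedy --- ``work on the complementary set'' --- does not bound the full integral: you still have to control the vanishing-set contribution, and with $s$ running over all of $\R$ that contribution is not summable using the remaining $(d-1)$ coordinates (for $d=3$ one has $(d-1)(1-\pp)/2<1$). So the proposal as written is internally inconsistent on the one point where it tries to be more careful than the paper.
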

\begin{proof}
    We start by doing a change of variables $k_\ell\mapsto k_\ell+\xi'/2$ and estimating 
    the $k$ integral by Lebesgue integral, 
    \begin{align}
        \label{eq:someint}
                \int_{\duallattice}\rmd k
                \prod_{\ell=1}^d
                \frac{1}{\regabs{ R_\ell-|s||z_{\xi'}(k_\ell)|}^{(1-\pp)/2}}
                \leq 
                \prod_{\ell=1}^d
                2
                \int_0^{1/2}\rmd k_\ell
                \frac{1}{\regabs{ R_\ell-2|s||\sin(\pi k_\ell)|}^{(1-\pp)/2}}.
    \end{align}
    Then using the trivial bound $1$ for \eqref{eq:someint} when $|s|\leq1$, together with Lemma \ref{lem:singularint} for 
    $|s|>1$, implies that the integral \eqref{eq:int_over_weight2}
    is bounded by
    \begin{align}
        c_{\pp,6}=
        2+
        2^{d+1}
        (1+\frac{2}{\pi\pp})^d
        \int_1^\infty \rmd s 
        (2s)^{-d(1-\pp)/2}
        <\infty.
    \end{align}
    This proves \eqref{eq:int_over_weight2}.
    
    The integral \eqref{eq:int_over_weight1} is zero if $\tau=\widetilde\tau$. 
    Suppose $\tau\ne\widetilde\tau$. Then without loss of generality we can assume that 
    $\widetilde{\tau}=T_0$ on $(0,T)^2$ for some constant $T_0>0$.
    By \eqref{eq:someint}, the $t$-integrand in \eqref{eq:int_over_weight1} is 
    bounded by 
    \begin{align}
        I(t)\coloneqq    
                \int_{\lambda^{-2}{\tau}(T,t)}^{\lambda^{-2}T_0}
                |\rmd s|
                \prod_{\ell=1}^d
                2
                \int_0^{1/2}\rmd k_\ell
                \frac{1}{\regabs{ R_\ell-2|s||\sin(\pi k_\ell)|}^{(1-\pp)/2}}.
    \end{align}

    Suppose first that $\tau$ is also a constant and denote it by $\widetilde{T}_0>0$.
    Let 
    \begin{align}
    \tildepp &= (1-\pp)/2,
    \\
    \eta_d&=(2(1+\frac{2}{\pi\pp})2^{-\tildepp})^d.
    \end{align}
    Note that $1/d<\tildepp <1/2$.
    Then 
    Lemma \ref{lem:singularint} implies 
    \begin{align*}
        I(t)&\leq
        \eta_d
        \Bigg|
        \int_{\lambda^{-2}\widetilde{T}_0}^{\lambda^{-2}T_0}
                \rmd s
                  s^{-d\tildepp}
        \Bigg|
        \\ 
        &\leq
        \frac{\eta_d}{d\tildepp-1} \left| T_0^{-(d\tildepp-1)}-\widetilde{T}_0^{-(d\tildepp-1)} \right|  
        \lambda^{2d\tildepp-2}.
    \end{align*}
    Hence, the desired result follows for this case. 

    Now suppose  that  $\tau(T,t)=T-t$ and $\widetilde{\tau}(T,t)=T_0$. 
    To estimate the $k_\ell$ -integrals we note that the integrand is trivially 
    bounded by $1$. Using this trivial bound up to $s\leq 1$ and 
    for $s>1$ using Lemma \ref{lem:singularint} we get for $t\in[0,T_0]$ that 
    \begin{align}
        \nonumber
        I(T-t)&\leq
                \int_{\lambda^{-2}t}^{\lambda^{-2}T_0}
                \rmd s
                \left(  \ind(s \leq 1) + \eta_d s^{-d\tildepp}\ind(s>1) \right) 
        \\
        \nonumber
        &\leq 
        \ind(\lambda^{-2}t\leq 1)
        \left( 
        \int_0^{1}\rmd s  
        +
        \int_{1}^\infty\rmd s \eta_d  s^{-d\tildepp}
        \right) 
        +
        \ind(\lambda^{-2}t> 1)
        \int_{\lambda^{-2}t}^\infty
        \eta_d s^{-d\tildepp}
        \\
        \label{eq:someestimate2}
        &=
         C_2\ind(\lambda^{-2}t\leq 1)
         +
        C_1\ind(\lambda^{-2}t> 1)
        (\lambda^{-2}t)^{-(d\tildepp-1)},
    \end{align}
    where $C_1\coloneqq  {\eta_d}/{(d\tildepp-1)}$ and $C_2\coloneqq  1 +C_1 $.  
    Let us now assume that $T_0\ge T$.
    Then 
    \begin{align}
        \nonumber
        &\int_0^{T} \rmd t I(t) 
        =\int_0^{T} \rmd t I(T-t) 
        \\
        \nonumber
        &\leq 
        C_2 \min(\sqrt{T},\lambda)^2
        +
        C_1
        \ind(T> \lambda^2)
         \lambda^{2d\tildepp-2}
        \int_{\lambda^{2}}^{T} \rmd t\,
        t^{-(d\tildepp-1)}
        \\
        \nonumber
        &\leq 
        C_2 \min(\sqrt{T},\lambda)^2
        \\
        \nonumber
        &\quad
        +
        C_1
        \ind(T> \lambda^2)
        \left(  \ind(d\tildepp< 2) \frac{T^{2-d\tildepp}}{2-d\tildepp}\lambda^{2d\tildepp-2}
        +
        \ind(d\tildepp=2)
        \ln(T\lambda^{-2})
        \lambda^{2}
            +
        \ind(d\tildepp > 2) \frac{1}{d\tildepp-2}
        \lambda^2
        \right)
        \\
        \nonumber
        &\leq
        \ind(d\tildepp\ne2)
        \alpha_d T^{1-p_d/2}\min(\sqrt{T},\lambda)^{p_d}
        \\
        \label{eq:someestimate}
        &\quad+
        \ind(d\tildepp=2)
        \left(
        C_1
        \ln(T\lambda^{-2}+\rme)
        +
        C_2
        \right)
        \min(\sqrt{T},\lambda)^{2},
    \end{align}
    where in the last step we used that $\ind(T>\lambda^2)\lambda\leq \min(\sqrt{T},\lambda)$ and, for the $d\tildepp<2$ case, we also used $\min(\sqrt{T},\lambda)^2\leq T^{1-p_d/2}\min(\sqrt{T},\lambda)^{p_d}$. Recall that $p_d=2d\tildepp-2$ for $d\tildepp<2$ and $p_d=2$ otherwise.

    Finally, let us assume that $T_0\leq T$. Then 
    \begin{align}
    \label{eq:someestimate11}
        \int_0^{T} \rmd t I(t) 
        =
        \int_0^{T_0} \rmd t I(T-t) 
        +
        \int_{T_0}^{T} \rmd t I(T-t).
    \end{align}
    Now \eqref{eq:someestimate2} and \eqref{eq:someestimate} imply for the first integral that 
    \begin{align}
        \int_0^{T_0} \rmd t I(T-t) 
        \nonumber
        &\leq
        \ind(d\tildepp\ne2)
        \alpha_d T_0^{1-p_d/2}\min(\sqrt{T_0},\lambda)^{p_d}
        \\
    \label{eq:someestimate12}
        &\quad+
        \ind(d\tildepp=2)
        \left(
        C_1
        \ln(T_0\lambda^{-2}+\rme)
        +
        C_2
        \right)
        \min(\sqrt{T_0},\lambda)^{2}
    \end{align}
    Using the same argument as in \eqref{eq:someestimate2} for $t\in[T_0,T]$ gives 
    \begin{align}
    \nonumber
        I(T-t)&\leq
        C_2\ind(\lambda^{-2}T_0\leq 1)
         +
        C_1\ind(\lambda^{-2}T_0> 1)
        (\lambda^{-2}T_0)^{-(d\tildepp-1)}
        \\
    \label{eq:someestimate13}
              &\leq
          \left(
          C_2
          +
          C_1
          \right)
           \min\left(1,\frac{\lambda}{\sqrt{T_0}}\right)^{2}.
    \end{align}
    Hence, for $d\tildepp\ne2$ we can use $C_2+C_1\leq\alpha_d$ 
    together with \eqref{eq:someestimate11} and \eqref{eq:someestimate12} to obtain the desired result in this case, namely
    \begin{align*}
        \int_0^{T} \rmd t I(t) 
        &\leq
         \alpha_d T
         \min\left(1,\frac{\lambda}{\sqrt{T_0}}\right)^{p_d}.
    \end{align*}
    For $d\tildepp\ne2$ using $C_1\leq C_1 \ln(T_0\lambda^{-2}+\rme)$ in \eqref{eq:someestimate13}
    together with \eqref{eq:someestimate11} and \eqref{eq:someestimate12} imply
    \begin{align*}
        \int_0^{T} \rmd t I(t) 
        &\leq
        \left(
        C_1
        \ln(T_0\lambda^{-2}+\rme)
        +
        C_2
        \right)
        T
         \min\left(1,\frac{\lambda}{\sqrt{T_0}}\right)^{2}.
    \end{align*}
    Recalling that $C_2=C_1+1$ and renaming $\eta_{d,\pp}=C_1$ completes the proof.
\end{proof}

\begin{corollary}
    \label{cor:1Phi_int}
    For all $L\ge2$, $x\in\lattice$,  and $\xi'\in\{0,1\}$ there
    holds
    \begin{align}
        \label{eq:1Phi_int}
        \int_{-L^{1/7}/2}^{L^{1/7}/2} \rmd s 
        \int_{\duallattice} \rmd k 
        \frac{1}{\Phi^d(\rs{\xi'}(k),x)}
        &\leq
        c_{6}
        \prod_{j=1}^d\mean{x_j}^{1/6},
    \end{align}
    where $c_{6}=\inf\{c_{\pp,6}\colon 0<\pp<1-2/d\}$. 
\end{corollary}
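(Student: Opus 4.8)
The plan is to reduce the claim to the bound \eqref{eq:int_over_weight2} of Lemma~\ref{lem:int_over_weight} evaluated at $R=0$, applied with an \emph{arbitrary} auxiliary exponent $\pp'\in(0,1-2/d)$, and then to optimize the resulting constant over $\pp'$.

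First I would record a pointwise lower bound on the weight. Since $|z_{\xi'}(k_\ell)|\le 2$ for every $k_\ell$, the restriction $|s|\le L^{1/7}/2$ keeps $\rs{\xi'}(k)_\ell=|s|\,|z_{\xi'}(k_\ell)|\le L^{1/7}$, which stays in the range $\{\,|\cdot|\le L^{\pp}\,\}$ on which the cutoff $\gamma$ vanishes and $\varphi$ coincides with the $\pp$-independent function $\varphi_1=\regabs{\cdot}^{1/2}$; this is precisely why the statement confines the $s$-integration to this compact interval. Consequently, from the definition \eqref{eq:weight_phi}, distinguishing the cases $\regabs{\rs{\xi'}(k)_\ell}\le\regabs{x_\ell}$ and $\regabs{\rs{\xi'}(k)_\ell}>\regabs{x_\ell}$,
\[
    \Phi\big(\rs{\xi'}(k)_\ell,x_\ell\big)
    =\regabs{x_\ell}^{1/3}\max\!\Big(1,\regabs{x_\ell}^{-1/2}\regabs{\rs{\xi'}(k)_\ell}^{1/2}\Big)
    \ \ge\ \regabs{x_\ell}^{-1/6}\max\!\Big(1,\regabs{\rs{\xi'}(k)_\ell}^{1/2}\Big).
\]
Since $\regabs{\cdot}\ge 1$ and $(1-\pp')/2\le 1/2$, this yields $1/\Phi(\rs{\xi'}(k)_\ell,x_\ell)\le\regabs{x_\ell}^{1/6}\min\!\big(1,\regabs{\rs{\xi'}(k)_\ell}^{-1/2}\big)\le\regabs{x_\ell}^{1/6}\regabs{\rs{\xi'}(k)_\ell}^{-(1-\pp')/2}$ for every $\pp'\in(0,1)$, and taking the product over $\ell=1,\dots,d$ gives
\[
    \frac{1}{\Phi^d(\rs{\xi'}(k),x)}\ \le\ \Big(\prod_{j=1}^d\regabs{x_j}^{1/6}\Big)\prod_{\ell=1}^d\frac{1}{\regabs{\rs{\xi'}(k)_\ell}^{(1-\pp')/2}}.
\]

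Next I would insert this estimate into the integral, enlarge the $s$-domain from $[-L^{1/7}/2,L^{1/7}/2]$ to $\R$ (the integrand being nonnegative), and note that $\rs{\xi'}(k)_\ell=|s|\,|z_{\xi'}(k_\ell)|=\big|\,0-|s|\,|z_{\xi'}(k_\ell)|\,\big|$, so that \eqref{eq:int_over_weight2} of Lemma~\ref{lem:int_over_weight} with $R=0$ and exponent parameter $\pp'$ applies directly and gives
\[
    \int_{-L^{1/7}/2}^{L^{1/7}/2}\!\rmd s\int_{\duallattice}\!\rmd k\ \frac{1}{\Phi^d(\rs{\xi'}(k),x)}
    \ \le\ \Big(\prod_{j=1}^d\regabs{x_j}^{1/6}\Big)\,c_{\pp',6}.
\]
Since the left-hand side does not depend on $\pp'$ while the inequality holds for all $\pp'\in(0,1-2/d)$, taking the infimum over $\pp'$ on the right and recalling \eqref{eq:c6inf} produces exactly the constant $c_{6}=\inf\{c_{\pp',6}:0<\pp'<1-2/d\}$, which is the claim.

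The only step that really needs care is the first one: one must verify that on the whole $s$-interval the argument $\rs{\xi'}(k)_\ell$ of $\Phi$ indeed stays in the regime where $\varphi$ reduces to $\varphi_1$, so that neither $\gamma$ nor the decaying tail $\varphi_2$ contributes --- this is why the $s$-integration in the statement is confined to a compact interval. It is also precisely this reduction to $\varphi_1=\regabs{\cdot}^{1/2}$ that allows the natural exponent $1/2$ to be weakened to an arbitrary $(1-\pp')/2$, which is the mechanism producing the infimum constant $c_6$ rather than the single $c_{\pp,6}$ one would get from the exponent $\pp$ built into $\Phi$ itself.
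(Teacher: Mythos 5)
Your proof is correct and follows the same route as the paper: restrict $s$ so that $\rs{\xi'}(k)$ stays in the regime where $\gamma$ vanishes and $\varphi=\varphi_1$, obtain the pointwise bound $1/\Phi^d(\rs{\xi'}(k),x)\le\prod_j\regabs{x_j}^{1/6}\regabs{\rs{\xi'}(k)_j}^{-1/2}$, weaken the exponent $1/2$ to $(1-\pp')/2$, invoke \eqref{eq:int_over_weight2} with $R=0$, and optimize over $\pp'$ to reach $c_6$. Your version merely spells out the case distinction behind the pointwise bound and the optimization step that the paper's two-line proof leaves implicit.
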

\begin{proof}
    Recall from  \eqref{eq:R_shorthand_tilde} that $\rs{\xi'}(k)_j=|s||z_{\xi'}(k_j)|$. 
    Then $|\rs{\xi'}(k)_j|\leq  L^{1/7}$ for $|s|\leq L^{1/7}/2$. 
    Therefore, $\varphi(\rs{\xi'}(k)_j)
    =\mean{|s||z_{\xi'}(k_j)|}^{1/2}$. Hence,
    \begin{align}
        \frac{1}{\Phi^d(\rs{\xi'}(k),x)}
        \leq 
        \prod_{j=1}^d
        \frac{\mean{x_j}^{1/6}}{\mean{|s||z_{\xi'}(k_j)|}^{1/2}}
    \end{align}
    Thus, the bound \eqref{eq:int_over_weight2} from Lemma \ref{lem:int_over_weight} together with the fact that $\mean{\cdot}^{(1-\pp)/2}\leq \mean{\cdot}^{1/2}$ for any $\pp\in(0,1-2/d)$ 
    gives the desired result.
\end{proof}

\section{Fixed point equations given by $\mathcal{I}$}
\label{sec:fixpoint_eqs}

As discussed in Section \ref{sec:keyideas}, our proof strategy involves
proving existence of unique solutions to the fixed point equations
$G=\mathcal{I}_\tau[G,G]$ and $\mathcal{G}=\mathcal{I}_\tau[G,\mathcal{G}]$. Recall that in
\eqref{eq:I} the bilinear operator $\mathcal{I}_\tau$ was defined via the
bilinear operator $\mathcal{J}_\tau$, which in turn is given by \eqref{eq:J}.  Therefore, the following Lemma
\ref{lem:J_estim} about $\mathcal{J}$ is the main work horse for proving
boundedness and contractivity of $\mathcal{I}_\tau$ and also error between two
different choices of $\tau$.

\begin{lemma}
    \label{lem:J_estim} 
    Take $d\ge3$ and $\pp\in(0,1-2/d)$.
    Let $\lambda>0$ and $\tmax>0$. Suppose that both $\tau,\widetilde{\tau}\colon\R_+^2\to\R_+$ 
    satisfy one of the conditions \eqref{eq:taudiff} or \eqref{eq:tauconst}. Let $L\ge
    L_\pp(\lambda,\max(\tau,\widetilde{\tau})(\tmax,0))$. Let
    $t_1,t_2\in[0,\tmax]$. Then 
    \begin{align}
        \nonumber
        &\maxnorm{
        \mathcal{J}_\tau[G,\mathcal{G}]_{t_1,t_2}
            -
           \mathcal{J}_{\widetilde{\tau}}[\widetilde{G},\widetilde{\mathcal{G}}]_{t_1,t_2}
        } \\
        &\leq \nonumber
        \frac{c_{\pp,2}}{4}
        M_V
        \int_0^{t_2} \rmd t 
        \left( 
            \maxnorm{\widetilde{\mathcal{G}}_{t_1,t}} 
            \maxnorm{
            G_{t,t}
            -
            \widetilde{G}_{t,t}
            }
            +
            \maxnorm{
            G_{t,t}
            }
            \maxnorm{
            \mathcal{G}_{t_1,t}
            -
            \widetilde{\mathcal{G}}_{t_1,t}
            }
        \right) 
        \\
        &+
        \frac{c_{\pp,4}}{2}
        M_V
        \ppnorm{\widetilde{G}}{t_2}
        \ppnorm{\widetilde{\mathcal{G}}}{{\max(t_1,t_2)}}
        \mathcal{E}_\pp(\lambda,\tau,\widetilde{\tau},t_2)
        \label{eq:J_estim}
    \end{align}
    for all $G,\mathcal{G},\widetilde{G}, \mathcal{\widetilde{G}}\in \CT{\tmax}$. Here
    $\mathcal{E}_\pp$ is defined in Definition \ref{def:error}.
\end{lemma}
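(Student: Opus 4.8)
The plan is to split the difference using the bilinearity of $\mathcal{J}_\tau$ (in \eqref{eq:J} the first argument enters as $G^{(\widetilde{n}_i)}_{t,t}$ and the second as $\mathcal{G}^{(n)}_{t_1,t}$):
\[
\mathcal{J}_\tau[G,\mathcal{G}]_{t_1,t_2}-\mathcal{J}_{\widetilde{\tau}}[\widetilde{G},\widetilde{\mathcal{G}}]_{t_1,t_2}
=\Bigl(\mathcal{J}_\tau[G-\widetilde{G},\widetilde{\mathcal{G}}]_{t_1,t_2}+\mathcal{J}_\tau[G,\mathcal{G}-\widetilde{\mathcal{G}}]_{t_1,t_2}\Bigr)
+\Bigl(\mathcal{J}_\tau[\widetilde{G},\widetilde{\mathcal{G}}]_{t_1,t_2}-\mathcal{J}_{\widetilde{\tau}}[\widetilde{G},\widetilde{\mathcal{G}}]_{t_1,t_2}\Bigr).
\]
In the first bracket only the arguments change (the time cutoff $\tau$ is the same), and it will produce the first line of \eqref{eq:J_estim}; in the second bracket only the cutoff in the inner $s$-integral changes, and it will produce the $\mathcal{E}_\pp$-term.

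The engine for the first bracket is the estimate, valid for any $H,\mathcal{H}\in\CT{\tmax}$ under the standing hypotheses,
\[
\maxnorm{\mathcal{J}_\tau[H,\mathcal{H}]_{t_1,t_2}}\le\frac{c_{\pp,2}}{4}M_V\int_0^{t_2}\rmd t\,\maxnorm{H_{t,t}}\,\maxnorm{\mathcal{H}_{t_1,t}}.
\]
To prove it, multiply \eqref{eq:J} by $\Phi^d(R,x)$, bound $|H^{(\widetilde{n}_i)}_{t,t}(\Jpoint^i_s)|\le\phinorm{H^{(\widetilde{n}_i)}_{t,t}}/\Phi^d(\rs{\xi_i}(k),(y+y')\bmod\lattice)$ and $|\mathcal{H}^{(n)}_{t_1,t}(\Jppoint^i_s)|\le\phinorm{\mathcal{H}^{(n)}_{t_1,t}}/\Phi^d(\rs{\xi_i}(R,u,k',k),(x+\sigma^{(3)}_iy-y')\bmod\lattice)$, use $|\rme^{\ci K_i}|=1$, pull $\sup_k|v_i(k,y)|$ out of the $(s,k)$-integral, and estimate that integral by Lemma \ref{lem:3Phi_2int_bound} (permissible since $\tau(t_2,t)\le\max(\tau,\widetilde\tau)(\tmax,0)$ and $L_\pp$ is increasing in its last slot, so $L\ge L_\pp(\lambda,\tau(t_2,t))$) and the remaining $\sum_{y,y'}$ by Lemma \ref{lem:ineqVconst} with $p=2/3$; summing over the eight values of $i$ and using \eqref{eq:c2n3} gives the constant $c_{\pp,2}/4$, and taking $\max$ over $n\in\{1,2\}$ turns the $\phinorm{\cdot}$-bounds into $\maxnorm{\cdot}$-bounds. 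Applying this with $(H,\mathcal{H})=(G-\widetilde{G},\widetilde{\mathcal{G}})$ and $(H,\mathcal{H})=(G,\mathcal{G}-\widetilde{\mathcal{G}})$ and adding gives the first line of \eqref{eq:J_estim}.

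For the second bracket, since $\tau$ enters \eqref{eq:J} only through the limits of the inner integral, at each $t$ the $s$-integration runs, up to a sign, over the symmetric set $\lambda^{-2}\min(\tau,\widetilde\tau)(t_2,t)\le|s|\le\lambda^{-2}\max(\tau,\widetilde\tau)(t_2,t)$; there $|s|\le\lambda^{-2}\max(\tau,\widetilde\tau)(\tmax,0)\le L^{\pp}/4$ by the hypothesis $L\ge L_\pp(\lambda,\max(\tau,\widetilde\tau)(\tmax,0))$ (this is where the branch $(4\tau_0\lambda^{-2})^{1/\pp}$ of $L_\pp$ is needed), so Lemma \ref{lem:3Phi_bound} applies coordinatewise and bounds the weight ratio by $c_{\pp,9}^d(\prod_\ell\mean{y_\ell}\mean{y'_\ell})^{2/3}\prod_{\ell=1}^d\max(\regabs{|R_\ell|-|s||z_{\xi_i}(k_\ell)|}^{-1/2},\regabs{|s||z_{\xi_i}(k_\ell)|}^{-(1-\pp)/2})$. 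Since $\regabs{\cdot}\ge1$ and $\tfrac{1}{2}\ge\tfrac{1-\pp}{2}$, each factor is at most $\regabs{|R_\ell|-|s||z_{\xi_i}(k_\ell)|}^{-(1-\pp)/2}+\regabs{|s||z_{\xi_i}(k_\ell)|}^{-(1-\pp)/2}$, i.e.\ the $\ell$-th factor of the integrand in \eqref{eq:int_over_weight1} with $R_\ell$ kept or replaced by $0$; expanding the product over $\ell$ turns the $(t,s,k)$-integral into finitely many integrals of that exact type, each bounded by $\mathcal{E}_\pp(\lambda,\tau,\widetilde\tau,t_2)$ via Lemma \ref{lem:int_over_weight} (whose $\int_0^{t_2}\rmd t$ and $s$-region match, and whose hypothesis on $L$ holds since $\max(\tau,\widetilde\tau)(t_2,0)\le\max(\tau,\widetilde\tau)(\tmax,0)$). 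Handling $\sum_{y,y'}$ again by Lemma \ref{lem:ineqVconst} with $p=2/3$, bounding the remaining norms by $\phinorm{\widetilde{G}^{(\widetilde{n}_i)}_{t,t}}\le\ppnorm{\widetilde{G}}{t_2}$ and $\phinorm{\widetilde{\mathcal{G}}^{(n)}_{t_1,t}}\le\ppnorm{\widetilde{\mathcal{G}}}{\max(t_1,t_2)}$ (using $t\le t_2$), summing over $i$, and absorbing the combinatorial factors (eight terms, the $\max\to$sum expansion, the reflection $s\mapsto-s$) into the constant \eqref{eq:c4n5} yields the second line of \eqref{eq:J_estim}. When $\tau=\widetilde{\tau}$ both this bracket and $\mathcal{E}_\pp$ vanish, matching Definition \ref{def:error}(a).

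The main obstacle is this second bracket: one must extract the sharp decay $\mathcal{E}_\pp$ from the oscillatory $s$-integral while keeping the two competing rates in Lemma \ref{lem:3Phi_bound}---the $(1-\pp)/2$-power near the singular directions $z_{\xi_i}(k_\ell)\approx0$ and the $1/2$-power away from them---separated, so that each can be fed into Lemma \ref{lem:int_over_weight}. Everything else (the bilinear reduction, the coarse bound for the first bracket, and tracking the universal constants \eqref{eq:c2n3}--\eqref{eq:c9}) is routine bookkeeping.
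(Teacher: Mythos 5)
Your proposal is correct and follows essentially the same route as the paper: the paper first splits the $s$-integral into the common region $|s|\le\lambda^{-2}\tau(t_2,t)$ and the leftover region $\lambda^{-2}\min(\tau,\widetilde\tau)\le|s|\le\lambda^{-2}\max(\tau,\widetilde\tau)$, then telescopes the integrand $G(1)\mathcal{G}(2)-\widetilde G(1)\widetilde{\mathcal G}(2)$ via the intermediate term $G(1)\widetilde{\mathcal G}(2)$, which is exactly your operator-level bilinear split $\mathcal{J}_\tau[G-\widetilde G,\widetilde{\mathcal G}]+\mathcal{J}_\tau[G,\mathcal G-\widetilde{\mathcal G}]$. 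The lemmas you invoke (Lemma \ref{lem:3Phi_2int_bound} with Lemma \ref{lem:ineqVconst} for the common-region piece, Lemma \ref{lem:3Phi_bound} with Lemma \ref{lem:int_over_weight} for the cutoff-difference piece) are precisely the ones the paper uses, and your handling of the $\max$ in Lemma \ref{lem:3Phi_bound} by expanding it into a sum before applying Lemma \ref{lem:int_over_weight} is the step the paper leaves implicit, so your version is if anything slightly more explicit.
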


\begin{proof}
    Fix $X=(R,k',u,\sigma,x)\in\xl$ and $t_1,t_2\in[0,\tmax]$. Let
    $\bar{v}_i(y)=\sup_{k\in\duallattice}|v_i(k,y)|$. In the following computation,
    we use the shorthand notations
    \begin{align*}
        G(1)&= 
            G_{t,t}^{(\widetilde{n}_j)}(\Jpoint_s^j(X,k,y,y'))
            ,
            \qquad
            \mathcal{G}(2)=
            \mathcal{G}_{t_1,t}^{(n)}(\Jppoint_s^j(X,k,y,y'))
            \\
        \widetilde{G}(1)&=
            \widetilde{G}_{t,t}^{(\widetilde{n}_j)}(\Jpoint_s^j(X,k,y,y')),
            \qquad
        \widetilde{\mathcal{G}}(2)=
            \widetilde{\mathcal{G}}_{t_1,t}^{(n)}(\Jppoint_s^j(X,k,y,y'))
    \end{align*}
    We have \begin{align*}
       & 
       \Phi^d(R,x)
        \left| 
        \mathcal{J}^{(n)}_\tau[G,\mathcal{G}]_{t_1,t_2}(X)
            -
            \mathcal{J}^{(n)}_{\widetilde{\tau}}
            [\widetilde{G},\mathcal{\widetilde{G}}]_{t_1,t_2}(X)
        \right| 
        \\
       &\leq
        \sum_{j=1}^{8} 
       \Phi^d(R,x)
        \int_0^{t_2}\rmd t 
        \int_{\duallattice}\rmd k
        \sum_{y,y'\in\lattice} 
        \bar{v}_j(y)|w_j(y')|
        \\
        &\qquad\times
        \Bigg| 
            \int_{-\lambda^{-2}\tau(t_2,t)}^{\lambda^{-2}\tau(t_2,t)}
            \rmd s
            G(1)
            \mathcal{G}(2)
       -
           \int_{-\lambda^{-2}\widetilde{\tau}(t_2,t)}^{\lambda^{-2}\widetilde{\tau}(t_2,t)}
            \rmd s
        \widetilde{G}(1)
        \widetilde{\mathcal{G}}(2)
        \Bigg| 
        \\
       &\leq
        \sum_{j=1}^{8} 
        \sum_{y,y'\in\lattice} 
        \bar{v}_j(y)|w_j(y')|
        \int_0^{t_2}\rmd t 
        \int_{\duallattice}\rmd k
        \int_{-\lambda^{-2}\tau(t_2,t)}^{\lambda^{-2}\tau(t_2,t)}
        \rmd s
        \Phi^d(R,x)
        \Big| 
            G(1)
            \mathcal{G}(2)
            -
        \widetilde{G}(1)
        \widetilde{\mathcal{G}}(2)
        \Big| 
        \\
       &\qquad+ 
        \sum_{j=1}^{8} 
        \sum_{y,y'\in\lattice} 
        \bar{v}_j(y)|w_j(y')|
        \int_0^{t_2}\rmd t 
        \int_{\duallattice}\rmd k
            \int_{-\lambda^{-2}{\widetilde{\tau}}(t_2,t)}^{-\lambda^{-2}{\tau}(t_2,t)}
            \rmd s
        \Phi^d(R,x)
        \Big| 
        \widetilde{G}(1)
        \widetilde{\mathcal{G}}(2)
        \Big| 
        \\
       &\qquad+ 
        \sum_{j=1}^{8} 
        \sum_{y,y'\in\lattice} 
        \bar{v}_j(y)|w_j(y')|
        \int_0^{t_2}\rmd t 
        \int_{\duallattice}\rmd k
            \int_{\lambda^{-2}{\tau}(t_2,t)}^{\lambda^{-2}\widetilde{\tau}(t_2,t)}
            \rmd s
        \Phi^d(R,x)
        \Big| 
        \widetilde{G}(1)
        \widetilde{\mathcal{G}}(2)
        \Big| 
        \\
       &=:
        \sum_{j=1}^{8} 
        (A(j)+B_-(j)+B_+(j))
    \end{align*}
    In each of the terms $A(j)$, we can telescope with $G(1)\widetilde{\mathcal{G}}(2)$
    and then use Lemma \ref{lem:3Phi_2int_bound} together with Corollary \ref{cor:modregabs} and Lemma \ref{lem:ineqVconst}
    to obtain
    \begin{align*}
        A(j) &\leq
        \frac{c_{\pp,2}}{32}
        M_V
        \int_0^{t_2} \rmd t 
        \left( 
            \|\widetilde{\mathcal{G}}_{t_1,t}^{(n)}\|_{\Phi^d}
            \|
            G^{(\widetilde{n}_j)}_{t,t}
            -
            \widetilde{G}^{(\widetilde{n}_j)}_{t,t}
            \|_{\Phi^d}
            +
            \|
            G^{(\widetilde{n}_j)}_{t,t}
            \|_{\Phi^d}
            \|
            \mathcal{G}_{t_1,t}^{(n)}
            -
            \widetilde{\mathcal{G}}_{t_1,t}^{(n)}
            \|_{\Phi^d}
        \right).
    \end{align*}
    By Lemma \ref{lem:3Phi_bound} and Lemma \ref{lem:ineqVconst} we have
    \begin{align*}
        B_\pm(j)
        &\leq
        (2^{2d/3}+3)
        c_{\pp,9}^d
        M_V
        \int_0^{t_2}\rmd t
            \|
            \widetilde{G}^{(\widetilde{n}_j)}_{t,t}
            \|_{\Phi^d}
            \|\widetilde{\mathcal{G}}_{t_1,t}^{(n)}\|_{\Phi^d}
            I_\pm(t,j)
            \\
        &\leq
        (2^{2d/3}+3)
        c_{\pp,9}^d
        M_V
        \ppnorm{\widetilde{G}}{t_2}
        \ppnorm{\widetilde{\mathcal{G}}}{{\max(t_1,t_2)}}
        \int_0^{t_2}\rmd t
            I_\pm(t,j)
    \end{align*}
    where we defined 
    \begin{align*}
            I_\pm(t,j)
            &\coloneqq
            \Bigg|
                \int_{\pm\lambda^{-2}{\tau}(t_2,t)}^{\pm\lambda^{-2}\widetilde\tau(t_2,t)}
                \rmd s
        \int_{\duallattice}\rmd k
                \prod_{\ell=1}^d
         \max\left(
             \frac{1}{\regabs{ |R_\ell|-|s||{z_{\xi_j}(k_\ell)}|}^{1/2}}
            ,
        \frac{1}{\regabs{s|{z_{\xi_j}(k_\ell)}|}^{(1-\pp)/2}}\right)
            \Bigg|
    \end{align*}
    Lemma \eqref{lem:int_over_weight}
    finishes the proof.
\end{proof}

\begin{proposition}
    \label{prop:Ggronwall}
    Take $d\ge3$ and $\pp\in(0,1-2/d)$.
    Take $\tmax>0$. Suppose $\tau(T,t)=T-t$ and $L\ge L_\pp(\lambda,\tmax)$,
    or $\tau(T,t)=T_0$ for some $T_0>0$ and $L\ge L_\pp(\lambda,T_0)$.  Let 
    $T\in[0,\tmax]$. If
    $G\in\CT{\tmax}$ satisfies Equation \eqref{eq:Gevo}, then there
    holds 
    \begin{align}
        \label{eq:Gnorm_bound0}
        \ppnorm{G}{T}
        &\leq
        \frac{ \maxnorm{G_{0,0}}}{\sqrt{1-c_{\pp,2} M_V\maxnorm{G_{0,0}}T}}
    \end{align}
    whenever $T < \frac{1}{c_{\pp,2} M_V \maxnorm{G_{0,0}} }$.
    In particular, if $T\leq
    \frac{1}{2c_{\pp,2} M_V \maxnorm{G_{0,0}} }$ then
    \begin{align}
        \label{eq:Gnorm_bound1}
        \ppnorm{G}{T}
        &\leq
        \sqrt{2} \maxnorm{G_{0,0}},
    \end{align}
    and 
    if 
    $T\leq\frac{1}{c_{\pp,2} M_V(1+\maxnorm{G_{0,0}})^2}$
    then
    \begin{align}
        \label{eq:Gnorm_bound2}
        \ppnorm{G}{T}
        &\leq
        \frac{2}{\sqrt{3}}
        \maxnorm{G_{0,0}}.
    \end{align}
\end{proposition}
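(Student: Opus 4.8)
The plan is to feed the fixed point equation \eqref{eq:Gevo}, $G=\mathcal{I}_\tau[G,G]$, into the difference estimate of Lemma \ref{lem:J_estim} and then close the resulting scalar integral inequality by a Bernoulli-type comparison argument. Throughout write $m=\maxnorm{G_{0,0}}$, $g(T)=\ppnorm{G}{T}$ and $\Gamma(T)=\int_0^T g(t)\,\rmd t$; note that $g$ is non-decreasing, finite on $[0,\tmax]$ because $G\in\CT{\tmax}$, and that $g(0)=m$, $\Gamma(0)=0$.

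First I would record a bound on the single $\mathcal{J}$-block. Applying Lemma \ref{lem:J_estim} with $\widetilde{\tau}=\tau$ and $\widetilde{G}=\widetilde{\mathcal{G}}=0$: bilinearity gives $\mathcal{J}_{\widetilde{\tau}}[\widetilde{G},\widetilde{\mathcal{G}}]=0$, and the error term vanishes since $\mathcal{E}_\pp(\lambda,\tau,\tau,\cdot)=0$ by Definition \ref{def:error}(a), so that
\[
\maxnorm{\mathcal{J}_\tau[G,G]_{s_1,s_2}}\le\frac{c_{\pp,2}}{4}M_V\int_0^{s_2}\maxnorm{G_{t,t}}\,\maxnorm{G_{s_1,t}}\,\rmd t
\]
for all $s_1,s_2\in[0,\tmax]$; the hypothesis $L\ge L_\pp(\lambda,\max(\tau,\widetilde{\tau})(\tmax,0))$ of Lemma \ref{lem:J_estim} is precisely the one assumed here, because $\tau(\tmax,0)$ equals $\tmax$ in the first case and $T_0$ in the second. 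Next, fix $T\in[0,\tmax]$ and $t_1,t_2\in[0,T]$; evaluating $G=\mathcal{I}_\tau[G,G]$ through \eqref{eq:I} at a point $X=(R,k',u,\sigma,x)\in\xl$, multiplying by $\Phi^d(R,x)$ and taking the supremum over $\xl$, I would bound the (at most four) $\mathcal{J}$-terms appearing in \eqref{eq:I} as follows: the scalar prefactor $\rme^{-\ci2\pi k'\cdot x}$ has modulus one and, because $\Phi$ is symmetric in its lattice argument, $\Phi^d(R,\sigma x)=\Phi^d(R,x)$, so each term is controlled by $\maxnorm{\mathcal{J}_\tau[G,G]_{s_1,s_2}}$ for an appropriate pair $(s_1,s_2)$ with $s_1,s_2\in\{0,t_1,t_2\}$. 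Inside the time integral of the bound just derived I would use $\maxnorm{G_{t,t}}\le g(t)$ and pull $\maxnorm{G_{s_1,t}}\le g(T)$ out, so that each such integral is at most $g(T)\Gamma(T)$; summing the $\mathcal{J}$-terms (each carrying a factor $\tfrac12$) and the leading term $\mathcal{G}^{(n)}_{0,0}(X)=G^{(n)}_{0,0}(X)$ gives
\[
\maxnorm{G_{t_1,t_2}}\le m+\frac{c_{\pp,2}}{2}M_V\,g(T)\,\Gamma(T),
\]
and then the supremum over $t_1,t_2\in[0,T]$ yields the closed inequality $g(T)\bigl(1-\tfrac{c_{\pp,2}}{2}M_V\Gamma(T)\bigr)\le m$.

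The last step is the comparison argument. On the set where $\tfrac{c_{\pp,2}}{2}M_V\Gamma(T)<1$ we get $\Gamma'(T)=g(T)\le m\bigl(1-\tfrac{c_{\pp,2}}{2}M_V\Gamma(T)\bigr)^{-1}$ (a.e., since $\Gamma$ is absolutely continuous and $g$ monotone), hence $\Gamma'(T)\bigl(1-\tfrac{c_{\pp,2}}{2}M_V\Gamma(T)\bigr)\le m$, and integrating from $0$ to $T$ gives $\Gamma(T)-\tfrac{c_{\pp,2}}{4}M_V\Gamma(T)^2\le mT$. A short continuity/bootstrap argument shows that this forces $\Gamma(T)$ to stay on the lower branch of the corresponding quadratic and keeps $\tfrac{c_{\pp,2}}{2}M_V\Gamma(T)$ strictly below one as long as $c_{\pp,2}M_VmT<1$; there one reads off $\tfrac{c_{\pp,2}}{2}M_V\Gamma(T)\le 1-\sqrt{1-c_{\pp,2}M_VmT}$, and substituting back into $g(T)\le m\bigl(1-\tfrac{c_{\pp,2}}{2}M_V\Gamma(T)\bigr)^{-1}$ gives exactly \eqref{eq:Gnorm_bound0}. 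Then \eqref{eq:Gnorm_bound1} follows because $T\le(2c_{\pp,2}M_Vm)^{-1}$ makes $1-c_{\pp,2}M_VmT\ge\tfrac12$, and \eqref{eq:Gnorm_bound2} follows because $T\le(c_{\pp,2}M_V(1+m)^2)^{-1}$ makes $c_{\pp,2}M_VmT\le m/(1+m)^2\le\tfrac14$, so $1-c_{\pp,2}M_VmT\ge\tfrac34$.

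The main obstacle is the choice in the second paragraph of pulling $\maxnorm{G_{s_1,t}}$ out of the time integral as $g(T)$ rather than keeping a pointwise-in-$t$ factor: the operator $\mathcal{J}_\tau$ evaluates its second argument at the ``off-diagonal'' time pair $(s_1,t)$ with $s_1$ possibly exceeding $t$, so a bound of the shape $g(T)\le m+c\int_0^T g(t)^2\,\rmd t$ (which would only yield $g(T)\le m(1-cmT)^{-1}$) is not available; one is instead led to the quadratic-in-$\Gamma$ inequality, and it is precisely this that produces the square-root blow-up $m(1-c_{\pp,2}M_VmT)^{-1/2}$. The accompanying bookkeeping — matching the $L$-hypotheses of Lemma \ref{lem:J_estim}, the $\Phi$-symmetry needed for the $\widetilde{X}$-terms, and the branch selection in the quadratic comparison — is routine.
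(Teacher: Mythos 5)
Your proposal is correct and follows essentially the same route as the paper: bound $\mathcal{J}_\tau[G,G]$ via Lemma \ref{lem:J_estim} with $\widetilde\tau=\tau$, $\widetilde G=\widetilde{\mathcal{G}}=0$, feed this through \eqref{eq:I} to obtain $g(T)\le m+\tfrac12 c_{\pp,2}M_V\,g(T)\Gamma(T)$ with $\Gamma(T)=\int_0^T g$, derive the quadratic inequality $\Gamma(T)-\tfrac14 c_{\pp,2}M_V\Gamma(T)^2\le mT$, select the lower branch by continuity, and substitute back. The only cosmetic divergence is that the paper gets the quadratic inequality for $\Gamma$ by Fubini on the iterated integral, whereas you integrate the differential inequality $\Gamma'(1-\tfrac12 c_{\pp,2}M_V\Gamma)\le m$; note this integration step in fact holds unconditionally (since $g(1-\tfrac12 c_{\pp,2}M_V\Gamma)\le m$ follows directly by rearranging, with no sign condition needed), which would let you drop the preliminary bootstrap you invoke before integrating.
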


\begin{proof}
    By Lemma \ref{lem:J_estim}, we have that 
    \begin{align}
        \|\mathcal{J_\tau}[G,G]_{t_1,t_2}\|_{\max}
        \leq
        \frac{c_{\pp,2} M_V}{4}
        \int_0^{t_2}
        \rmd t
        \|
        G_{t,t}
        \|_{\max}
        \|
        G_{t_1,t}
        \|_{\max}.
    \end{align}
    If $G$ satisfies the evolution equation \eqref{eq:Gevo}, then 
    \begin{align*}
        \|
        G_{t_1,t_2}
        \|_{\max}
        &\leq
        \|
        G_{0,0}
        \|_{\max}
        +
        \frac{c_{\pp,2} M_V}{8}
        \int_0^{t_2}
        \rmd t
        \Big(
        \|
        G_{t,t}
        \|_{\max}
        \|
        G_{0,t}
        \|_{\max}
        +
        \|
        G_{t,t}
        \|_{\max}
        \|
        G_{t_1,t}
        \|_{\max}
        \Big)
        \\
            &+
            \frac{c_{\pp,2} M_V}{8}
        \int_0^{t_1}
        \rmd t
        \Big(
        \|
        G_{t,t}
        \|_{\max}
        \|
        G_{0,t}
        \|_{\max}
        +
        \|
        G_{t,t}
        \|_{\max}
        \|
        G_{t_2,t}
        \|_{\max}
        \Big)
    \end{align*}
    Let $g(T)\coloneqq \sup_{t_1,t_2\in[0,T]} \| G_{t_1,t_2} \|_{\max}$. Then we get from 
    above that 
    \begin{align}
        \nonumber
        g(T) &\leq
        g(0)
        +
        \frac{c_{\pp,2} M_V}{4}
        \int_0^{T}\rmd t g(t)^2
        +
        \frac{c_{\pp,2} M_V}{4}
        g(T)\int_0^{T}\rmd t g(t)
    \\
    \label{eq:lem-gestim}
             &\leq
        g(0)
        +
        \frac{c_{\pp,2} M_V}{2}
        g(T)\int_0^{T}\rmd t g(t)
    \end{align}

    Integrating \eqref{eq:lem-gestim}, we get the following quadratic
    inequality for $h(s)\coloneqq \int_0^s\rmd Tg(T)$
    \begin{align*}
        h(s)\leq g(0)s+\frac{c_{\pp,2} M_V}{2} \int_0^s\rmd T g(T)\int_0^T\rmd tg(t) 
        = g(0)s+\frac{c_{\pp,2} M_V}{4}h(s)^2,
    \end{align*}
    where we used Fubini to compute $\int_0^s\rmd T g(T)\int_0^T\rmd tg(t)=\int_0^s\rmd tg(t)\int_t^s\rmd Tg(T)$.
    This means that for $s$ satisfying $0\leq s\leq \frac{1}{c_{\pp,2} M_Vg(0)}$, we have either
    \begin{align*}
        h(s)\leq \frac{2}{c_{\pp,2} M_V}\left(1-\sqrt{1-c_{\pp,2} M_Vg(0)s}\right)=:r_-(s),
    \end{align*}
    or
    \begin{align*}
        h(s)\ge \frac{2}{c_{\pp,2} M_V}\left(1+\sqrt{1-c_{\pp,2} M_Vg(0)s}\right)=:r_+(s).
    \end{align*}
    From the fact that $h(0)=r_-(0)=0$ together with continuity of $h$ and that $r_+(s)>r_-(s)\ge0$
    for all $0\leq s<\frac{1}{c_{\pp,2} M_Vg(0)}$, we deduce that $h(s)\leq r_-(s)$ for all 
    $0\leq s\leq \frac{1}{c_{\pp,2} M_Vg(0)}$. Using this with \eqref{eq:lem-gestim}
    gives
    \begin{align}
        g(T) &\leq
    \frac{g(0)}{\sqrt{1-c_{\pp,2} M_Vg(0)T}}
    \end{align}
    for $T<\frac{1}{c_{\pp,2} M_Vg(0)}$. This proves \eqref{eq:Gnorm_bound0}, and hence 
    \eqref{eq:Gnorm_bound1} follows. Finally, \eqref{eq:Gnorm_bound2} follows from 
    the fact that $\sup_{x>0} x/(1+x)^2=1/4$.
\end{proof}

\begin{proposition}
    \label{prop:Gfixed_point} 
    Take $d\ge3$ and $\pp\in(0,1-2/d)$.
    Take $G_\text{in}\in \Cmax$. Let
    \[
        \tmax<\frac{1}{c_{\pp,2} M_V(1+\maxnorm{\gin})^2}.
    \]
    Suppose $\tau(T,t)=T-t$ and
    $L\ge L_\pp(\lambda,\tmax)$, or $\tau(T,t)=T_0$ for some $T_0>0$ and $L\ge
    L_\pp(\lambda,T_0)$. Then there exists a  unique solution
    $G\in\CT{\tmax}$ to \eqref{eq:Gevo} with $G_{0,0}=G_\text{in}$.
\end{proposition}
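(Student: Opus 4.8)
The plan is a Banach fixed-point argument on a small ball in $\CT{\tmax}$, with the a priori bound of Proposition~\ref{prop:Ggronwall} used only to upgrade uniqueness to the whole space. Write $m\coloneqq\maxnorm{\gin}$; the hypothesis $\tmax<\frac{1}{c_{\pp,2}M_V(1+m)^2}$ gives $c_{\pp,2}M_V\tmax\,m<\frac{m}{(1+m)^2}\le\frac14$. Since every $\mathcal{J}_\tau$-term in the definition~\eqref{eq:I} of $\mathcal{I}_\tau$ carries the integral $\int_0^{t_2}\rmd t$ (resp.\ $\int_0^{t_1}$), we have $\mathcal{J}_\tau[G,\mathcal{G}]_{t_1,0}=0$, hence $\mathcal{I}_\tau[G,\mathcal{G}]_{0,0}^{(n)}=\mathcal{G}_{0,0}^{(n)}$; therefore the set $\mathcal{A}\coloneqq\{G\in\CT{\tmax}\colon G_{0,0}=\gin\}$ is invariant under $\mathcal{P}\colon G\mapsto\mathcal{I}_\tau[G,G]$, and a fixed point of $\mathcal{P}$ in $\mathcal{A}$ is exactly a solution of~\eqref{eq:Gevo} with $G_{0,0}=\gin$. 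Because $G\mapsto G_{0,0}$ is bounded from $\CT{\tmax}$ into $\Cmax$, the set $\mathcal{A}$ is closed, so $\mathcal{A}\cap\bar B_\rho$ (with $\bar B_\rho\coloneqq\{G\colon\ppnorm{G}{\tmax}\le\rho\}$) is a complete metric space; it is nonempty, containing the constant-in-time function $X\mapsto\gin(X)$ as soon as $\rho\ge m$.

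The estimates for $\mathcal{P}$ come from Lemma~\ref{lem:J_estim} with $\widetilde\tau=\tau$, so that $\mathcal{E}_\pp$ vanishes; the assumption on $L$ in Proposition~\ref{prop:Gfixed_point} is precisely the one Lemma~\ref{lem:J_estim} needs in that case, since $\tau(\tmax,0)=\tmax$ (resp.\ $T_0$). Taking the second argument pair to be $0$ gives $\maxnorm{\mathcal{J}_\tau[G,\mathcal{G}]_{t_1,t_2}}\le\tfrac{c_{\pp,2}}{4}M_V\int_0^{t_2}\maxnorm{G_{t,t}}\maxnorm{\mathcal{G}_{t_1,t}}\rmd t\le\tfrac{c_{\pp,2}}{4}M_V\tmax\ppnorm{G}{\tmax}\ppnorm{\mathcal{G}}{\tmax}$, and the full bilinear bound gives, for $G,\widetilde G\in\bar B_\rho$, $\maxnorm{\mathcal{J}_\tau[G,G]_{t_1,t_2}-\mathcal{J}_\tau[\widetilde G,\widetilde G]_{t_1,t_2}}\le\tfrac{c_{\pp,2}}{2}M_V\tmax\rho\,\ppnorm{G-\widetilde G}{\tmax}$. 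Summing the at most four $\mathcal{J}$-terms of~\eqref{eq:I}, each carrying a factor $\tfrac12$, and using that $\Phi$ is even in its spatial argument and $|e^{-\ci2\pi k'\cdot x}|=1$, so that the involution $X\mapsto\widetilde X$ of $\xl$ leaves $\phinorm{\cdot}$ invariant, we obtain
\[
 \ppnorm{\mathcal{P}(G)}{\tmax}\le m+\tfrac{c_{\pp,2}}{2}M_V\tmax\ppnorm{G}{\tmax}^2,\qquad
 \ppnorm{\mathcal{P}(G)-\mathcal{P}(\widetilde G)}{\tmax}\le c_{\pp,2}M_V\tmax\rho\,\ppnorm{G-\widetilde G}{\tmax},
\]
the latter for $G,\widetilde G\in\mathcal{A}\cap\bar B_\rho$, where the constant terms $\gin$ cancel in the difference.

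Choosing $\rho\coloneqq2m$, the self-mapping requirement $m+\tfrac{c_{\pp,2}}{2}M_V\tmax(2m)^2\le2m$ reduces to $c_{\pp,2}M_V\tmax\,m\le\tfrac12$ and the contraction requirement to $2c_{\pp,2}M_V\tmax\,m<1$, both of which follow from $c_{\pp,2}M_V\tmax\,m<\tfrac14$. By Banach's fixed-point theorem on the complete space $\mathcal{A}\cap\bar B_{2m}$, $\mathcal{P}$ has there a unique fixed point $G$, which is a solution of~\eqref{eq:Gevo} with $G_{0,0}=\gin$. Finally, if $\widehat G\in\CT{\tmax}$ is any solution of~\eqref{eq:Gevo} with $\widehat G_{0,0}=\gin$, then Proposition~\ref{prop:Ggronwall} applies (its hypothesis for~\eqref{eq:Gnorm_bound2} is implied by $\tmax<\frac{1}{c_{\pp,2}M_V(1+m)^2}$) and yields $\ppnorm{\widehat G}{\tmax}\le\tfrac{2}{\sqrt3}m<2m$; hence $\widehat G\in\mathcal{A}\cap\bar B_{2m}$ and $\widehat G=G$, giving uniqueness in all of $\CT{\tmax}$.

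All the substantive analysis — the taming of the oscillatory $s$-integral through the weight $\Phi^d$ — is upstream, in Lemmas~\ref{lem:3Phi_2int_bound}, \ref{lem:int_over_weight} and especially \ref{lem:J_estim}. What remains here is bookkeeping: correctly accounting for the four $\mathcal{J}$-terms and the $\widetilde X$-symmetry in~\eqref{eq:I}, and verifying that the radius $\rho=2m$ is at once large enough to contain, via Proposition~\ref{prop:Ggronwall}, every solution with the given initial data and small enough to make $\mathcal{P}$ a contraction on $\mathcal{A}\cap\bar B_{2m}$. That both can hold simultaneously is exactly what the factor $(1+m)^2$ in the bound on $\tmax$ secures; this balance of a small enough fixed-point ball against a large enough a priori ball is the only genuinely delicate point.
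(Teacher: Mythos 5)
Your proposal is correct and follows essentially the same strategy as the paper's proof: a Banach fixed-point argument on a closed ball in $\CT{\tmax}$, using Lemma~\ref{lem:J_estim} with $\widetilde{\tau}=\tau$ for the contraction and self-mapping bounds, and Proposition~\ref{prop:Ggronwall} to upgrade uniqueness from the ball to all of $\CT{\tmax}$. The only cosmetic difference is your choice of ball ($\{G_{0,0}=\gin\}\cap\bar B_{2m}$, centered at the origin with radius $2\maxnorm{\gin}$) versus the paper's ball of radius $1$ around the constant extension of $\gin$; both choices are made to work by the same factor $(1+\maxnorm{\gin})^2$ in the hypothesis on $\tmax$, and your bookkeeping of the four $\mathcal{J}$-terms, the $\tfrac12$ factors, and the $\Phi$-invariance under $X\mapsto\widetilde X$ matches the paper's.
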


\begin{proof}
    Let 
    $\mathscr{B}=\{G\in\CT{\tmax}\ : \ \ppnorm{G-\gin}{\tmax}\leq1, \ G_{0,0}=\gin \}$, where in 
    $\ppnorm{G-\gin}{\tmax}$ we have identified $\gin$ with its constant extension, i.e. the function  $g\in\CT{\tmax}$ that satisfies $g_{t_1,t_2}=\gin$ for all $(t_1,t_2)\in[0,\tmax]^2$.
    Note that $\mathscr{B}$ is a complete metric space, since it is a closed subset 
    of the Banach space $\CT{\tmax}$. We start by proving the existence and uniqueness 
    of a solution on $\mathscr{B}$.
    This is going to be a standard Banach fixed--point argument. 

    Note that $\ppnorm{G}{\tmax}\leq 1
    +\maxnorm{\gin}$ for each $G\in \mathscr{B}$. Hence, Lemma \ref{lem:J_estim} implies
    \\
    $\ppnorm{\mathcal{J}_\tau[G,G]-\mathcal{J}_\tau[\widetilde{G},\widetilde{G}]}{\tmax} 
    \leq \frac{c_{\pp,2}}{2} M_V(1+\maxnorm{\gin})\tmax\ppnorm{G-\widetilde{G}}{\tmax}$ for $G,\widetilde{G}\in 
    \mathscr{B}$. Then, triangle inequality implies 
    \begin{align}
        \label{eq:Gcontraction}
        \ppnorm{\mathcal{I}_{\tau}[G,G]-\mathcal{I}_{{\tau}}[\widetilde{G},\widetilde{G}]}{\tmax}
        &\leq
        c_{\pp,2} M_V(1+\maxnorm{\gin})\tmax
        \ppnorm{G-\widetilde{G}}{\tmax}
    \end{align}
    for all $G,\widetilde{G}\in \mathscr{B}$. This proves that $\mathcal{I}_\tau$ is a
    contraction since $\tmax<
    \frac{1}{c_{\pp,2} M_V(1+\maxnorm{\gin})^2}$.

    Definition of $\mathcal{I}_\tau$, together with the triangle inequality
    and Lemma \ref{lem:J_estim} imply
    \begin{align}
        \label{eq:Guniqueness}
    \ppnorm{\mathcal{I}_\tau[G,G]-\gin}{\tmax}
    \leq \frac{1}{2}4\ppnorm{\mathcal{J}_\tau[G,G]}{\tmax}
    \leq \frac{c_{\pp,2}}{2} M_V\ppnorm{G}{\tmax}^2\tmax
    <
    \frac{\ppnorm{G}{\tmax}^2}{2(1+\maxnorm{\gin})^2}
    \end{align}
    for all $G\in\CT{\tmax}$ satisfying $G_{0,0}=\gin$.
    Hence, for all $G\in \mathscr{B}$ there holds
    $\ppnorm{\mathcal{I}_\tau[G,G]-\gin}{\tmax}\leq 1$.  This proves that
    $\mathcal{I}_\tau$ maps $\mathscr{B}$ to itself, since
    $\mathcal{I}_\tau[G,G]_{0,0}=G_{0,0}$ by definition of $\mathcal{I}_\tau$. Thus we
    get from Banach fixed point theorem that on $\mathscr{B}$ there exists a
    unique solution
    to \eqref{eq:Gevo} with the initial condition $\gin$.
    
    Suppose that $G\in\CT{\tmax}$, is a solution to \eqref{eq:Gevo} with 
    $G_{0,0}=\gin$. Then $G=\mathcal{I}_\tau[G,G]$, and so, \eqref{eq:Guniqueness}
    together with Proposition \ref{prop:Ggronwall}
    imply
    \begin{align*}
        \ppnorm{G-\gin}{\tmax}
        &<
    \frac{1}{2(1+\maxnorm{\gin})^2}
    \frac{\maxnorm{\gin}^2}{1-\maxnorm{\gin}\frac{1}{(1+\maxnorm{\gin})^2}}
        \leq 1.
    \end{align*}
    Hence, $G\in \mathscr{B}$. This proves uniqueness on the whole space 
    $\CT{\tmax}$.
\end{proof}

The following Proposition \eqref{prop:Gincontinuity} finishes the well-posedness 
of the fixed point equation \eqref{eq:Gevo} by proving that its solutions are 
continuous in  the initial data with respect to the $\maxnorm{\cdot}$ norm.
\begin{proposition}
     \label{prop:Gincontinuity} 
    Take $d\ge3$ and $\pp\in(0,1-2/d)$.
     Let $\lambda>0$ and $\tmax>0$. Suppose that $\tau\colon\R_+^2\to\R_+$ 
    satisfy one of the conditions \eqref{eq:taudiff} or \eqref{eq:tauconst}. Let $L\ge
    L_\pp(\lambda,\tau(\tmax,0))$. 
    Suppose  $G,\widetilde{G}\in\CT{\tmax}$ are solutions to \eqref{eq:Gevo}.
    Let $T\in[0,\tmax]$. Then 
      \begin{align}
         \label{eq:Gincontinuity1}
         \sup_{t_1,t_2\in [0,T]}
         \maxnorm{G_{t_1,t_2}-\widetilde{G}_{t_1,t_2}}
         \leq 
         \frac{
         \maxnorm{G_{0,0}-\widetilde{G}_{0,0}}}
         {1-(\frac{1}{2}c_{\pp,2}M_V \ppnorm{G}{T})T}
         \rme^{\frac{T\ppnorm{\widetilde{G}}{T}}{1-(\frac{1}{2}c_{\pp,2}M_V \ppnorm{G}{T})T}}
     \end{align}
     whenever $(\frac{1}{2}c_{\pp,2} M_V\ppnorm{G}{T})T<1$, e.g. for $T< \frac{2}{c_{\pp,2} M_V\ppnorm{G}{\tmax}}$. In 
     particular, if $T\leq\frac{1}{c_{\pp,2} M_V(1+\maxnorm{G_{0,0}})^2}$, then
     \begin{align}
         \label{eq:Gincontinuity2}
         \ppnorm{G-\widetilde{G}}{T} \leq 
         \frac{6}{5}\maxnorm{G_{0,0}-\widetilde{G}_{0,0}}
         \rme^{\frac{6}{5}T\ppnorm{\widetilde{G}}{T}}.
     \end{align}
 \end{proposition}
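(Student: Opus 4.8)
The plan is to derive a scalar Gr\"onwall inequality for
\[
    g(T):=\sup_{t_1,t_2\in[0,T]}\maxnorm{G_{t_1,t_2}-\widetilde G_{t_1,t_2}},
\]
following the scheme of Proposition~\ref{prop:Ggronwall} with Lemma~\ref{lem:J_estim} as the bilinear workhorse. Note first that $g(T)\le\ppnorm{G}{\tmax}+\ppnorm{\widetilde G}{\tmax}<\infty$ and that $g$ is nondecreasing and bounded on $[0,\tmax]$, hence Riemann integrable, so the integrals below make sense.

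First I would subtract the two fixed-point equations: since $G=\mathcal{I}_\tau[G,G]$ and $\widetilde G=\mathcal{I}_\tau[\widetilde G,\widetilde G]$, the definition \eqref{eq:I} of $\mathcal{I}_\tau$ expresses $G_{t_1,t_2}-\widetilde G_{t_1,t_2}$ as $G_{0,0}-\widetilde G_{0,0}$ plus a weighted sum (coefficients $\tfrac12$, and for $n=2$ also the unimodular factor $e^{-\ci2\pi k'\cdot x}$) of differences $\mathcal{J}_\tau[G,G]_{s_1,s_2}-\mathcal{J}_\tau[\widetilde G,\widetilde G]_{s_1,s_2}$, evaluated either at $X=(R,k',u,\sigma,x)$ or at the reflected point $\widetilde X=(R,-\sigma k',\sigma(u-k'),\sigma,\sigma x)$, with $(s_1,s_2)$ running over $(0,t_2),(t_1,t_2)$ and (when $n=2$) $(0,t_1),(t_2,t_1)$. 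Because $|e^{-\ci2\pi k'\cdot x}|=1$, because $\Phi^d$ is symmetric in its lattice variable (so $\Phi^d(R,\sigma x)=\Phi^d(R,x)$ and $\widetilde X\in\xl$), and because $\maxnorm{\cdot}$ is a supremum over all of $\xl$, passing to the norm gives
\[
    \maxnorm{G_{t_1,t_2}-\widetilde G_{t_1,t_2}}\le\maxnorm{G_{0,0}-\widetilde G_{0,0}}+\tfrac12\sum_{(s_1,s_2)}\maxnorm{\mathcal{J}_\tau[G,G]_{s_1,s_2}-\mathcal{J}_\tau[\widetilde G,\widetilde G]_{s_1,s_2}}.
\]

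Next I would bound each $\mathcal{J}$-difference via Lemma~\ref{lem:J_estim}, taking $\widetilde\tau=\tau$ so that the error term $\mathcal{E}_\pp(\lambda,\tau,\tau,\cdot)$ vanishes, and taking the variable slots to be $\mathcal{G}=G$, $\widetilde{\mathcal{G}}=\widetilde G$. This gives, for each pair, the bound $\tfrac14c_{\pp,2}M_V\int_0^{s_2}\rmd t\,\bigl(\maxnorm{\widetilde G_{s_1,t}}\,\maxnorm{(G-\widetilde G)_{t,t}}+\maxnorm{G_{t,t}}\,\maxnorm{(G-\widetilde G)_{s_1,t}}\bigr)$. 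I would then use $\maxnorm{\widetilde G_{s_1,t}}\le\ppnorm{\widetilde G}{T}$, $\maxnorm{G_{t,t}}\le\ppnorm{G}{T}$, the ``diagonal'' observation $\maxnorm{(G-\widetilde G)_{t,t}}\le g(t)$ (the pair $(t,t)$ is admissible in the supremum defining $g(t)$), and $\maxnorm{(G-\widetilde G)_{s_1,t}}\le g(T)$ for $s_1,t\le T$. Summing the contributions and taking the supremum over $t_1,t_2\in[0,T]$ yields a scalar inequality of Gr\"onwall type,
\[
    g(T)\ \le\ \maxnorm{G_{0,0}-\widetilde G_{0,0}}\ +\ \bigl(\tfrac12c_{\pp,2}M_V\,\ppnorm{G}{T}\bigr)T\,g(T)\ +\ \bigl(\tfrac12c_{\pp,2}M_V\,\ppnorm{\widetilde G}{T}\bigr)\!\int_0^T\! g(t)\,\rmd t .
\]

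Finally I would solve this as in the proof of Proposition~\ref{prop:Ggronwall}: the inequality holds for every $T'\le T$ with the same constants (since $\ppnorm{\cdot}{\,\cdot\,}$ is nondecreasing in its time parameter), so whenever $(\tfrac12c_{\pp,2}M_V\ppnorm{G}{T})T<1$ — which holds in particular for $T<\tfrac{2}{c_{\pp,2}M_V\ppnorm{G}{\tmax}}$ — one moves the $g(T')$-term to the left, divides by $1-(\tfrac12c_{\pp,2}M_V\ppnorm{G}{T})T'>0$, and applies the differential form of Gr\"onwall's inequality to $T'\mapsto g(T')$; this gives \eqref{eq:Gincontinuity1}. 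The ``in particular'' bound \eqref{eq:Gincontinuity2} follows by specializing $T\le\tfrac{1}{c_{\pp,2}M_V(1+\maxnorm{G_{0,0}})^2}$ and invoking \eqref{eq:Gnorm_bound2}: it yields $\ppnorm{G}{T}\le\tfrac{2}{\sqrt3}\maxnorm{G_{0,0}}$, hence $(\tfrac12c_{\pp,2}M_V\ppnorm{G}{T})T\le\tfrac{1}{\sqrt3}\,\tfrac{\maxnorm{G_{0,0}}}{(1+\maxnorm{G_{0,0}})^2}\le\tfrac{1}{4\sqrt3}$, so $(1-(\tfrac12c_{\pp,2}M_V\ppnorm{G}{T})T)^{-1}\le\tfrac65$ and \eqref{eq:Gincontinuity1} collapses to \eqref{eq:Gincontinuity2}. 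The step I expect to be the main obstacle is the penultimate one: the estimate of Lemma~\ref{lem:J_estim} returns the difference $\maxnorm{(G-\widetilde G)_{s_1,t}}$ at an off-diagonal pair with $s_1$ possibly larger than $t$, so this contribution cannot be absorbed into the time-integral $\int_0^T g$ and must be carried as a $g(T)$-term on the left-hand side — and it is moreover multiplied by $\ppnorm{G}{T}$, not by $\ppnorm{\widetilde G}{T}$; tracking which of the two norms ends up in the denominator of \eqref{eq:Gincontinuity1} and which becomes the exponential kernel, and combining this with the monotonicity argument that makes the divided inequality valid on the whole interval $[0,T]$, is the only genuinely delicate bookkeeping here.
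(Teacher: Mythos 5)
Your proof follows the paper's line for line: subtract the two fixed-point identities $G=\mathcal{I}_\tau[G,G]$ and $\widetilde G=\mathcal{I}_\tau[\widetilde G,\widetilde G]$, control the four $\mathcal{J}_\tau$-differences with Lemma~\ref{lem:J_estim} (with $\widetilde\tau=\tau$ so $\mathcal{E}_\pp$ vanishes), arrive at the scalar inequality
\[
g(T)\ \le\ \maxnorm{G_{0,0}-\widetilde G_{0,0}}\ +\ \tfrac{1}{2} c_{\pp,2}M_V\ppnorm{\widetilde G}{T}\int_0^T g(t)\,\rmd t\ +\ \tfrac{1}{2} c_{\pp,2}M_V\ppnorm{G}{T}\,T\,g(T),
\]
solve for $g(T)$, apply linear Gr\"onwall, and get \eqref{eq:Gincontinuity2} from \eqref{eq:Gnorm_bound2} together with $\sup_{x>0}x/(1+x)^2=\tfrac{1}{4}$. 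Your remarks about the $n=2$ reflection $\widetilde X$, the unimodular prefactor, and the symmetry $\Phi^d(R,\sigma x)=\Phi^d(R,x)$ are exactly what makes the passage to $\maxnorm{\cdot}$ work, and your observation about which difference must be held as a $g(T)$ term rather than absorbed into the integral is the right bookkeeping point.

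One thing you should not have let slide, though: your closing claim that Gr\"onwall ``gives \eqref{eq:Gincontinuity1}'' is not what the computation yields, and the discrepancy exposes a typo in the paper. After dividing by $1-(\tfrac{1}{2} c_{\pp,2}M_V\ppnorm{G}{T})T$, the Gr\"onwall kernel is $\tfrac{1}{2} c_{\pp,2}M_V\ppnorm{\widetilde G}{T}/(1-(\tfrac{1}{2} c_{\pp,2}M_V\ppnorm{G}{T})T)$, so the exponent actually delivered is $\tfrac{1}{2} c_{\pp,2}M_V\,T\ppnorm{\widetilde G}{T}/(1-(\tfrac{1}{2} c_{\pp,2}M_V\ppnorm{G}{T})T)$, not the $T\ppnorm{\widetilde G}{T}/(1-(\tfrac{1}{2} c_{\pp,2}M_V\ppnorm{G}{T})T)$ printed in \eqref{eq:Gincontinuity1}. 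The paper silently drops the $\tfrac{1}{2} c_{\pp,2}M_V$ prefactor from the $\ppnorm{\widetilde G}{T}\int_0^T\ppnorm{G-\widetilde G}{t}\,\rmd t$ term at its ``solving for'' display, and the factor is then also missing from the exponent of \eqref{eq:Gincontinuity2}, which should read $\rme^{\frac{3}{5} c_{\pp,2}M_V T\ppnorm{\widetilde G}{T}}$. Since $c_{\pp,2}$ is a large combinatorial constant, this is not a cosmetic rescaling: as printed, \eqref{eq:Gincontinuity1}--\eqref{eq:Gincontinuity2} are strictly stronger than what the argument establishes. Your intermediate scalar inequality is the correct one; carry the $\tfrac{1}{2} c_{\pp,2}M_V$ factor through rather than pattern-matching to the printed statement. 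The qualitative consequence used in Theorem~\ref{thm:Wwell_posedness} (continuity in $\win$ with an $L$-dependent constant) is unaffected by the correction.
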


 \begin{proof}
     Suppose $(\frac{1}{2}c_{\pp,2} M_V\ppnorm{G}{T})T<1$.
     By the definition \eqref{eq:I} of $\mathcal{I}$ and Lemma \ref{lem:J_estim} we have 
     that 
     \begin{align*}
         \ppnorm{G-\widetilde{G}}{T}
         &\leq
         \maxnorm{G_{0,0}-\widetilde{G}_{0,0}}
         +\frac{1}{2}c_{\pp,2}M_V 
         \left( 
             \ppnorm{\widetilde{G}}{T}
             \int_0^T \rmd t 
             \ppnorm{G-\widetilde{G}}{t}
             +
             T
             \ppnorm{{G}}{T}
             \ppnorm{G-\widetilde{G}}{T}
         \right).
     \end{align*}
     Solving for $\ppnorm{G-\widetilde{G}}{T}$ implies 
     \begin{align}
         \ppnorm{G-\widetilde{G}}{T}
         \leq 
         \frac{1}{1-(\frac{1}{2}c_{\pp,2}M_V \ppnorm{G}{T})T}
         \left( 
         \maxnorm{G_{0,0}-\widetilde{G}_{0,0}}
         +
             \ppnorm{\widetilde{G}}{T}
             \int_0^T \rmd t 
             \ppnorm{G-\widetilde{G}}{t}
         \right).
     \end{align}
     By applying linear Gr\"onwall's inequality, we obtain \eqref{eq:Gincontinuity1}.
     Hence, the bound \eqref{eq:Gincontinuity2} follows from
     the upper bound \eqref{eq:Gnorm_bound2} together with 
     \begin{align*}
         \frac{\frac{1}{2}c_{\pp,2} M_V\frac{2}{\sqrt{3}}\maxnorm{G_{0,0}}}{c_{\pp,2} M_V(1+\maxnorm{G_{0,0}})^2}
         \leq 
         \frac{1}{6}.
     \end{align*}
 \end{proof}

\begin{proposition}
    \label{prop:HGfixed_point} 
    Take $d\ge3$ and $\pp\in(0,1-2/d)$.
    Take $\mathcal{G}_\text{in}\in \Cmax$ and $G\in\CT{\tmax}$. Let
    \[
        \tmax<\frac{2}{c_{\pp,2} M_V\ppnorm{G}{\tmax}}.
    \]
    Suppose $\tau(T,t)=T-t$
    and $L\ge L_\pp(\lambda,\tmax)$, or $\tau(T,t)=T_0$ for some $T_0>0$ and
    $L\ge
    L_\pp(\lambda,T_0)$. Then there exists a unique 
    $\mathcal{G}\in\CT{\tmax}$, $\mathcal{G}_{0,0}=\mathcal{G}_\text{in}$ satisfying $\mathcal{G}=\mathcal{I}[G,\mathcal{G}]$.
\end{proposition}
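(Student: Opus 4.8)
The plan is to read this off as a \emph{linear} fixed point problem, in direct parallel with Proposition~\ref{prop:Gfixed_point}, but simpler: since the second argument of $\mathcal{I}_\tau$ enters linearly, only a contraction estimate is required (no separate mapping-into-a-ball argument). First I would fix the underlying complete metric space. Put
\[
  \mathscr{A}\coloneqq\{\,\mathcal{G}\in\CT{\tmax}\ :\ \mathcal{G}_{0,0}=\mathcal{G}_{\text{in}}\,\},
\]
which is nonempty (it contains the constant extension of $\mathcal{G}_{\text{in}}$) and is a closed subset of the Banach space $\CT{\tmax}$, hence complete. Define $\Psi(\mathcal{G})\coloneqq\mathcal{I}_\tau[G,\mathcal{G}]$.

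Next I would check that $\Psi$ maps $\mathscr{A}$ into itself. The key observation is that when the outer time-integral in \eqref{eq:J} has upper limit $0$ it is over $[0,0]$, so $\mathcal{J}_\tau[G,\mathcal{G}]_{t_1,0}^{(n)}=0$; hence evaluating \eqref{eq:I} at $(t_1,t_2)=(0,0)$ gives $\Psi(\mathcal{G})_{0,0}=\mathcal{G}_{0,0}=\mathcal{G}_{\text{in}}$, while finiteness of $\ppnorm{\Psi(\mathcal{G})}{\tmax}$ follows from Lemma~\ref{lem:J_estim} applied with the tilded arguments set to $0$ (its hypothesis on $L$ is exactly the one assumed here, since $\tau(\tmax,0)$ equals $\tmax$ or $T_0$ in the two cases), and continuity is built into $\mathcal{I}_\tau$ taking values in $\CT{\tmax}$.

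For the contraction estimate, take $\mathcal{G},\widetilde{\mathcal{G}}\in\mathscr{A}$. Linearity of $\mathcal{J}_\tau$ in its second slot together with linearity of $\mathcal{G}\mapsto\mathcal{G}_{0,0}$ gives $\Psi(\mathcal{G})-\Psi(\widetilde{\mathcal{G}})=\mathcal{I}_\tau[G,\mathcal{G}-\widetilde{\mathcal{G}}]$ with $(\mathcal{G}-\widetilde{\mathcal{G}})_{0,0}=0$, so by \eqref{eq:I} this difference is a sum of at most four terms $\tfrac12\mathcal{J}_\tau[G,\mathcal{G}-\widetilde{\mathcal{G}}]_{s_1,s_2}$ (possibly with a unimodular prefactor $\rme^{-\ci2\pi k'\cdot x}$ and evaluated at the reflected argument $\widetilde X$, which is harmless because $\Phi^d(R,\sigma x)=\Phi^d(R,x)$). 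Applying Lemma~\ref{lem:J_estim} with $\widetilde\tau=\tau$ and $\widetilde G=G$ annihilates both the $\mathcal{E}_\pp$-term and the $\maxnorm{G-\widetilde G}$-term, leaving
\[
  \maxnorm{\mathcal{J}_\tau[G,\mathcal{G}-\widetilde{\mathcal{G}}]_{s_1,s_2}}
  \le\frac{c_{\pp,2}}{4}M_V\int_0^{s_2}\!\rmd t\,\maxnorm{G_{t,t}}\maxnorm{(\mathcal{G}-\widetilde{\mathcal{G}})_{s_1,t}}
  \le\frac{c_{\pp,2}}{4}M_V\,\tmax\,\ppnorm{G}{\tmax}\,\ppnorm{\mathcal{G}-\widetilde{\mathcal{G}}}{\tmax}.
\]
Taking $\maxnorm{\cdot}$ of $\Psi(\mathcal{G})-\Psi(\widetilde{\mathcal{G}})$ and summing the (at most four) $\tfrac12$-weighted contributions then yields
\[
  \ppnorm{\Psi(\mathcal{G})-\Psi(\widetilde{\mathcal{G}})}{\tmax}
  \le\frac{c_{\pp,2}}{2}M_V\,\tmax\,\ppnorm{G}{\tmax}\,\ppnorm{\mathcal{G}-\widetilde{\mathcal{G}}}{\tmax},
\]
so the hypothesis $\tmax<\tfrac{2}{c_{\pp,2}M_V\ppnorm{G}{\tmax}}$ makes $\Psi$ a strict contraction on $\mathscr{A}$. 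Banach's fixed point theorem then gives a unique $\mathcal{G}\in\mathscr{A}$ with $\mathcal{G}=\mathcal{I}_\tau[G,\mathcal{G}]$ and $\mathcal{G}_{0,0}=\mathcal{G}_{\text{in}}$; uniqueness over all of $\CT{\tmax}$ (with the prescribed initial data) is automatic since any such solution lies in $\mathscr{A}$. I do not expect a real obstacle here — the argument is essentially Lemma~\ref{lem:J_estim} plus the contraction principle — the only points requiring care being the vanishing of $\mathcal{J}_\tau$ when its outer integral collapses (so that $\Psi$ preserves the initial data) and the accounting of the four $\tfrac12$-weighted $\mathcal{J}_\tau$-terms with the reflected-variable substitution, which is what pins the contraction threshold to exactly the stated bound on $\tmax$.
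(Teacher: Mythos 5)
Your proof is correct and follows essentially the same route as the paper: restrict to the closed affine subspace of $\CT{\tmax}$ with fixed initial data, apply Lemma~\ref{lem:J_estim} with $\widetilde\tau=\tau$ and $\widetilde G=G$ to obtain the contraction estimate with constant $\tfrac{1}{2}c_{\pp,2}M_V\tmax\ppnorm{G}{\tmax}$, and invoke the Banach fixed-point theorem. Your extra remarks (that $\mathcal{J}_\tau[G,\cdot]_{t_1,0}=0$ so the initial data is preserved, and that the initial-data term in $\mathcal{I}_\tau$ cancels by linearity when taking differences) are the same facts the paper uses implicitly when it notes that $\mathcal{I}_\tau[G,\cdot]$ maps $\mathscr{B}$ to itself and applies the triangle inequality.
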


\begin{proof}
    Let $\mathscr{B}=\{ \mathcal{G}\in\CT{\tmax}\ : \  \mathcal{G}_{0,0}=\mathcal{G}_\text{in}\}$. Note that
    $\mathscr{B}$ is a complete metric space as a closed subset of the Banach space 
    $\CT{\tmax}$.  
    This is going to be a standard Banach fixed point argument. 
    
    By definition of $\mathscr{B}$ and $\mathcal{I}$, we have that $\mathcal{G}\mapsto \mathcal{I}[G,\mathcal{G}]$
    maps $\mathscr{B}$ to itself.

    Lemma \ref{lem:J_estim} implies
    $\ppnorm{\mathcal{J}_\tau[G,\mathcal{G}]-\mathcal{J}_\tau[{G},\widetilde{\mathcal{G}}]}{\tmax} 
    \leq\frac{1}{4} c_{\pp,2}
    M_V\ppnorm{G}{\tmax}\tmax\ppnorm{\mathcal{G}-\widetilde{\mathcal{G}}}{\tmax}$
    for $\mathcal{G},\widetilde{\mathcal{G}}\in 
    \mathscr{B}$. After this, the triangle inequality implies 
    \begin{align}
        \label{eq:Hcontraction}
        \ppnorm{\mathcal{I}_\tau[G,\mathcal{G}]-\mathcal{I}_\tau[{G},\widetilde{\mathcal{G}}]}{\tmax}
        &\leq
        \frac{1}{2}c_{\pp,2} M_V\ppnorm{G}{\tmax}\tmax
        \ppnorm{\mathcal{G}-\widetilde{\mathcal{G}}}{\tmax}
    \end{align}
    for all $\mathcal{G},\widetilde{\mathcal{G}}\in \mathscr{B}$. This proves that $\mathcal{I}[G,\cdot]$ is a
    contraction, since $\tmax<
    \frac{2}{c_{\pp,2} M_V\ppnorm{G}{\tmax}}$.
\end{proof}

\section{Proofs of the main results}
\label{sec:proofs}

Equipped with the Lemmas and Propositions from Sections \ref{sec:lemmata}--\ref{sec:fixpoint_eqs}, we are ready to prove the main results of the article.

\subsection{Proof of the well-posedness result}

We start by finding a candidate solution by studying the evolution equation 
\eqref{eq:WGevo}.

\begin{proposition}
    \label{prop:WGgronwall}
    Take $d\ge3$ and $\pp\in(0,1-2/d)$.
    Take $\tmax>0$. 
    Suppose $\tau(T,t)=T-t$ and
    $L\ge L_\pp(\lambda,\tmax)$, or $\tau(T,t)=T_0$ for some $T_0>0$ and $L\ge
    L_\pp(\lambda,T_0)$. 
    If the functions $W,\widetilde{W}\in
    C([0,\tmax]\times\duallattice,  \R)$ satisfy the evolution equation \eqref{eq:WGevo},
    respectively,
    with the functions $G, \widetilde{G}\in \CT{\tmax}$, then 
    \begin{align}
        \subnorm{W_T-\widetilde{W}_T}{\infty}
        &\leq
        \left( 
        \subnorm{W_0-\widetilde{W}_0}{\infty}
        +
        c_{3}
        \sobolevnorm{\FT{V}}{\frac{1}{6}}^2
        \int_0^T
        \rmd t
        \subnorm{\widetilde{W}_t}{\infty}
        \ppnorm{G-\widetilde{G}}{t}
        \right) 
        \rme^{c_{3}
        \sobolevnorm{\FT{V}}{\frac{1}{6}}^2
        \int_0^T\rmd t \maxnorm{G_{t,t}}}
    \end{align}
    for all $T\in [0,\tmax]  $. In particular,  then also
    \begin{align}
        \subnorm{W_T}{\infty}
        &\leq
        \subnorm{W_0}{\infty}
        \rme^{c_{3}
        \sobolevnorm{\FT{V}}{\frac{1}{6}}^2
        \int_0^T\rmd t \maxnorm{G_{t,t}}}.
    \end{align}
\end{proposition}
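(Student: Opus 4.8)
The plan is to reduce the whole statement to one pointwise estimate on the bilinear collision operator $\cbiq(\tau_0)$ and then close with Grönwall's inequality.

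\textbf{Step 1 (the key pointwise bound).} I would first show that for every $\tau_0>0$ with $L\ge L_\pp(\lambda,\tau_0)$, every $H\in\Cmax$, every $w\in C(\duallattice,\R)$ and every $k_0\in\duallattice$,
\[
 |\cbiq(\tau_0)[H,w](k_0)|\le c_{3}\,\sobolevnorm{\FT{V}}{1/6}^2\,\maxnorm{H}\,\subnorm{w}{\infty}.
\]
To get this, start from the definition \eqref{eq:cbosg}, take absolute values through the sum over $i=1,\dots,8$ and through the $s$-, $k$- and $(y,y')$-sums, use $|\rme^{\ci\widetilde{K}_i-\ci s(\cdots)}|=1$, bound $|w(k_0-\ind(i=4,5)k)|\le\subnorm{w}{\infty}$, and bound the $H$-factor by $\maxnorm{H}/\Phi^d(\rs{\xi_i}(k),(y+y')\bmod\lattice)$ via $\phinorm{H^{(\widetilde{n}_i)}}\le\maxnorm{H}$ and the definition of $\phinorm{\cdot}$. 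Since the window $[-\lambda^{-2}\tau_0,\lambda^{-2}\tau_0]$ sits inside the admissible $s$-range once $L\ge L_\pp(\lambda,\tau_0)$, Corollary \ref{cor:1Phi_int} bounds $\int|\rmd s|\int\rmd k\,\Phi^d(\rs{\xi_i}(k),\cdot)^{-1}$ by $c_{6}\prod_{j}\mean{((y+y')\bmod\lattice)_j}^{1/6}$; Corollary \ref{cor:modregabs} then trades the modular shift for $2\mean{y_j}\mean{y'_j}$, and Lemma \ref{lem:ineqVconst} with $p=1/6$ carries out the $(y,y')$-sum against $\sup_k|v_i(k,y)||w_i(y')|$. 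Collecting the dimensional factors produces the bound with the constant $c_{3}$ of \eqref{eq:c2n3}.

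\textbf{Step 2 (Grönwall).} Subtract the two instances of \eqref{eq:WGevo} — for $W$ with $G$ and for $\widetilde{W}$ with $\widetilde{G}$ — and telescope the cubic difference bilinearly as $\cbiq(\tau(T,t))[G_{t,t},W_t-\widetilde{W}_t]+\cbiq(\tau(T,t))[G_{t,t}-\widetilde{G}_{t,t},\widetilde{W}_t]$. Taking $\sup_{k_0}$ and applying Step 1 to each piece — using $\tau(T,t)\le\tau_0$ with $\tau_0=\tmax$ in case \eqref{eq:taudiff} and $\tau_0=T_0$ in case \eqref{eq:tauconst}, together with $\maxnorm{G_{t,t}-\widetilde{G}_{t,t}}\le\ppnorm{G-\widetilde{G}}{t}$ — gives
\[
 \subnorm{W_T-\widetilde{W}_T}{\infty}\le a(T)+c_{3}\sobolevnorm{\FT{V}}{1/6}^2\int_0^T\rmd t\,\maxnorm{G_{t,t}}\,\subnorm{W_t-\widetilde{W}_t}{\infty},
\]
where $a(T)=\subnorm{W_0-\widetilde{W}_0}{\infty}+c_{3}\sobolevnorm{\FT{V}}{1/6}^2\int_0^T\rmd t\,\subnorm{\widetilde{W}_t}{\infty}\,\ppnorm{G-\widetilde{G}}{t}$ is non-decreasing in $T$. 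The integral form of Grönwall's inequality with a non-decreasing prefactor then yields the first claimed bound. For the ``in particular'' part it is cleanest to apply Step 1 directly to \eqref{eq:WGevo} for $W$ alone, giving $\subnorm{W_T}{\infty}\le\subnorm{W_0}{\infty}+c_{3}\sobolevnorm{\FT{V}}{1/6}^2\int_0^T\rmd t\,\maxnorm{G_{t,t}}\,\subnorm{W_t}{\infty}$ and finishing with Grönwall; equivalently, take $\widetilde{W}\equiv0$ and $\widetilde{G}\equiv0$, which solve \eqref{eq:WGevo} by bilinearity of $\cbiq$ and make $a(T)=\subnorm{W_0}{\infty}$.

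\textbf{Main obstacle.} The substantive work is entirely in Step 1: one has to verify that the $s$-window of $\cbiq(\tau(T,t))$ really fits the range demanded by Corollary \ref{cor:1Phi_int} under $L\ge L_\pp(\lambda,\tau_0)$, and then keep careful track of the weight $\Phi^d$, the modular shifts $(y+y')\bmod\lattice$, and the potential factors $v_i,w_i$ as they pass through Corollaries \ref{cor:1Phi_int} and \ref{cor:modregabs} and Lemma \ref{lem:ineqVconst}, so that the accumulated constant has the stated form. Everything after Step 1 is a routine linear Grönwall estimate.
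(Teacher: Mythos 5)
Your proof is correct and follows the same route as the paper's: subtract the two instances of \eqref{eq:WGevo}, split the difference of collision terms bilinearly, pull $\subnorm{\cdot}{\infty}$ out of the $W$-argument and $\maxnorm{\cdot}/\Phi^d$ out of the $G$-argument, integrate the remaining $1/\Phi^d$ over $(s,k)$ with Corollary \ref{cor:1Phi_int} (applicable since $\lambda^{-2}\tau(T,t)\le L^\pp/4$ under the lattice-size hypothesis), undo the modular shifts with Corollary \ref{cor:modregabs}, do the $(y,y')$-sums with Lemma \ref{lem:ineqVconst} at $p=1/6$, and close with linear Gr\"onwall, with the second estimate obtained exactly as the paper does by taking $\widetilde{W}=\widetilde{G}=0$. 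The only place in your Step~1 worth a second look is the assertion that ``collecting the dimensional factors produces the constant $c_3$'': the sum over $i=1,\dots,8$ contributes an additional factor $8$, so a careful count gives $8c_3$ rather than $c_3$ — this factor is kept explicitly as $8c_3$ in the paper's proof of Theorem~\ref{thm:error} but appears to be silently dropped both in your accounting and in the paper's own proof of the present Proposition.
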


\begin{proof}
    The evolution equation \eqref{eq:WGevo}, the inequality $\lambda^{-2}\tau(T,t)\leq
    L^{\pp}/4$, Corollary \ref{cor:1Phi_int} for $x=y+y'$, Corollary \ref{cor:modregabs} 
    and Lemma \eqref{lem:ineqVconst} together imply
    \begin{align}
        &|W_T(k_0)-\widetilde{W}_T(k_0)|
        \leq
        |W_0(k_0)-\widetilde{W}_0(k_0)|
        \\
        &+c_{3}
        \sobolevnorm{\FT{V}}{\frac{1}{6}}^2
        \int_0^T\rmd t 
        \left( 
            \subnorm{W_t-\widetilde{W}_t}{\infty}
         \maxnorm{G_{t,t}}
         +
         \subnorm{\widetilde{W}_t}{\infty}
        \maxnorm{G_{t,t}-\widetilde{G}_{t,t}}.
        \right)
    \end{align}
    Recall that $c_{3}= c_{6}2^{\frac{d}{6}}
    (2^{\frac{d}{6}}+3)$. Taking supremum over $k_0\in\duallattice$ and then
    using the standard linear version of Gr\"onwall's lemma leads to the desired result. Note that the bound
    for $\subnorm{W_T}{\infty}$ follows automatically, as one can take
    $\widetilde{W}=0$ and $\widetilde{G}=0$.
\end{proof}

\begin{proposition}
    \label{prop:WGfixed_point}
    Take $d\ge3$ and $\pp\in(0,1-2/d)$.
    Take $\win:\duallattice\to\R$ and\\ 
    $G\in\CT{\tmax}$. Let 
    \[
        \tmax<
        \frac{1}{c_{3}
        \sobolevnorm{\FT{V}}{\frac{1}{6}}^2
        \ppnorm{G}{\tmax}}.
    \]
    Suppose $\tau(T,t)=T-t$ and
    $L\ge L_\pp(\lambda,\tmax)$, or $\tau(T,t)=T_0$ for some $T_0>0$ and $L\ge
    L_\pp(\lambda,T_0)$. 
      Then there exists a unique solution $W\in
    C([0,\tmax]\times\duallattice,  \C)$ to the equation
    \eqref{eq:WGevo} with $G$ and $W_0=W_\text{in}$.
\end{proposition}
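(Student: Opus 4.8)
The plan is a standard Banach fixed--point argument on the complete metric space $C([0,\tmax]\times\duallattice,\C)$ equipped with the sup norm $\subnorm{\cdot}{\infty}$, exploiting that for a fixed $G\in\CT{\tmax}$ the operator $\cbiq(\tau_0)[G_{t,t},\cdot]$ is linear in its second argument, so the right-hand side of \eqref{eq:WGevo} is affine in $W$. Concretely, define
\[
    \Theta[W]_T(k_0)\coloneqq\win(k_0)+\int_0^T\rmd t\,\cbiq(\tau(T,t))[G_{t,t},W_t](k_0).
\]
A fixed point of $\Theta$ is precisely a solution of \eqref{eq:WGevo} with $G$ and $W_0=\win$, so it suffices to show that $\Theta$ maps $C([0,\tmax]\times\duallattice,\C)$ into itself and is a contraction.

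The quantitative input is the pointwise collision bound already produced inside the proof of Proposition \ref{prop:WGgronwall}: using the inequality $\lambda^{-2}\tau(T,t)\le L^{\pp}/4$ (valid since $L\ge L_\pp(\lambda,\cdot)$), Corollary \ref{cor:1Phi_int} applied at $x=y+y'$, Corollary \ref{cor:modregabs}, and Lemma \ref{lem:ineqVconst}, one obtains, for any $\psi\colon\duallattice\to\C$ and all $t,T$,
\[
    \subnorm{\cbiq(\tau(T,t))[G_{t,t},\psi]}{\infty}\le c_{3}\sobolevnorm{\FT{V}}{\frac{1}{6}}^2\maxnorm{G_{t,t}}\subnorm{\psi}{\infty}\le c_{3}\sobolevnorm{\FT{V}}{\frac{1}{6}}^2\ppnorm{G}{\tmax}\subnorm{\psi}{\infty}.
\]
Applying this to $\psi=W_t-\widetilde W_t$ and using linearity of $\cbiq(\tau(T,t))[G_{t,t},\cdot]$ gives, for $W,\widetilde W\in C([0,\tmax]\times\duallattice,\C)$ and $T\in[0,\tmax]$,
\[
    \subnorm{\Theta[W]_T-\Theta[\widetilde W]_T}{\infty}\le c_{3}\sobolevnorm{\FT{V}}{\frac{1}{6}}^2\ppnorm{G}{\tmax}\int_0^T\rmd t\,\subnorm{W_t-\widetilde W_t}{\infty}\le c_{3}\sobolevnorm{\FT{V}}{\frac{1}{6}}^2\ppnorm{G}{\tmax}\,\tmax\sup_{t\in[0,\tmax]}\subnorm{W_t-\widetilde W_t}{\infty},
\]
and the hypothesis $\tmax<\bigl(c_{3}\sobolevnorm{\FT{V}}{\frac{1}{6}}^2\ppnorm{G}{\tmax}\bigr)^{-1}$ makes this contraction constant strictly below $1$.

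It then remains to verify that $\Theta[W]\in C([0,\tmax]\times\duallattice,\C)$. Since $\duallattice$ carries the discrete topology, this reduces to continuity in $T$ of $T\mapsto\int_0^T\cbiq(\tau(T,t))[G_{t,t},W_t](k_0)\rmd t$ for each $k_0$. The integrand is uniformly bounded by the estimate above; in the case $\tau(T,t)=T_0$ it does not depend on $T$, while in the case $\tau(T,t)=T-t$ it depends continuously on $T$ only through the integration limits $\pm\lambda^{-2}(T-t)$ in \eqref{eq:cbosg} (the interval degenerates to a point at $t=T$, where $\cbiq(0)[\cdot,\cdot]=0$, so no boundary contribution survives). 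Dominated convergence then yields continuity in $T$. Hence $\Theta$ is a contraction of a complete metric space, and Banach's fixed--point theorem gives a unique fixed point $W$, which is the unique solution of \eqref{eq:WGevo} with $G$ and $W_0=\win$; uniqueness can alternatively be read directly off Proposition \ref{prop:WGgronwall} by taking $G=\widetilde G$ and $W_0=\widetilde W_0$.

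I do not expect a genuine obstacle here: all the analytic substance of the statement already lives in the propagator and weight estimates of Section \ref{sec:estimates} (and their packaging in Proposition \ref{prop:WGgronwall}), and the only mild point requiring care is the continuity-in-$T$ check for the memory-kernel choice $\tau(T,t)=T-t$, where the $T$-dependent integration limits have to be handled.
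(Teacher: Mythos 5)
Your proof is correct and follows essentially the same route as the paper's: both set up a Banach fixed--point argument in the sup-norm Banach space $C([0,\tmax]\times\duallattice,\C)$, derive the contraction estimate from the same chain of ingredients (the bound $\lambda^{-2}\tau(T,t)\le L^{\pp}/4$, Corollary~\ref{cor:1Phi_int} applied at $x=y+y'$, Corollary~\ref{cor:modregabs}, and Lemma~\ref{lem:ineqVconst}, packaged in the proof of Proposition~\ref{prop:WGgronwall}), and conclude via the hypothesis on $\tmax$. The paper works on the affine subspace $\mathscr{B}=\{W\colon W_0=\win\}$ whereas you work on the full space, but since your operator $\Theta$ always outputs functions with $\Theta[W]_0=\win$ these are the same argument; your explicit verification of continuity in $T$ of $\Theta[W]$ (handling the $T$-dependent integration limits for $\tau(T,t)=T-t$) is a detail the paper leaves implicit, and is a welcome addition.
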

\begin{proof}
    Let $\mathscr{B}=\{W\in C([0,\tmax]\times\duallattice,\C) \ :
    \ W_0=\win \}$. Note that
    $\mathscr{B}$ is a complete metric space as a closed subset of the Banach space
    $C([0,\tmax]\times\duallattice,\C)$ equipped with the norm $
    \sup_{T\in[0,\tmax]}\subnorm{W_T}{\infty}$. We want to prove existence and
    uniqueness on $\mathscr{B}$.  This is going to be a standard Banach fixed point
    argument. 

    Let $\mathcal{Q}: C([0,\tmax]\times\duallattice,\C) \to
    C([0,\tmax]\times\duallattice,\C)$ be the map defined by the right hand side 
    of the evolution equation \eqref{eq:WGevo}. By definition $\mathcal{Q}$
    maps $\mathscr{B}$ to itself. A similar argument as in the proof of Proposition
    \ref{prop:WGgronwall} gives 
         \begin{align}
         \sup_{T\in[0,\tmax]}\subnorm{\mathcal{Q}[W]_T-\mathcal{Q}[W']_T}{\infty}
        &\leq
        c_{3}
        \sobolevnorm{\FT{V}}{\frac{1}{6}}^2
        \ppnorm{G}{\tmax}
        \tmax
         \sup_{T\in[0,\tmax]}
          \subnorm{W_T-W'_T}{\infty}
     \end{align}
     for all $W,W'\in \mathscr{B}$.
     This proves that $\mathcal{Q}$ is a contraction on $\mathscr{B}$, since $\tmax <
     \frac{1}{c_{3}\sobolevnorm{\FT{V}}{\frac{1}{6}}\ppnorm{G}{\tmax}}$.
     Thus the Banach fixed point theorem gives the desired result.
\end{proof}

We are now ready to prove the well-posedness Theorem \ref{thm:Wwell_posedness}.

\begin{proofof}{Theorem \ref{thm:Wwell_posedness}:}
    
    Let $c_{\pp,1} = \max(c_{\pp,2}, \frac{\sqrt{3}}{6}c_{3})$.
    Recall that $M_V=\mv$ and  let 
    \[
        0\leq\tmax< \frac{1}{c_{\pp,1}
        M_V (1+\maxnorm{F[\win]})^2}.
    \]

    Then by Proposition \ref{prop:Gfixed_point} there exists a  unique solution
    $G\in\CT{\tmax}$ to \eqref{eq:Gevo} with $G_{0,0}=F[\win]$.
     This means that $G=\mathcal{I}[G,G]$.

      By Proposition 
     \ref{prop:Ggronwall} we have 
     $\ppnorm{G}{\tmax}\leq\frac{2}{\sqrt{3}}\maxnorm{F[\win]}$. Note that $\sobolevnorm{\FT{V}}{\frac{1}{6}}^2\leq M_V$ and by
     Proposition \ref{prop:WGfixed_point}, there exists a unique solution $W\in
     C([0,\tmax]\times\duallattice,  \C)$ to the equation
    \eqref{eq:WGevo} with $G$ and $W_0=W_\text{in}$. 

    By Lemma \ref{lem:Fevo}, there holds $F[W]=\mathcal{I}[G,F[W]]$. Since
    $G=\mathcal{I}[G,G]$ and $G_{0,0}=F[W]_{0,0}$,
    we have by Proposition \ref{prop:HGfixed_point} that $F[W]=G$. Hence, Lemma
    \ref{lem:colG} implies that $W$ is a unique solution to the evolution equation \eqref{eq:Wevo}.
    
    Since $\cbos(W,\tau_0)^*=\cbos(W^*,\tau_0)$ for all $\tau_0>0$ and $\win^*=\win$, we have that also 
    $W^*$ is a solution to the evolution equation \eqref{eq:Wevo}. By uniqueness, $W=W^*$. Hence, 
    $W$ is real valued, namely, $W\in C([0,\tmax]\times\duallattice,  \R)$.

    The upper bounds \eqref{eq:Wsupbound} and \eqref{eq:Fnormpropagation} follow 
    from Propositions \ref{prop:WGgronwall} and \ref{prop:Ggronwall}, respectively.
    
    From \eqref{eq:FisFTofW} and the definition of the weight function $\Phi$, we get 
    that $\supsobolevnorm{W_T}{1/3}\leq \phinorm{F[W]^{(1)}_{0,T}}$. By Proposition
    \ref{prop:Ggronwall} we have $\ppnorm{F[W]}{T}\leq\frac{2}{\sqrt{3}}\maxnorm{F[\win]}$
    Hence, the bound \eqref{eq:Wsobolevbound} holds.

     It remains to prove that the solution $W$ depends continuously on the
     initial data $\win$. Suppose that $\widetilde{W}\in
     C([0,\tmax]\times\duallattice,  \C)$ is another solution to the equation
     \eqref{eq:WGevo}. Then Proposition \ref{prop:Gincontinuity} implies 
     \begin{align*}
         \ppnorm{F[W]-F[\widetilde{W}]}{T} \leq 
         \frac{6}{5}\maxnorm{F[W]_{0,0}-F[\widetilde{W}]_{0,0}}
         \rme^{\frac{6}{5}T\ppnorm{F[\widetilde{W}]}{T}}.
     \end{align*}
     By the definition of $F$ and $\maxnorm{\cdot}$, we have that 
     \begin{align*}
         \maxnorm{F[W]_{0,0}-F[\widetilde{W}]_{0,0}}
         &\leq
         \sup_{R\in\R^d,x\in\lattice} \Phi^d(R,x)
         \max 
         \left( 
             1,\subnorm{W_0}{\infty}+\subnorm{\widetilde{W}_0}{\infty}
         \right) 
         \subnorm{W_0-\widetilde{W}_0}{\infty}.
     \end{align*}
     Hence, from $\Phi^d(R,x)\leq \mean{L/2}^{d/3}$ and Proposition 
     \ref{prop:WGgronwall} we obtain that 
     \begin{align}
         \subnorm{W_T-\widetilde{W}_T}{\infty}
         \leq 
         C
        \mean{L/2}^{d/3}
         \subnorm{W_0-\widetilde{W}_0}{\infty}
     \end{align}
     for all $T\in[0,\tmax]$, where the finite constant $C>0$ is independent of 
     $T$, $L$ and $\lambda$, but may depend on $\pp$, the dimension $d$, the final kinetic time $\tmax$, and 
     the norms $\sobolevnorm{\FT{V}}{\frac{1}{6}}$, $\maxnorm{F[W]_{0,0}}$, 
     $\maxnorm{F[\widetilde{W}]_{0,0}}$, $\subnorm{W_0}{\infty}$,
     and $\subnorm{\widetilde{W}_0}{\infty}$. Thus, the solution is continuous on the 
     initial data.
\end{proofof}

\subsection{Proof of the error result}
The following proposition will be needed for the error estimate in Theorem \ref{thm:error}.
\begin{proposition}
     \label{prop:Gerror} 
    Take $d\ge3$ and $\pp\in(0,1-2/d)$.
     Let $\lambda>0$ and $\tmax>0$. Suppose that both $\tau,\widetilde{\tau}\colon\R_+^2\to\R_+$ 
    satisfy one of the conditions \eqref{eq:taudiff} or \eqref{eq:tauconst}. Let $L\ge
    L_\pp(\lambda,\max(\tau,\widetilde{\tau})(\tmax,0))$.
     Let $\gin\in\Cmax$ and suppose 
     \[
         \tmax<\frac{1}{2c_{\pp,2}
         M_V\maxnorm{\gin}}.
     \]
     Take $G,\widetilde{G}\in\CT{\tmax}$,
     with $G_{0,0}=\widetilde{G}_{0,0}=\gin$.  If $G=\mathcal{I}_\tau[G,G]$ and
     $\widetilde{G}=\mathcal{I}_{\widetilde{\tau}}[\widetilde{G},\widetilde{G}]$, then
     \begin{align}
         \ppnorm{G-\widetilde{G}}{T} \leq c_{\pp,4} M_V \ppnorm{\widetilde{G}}{T}^2
         \mathcal{E}_\pp(\lambda,\tau,\widetilde{\tau},T)
     \end{align}
     for all $T\in[0,\tmax]$. Here, the error term $\mathcal{E}_\pp$ was defined in 
     Definition \ref{def:error}.
 \end{proposition}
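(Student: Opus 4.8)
The plan is to subtract the two defining fixed point equations, exploit the cancellation of the common initial data, and reduce everything to a linear Gr\"onwall inequality whose inhomogeneity is proportional to $\mathcal{E}_\pp$; the single heavy input will be Lemma~\ref{lem:J_estim}.

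First I would record the a priori bounds. Since $L_\pp(\lambda,\cdot)$ is nondecreasing, the hypothesis $L\ge L_\pp(\lambda,\max(\tau,\widetilde{\tau})(\tmax,0))$ guarantees that each of $\tau,\widetilde{\tau}$ satisfies the hypotheses of Proposition~\ref{prop:Ggronwall}; combined with $\tmax<\frac{1}{2c_{\pp,2}M_V\maxnorm{\gin}}$ and $G_{0,0}=\widetilde{G}_{0,0}=\gin$, equation \eqref{eq:Gnorm_bound1} then gives $\ppnorm{G}{T},\ppnorm{\widetilde{G}}{T}\le\sqrt{2}\maxnorm{\gin}$ for all $T\in[0,\tmax]$. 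Next I would write $G-\widetilde{G}=\mathcal{I}_\tau[G,G]-\mathcal{I}_{\widetilde{\tau}}[\widetilde{G},\widetilde{G}]$ and expand via the definition \eqref{eq:I}. The zeroth order terms $\mathcal{G}_{0,0}^{(n)}$ drop out because the initial data coincide, so the difference is a sum of at most four terms of the shape $\mathcal{J}_\tau[G,G]^{(n)}_{s_1,s_2}-\mathcal{J}_{\widetilde{\tau}}[\widetilde{G},\widetilde{G}]^{(n)}_{s_1,s_2}$ with $s_1,s_2\in\{0,t_1,t_2\}$, each carrying a factor $\frac12$ and a modulus-one phase, and with the spatial argument possibly shifted to $\widetilde{X}$ (harmless for $\maxnorm{\cdot}$).

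Then I would apply Lemma~\ref{lem:J_estim} to each of these differences, taking $\mathcal{G}=G$ and $\widetilde{\mathcal{G}}=\widetilde{G}$ (the lattice-size hypothesis of that lemma is exactly the one assumed here), bound $\ppnorm{\widetilde{G}}{s_2}$ and $\ppnorm{\widetilde{G}}{\max(s_1,s_2)}$ by $\ppnorm{\widetilde{G}}{T}$, use monotonicity of $s\mapsto\mathcal{E}_\pp(\lambda,\tau,\widetilde{\tau},s)$ (immediate from Definition~\ref{def:error}), and sum. The outcome should be, for every $T\in[0,\tmax]$,
\begin{align*}
 \ppnorm{G-\widetilde{G}}{T}
 &\le \frac{c_{\pp,2}}{2}M_V\ppnorm{\widetilde{G}}{T}\int_0^T\ppnorm{G-\widetilde{G}}{t}\,\rmd t
 +\frac{c_{\pp,2}}{2}M_V\,T\,\ppnorm{G}{T}\,\ppnorm{G-\widetilde{G}}{T} \\
 &\quad +c_{\pp,4}M_V\ppnorm{\widetilde{G}}{T}^2\,\mathcal{E}_\pp(\lambda,\tau,\widetilde{\tau},T).
\end{align*}

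To finish, since $\frac{c_{\pp,2}}{2}M_V\,T\,\ppnorm{G}{T}\le\frac{\sqrt2}{4}<1$ by the a priori bound and the choice of $\tmax$, the middle term gets absorbed on the left; dividing through and applying the linear Gr\"onwall inequality — pulling the inhomogeneity out of the time integral via monotonicity of $T\mapsto\ppnorm{\widetilde{G}}{T}^2\mathcal{E}_\pp(\lambda,\tau,\widetilde{\tau},T)$ — gives the claimed bound, with the remaining Gr\"onwall prefactor being a universal constant depending only on $d$ and $\pp$, to be absorbed exactly as in the proof of Proposition~\ref{prop:Gincontinuity}. I do not expect a genuine obstacle: the hard oscillatory-integral estimate producing $\mathcal{E}_\pp$ already lives in Lemma~\ref{lem:int_over_weight} and is packaged into Lemma~\ref{lem:J_estim}. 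The one place needing care is this final step — checking that the smallness of $\tmax$ together with Proposition~\ref{prop:Ggronwall} keeps every Gr\"onwall/absorption prefactor bounded by a $d,\pp$-constant, so that proportionality to $M_V\ppnorm{\widetilde{G}}{T}^2\mathcal{E}_\pp(\lambda,\tau,\widetilde{\tau},T)$ is preserved and not degraded.
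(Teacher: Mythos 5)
Your strategy is the right one and matches the paper up to the final step: subtract the two fixed point equations (the common initial data cancels), apply Lemma~\ref{lem:J_estim} to the resulting $\mathcal{J}$-differences, use \eqref{eq:Gnorm_bound1} for the a priori bound, and close the estimate. Your intermediate inequality is essentially the right one. The divergence is in how you close it, and this is where your argument as written falls short of the stated constant.

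The paper does \emph{not} use a Gr\"onwall inequality here. Instead it bounds $\int_0^T\ppnorm{G-\widetilde{G}}{t}\,\rmd t\le T\ppnorm{G-\widetilde{G}}{T}$ and then absorbs the \emph{entire} self-referential right-hand side, both the integral term and the direct $T\ppnorm{G}{T}\ppnorm{G-\widetilde{G}}{T}$ term, on the left in one step. With the smallness of $\tmax$ and the a priori bound this coefficient is at most $1/2$, and dividing by $1-1/2=1/2$ precisely cancels the extra factor $1/2$ that Lemma~\ref{lem:J_estim} places on the $\mathcal{E}_\pp$-term, landing on $c_{\pp,4}M_V\ppnorm{\widetilde{G}}{T}^2\mathcal{E}_\pp$ with no slack. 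Your route instead first absorbs only the direct term (dividing by a quantity $<1$), and then applies linear Gr\"onwall to the integral term, which introduces an additional exponential factor. Both of these extra prefactors are bounded, $d,\pp$-dependent constants, but their product is strictly greater than $1$, so what you actually prove is $\ppnorm{G-\widetilde{G}}{T}\leq C\,c_{\pp,4}M_V\ppnorm{\widetilde{G}}{T}^2\mathcal{E}_\pp$ for some $C>1$, which is weaker than the statement. Your remark that the Gr\"onwall prefactor is ``to be absorbed exactly as in the proof of Proposition~\ref{prop:Gincontinuity}'' is not quite right: Proposition~\ref{prop:Gincontinuity} does not absorb the Gr\"onwall constant, it keeps the factor $\tfrac{6}{5}$ and the exponential explicitly in the conclusion. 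To recover the exact constant $c_{\pp,4}$ you should replace the Gr\"onwall step by the single direct absorption, which is tighter and avoids the exponential loss.
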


 \begin{proof}
     Lemma \ref{lem:J_estim} together with \eqref{eq:Gnorm_bound1} implies 
     \begin{align}
         \ppnorm{G-\widetilde{G}}{\tmax} 
         &\leq
         \frac{\sqrt{2}}{2}c_{\pp,2} M_V\tmax\maxnorm{\gin}\ppnorm{G-\widetilde{G}}{\tmax} 
         +\frac{1}{2}c_{\pp,4}M_V\ppnorm{\widetilde{G}}{\tmax}^2\mathcal{E}_\pp(\lambda,\tau,\widetilde{\tau},T).
     \end{align}
     Hence, the result follows from $T\leq\tmax<\frac{1}{2c_{\pp,2} M_V\maxnorm{\gin}}$ and $2\sqrt{2}/8\leq1/2$.
 \end{proof}

 \begin{proofof}{Theorem \ref{thm:error}}
     To make the 
    following estimate more readable, we use the shorthand notations
    \begin{align}
        F[W](1) = 
        F[W]_{t,t}^{(\widetilde{n}_i)}(\rs{\xi_i}(k),k,\widetilde{u}_{\xi_i},\sigma_i^{(2)},y+y')
        , \quad 
        W(2) = 
        W_t(k_0-\ind(i=4,5)k).
    \end{align}
    By Lemma \ref{lem:colG}, we have 
    \begin{align*}
        &\left| W_T(k_0)-\widetilde{W}(k_0)_T \right|  
        \leq 
        \int_0^T \rmd t
    \sum_{i=1}^8  
    \int_{\duallattice} \rmd k
           \sum_{y,y'\in\lattice}^{} 
           \sup_{k'\in\duallattice}|v_i(k',y)||w_i(y')|
           \\
    &
    \times
    \Bigg(
    \int_{-\lambda^{-2}\tau(T,t)}^{\lambda^{-2}\tau(T,t)}ds
    \Big|
    F[W](1)W(2)-F[\widetilde{W}](1)\widetilde{W}(2)
    \Big|
    +
    \sum_{\sigma\in\{\pm\}}
    \int_{\sigma\lambda^{-2}\widetilde\tau(T,t)}^{\sigma\lambda^{-2}\tau(T,t)}|\rmd s|
    \left| F[\widetilde{W}](1)\widetilde{W}(2) \right| 
    \Bigg).
    \end{align*}
After estimating 
$$|F[W](1)W(2)-F[\widetilde{W}](1)\widetilde{W}(2)  | \leq
|F[W](1)||W(2)-\widetilde{W}(2)| +|\widetilde{W}(2)|
|F[\widetilde{W}](1)-F[{W}](1)|,$$ 
and writing in each term $1=\Phi^d(\rs{\xi_i}(k),y+y')/\Phi^d(\rs{\xi_i}(k),y+y')$,
we obtain 
\begin{align}
        \left| W_T(k_0)-\widetilde{W}(k_0)_T \right|  
        &\leq 
        8c_{3}
        \sobolevnorm{\FT{V}}{\frac{1}{6}}^2
        \ppnorm{F[W]}{T}
        \int_0^T \rmd t
        \subnorm{W_t-\widetilde{W}_t}{\infty}
        \\
        &+
        c_{\pp,4}c_{5}c_{6}
        \sobolevnorm{\FT{V}}{\frac{1}{6}}^2
        M_V
        T
        \ppnorm{F[\widetilde{W}]}{T}^2
        \mathcal{E}_\pp(\lambda,\tau,\widetilde\tau,T)
        \sup_{T'\in[0,T]} \subnorm{\widetilde{W}_{T'}}{\infty}  
        \\
        &+
        c_{5}
        \sobolevnorm{\FT{V}}{\frac{1}{6}}^2
        \mathcal{E}_\pp(\lambda,\tau,\widetilde\tau,T)
        \ppnorm{F[\widetilde{W}]}{T}
        \sup_{T'\in[0,T]} \subnorm{\widetilde{W}_{T'}}{\infty}.
\end{align}
For the bound on the $|F[W](1)||W(2)-\widetilde{W}(2)|$ term, we used 
Corollary \ref{cor:1Phi_int} and Lemma \ref{lem:ineqVconst}. In order to bound the 
$|\widetilde{W}(2)|
|F[\widetilde{W}](1)-F[{W}](1)|$ term, we used Corollary \ref{cor:1Phi_int}, Lemma \ref{lem:ineqVconst}
and Proposition \ref{prop:Gerror}. For the $|F[\widetilde{W}](1)\widetilde{W}(2)|$ term 
we used Lemmas \ref{lem:int_over_weight} and \ref{lem:ineqVconst}. Also, Corollary 
\ref{cor:modregabs} was used for each term to estimate
$\mean{y_j+y'_j}\leq2\mean{y_j}\mean{y'_j}$ before applying Lemma \ref{lem:ineqVconst}.
After taking the supremum over $k_0$ and applying linear
Gr\"onwall's inequality, we obtain the result.

\end{proofof}

\bigskip

\noindent \textbf{Acknowledgements}

The authors thank Phan Thành Nam, Herbert Spohn,  Gigliola Staffilani, and Minh Binh Tran for helpful discussions relevant to the project.
The work has been supported by the Academy of Finland, via an Academy project (project No. 339228) and the {\em Finnish Centre of Excellence in Randomness and Structures\/} (project Nos. 346306 and 364213).
 
The funders had no role in study design, analysis, decision to publish, or preparation of the manuscript.

\smallskip

\noindent\textbf{Compliance with ethical standards} 
\smallskip

\noindent \textbf{Data availability statement} Data sharing not applicable to this article as no datasets
were generated or analysed during the current study.
\smallskip

\noindent \textbf{Conflict of interest} The authors declare that they have no conflict of interest.

\appendix

\section{Appendix}

\subsection{Derivation of the truncated hierarchy for DNLS}
\label{sec:derivtrunchier}

We outline in this Appendix the derivation of the evolution equation  \eqref{eq:Wevo}
from the truncated hierarchy of the DNLS model.  This topic was already covered
in \cite{lukkarinen_wick_2016} 
for the special case of onsite
interaction potential, but since the more general $V$ will change the outcome,
we will present a streamlined version of the computations here, also putting it
more clearly in the context of iterated Duhamel expansions.

For notational simplicity, we now drop the Fourier-transform ``hat'' from the
notations, i.e., in this Section we denote simply $a_t(k,\sigma)$ instead of
$\FT{a}_t(k,\sigma)$.
As the original lattice field will not appear anymore, this should not cause  confusion.
Also, we will use letter $b$ to denote the pair of arguments of this field, i.e., $b=(k,\sigma)$, while given
an index sequence of such pairs, $I=(b_\ell)_{\ell=1}^n$, we will implicitly
use the notation $k_\ell$ for the first component of $b_\ell$ and $\sigma_\ell$
for the second component of $b_\ell$.  Given some position index $\ell$ of $I$,
the notation $I\setminus \ell$ denotes the sequence where the element at
position $\ell$ has been cancelled, i.e., $I\setminus \ell = (b_j)_{j\ne
\ell}$.  We will also freely continue to make use of the notations introduced
in the main text, mainly coming from
\cite{lukkarinen_wick_2016}. 
With these conventions, we may rewrite the evolution equation (\ref{eq:FTa_evol})
in the form
\begin{align*}
    \partial_t &a _t(k,\sigma)=-\ci\sigma\omega(k)a_t(k,\sigma)
    \\
    &
    -\ci \sigma\lambda
    \int_{(\duallattice)^3}\rmd k'_1 \rmd k'_2 \rmd k'_3
    \delta_L(k-k'_1-k'_2-k'_3)
    \FT{V}(k'_1,k'_2,k'_3;\sigma)
    a_t(b'_1)a_t(b'_2)a_t(b'_3)\,,
\end{align*}
where the sign choices are $\sigma'_1=-$, $\sigma'_2=\sigma$, $\sigma'_3=+$,
and thus depend on the input $\sigma$.

By the results in \cite{lukkarinen_wick_2016}
,
the cumulants of a deterministic evolution satisfy in general
\[
 \partial_t \kappa[(a_t)_I] = \sum_{\ell=1}^n
 \E[\dot{a}_t(k_\ell,\sigma_\ell)\, \wick{(a_t)^{I\setminus \ell}}]\,.
\]
On the other hand, by the moment-to-cumulants property of Wick polynomials, here
\[
 \E[a_t(k_\ell,\sigma_\ell)\, \wick{(a_t)^{I\setminus \ell}}]
 = \kappa[(a_t)_I]\,,
\]
so the contribution from the first, linear terms arising from the
time-derivates leads to a term
\[
 -\ci\Omega(I) \kappa[(a_t)_I]\,, \qquad
 \Omega(I)\coloneqq \sum_{\ell=1}^n\sigma_\ell\omega(k_\ell) \,,
\]
in the rate function for the cumulant.

To compute the contribution from the second, nonlinear term to the rate
function, we use the moment-to-cumulants formula taking into account the
removal of partitions which have a cluster from indices inside a Wick
polynomial.  Therefore, for any three random variables $a'_j$, $j=1,2,3$, whose
odd moments are zero, we have
\begin{align}\label{eq:nonlinclustering}
& \E[a'_1 a'_2 a'_3 \, \wick{(a_t)^{I\setminus \ell}}]
 = \kappa[a'_1, a'_2, a'_3,(a_t)_{I\setminus \ell}]
 + \sum_{j=1}^3 \E[\prod_{i=1;i\ne j}^3 a'_i] \kappa[a'_j,(a_t)_{I\setminus \ell}] \nonumber \\
 & \qquad + \sum_{j=1}^3\sum_{\emptyset\ne A_1 \subsetneq I\setminus \ell}
  \kappa[a'_j,(a_t)_{A_1}] \kappa[(a'_i)_{i\ne j}, (a_t)_{I\setminus \ell\setminus A_1}]
  \nonumber \\
 & \qquad + \sum_{(A_1,A_2,A_3) \text{ ordered partition of } I\setminus \ell} \prod_{j=1}^3
  \kappa[a'_j,(a_t)_{A_j}]\,.
\end{align}
In particular, we may use this formula for $a'_j=\FT{a}_t(b'_j)$ which is
needed to compute the effect of the nonlinear part of the time-derivative.  We
may thus conclude that the cumulants satisfy an evolution hierarchy
of the form
\[
 \partial_t \kappa[(a_t)_I] = -\ci\Omega(I) \kappa[(a_t)_I]
 -\ci \lambda \mathcal{B}[\kappa_t](I)\,,
\]
where each $\mathcal{B}[\kappa_t](I)$ is a polynomial of the cumulants of the
fields $a_t$ which depends on the choice of the index sequence $I$ but is never
of an order greater than three and only depends on cumulants up to order $n+2$ if
$I$ has length $n$.  We now exponentiate the free evolution part to derive a
Duhamel formula, yielding a hierarchy of integral equations
\[
 \kappa[(a_t)_I] = \rme^{-\ci \Omega(I) t}\kappa[(a_0)_I]
 -\ci \lambda \int_0^t\!\rmd s\, \rme^{-\ci \Omega(I) (t-s)}
  \mathcal{B}[\kappa_s](I)\,.
\]
Similarly to Picard iteration, one can now look for solutions of the hierarchy
by iterating this equation.  For small enough $\lambda$, one
might expect the iterations to eventually converge for some kinetic time-interval, although this is far from
obvious.  However, a selective iteration of the formula is always possible, to
obtain new relations satisfied by the original hierarchy of cumulants.

What we call here a truncated hierarchy is obtained by the following procedure:
iterate the Duhamel formula once, and then assume that we can ignore the effect
to the second order cumulant of all cumulants of order higher than two.  One
can motivate this procedure by assuming that the state of the system remains
close enough to a Gaussian distribution so that the non-Gaussian effects yield
only error terms in the new evolution hierarchy.  Since the evolution does not
propagate Gaussianity, it is a highly non-trivial assumption, similar to the
propagation of chaos in standard kinetic theory. We do not attempt to make any
justification of the assumption here, apart from repeating what was said
earlier: the solutions to the truncated hierarchy can be used as reference
solutions for the covariance and this could facilitate the analysis of the
evolution of the remaining terms.

As explained in the main text, by translation and gauge invariance,
for $n=2$, it is sufficient to study the cumulant of the sequence
$I=(b_1,b_2)$ for $b_1=(-k,-)$.
$b_2=b=(k,+)$.  Now there are two terms in $ \mathcal{B}[\kappa_t](I)$
corresponding to $\ell=1$ and $\ell=2$.  For $\ell=1$, we have
$I\setminus\ell = ((k,+))$ while for $\ell=2$,
$I\setminus\ell = ((-k,-))$.  Hence, we can parametrize these two cases by the
sign of the removed term, i.e., by using as a summation variable
$\sigma\coloneqq \sigma_\ell$ instead of $\ell$, and then $I\setminus\ell = ((-\sigma
k, -\sigma))$.  Since $I\setminus\ell$ has only one element, only the first two
terms in (\ref{eq:nonlinclustering}) contribute to this case, yielding
\begin{align*}
& \E[a'_1 a'_2 a'_3 \, \wick{a_t(-\sigma k, -\sigma)}]
 = \kappa[a'_1, a'_2, a'_3,a_t(-\sigma k, -\sigma)]
 + \sum_{j=1}^3 \E[\prod_{i=1;i\ne j}^3 a'_i] \kappa[a'_j,a_t(-\sigma k, -\sigma)]\\
& \quad = \kappa[a'_1, a'_2, a'_3,a_t(-\sigma k, -\sigma)]
 +  \delta_L(k'_1+k'_3) \delta_L(k'_2-\sigma k)
 W_{\lambda^2 t}^{\lambda,L}(k'_3)
 W_{\lambda^2 t}^{\lambda,L}(k)\\
& \qquad
 +  \delta_L(k'_2+k'_i) \delta_L(k'_j-\sigma k)
 W_{\lambda^2 t}^{\lambda,L}(\sigma k'_2)
 W_{\lambda^2 t}^{\lambda,L}(k)
\,,
\end{align*}
where $i=1$, $j=3$, if $\sigma =+$, and $i=3$, $j=1$,  otherwise.

Since for this case $\Omega(I) = -\omega(-k) + \omega(k)=0$, by the assumed
symmetry of $\omega$, the lowest term in the Duhamel integrated hierarchy
satisfies
\begin{align*}
&
 \kappa[(a_t)_I] = \kappa[(a_0)_I]
 -\ci \lambda \int_0^t\!\rmd s\,
  \mathcal{B}[\kappa_s](I)\,,
\end{align*}
with
\begin{align*}
&\mathcal{B}[\kappa_s](I)
 = \sum_{\sigma = \pm} \sigma
 \int_{(\duallattice)^3}\rmd k_1 \rmd k_2 \rmd k_3 \delta_L(\sigma k-k_1-k_2-k_3)\FT{V}(k_1,k_2,k_3;\sigma)\\
 & \qquad \quad \times
     \kappa[
    a_s(k_1,-),a_s(k_2,\sigma),a_s(k_3,+),a_s(-\sigma k,-\sigma)]\\
 & \qquad
 + \sum_{\sigma = \pm} \sigma |\Lambda|
  W_{\lambda^2 s}^{\lambda,L}(k)
 \int_{\duallattice}\rmd k_3 \FT{V}(-k_3,\sigma k,k_3;\sigma)
 W_{\lambda^2 s}^{\lambda,L}(k_3)\\
 & \qquad
 +  |\Lambda|
  W_{\lambda^2 s}^{\lambda,L}(k)
 \int_{\duallattice}\rmd k_2 \left(
 \FT{V}(-k_2,k_2,k;+)
 W_{\lambda^2 s}^{\lambda,L}(k_2)
 -\FT{V}(-k,k_2,-k_2;-)
 W_{\lambda^2 s}^{\lambda,L}(-k_2)\right)\,.
\end{align*}
In the last line, both potential terms evaluate to $\FT{V}(0)$, and thus the
two integrals cancel each other out.  The second line simplifies to
\[
   |\Lambda|
  W_{\lambda^2 s}^{\lambda,L}(k)
  \int_{\duallattice}\rmd k_3 \left(\FT{V}(k-k_3)-\FT{V}(-k+k_3)\right)W_{\lambda^2 s}^{\lambda,L}(k_3) = 0\,,
\]
since $\FT{V}$ is symmetric.  Therefore,
\begin{align*}
&
W_{\lambda^2 t}^{\lambda,L}(k) = \frac{1}{|\Lambda|}\kappa[(a_t)_I] = W^{\lambda,L}_0(k)
 -\ci \lambda \int_0^t\!\rmd s\,\sum_{\sigma = \pm} \sigma
 \int_{(\duallattice)^3}\rmd k_1 \rmd k_2 \rmd k_3 \delta_L(\sigma k-k_1-k_2-k_3)\\
 & \qquad \quad \times
   \FT{V}(k_1,k_2,k_3;\sigma) \frac{1}{|\Lambda|} \kappa[
    a_s(k_1,-),a_s(k_2,\sigma),a_s(k_3,+),a_s(-\sigma k,-\sigma)]
 \,.
\end{align*}

Thus, to compute the effect of the first iteration, we need to check what
happens if $I=(b_i)_{i=1}^4$, with $b_1=(k_1,-)$, $b_2=(k_2,\sigma)$,
$b_3=(k_3,+)$, $b_4=(-\sigma k,-\sigma)$, for given $\sigma\in \{\pm\}$,
$k\in \Lambda^*$, and we may also assume that $k_1+k_2+k_3 =\sigma k$.
Considering (\ref{eq:nonlinclustering}) with $n=4$, only the last line produces
terms which do not contain higher order cumulants.  Therefore, for this $I$,
the first iteration amounts to using the identity
\begin{align*}
&
  \kappa[(a_t)_I] = (\text{higher order terms})
 -\ci \lambda \sum_{\ell=1}^4 \sigma_\ell \int_0^t\!\rmd s\,
  \rme^{-\ci \Omega(I) (t-s)}\nonumber \\
  &\qquad \times  \int_{(\duallattice)^3}\rmd k'_1 \rmd k'_2 \rmd k'_3
    \delta_L(k_\ell-k'_1-k'_2-k'_3)
   \FT{V}(k'_1,k'_2,k'_3;\sigma_\ell)
   \sum_{\pi\in S_3} \prod_{j=1; j\ne \ell}^4 \kappa[a_s(b'_j),
    a_s(b_{\pi(j)})]
 \,,
\end{align*}
where the sum of $\pi$ goes over the six possible permutations of
$\{1,2,3,4\}\setminus \{\ell\}$.  In fact, only those permutations with
matching signs, i.e., with $\sigma_{\pi(1)}=+$, $\sigma_{\pi(2)}=-\sigma_\ell$,
$\sigma_{\pi(3)}=-$, yield a non-zero contribution. We
call such permutations ``allowed''.
Similarly, the resulting
$\delta_L$-terms can only contribute if $k'_1=-k_{\pi(1)}$, $k'_2=-k_{\pi(2)}$,
$k'_3=-k_{\pi(3)}$, in which case their sum is equal to the missing term,
$k_\ell$.  This saturates the first $\delta_L$-factor.  Therefore,
\begin{align*}
&
 \frac{1}{|\Lambda|} \kappa[(a_t)_I] = (\text{higher order terms})\\
  &\qquad
 -\ci \lambda\int_0^t\!\rmd s\,
  \rme^{-\ci \Omega(I) (t-s)} \prod_{j=1}^4 W_{\lambda^2 s}^{\lambda,L}(\sigma_j k_j)
   \sum_{\ell=1}^4 \sum_{\text{allowed }\pi}
   \frac{\sigma_\ell \FT{V}(-k_{\pi(1)},-k_{\pi(2)},-k_{\pi(3)};\sigma_\ell)}{W_{\lambda^2 s}^{\lambda,L}(\sigma_\ell k_\ell) }
 \,,
\end{align*}
where we can swap the signs of the first three arguments of $\FT{V}$, due to
symmetry.  The remaining combinatorics need to be worked out by hand.  If
$\sigma=+$, we have $(\sigma_\ell)_\ell=(-,+,+,-)$, and $\sigma=-$, we have
$(\sigma_\ell)_\ell=(-,-,+,+)$.  Thus, for $\sigma=+$, we obtain \begin{align*}
&
  \sum_{\ell=1}^4 \sum_{\text{allowed }\pi}
   \frac{\sigma_\ell \FT{V}(k_{\pi(1)},k_{\pi(2)},k_{\pi(3)};\sigma_\ell)}{W_{\lambda^2 s}^{\lambda,L}(\sigma_\ell k_\ell) }
\\ & \quad
= -\frac{1}{W_{\lambda^2 s}^{\lambda,L}(-k_1) }
\left(\FT{V}(k_{2},k_{3},k_{4};-)+\FT{V}(k_{3},k_{2},k_{4};-)\right)
\\ & \qquad
 +\frac{1}{W_{\lambda^2 s}^{\lambda,L}(k_2) }
\left(\FT{V}(k_{3},k_{1},k_{4};+)+\FT{V}(k_{3},k_{4},k_{1};+)\right)
\\ & \qquad
 +\frac{1}{W_{\lambda^2 s}^{\lambda,L}(k_3) }
\left(\FT{V}(k_{2},k_{1},k_{4};+)+\FT{V}(k_{2},k_{4},k_{1};+)\right)
\\ & \qquad
 -\frac{1}{W_{\lambda^2 s}^{\lambda,L}(-k_4) }
\left(\FT{V}(k_{2},k_{3},k_{1};-)+\FT{V}(k_{3},k_{2},k_{1};-)\right)
\\ & \quad
=
\left(\FT{V}(k_1+k_2)+\FT{V}(k_1+k_3)\right)\sum_{\ell=1}^4
\frac{\sigma_\ell}{W_{\lambda^2 s}^{\lambda,L}(\sigma_\ell k_\ell) }
\,,
\end{align*}
where, in the last step, we have used the definitions and symmetry of $\FT{V}$
and the assumption $\sum_{\ell=1}^4 k_\ell =0$.  Analogous computation for
$\sigma=-$ results in
\begin{align*}
&
  \sum_{\ell=1}^4 \sum_{\text{allowed }\pi}
   \frac{\sigma_\ell \FT{V}(k_{\pi(1)},k_{\pi(2)},k_{\pi(3)};\sigma_\ell)}{W_{\lambda^2 s}^{\lambda,L}(\sigma_\ell k_\ell) }
=
\left(\FT{V}(k_2+k_3)+\FT{V}(k_1+k_3)\right)\sum_{\ell=1}^4
\frac{\sigma_\ell}{W_{\lambda^2 s}^{\lambda,L}(\sigma_\ell k_\ell) }
\,.
\end{align*}
We can now insert these results into the evolution equation of $W$,
yielding
\begin{align*}
&
W_{\lambda^2 t}^{\lambda,L}(k) = (\text{higher order terms}) +W^{\lambda,L}_0(k)
 - \lambda^2 \int_0^t\!\rmd t' \int_0^{t'}\!\rmd s\,
  \int_{(\duallattice)^3}\rmd k_1 \rmd k_2 \rmd k_3
\\ & \qquad \times
\biggl(
   \delta_L(k-k_1-k_2-k_3) \FT{V}(k_1+k_2)
  \left(\FT{V}(k_1+k_2)+\FT{V}(k_1+k_3)\right)
\\ & \qquad \qquad\times
  \rme^{-\ci \Omega_+ (t'-s)} \sum_{\ell=1}^4
  \sigma_\ell \prod_{j\ne \ell}\left. W_{\lambda^2 s}^{\lambda,L}(\sigma_j k_j)\right|_{k_4=-k,\,\sigma=(-,+,+,-)}
\\ & \qquad\quad -
   \delta_L(-k-k_1-k_2-k_3) \FT{V}(k_2+k_3)
   \left(\FT{V}(k_2+k_3)+\FT{V}(k_1+k_3)\right)
\\ & \qquad \qquad\times
  \rme^{-\ci \Omega_- (t'-s)} \sum_{\ell=1}^4
  \sigma_\ell \prod_{j\ne \ell} \left. W_{\lambda^2 s}^{\lambda,L}(\sigma_j k_j)\right|_{k_4=k,\,\sigma=(-,-,+,+)} \biggr)
 \,,
\end{align*}
where
\[
 \Omega_+ = -\omega_1+ \omega_2+\omega_3-\omega_4\,\qquad
 \Omega_- = -\omega_1- \omega_2+\omega_3+\omega_4\,.
\]
In the second term, coming from $\sigma=-$, we first swap the $k_1$ and $k_3$
integrals, which amounts to a swap $1\leftrightarrow 3$ in that term.  After
that, we change the sign of each of the three new $k$-variables.  In the new
variables, the potential factors and the remaining $\delta_L$-term match with the
those in the first term, $\sigma\to -(-,+,+,-)$, and $\Omega_- \to -\Omega_+$
since $\omega(-k)=\omega(k)$.
Therefore, after dropping the higher order terms,
\begin{align*}
&
W_{\lambda^2 t}^{\lambda,L}(k) \approx W^{\lambda,L}_0(k)
 - \lambda^2 \int_0^{t}\!\rmd s\int_s^t\!\rmd t' \,
  \int_{(\duallattice)^3}\rmd k_1 \rmd k_2 \rmd k_3
\\ & \qquad \times
   \delta_L(k-k_1-k_2-k_3) \FT{V}(k_1+k_2)
  \left(\FT{V}(k_1+k_2)+\FT{V}(k_1+k_3)\right)
\\ & \qquad \qquad\times
  \left(\rme^{-\ci \Omega_+ (t'-s)}+\rme^{\ci \Omega_+ (t'-s)} \right) \sum_{\ell=1}^4
  \sigma_\ell \prod_{j\ne \ell}\left. W_{\lambda^2 s}^{\lambda,L}(\sigma_j k_j)\right|_{k_4=-k,\,\sigma=(-,+,+,-)}
 \,.
\end{align*}
The remaining $t'$-integral yields a factor
\[
 \int_s^t\!\rmd t' \,
 \left(\rme^{-\ci \Omega_+ (t'-s)}+\rme^{\ci \Omega_+ (t'-s)} \right) =
 2\pi \delta_{\lambda,\lambda^2(t-s)}(\omega_0+\omega_1-\omega_2-\omega_3)\,.
\]
In the remaining term, we can symmetrize over the labels $2$ and $3$.
Finally, rescaling $s$ and changing interaction variables $k_j\to \sigma_j k_j$,
we obtain
\begin{align*}
&
W_{T}^{\lambda,L}(k_0) = (\text{higher order terms}) + W^{\lambda,L}_0(k_0)
 + \pi \int_0^{T}\!\rmd r
  \int_{(\duallattice)^3}\rmd k_1 \rmd k_2 \rmd k_3
\\ & \qquad \times
   \delta_L(k_0+k_1-k_2-k_3)
  \left(\FT{V}(k_1-k_2)+\FT{V}(k_1-k_3)\right)^2\delta_{\lambda,T-r}(\omega_0+\omega_1-\omega_2-\omega_3)
\\ & \qquad \times
\sum_{\ell=0}^3
(-  \sigma_\ell) \prod_{j\ne \ell}\left. W_{r}^{\lambda,L}(k_j)\right|_{\,\sigma=(-,-,+,+)}
\\ & \quad
= W^{\lambda,L}_0(k_0) + \int_0^{T}\!\rmd r\,
\cnls{+}(W^{\lambda,L},\tau_1)(k_0) + (\text{higher
order terms})\,,
\end{align*}
with $\tau_1(T,t)=T-t$ and using the collision operator defined in
(\ref{eq:cnls}).  This justifies the formula (\ref{eq:WrealevoNLS}) for the
DNLS case which has $q=0$ and $\theta=+$.

\subsection{Proof the propagator estimates}
\label{sec:propagator}

We will next give the proof of Proposition \ref{prop:propagator_multi_d}. The 
proof will be a rather direct consequence of the following one-dimensional propagator
estimate.

\begin{lemma}
    \label{lem:propagator}
Consider the following extended one-dimensional lattice propagator function
 \[
  Q(x;R,u,L) \coloneqq   \int_{\oneduallattice}\! \rmd k \,\rme^{\ci 2 \pi k x+ \ci R 
  \cos(2 \pi (k+u))}\,,\qquad x\in \onelattice\,,\ R,u\in \R\,,\ L\ge 2\,,
 \]
and its infinite volume counterpart
\[
 I(y;R,u)\coloneqq  
 \int_{-\frac{1}{2}}^{\frac{1}{2}}\! \rmd k\, \rme^{\ci 2 \pi k y+ \ci R \cos(2 \pi (k+u))}\,, \quad y\in \Z\,,\ R,u\in \R \,.
\]
There are pure constants $C,\delta_0>0$ such that for all
$L\ge 2$, $R,u\in \R$, and $x\in \onelattice$, 
\[
 \left|Q(x;R,u,L)-I(x;R,u)\right|\le C \rme^{-\delta_0\left(L-4 |R|\right)_+} \,.
\]
and
\[
 |I(y;R,u)|\le \rme^{-2\delta_0(|y|-2 |R|)_+}\,, \quad y\in \Z\,.
\]
We can also find a constant $C$ such that 
\[
 |I(y;R,u)|\le C \mean{R}^{-\frac{1}{2}} \mean{y}^{\frac{1}{6}}\,, \quad y\in \Z\,.
\]
In particular,
\[
 |I(y;R,u)|\le C 
\regabs{y}^{-\frac{1}{3}}
 \min
 \left( 
 1, \sqrt{\frac{\regabs{y}}{\mean{R}}} 
 \right) 
 \,, \quad y\in \Z\,.
\]

In addition, for any power $p\ge 0$ we can find a constant $C(p)$ such that 
for all $L\ge 2$, $R,u\in \R$
\[
 \sum_{y\in \Z} \mean{y}^p |I(y;R,u)| \le C(p) \mean{R}^{p+\frac{1}{2}}\,,
\]
and
\[
 \sum_{x\in \onelattice} \mean{x}^p |Q(x;R,u,L)| \le C(p)  \mean{\min(|R|,L)}^{p+\frac{1}{2}}\,.
\]
\end{lemma}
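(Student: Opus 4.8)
The plan is to reduce to a single clean oscillatory integral, get the exponential bound by shifting the contour into the complex plane, the comparison $Q-I$ by Poisson summation, the decay bound by van der Corput estimates, and the weighted sums by Parseval. Since $x,y\in\Z$ the integrand $\rme^{\ci 2\pi ky+\ci R\cos(2\pi(k+u))}$ is $1$-periodic in $k$, so $k\mapsto k+u$ gives $I(y;R,u)=\rme^{-\ci 2\pi uy}I(y;R,0)$, while $k\mapsto k+\tfrac12$ and $k\mapsto -k$ give $|I(y;R,u)|=|I(|y|;|R|,0)|$; writing $\theta=2\pi k$ one gets $|I(y;R,u)|=\tfrac1{2\pi}\big|\int_{-\pi}^{\pi}\rme^{\ci(|R|\cos\theta+|y|\theta)}\,\rmd\theta\big|$. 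For the exponential bound I would shift the contour to $[-\pi,\pi]+\ci\eta$: the two vertical segments cancel because the integrand is $2\pi$-periodic (here it is essential that $y$ is an integer), and on the shifted line $|\rme^{\ci(R\cos\theta+y\theta)}|=\rme^{R\sin\theta\sinh\eta-y\eta}\le\rme^{R\sinh\eta-y\eta}$ for $R,\eta>0$. Taking $\eta=2\delta_0$ with $\delta_0$ small enough that $\sinh(2\delta_0)\le 4\delta_0$ (e.g.\ $\delta_0=1$) gives $|I(y;R,u)|\le\rme^{-2\delta_0(|y|-2|R|)}$ when $|y|>2|R|$, and the bound is trivial for $|y|\le 2|R|$ since $|I|\le 1$.

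For $Q-I$, the first step is the discrete Poisson summation identity $Q(x;R,u,L)=\sum_{m\in\Z}I(x-Lm;R,u)$ for $x\in\onelattice$, obtained by expanding the smooth $1$-periodic function $k\mapsto\rme^{\ci R\cos(2\pi(k+u))}$ in its Fourier series (whose $n$-th coefficient is $I(-n;R,u)$) and using $\tfrac1L\sum_{j\in\onelattice}\rme^{\ci 2\pi mj/L}=\ind(L\mid m)$, the series converging absolutely by the exponential bound just proved. Hence $Q-I=\sum_{m\ne0}I(x-Lm;R,u)$. If $L\le 4|R|$ the claimed bound is a constant and is immediate from $|Q|,|I|\le1$. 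If $L>4|R|$ then for $m\ne0$ and $x\in\onelattice$ we have $|x-Lm|\ge L|m|-L/2\ge L/2>2|R|$, so the exponential bound applies termwise; since $L|m|-L/2-2|R|=\tfrac12(L-4|R|)+L(|m|-1)$, summing the geometric series over $m\ne0$ (using $L\ge 2$) gives $|Q-I|\le C\rme^{-\delta_0(L-4|R|)_+}$.

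The central point is the decay bound $|I(y;R,u)|\le C\mean{R}^{-1/2}\mean{y}^{1/6}$, which I would split into two van der Corput estimates (both trivial for $R\le1$); here $\psi(\theta)=R\cos\theta+y\theta$, so $\psi''=-R\cos\theta$, $\psi'''=R\sin\theta$. (i) For all $y$: since $\psi''(\theta)^2+\psi'''(\theta)^2=R^2$ identically, on each interval cut by $\{\pm\pi,\pm3\pi/4,\pm\pi/4\}$ one of $|\psi''|,|\psi'''|$ is $\ge R/\sqrt2$ throughout, and the second- resp.\ third-derivative van der Corput estimate there gives $|I(y;R,u)|\le CR^{-1/3}\le C\mean{R}^{-1/3}$. (ii) For $|y|\le|R|/2$: the stationary points, where $\sin\theta=y/R$, satisfy $|\cos\theta|\ge\sqrt3/2$, so $|\psi''|\ge R/2$ on the fixed neighbourhoods $[-\tfrac\pi3,\tfrac\pi3]$ and $[\tfrac{2\pi}3,\pi]\cup[-\pi,-\tfrac{2\pi}3]$ containing them, while on the complement (near $\pm\tfrac\pi2$) one has $|\psi'|=|y-R\sin\theta|\ge(\sqrt3-1)R/2$; a bounded partition into monotone pieces with the first- and second-derivative van der Corput estimates yields $|I(y;R,u)|\le CR^{-1/2}\le C\mean{R}^{-1/2}$. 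Combining: for $|y|\le|R|/2$ use (ii) and $\mean{y}^{1/6}\ge1$; for $|y|>|R|/2$ use (i) and $\mean{y}^{1/6}\gtrsim\mean{R}^{1/6}$ to absorb $\mean{R}^{-1/3}\le C\mean{R}^{-1/2}\mean{y}^{1/6}$. The ``in particular'' estimate then follows by using this bound for $|y|\le2|R|$ — rewriting $\mean{R}^{-1/2}\mean{y}^{1/6}=\mean{y}^{-1/3}\sqrt{\mean{y}/\mean{R}}$ and noting $\mean{y}/\mean{R}\le2$ there — and the exponential bound for $|y|>2|R|$. I expect this van der Corput bookkeeping, especially extracting the sharp $\mean{y}^{1/6}$ rather than merely $\mean{R}^{-1/3}$, to be the most delicate part.

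Finally, the weighted sums I would handle via Parseval. As $k\mapsto\rme^{\ci R\cos(2\pi(k+u))}$ is unimodular, its Fourier coefficients $\{I(y;R,u)\}_{y\in\Z}$ are orthonormal, so $\sum_{y\in\Z}|I(y;R,u)|^2=1$; the analogous discrete computation (using $\tfrac1L\sum_{x\in\onelattice}\rme^{\ci 2\pi(k-k')x}=\ind(k=k')$ for $k,k'\in\oneduallattice$) gives $\sum_{x\in\onelattice}|Q(x;R,u,L)|^2=1$. For the first weighted sum, split at $|y|=2|R|$: on $|y|\le2|R|$, Cauchy--Schwarz with $\sum_y|I|^2=1$ and $\sum_{|y|\le2|R|}\mean{y}^{2p}\le C(p)\mean{R}^{2p+1}$ gives $\le C(p)\mean{R}^{p+1/2}$, while on $|y|>2|R|$ the exponential bound, with $\mean{y}^p\le C(p)\mean{R}^p\mean{|y|-2|R|}^p$ (Lemma \ref{lem:regabs}), makes the tail $\le C(p)\mean{R}^p$. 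For the $Q$-sum: if $|R|\ge L/8$, Cauchy--Schwarz with $\sum_x|Q|^2=1$ and $|\onelattice|=L$ gives $\le C(p)\mean{L}^{p+1/2}\le C(p)\mean{\min(|R|,L)}^{p+1/2}$, since $\mean{\min(|R|,L)}\ge\tfrac18\mean{L}$ in this range; if $|R|<L/8$, use $Q(x)=\sum_mI(x-Lm;R,u)$, bound the $m=0$ term by the $I$-sum estimate, and for $m\ne0$, where $|x-Lm|\ge L/2>2|R|$, combine $\mean{x}\le 2\mean{x-Lm}\mean{Lm}$ (Lemma \ref{lem:regabs}) with the exponential bound to see the remaining tail is $\le C(p)\le C(p)\mean{R}^{p+1/2}$.
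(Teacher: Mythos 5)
Your overall architecture --- contour shift for the exponential decay, the Fourier-series/Poisson identity $Q(x)=\sum_{m\in\Z} I(x-Lm)$ for the $Q-I$ comparison, oscillatory-integral estimates for the $\mean{R}^{-1/2}\mean{y}^{1/6}$ bound, and Parseval plus Cauchy--Schwarz for the weighted sums --- matches the paper's proof step for step. The one genuine difference is the source of the uniform decay $|I(y;R,u)|\le C|R|^{-1/3}$: the paper does a stationary-phase estimate in the regime $|y|\le |R|/2$ and then simply invokes a classical uniform bound $|J_y(R)|\le |R|^{-1/3}$ on Bessel functions of the first kind (with a citation) for the complementary regime $|y|\ge |R|/2$, whereas you rederive the $|R|^{-1/3}$ decay from scratch via van der Corput, exploiting $\psi''(\theta)^2+\psi'''(\theta)^2\equiv R^2$ to partition $[-\pi,\pi]$ into a bounded number of intervals on each of which either $|\psi''|\ge R/\sqrt 2$ or $|\psi'''|\ge R/\sqrt 2$. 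This makes the proof self-contained at the cost of a bit more bookkeeping; both routes work, and the $\mean{y}^{1/6}$ improvement for small $|y|$ is handled identically.

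There is one small but real gap in your ``in particular'' step: the split point $2|R|$ is too low. For $2|R|<|y|\le 4|R|$ with $R$ large, the exponential bound $\rme^{-2\delta_0(|y|-2|R|)}$ can be arbitrarily close to $1$ while $\regabs{y}^{-1/3}$ is small, so the exponential estimate alone does not give $C\regabs{y}^{-1/3}$ on that band. The paper splits at $4|R|$, so that $|y|-2|R|\ge|y|/2$ and the exponential beats any polynomial in $y$; alternatively you can cover the intermediate band $2|R|<|y|\le 4|R|$ with the van der Corput bound $|I|\le CR^{-1/3}\le C'\regabs{y}^{-1/3}$ that you have already established for all $y$. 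With that one adjustment the proposal is correct.
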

\begin{proof}
For the proof, let us fix an allowed choice of $R,u,L$ and consider $Q(x)\coloneqq Q(x;R,u,L)$ for $x\in \onelattice$.
We denote
 \[
  g(k) \coloneqq  \ci R \cos(2 \pi (k+u))\,.
 \]
The function $g$ is entire and $1$-periodic.  In addition,
\[
 g'(k) = -\ci 2 \pi  R\sin(2 \pi (k+u))\,,\quad
 g''(k) = -\ci (2\pi)^2  R\cos(2 \pi (k+u))\,.
\]
Thus if $r\in \R$, we have
\[
 g(k+\ci r)-g(k)= \int_{0}^1\!\rmd s\, \ci r g'(k+\ci s r)\,, 
\]
and thus 
\[
 \left|g(k+\ci r)-g(k)\right|\le 
 |r| 2\pi |R| \rme^{2\pi |r|}\,.
\] 
Therefore, if we fix $r_0\coloneqq  \frac{1}{2\pi}\ln 2>0$ and denote $p_0=2\pi r_0>0$, then $r_0$ is a pure constant and for all $|r|\le r_0$ we obtain
\[
  \left|g(k+\ci r)-g(k)\right|\le 2 p_0 |R|\,.
\]

Let us next consider the integral defined above, i.e., set
\[
 I(y)\coloneqq  \int_{-\frac{1}{2}}^{\frac{1}{2}}\! \rmd k\, \rme^{\ci 2 \pi k y+g(k)}=
 \int_{-\frac{1}{2}}^{\frac{1}{2}}\! \rmd k\, \rme^{\ci 2 \pi k y+ \ci R \cos(2 \pi (k+u))}\,, \quad y\in \Z\,.
\]
The sequence $(I(y))_{y\in \Z}$ corresponds to the Fourier series
of the entire, $1$-periodic function 
\[
 k\mapsto \rme^{\ci R \cos(2 \pi (k+u))}\,.
\]
Since this function is integrable, the Fourier series is in $\ell_2$.  In fact, it is even exponentially decaying at infinity, as the following standard trick relying on Cauchy's theorem shows:  using analyticity we may change the integration contour from the real axis to parallel to the axis at $k\pm \ci r_0$, where the sign is equal to the sign of $y$.  By periodicity, the ``boundary paths'' cancel each other out.  Therefore, we find that
\[
 |I(y)|\le \int_{-\frac{1}{2}}^{\frac{1}{2}}\! \rmd k\, \left|\rme^{-2\pi r_0|y|+\ci 2 \pi k y+g(k\pm \ci r_0)}\right| \le  \rme^{-p_0|y|}
 \int_{-\frac{1}{2}}^{\frac{1}{2}}\! \rmd k\, \rme^{2p_0 |R|} \left|\rme^{g(k)}\right|
 = \rme^{-p_0(|y|-2 |R|)}\,.
\]

By standard theory, we know that then the inverse Fourier transform works pointwise everywhere, i.e., that
\[
 \rme^{\ci R \cos(2 \pi (k+u))} = \sum_{y\in \Z^d} \rme^{-\ci 2\pi k y} I(y)\,,\qquad k\in \R\,.
\]
In particular, using this for $k\in \oneduallattice$, we obtain that if $x\in \onelattice$, then 
\[
  Q(x) = \sum_{y\in \Z} I(y) \int_{\oneduallattice}\! \rmd k \,\rme^{\ci 2\pi k(x-y)} =  \sum_{m\in \Z} I(x+L m)\, .
\]
The term $m=0$ is equal to $I(x)$, and for $|m|\ge 1$, the earlier exponential decay bound yields an estimate
\[
 |I(x+L m)|\le \rme^{-p_0(|x+Lm|-2 |R|)}
 \le \rme^{-p_0(L|m|-|x|-2 |R|)}
 \le \rme^{2 p_0 |R|-p_0 L \left(|m|-\frac{1}{2}\right)}
 \,,
\]
where in the last inequality we used the bound $|x|\le \frac{L}{2}$ which holds for any $x\in \onelattice$.  Therefore,
\[
 \sum_{|m|\ge 1} |I(x+L m)| \le 2 \rme^{2 p_0 |R|-p_0\frac{1}{2}L}
 \!\sum_{m=1}^\infty 
 \rme^{-p_0 L \left(m-1\right)}
= \rme^{\frac{p_0}{2}\left(4 |R|-L\right)}\frac{2}{1\!-\!\rme^{-p_0 L}} 
\le \rme^{\frac{p_0}{2}\left(4 |R|-L\right)}\frac{2}{1\!-\!\rme^{-2 p_0 }} \,,
\]
where in the final inequality we used the assumption $L\ge 2$.
Denoting $C\coloneqq \frac{2}{1-\rme^{-2 p_0 }}>2$ and $\delta_0\coloneqq p_0/2$, we have proven for the case $4 |R|\le L$.  If $4|R|>L$, we use the obvious bounds $|I(x)|,|Q(x)|\le 2$, which imply that 
\[
 |Q(x)-I(x)|\le 2 \le C = C \rme^{-\delta_0\left(L-4 |R|\right)_+} \,.
\]
Thus the stated estimate holds also in this case for the above choice of constants $C$ and $\delta_0$.

Let us next consider the summability estimates for a fixed $p\ge 0$.  By Parseval's theorem, the $\ell_2$-norm of $I(y)$ is equal to one.  Therefore, by Schwartz inequality,
\[
 \sum_{|y|\le 2 |R|} \mean{y}^p |I(y)|\le \left(\sum_{|y|\le 2 |R|} \mean{y}^{2 p}\right)^{\frac{1}{2}}
 \le C(p) \mean{R}^{p+\frac{1}{2}}\,.
\]
On the other hand, by the earlier estimates
\[
 \sum_{|y|> 2 |R|} \mean{y}^p |I(y)|\le
 \sum_{|y|> 2 |R|} \mean{y}^p \rme^{-p_0(|y|-2 |R|)}
 \le C(p,p_0) \mean{R}^p\,.
\]
Summing up these estimates proves the bound for $I(y)$.

For the corresponding $Q(x)$ estimate, let us first point out that by discrete Parseval's theorem (or explicit computation) we have $\sum_{x\in \onelattice} |Q(x)|^2=1$.  Therefore, we have an a priori estimate
\[
\sum_{x\in \onelattice} \mean{x}^p |Q(x;R,u,L)| \le 
\left(\sum_{x\in \onelattice} \mean{x}^{2 p}\right)^{\frac{1}{2}} \le C L^{p+\frac{1}{2}}\,.
\]
This estimate will be used when $|R|\ge \frac{1}{8}L$, yielding a bound of the form $C(p) \min(L,|R|)^{p+\frac{1}{2}}$.
If $|R|< \frac{1}{8}L$, we resort to the approximation by $I(x)$:
\begin{align*}
 &  \sum_{x\in \onelattice} \mean{x}^p |Q(x)| \le
 \sum_{x\in \onelattice} \mean{x}^p |I(x)| 
 + \sum_{x\in \onelattice} \mean{x}^p 
 C \rme^{-\delta_0\left(L-4 |R|\right)} \\
 & \quad \le C \mean{R}^{p+\frac{1}{2}}
 + C L^{p+1}\rme^{-\delta_0 \frac{1}{2} L}
 \le C(p)  \mean{R}^{p+\frac{1}{2}}\,.
\end{align*}

For the final bound on $I(y)$, consider first the case $|y|\le \frac{|R|}{2}$.  Standard saddle point estimates then show that there is $C>0$ such that 
\[
 |I(y)|\le C \mean{R}^{-\frac{1}{2}}\,,
\]
since there are then exactly two saddle points and at both of them the second derivative has a modulus proportional to $\sqrt{R^2-y^2}\ge |R|\sqrt{3/4}$.
If $|y|\ge  \frac{|R|}{2}$, we rely on the global bound proven in
\cite[Sec.\ 2]{ho_fermi_1997}, which
implies that always $|I(y)|=|J_y(R)|\le |R|^{-\frac{1}{3}}$, where $J_y(R)$
denotes the Bessel function of the 
first kind.  Thus, for these values of $y$ we have
\[
 |I(y)|\le |R|^{-\frac{1}{3}} \le |R|^{-\frac{1}{2}} (2 |y|)^{\frac{1}{6}}\le 2 \mean{y}^{\frac{1}{6}}|R|^{-\frac{1}{2}}\,.
\]
This proves the existence of the constant for $|R|\ge 1$, and it can then be adjusted to cover also the case $|R|<1$ since $|I(y)|\le 1$.

Using the already proved bounds, we get 
\begin{align*}
    |I(y)| \le &
    C\ind(|y|\le4|R|)
    \regabs{y}^{\frac{1}{6}}\regabs{R}^{-\frac{1}{2}}
    +
    \ind(|y|>4|R|)
     \rme^{-2\delta_0(|y|-2 |R|)_+}
     \\
     \le &
    C\ind(|y|\le4|R|)
    \regabs{y}^{\frac{1}{6}-\frac{1}{2}}
    +
    \ind(|y|>4|R|)
     \rme^{-\delta_0|y|}
     \\
     \le &
     C 
    \regabs{y}^{-\frac{1}{3}},
\end{align*}
where in the last step we used that the exponential decays faster than any polynomial.
Taking minimum between this $\regabs{y}^{-\frac{1}{3}} $ bound and the $ \mean{R}^{-\frac{1}{2}} \mean{y}^{\frac{1}{6}}$ 
bounds gives the final estimate.

This concludes the proof of the Lemma.
\end{proof}

\begin{proofof}{Proposition \ref{prop:propagator_multi_d}}
 Note that
 \[
   Q(x;R,u,L) = \prod_{i=1}^d Q(x_i;R_i,u_i,L)\,,\qquad
   I(y;R,u) = \prod_{i=1}^d I(y_i;R_i,u_i)\,.
 \]
 Since $|R|_\infty\le \frac{L}{8}$ implies $|R_i|\le  \frac{L}{8}$ for all $i$,
 the first estimate follows by telescoping and using the bounds
 $|Q(x)|,|I(y)|\le 1$.  The bounds for $I(y)$ are immediate consequences of
 taking the product of the corresponding one-dimensional bounds given by Lemma \ref{lem:propagator}.
\end{proofof}

\bibliographystyle{abbrv} 
\bibliography{bibliography} 

\end{document}